\definecolor{Darkblue}{rgb}{0,0,0.4}
\definecolor{Brown}{cmyk}{0,0.61,1.,0.60}
\definecolor{Purple}{cmyk}{0.45,0.86,0,0}
\definecolor{Darkgreen}{rgb}{0.133,0.543,0.133}
\newcommand{\commentout}[1]{}
\def\epsilon{\varepsilon}
\def\eps{\varepsilon}
\newcommand{\alert}[1]{\textbf{\color{red}
		[[[#1]]]}\marginpar{\textbf{\color{red}**}}\typeout{ALERT:\@
		\the\inputlineno: #1}}
\def\eps{\epsilon}
\newcommand{\polylog}{{\rm polylog}}
\newcommand{\lca}{{\rm lca}}
\newcommand{\poly}{{\rm poly}}
\newcommand{\diam}{{\rm diam}}
\newcommand{\N}{\mathbb{N}}
\newcommand{\PP}{\mathcal{P}}
\newcommand{\HH}{\mathcal{H}}
\newcommand{\etal}{\emph{et.\ al. }}
\newcommand{\E}{{\mathbb{E}}}
\newcommand{\mommit}[1]{}
\newcommand{\namedref}[2]{\hyperref[#2]{#1~\ref*{#2}}}
\newcommand{\appendixref}[1]{\namedref{Appendix}{#1}}
\newcommand{\thmref}[1]{\namedref{Thm.}{#1}}
\newcommand{\corref}[1]{\namedref{Cor.}{#1}}
\theoremstyle{plain}
\newtheorem{theorem}{Theorem}[]
\newtheorem{lemma}{Lemma}[]
\newtheorem{claim}[lemma]{Claim}
\newtheorem{definition}[lemma]{Definition}
\newtheorem{corollary}[theorem]{Corollary}
\newcommand{\ddim}{{\rm ddim}}
\newcommand{\sddim}{{\rm ddim}}
\newcommand{\ba}{{\rm ba}}
\newcommand{\cD}{\mathcal{D}}
\newcommand{\cF}{\mathcal{F}}
\newcommand{\cJ}{\mathcal{J}}
\newcommand{\cP}{\mathcal{P}}
\newcommand{\cQ}{\mathcal{Q}}
\newcommand{\cT}{\mathcal{T}}
\newcommand{\SPDdepth}{\textsf{SPDdepth}}
\newcommand{\MST}{\mathrm{MST}}
\newcommand{\supp}{{\rm supp}}
\newcommand{\nonblind}[1]{}
\author[1]{Arnold Filtser\thanks{This research was supported by the Israel Science Foundation (grant No. 1042/22).}}
\author[2]{Yuval Gitlitz\thanks{This research was supported by the Israel Science Foundation (grant No. 970/21).}}
\author[2]{Ofer Neiman\thanks{This research was supported by the Israel Science Foundation (grant No. 970/21).}}
\affil[1]{Bar-Ilan University. Email: \texttt{arnold.filtser@biu.ac.il}}
\affil[2]{Ben-Gurion University of the Negev. Emails:  \texttt{gitlitz@post.bgu.ac.il}, \texttt{neimano@cs.bgu.ac.il}}
\title{Light, Reliable Spanners}
\begin{document}
\pagenumbering{gobble}

\maketitle

\begin{abstract}
    A \emph{$\nu$-reliable spanner} of a metric space $(X,d)$, is a (dominating) graph $H$, such that for any possible failure set $B\subseteq X$, there is a set $B^+$ just slightly larger $|B^+|\le(1+\nu)\cdot|B|$, and all distances between pairs in $X\setminus B^+$ are (approximately) preserved in $H\setminus B$. Recently, there have been several works on sparse reliable spanners in various settings, but so far, the weight of such spanners has not been analyzed at all. In this work, we initiate the study of \emph{light} reliable spanners, whose weight is proportional to that of the Minimum Spanning Tree (MST) of $X$. 
    
    We first observe that unlike sparsity, the lightness of any 
deterministic reliable spanner is huge, even for the metric of the simple path graph. Therefore, randomness must be used: an \emph{oblivious} reliable spanner is a distribution over spanners, and the bound on $|B^+|$ holds in expectation.

We devise an oblivious $\nu$-reliable $(2+\frac{2}{k-1})$-spanner for any $k$-HST, whose lightness is $\approx \nu^{-2}$. We demonstrate a matching $\Omega(\nu^{-2})$ lower bound on the lightness (for any finite stretch). We also note that any stretch below 2 must incur linear lightness.

For general metrics, doubling metrics, and metrics arising from minor-free graphs, we construct {\em light} tree covers, in which every tree is a $k$-HST of low weight. Combining these covers with our results for $k$-HSTs, we obtain oblivious reliable light spanners for these metric spaces, with nearly optimal parameters. In particular, for doubling metrics we get an oblivious $\nu$-reliable $(1+\varepsilon)$-spanner with lightness $\varepsilon^{-O(\sddim)}\cdot\tilde{O}(\nu^{-2}\cdot\log n)$, which is best possible (up to lower order terms).

    \end{abstract}
\newpage
\addtocontents{toc}{\protect\setcounter{tocdepth}{2}}
\tableofcontents
\newpage
\pagenumbering{arabic}

\section{Introduction}
Given a metric space $(X,d_X)$, a $t$-{\em spanner} is a graph $H$ over $X$ such that for every $x,y\in X$, $d_X(x,y)\le d_H(x,y)\le t\cdot d_X(x,y)$, where $d_H$ is the shortest path metric in $H$. \footnote{Often in the literature, the input metric is the shortest path metric of a graph, and a spanner is
required to be a subgraph of the input graph. Here we study metric spanners where there is no such requirement.}
The parameter $t$ is often referred to as the \emph{stretch}.
In essence, the purpose of spanners is to represent the distance metric using a sparse graph. Spanners where introduced by Peleg and Sch\"{a}ffer \cite{PS89}, and found numerous applications throughout computer science. For a more systematical study, we refer to the book \cite{NS07} and survey \cite{ABSHJKS20}.
In many cases, the goal is to minimize the total weight of the spanner and not just the number of edges. E.g., when constructing a road network,
the cost is better measured by the total length of paved roads, as opposed to their number. 
This parameter of interest is formalized as the \emph{lightness} of a spanner, which is the ratio between the weight of the spanner (sum of all edge weights), and the weight of the  Minimum Spanning Tree (MST) of $X$: $\frac{w(H)}{w(\rm{MST})}$. Note that the MST is the minimal weight of a connected graph, and thus of a spanner with finite stretch. So the lightness is simply a ``normalized'' notion of weight.

Light spanners have been thoroughly studied. It is known that general $n$-point metric spaces admit a $(2k-1)(1+\eps)$ spanner (for $k\in\N$, $\eps\in(0,1)$) with $O(n^{1+1/k})$ edges and lightness $O(\eps^{-1}\cdot n^{1/k})$
\cite{LS23,Bodwin23} (see also \cite{ADDJS93,ENS15,CW18,FS20}). Every $n$-point metric space with doubling dimension\footnote{A metric space $(X, d)$ has doubling dimension $\ddim$ if every ball of radius $2r$ can be 	covered by $2^{\sddim}$ balls of radius $r$. The $d$-dimensional Euclidean space has doubling dimension $\Theta(d)$.\label{foot:doubling}} $\ddim$ admits a $(1+\eps)$-spanner with $n\cdot\eps^{-O(\sddim)}$ edges and lightness $\eps^{-O(\sddim)}$ \cite{BLW19} (see also \cite{Gottlieb15,FS20}).
Finally, the shortest path metric of a graph excluding a fixed minor admits a (sub-graph, which already implies sparsity) $(1+\eps)$-spanner with lightness $\tilde{O}(\eps^{-3})$ \cite{BLW17}.

A highly desirable properly of a spanner is the ability to withstand massive node-failures. 
To this end, Bose \etal \cite{BDMS13} introduced the notion of a \emph{reliable spanner}. \footnote{For a comprehensive discussion with the related notion of fault-tolerant spanners, see \Cref{subsec:related}.}
 Here, given a set of failed nodes $B\subseteq X$, the residual spanner $H\setminus B$ is a $t$-spanner for $X\setminus B^+$, where $B^+\supseteq B$ is a set slightly larger than $B$.
 For the case of points in $d$-dimensional Euclidean space, for constant $d$, Bose \etal \cite{BDMS13} constructed $O(1)$ spanner such that $|B^+|\le O(|B|^2)$.
 Later, Buchin, Har-Peled, and Ol{\'{a}}h \cite{BHO19} constructed $1+\eps$ reliable spanner with $n\cdot\eps^{-O(d)}\cdot\nu^{-6}\cdot\tilde{O}(\log n)$ edges, guaranteeing that for every set of failed nodes $B$, $|B^+|\le (1+\nu)\cdot|B|$.
 This result was generalized to metric spaces with doubling dimension $\ddim$ by Filtser and Le \cite{FL22}.
 
 While reliable spanners for Euclidean and doubling metrics admit sparsity which is comparable to their non-reliable counter-parts, the situation is very different for other metric families. Indeed, Har-Peled \etal \cite{HMO21} showed that every reliable $k$-spanner of the simple uniform metric (which is also a tree metric) must have $\Omega(n^{1+1/k})$ edges.
 Nevertheless, it is possible to construct \emph{oblivious} reliable spanner for other metric spaces with good parameters, where the bound on the size of $B^+$ is only in expectation.

\begin{definition}[Reliable spanner]\label{def:reliableSpanner}
	A weighted graph $H$ over point set $X$ is a deterministic $\nu$-reliable $t$-spanner
	of a metric space $(X,d_{X})$ if $d_{H}$ dominates\footnote{Metric space $(X,d_H)$ dominates metric space $(X,d_X)$ if  $\forall u,v\in X$, $d_X(u,v)\le d_H(u,v)$.\label{foot:dominating}} 
	$d_{X}$, and for every
	set $B\subseteq X$ of points, called an \emph{attack set}, there is a set $B^{+}\supseteq B$, called a \emph{faulty extension} of $B$, 	such that:
 		(1) $|B^{+}|\le(1+\nu)|B|$.
		(2) For every $x,y\notin B^{+}$, $d_{H[X\setminus B]}(x,y)\le t\cdot d_{X}(x,y)$.

	An oblivious $\nu$-reliable $t$-spanner is a distribution $\mathcal{D}$ over dominating graphs $H$, such that for every attack set $B\subseteq X$ and $H\in\supp(\mathcal{D})$,
	there 
	exist a superset $B_H^{+}\supseteq B$ such that, for
	every $x,y\notin B_H^{+}$, $d_{H[X\setminus B]}(x,y)\le t\cdot d_{X}(x,y)$,
	and $\mathbb{E}_{H\sim\mathcal{D}}\left[|B_H^{+}|\right]\le(1+\nu)|B|$. We say that the oblivious spanner $\mathcal{D}$ has $m$ edges and lightness $\phi$ if every $H\in\supp(\mathcal{D})$ has at most $m$ edges and lightness at most $\phi$. 

\end{definition}

For general $n$-point metrics, Filtser and Le \cite{FL22} (improving over \cite{HMO21}) constructed an oblivious $\nu$-reliable $8k+\eps$-spanner with $\tilde{O}(n^{1+\frac1k}\cdot\eps^{-2})\cdot\nu^{-1}$ edges.
For the shortest path metric of graph excluding a fixed minor, there is oblivious $\nu$-reliable ($2+\eps$)-spanner with $\eps^{-2}\cdot\nu^{-1}\cdot \tilde{O}(n)$ edges, while every oblivious reliable spanner with stretch $t<2$ requires $\Omega(n^2)$ edges \cite{FL22}.
For Euclidean and doubling metrics, oblivious $\nu$-reliable $(1+\eps)$-spanners can be constructed with only $n\cdot\eps^{-O(d)}\cdot\tilde{O}(\nu^{-1}\cdot\log^2\log n)$ edges \cite{BHO20,FL22}.

But what about lightness? no previous work attempted to construct reliable spanners of low total weight even though it is clearly desirable to construct reliable networks of low total cost. 
The single most studied metric in the context of reliable spanners is the unweighted path $P_n$. Indeed, most of the previous work \cite{BHO19,BHO20,FL22,Fil23} focused on constructing various reliable $1$-spanners for the path graph, and then generalized it other metric spaces using \emph{locality sensitive orderings} \footnote{Locality sensitive ordering is a generic tool that ``reduces'' metric spaces into the line, by devising a collection of orderings such that every two points are ``nearby'' in one of the orderings, see \cite{CHJ20,FL22}.}.
A reliable spanner should have many edges between every two large enough sets, so that they could not be easily disconnected. Consider an attack $B$ consisting of the middle $\frac n2$ vertices on $P_n$. If there are less than $\frac n8$ crossing edges from left to right, then an attack $B'\supseteq B$ that contains also one endpoint per crossing edge, will disconnect two sets of size $\frac n8$. Therefore a linear number of vertices should be added to $B'^+$. We conclude that every deterministic reliable spanner (for any finite stretch) must have lightness $\Omega(n)$ (see \Cref{thm:PathDeterministicLB} for a formal proof). 
Thus, all hope lies in oblivious reliable spanners. 
However, even here any two large sets must be well connected. 
Previous oblivious reliable spanners for $P_n$ all had unacceptable polynomial lightness. 

As reliable spanners for $P_n$ are the main building blocks for reliable spanners for other metric spaces, all previous constructions have inherent polynomial lightness.\footnote{The only previous work that did not reduced to $P_n$ is by Har-Peled \etal \cite{HMO21} who reduced to uniform metrics. Nevertheless, their approach on $P_n$ will have stretch $3$, and lightness $\Omega(n)$.}

\begin{table}[]
	\begin{center}
		\def\arraystretch{1.15}
		\begin{tabular}{|l|l|l|l|l|}
			\hline
			Family & Stretch   & Lightness & Size & Ref\\ \hline
			\multicolumn{1}{|c|}{\multirow{2}{*}{Doubling $\ddim$ }}     & $1+\eps$  & $\eps^{-O(\sddim)}\cdot\tilde{O}(\nu^{-2}\cdot\log n)$          & $n\cdot\eps^{-O(\sddim)}\cdot\tilde{O}(\nu^{-2})\cdot*$     &  \corref{thm:doubling}   \\ \cline{2-5} 
			& $\ddim$ &   $\tilde{O}(\log n\cdot\nu^{-2})\cdot\ddim^{O(1)}$        &  $n\cdot\tilde{O}\left(\nu^{-2}\right)\cdot\ddim^{O(1)}\cdot*$    & \corref{thm:DdimLarge}  \\ \hline
			
			\multicolumn{1}{|c|}{\multirow{2}{*}{General Metric }}   & $12t+\eps$ &   $n^{1/t}\cdot\tilde{O}(\nu^{-2}\cdot\eps^{-4})\cdot\log^{O(1)}n$        &  $\tilde{O}\left(n^{1+1/t}\cdot\nu^{-2}\cdot\eps^{-3}\right)$    &   \corref{thm:general}\\ \cline{2-5} 
			
			& $O(\log n)$ &  $\tilde{O}(\nu^{-2}\cdot\log^4 n)$         &  $n\cdot \tilde{O}\left(\nu^{-2}\cdot\log^{3}n\right)$     &  \corref{cor:HSTcoverGeneralMetricLogn} \\ \hline
			Minor-Free    & $2+\eps$  &   $\tilde{O}(\nu^{-2}\cdot\eps^{-7}\cdot\log^8 n)$         &  $\tilde{O}(n\cdot \nu^{-2}\cdot\eps^{-6})$    & \thmref{thm:MinorFreeOptimalStretch}\\ \hline
			Tree		& $<2$   & $\Omega(n)$ & $\Omega(n^2)$ & \cite{FL22}\\\hline
			Weighted Path & $1$ &  $\nu^{-2}\cdot\tilde{O}(\log n)$         & $n\cdot\tilde{O}(\nu^{-1})\cdot *$     &\corref{cor:path:logn}   \\ \hline
			\multicolumn{1}{|c|}{\multirow{2}{*}{\begin{tabular}[c]{@{}c@{}}Unweighted \\ Path\end{tabular}}} & $<\infty$       & $\Omega(\nu^{-2}\cdot\log(\nu\cdot n))$          &  -    & \thmref{thm:PathObliviousLB}    \\ \cline{2-5} 
			& $<\infty$ &  $\Omega(n)$ (deterministic)        &   -   & \thmref{thm:PathDeterministicLB}     \\ \hline
			
			\multicolumn{1}{|c|}{\multirow{2}{*}{\begin{tabular}[c]{@{}c@{}}HST\\ (ultrametric)\end{tabular}}}   & $2+\eps$   
			&  $\tilde{O}(\eps^{-4}\cdot\nu^{-2})\cdot *$  &  $n\cdot\tilde{O}\left(\eps^{-3}\cdot\nu^{-2}\right)\cdot*$  &  \thmref{thm:GeneralUltrametric}   \\ \cline{2-5} 
			& $<\infty$ &  $\Omega(\nu^{-2})$  & - &  \thmref{thm:HSTObliviousLB}   \\ \hline                                                                   
		\end{tabular}
		\caption{\small Our results for constructing light $\nu$-reliable spanners for various metric spaces. All the results in the table (other than the one specified as deterministic) are for oblivious reliable spanners. 
			Stretch $<\infty$ stands for the requirement that all the points in $X\setminus B^+$ belong to the same connected component in $H\setminus B$.
			$*$ stands for $\poly(\log\log n)$ factors.}\label{tab:results}
	\end{center}
\end{table}

\subsection{Our Results}\label{subsec:results}
The results of this paper are summarized in \Cref{tab:results}. Our results on light reliable spanners for various metric families are based on constructing such spanners for $k$-HSTs, this lies in contrast to previous results on sparse reliable spanners, which were mostly based on reliable spanners for the path graph. 

Roughly speaking, previous works on reliable spanners show us that the ``cost'' of making a spanner $\nu$-reliable, is often a $\nu^{-1}$ factor in its size. 
Our results in this paper offer a similar view for light spanners: here the ``cost'' of reliability is a factor of $\nu^{-2}$ in the lightness. 
That is, an $\Omega(\nu^{-2})$ factor must be paid in the most basic cases (path graph, HST), while in more interesting and complicated metric families, we essentially match the best non-reliable light spanner constructions, up to this $\nu^{-2}$ factor (and in some cases, such as minor-free graphs, an unavoidable constant increase in the stretch). 
For brevity, in the discussion that follows we omit the bounds on the size of our spanners (which can be found in \Cref{tab:results}).

\paragraph{$k$-HSTs.} We devise an oblivious $\nu$-reliable $2+\frac{O(1)}{k}$-spanner for any $k$-HST (see \Cref{def:HST}), whose lightness is $\tilde{O}(\nu^{-1}\cdot\log\log n)^2$ (see \Cref{thm:k-HST-new}). It is implicitly shown in \cite[Observation 1]{FL22} that with stretch smaller than 2, the lightness must be $\Omega(n)$. So when $k$ is large, our stretch bound is nearly optimal.\footnote{We also have a similar result for every $k\ge 1$, with stretch $2+\varepsilon$ and lightness $\tilde{O}(\varepsilon^2\cdot\nu^{-1}\cdot\log\log n)^2$.} We also show that the lightness must be at least $\Omega(\nu^{-2})$, regardless of the stretch, thus nearly matching our upper bound.

\paragraph{Light $k$-HST Covers.} To obtain additional results for other metric families, following \cite{FL22}, we use the notion of {\em tree covers}, in which every tree is a $k$-HST (see \Cref{def:hst-cover}). We design these covers for metrics admitting a pairwise partition cover scheme (see \Cref{def:ppcs}), such that each $k$-HST in the cover has lightness $O(k\cdot\log n)$.

\paragraph{General Metrics.} For any metric space, by building a light $k$-HST cover, and applying our oblivious reliable spanner for every $k$-HST in the cover, we obtain an oblivious $\nu$-reliable $O(k)$-spanner with lightness $\tilde{O}(\nu^{-2}\cdot n^{1/k})$. Note that up to a constant in the stretch (and lower order terms), this result is optimal, even omitting the reliability requirement.

\paragraph{Doubling Metrics.} For any metric with doubling dimension $\ddim$,$^{\ref{foot:doubling}}$ and $\eps\in(0,1)$, we devise an oblivious $\nu$-reliable $(1+\eps)$-spanner with lightness $\eps^{-O(\sddim)}\cdot\tilde{O}\left(\nu^{-2}\cdot\log n\right)$. 
This result is tight 
up to second order terms. 
Indeed, it is folklore that any $(1+\eps)$-spanner for doubling metrics must have lightness $\eps^{-\Omega(\sddim)}$ (see e.g., \cite{BLW19}). In \Cref{thm:PathObliviousLB}, we show that every oblivious $\nu$-reliable spanner (for any finite stretch) for the shortest path metric of the unweighted path graph (which has $\ddim$ $1$) must have lightness $\Omega(\nu^{-2}\cdot\log(\nu n))$.
This dependence on $n$ in the lower bound is somewhat surprising, and does not appear in the closely related fault-tolerant spanners for doubling metrics (see \Cref{subsec:related} for further details).

In our doubling reliable spanner construction, we adapt the framework used for general metrics. Note that general $k$-HSTs must suffer stretch at least 2. 
Fortunately, the $k$-HSTs in the cover for doubling metrics have bounded maximum degree. For such HSTs we construct oblivious reliable $1+O(\frac{1}{k})$-spanner with lightness $\tilde{O}(\nu^{-1}\cdot \log\log n)^2$ (see \Cref{thm:k-HST-degree-new}). Whenever $k\ge\frac1\eps$, this is $1+\eps$ stretch.

\paragraph{Metrics of Minor-free Graphs.}
Consider a metric $(X,d)$ arising from shortest paths of a graph $G$ that excludes a fixed minor.
In \Cref{thm:minor_free:pairwise} we show that $X$ admits ``good'' pairwise partition cover with stretch $2$, and thus by using the framework mentioned above as a black-box, we can get oblivious $\nu$-reliable $(4+\epsilon)$-spanner (see \Cref{thm:MinorFreeStretch4}). However, the lower bound on the stretch is the same as for $k$-HST, which is only 2 (whenever the lightness is sub-linear). To obtain near optimal results, we exploit a certain property of our pairwise partition cover for these metrics, and achieve (in a non-black-box manner) the nearly optimal oblivious $\nu$-reliable $(2+\varepsilon)$-spanner with lightness $\nu^{-2}\cdot\poly(\log n,1/\varepsilon)$.

\paragraph{The path graph.} We conclude our journey on light reliable spanners by constructing an oblivious $\nu$-reliable 1-spanner for the weighted path graph $P_n$, whose lightness is $\tilde{O}(\nu^{-2}\cdot\log n)$. As mentioned above, we prove that this bound on the lightness is optimal (up to lower order terms), for any finite stretch. 
A useful\footnote{Buchin \etal's \cite{BHO20} oblivious reliable spanner for the path is $O(\log n)$ hop-bounded. This property was crucial for the construction of sparse oblivious reliable spanners for Euclidean and doubling metrics \cite{BHO20,FL22}. Filtser and Le \cite{FL22} constructed $2$ hop-bounded oblivious reliable spanners for the path (and used it in their construction of oblivious reliable spanner for general metrics)} 
property of our spanner is that it is hop-bounded,  that is, every pair outside $B^+$ admits a shortest path with at most $\log n$ edges.\footnote{Additionally, for any $h\ge 1$, we can also devise a $h$-hop-bounded reliable spanner, while achieving lightness $\approx\nu^{-2}\cdot h\cdot n^{1/h}$, see \Cref{thm:path}.} 

\subsection{Technical Overview}

From a high level, our construction of light reliable spanners for various graph families has the following structure. 
\begin{itemize}
    \item We first devise light reliable spanners for $k$-HSTs.
    \item We construct {\em light} tree covers for the relevant family, where all the trees in the cover are $k$-HSTs. 
    \item The final step is to sample a reliable spanner for each tree in the cover, and take as a final spanner the union of these spanners.
\end{itemize}  

In what follows we elaborate more on the main ideas and techniques for each of those steps.

\subsubsection{Reliable Light Spanner for $k$-HSTs}

Let $T$ be the tree representing the $k$-HST (see \Cref{def:HST}). 
Our construction consists of a collection of randomly chosen bi-cliques: For every node $x\in T$ we choose at random a set $Z_x$ of $\ell\approx\nu^{-1}$ vertices from the leaves of the subtree rooted at $x$ (denoted $L(x)$). Then, for every $x\in T$ with children $x_1,\dots,x_t$, add to the spanner $H$ all edges in $Z_x\times Z_{x_j}$ for every
$j=1,\dots, t$.

Fix a pair of leaves $u,v\in T$, let $x=\lca(u,v)$, and let $x_i$ (resp., $x_j$) be the child of $x$ whose subtree contains $u$ (resp., $v$). The idea behind finding a spanner path between $u,v$ is as follows. We will connect both $u,v$ to a certain chosen leaf $x'\in Z_x$. To this end, we first connect recursively $u$ to a  $u'\in Z_{x_i}$ and $v$ to $v'\in Z_{x_j}$. Now, if $x$, $x_i$, and $x_j$ have all chosen such leaves $x',u',v'$ to the sets $Z_x,Z_{x_i},Z_{x_j}$ respectively, that survive the attack $B$, and also we managed the $u-u'$ and $v-v'$ connections recursively, then we can complete the $u-v$ path. That path will consists of the two ``long'' bi-clique edges $\{u',x'\},\{x',v'\}$, and the recursive $u-u'$ and $v-v'$ paths. Note that since $u,u'\in L(x_i)$, $d_T(u,u')\le d_T(u,v)/k$ (and similarly $d_T(v,v')\le d_T(u,v)/k$), so we can show inductively that the total distance taken by these recursive paths is only $O(d_T(u,v)/k)$.
See \Cref{fig:kHSTintution} for an illustration of a path in $H$ between two vertices $u,v$.

Having established what is needed for finding a spanner path, we say that a leaf is {\em safe} if all its ancestors $x$ in $T$ have that $Z_x$ is not fully included in $B$. The failure set $B^+$ consists of $B$ and all leaves that are not safe.

A subtle issue is that a vertex may have a linear number of ancestors, and we will need $\ell$ to be at least logarithmic to ensure good probability for success in all of them. To avoid this, 
we use the following approach. For any node $x$ that has a ``heavy''child $y$ (that is, $L(y)$ is almost as large as $L(x)$), we use the sample $Z_y$ for $x$, instead of sampling $Z_x$. This way, any leaf will have only logarithmically many ancestors that are not heavy parents, which reduce dramatically the sample size needed for success in all ancestors. 

For the reliability analysis, we distinguish between leaves that have an ancestor $x$ with a very large $1-\nu$ fraction of vertices in $L(x)$ that fall in the attack set $B$. These leaves are immediately taken as failed, but there can be only $\approx\nu|B|$ such leaves. For the other leaves, a delicate technical analysis follows to show that only a small fraction $\approx \nu\cdot|B|$ new vertices are expected to join $B^+$. Note that if some node has a heavy child, we take the child's sample, so some care is needed in the analysis to account for this -- roughly speaking, the definition of ``heavy'' must depend on the reliability parameter $\nu$, in order to ensure sufficiently small failure probability.

\paragraph{Improved stretch for bounded degree HSTs.} In the case the $k$-HST has bounded degree $\delta$, we can alter the construction slightly, and for every $x$ with children $x_1,\dots,x_s$, also add all edges in $Z_{x_i}\times Z_{x_j}$ for every $1\le i<j\le s$. While this alternative increases the lightness and size by a factor of $\delta$, the stretch improves to $1+\frac{O(1)}{k}$, since we only use one long edge. This variation will be useful for the class of doubling metrics.

\begin{SCfigure}[][t]\caption{\it 
		Illustration of the construction of spanner for a $k$-HST. For each internal node $x$ we sample a subset $Z_x$ of leaves from $L(x)$, and connect all of $Z_x$ to $Z_{x'}$ for every child $x'$ of $x$. The path from $u$ to $v$ will first go from $u$ to a surviving vertex in $Z_{x_i}$ (using recursion), from there to a surviving vertices in $Z_x$ and $Z_{x_j}$, and finally to $v$ (again by recursion).\\
		\label{fig:kHSTintution}}
	\includegraphics[width=.5\textwidth]{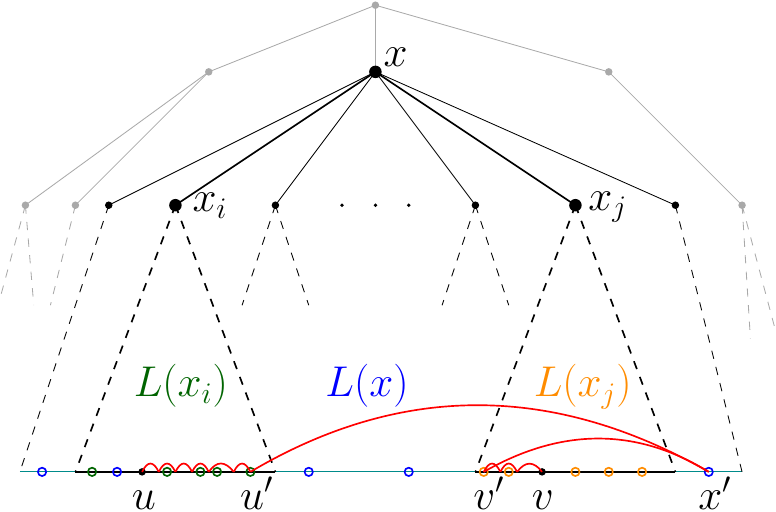}	
\end{SCfigure}

\subsubsection{Reliable Spanners via Light $k$-HST Covers}

A $(\tau,\rho)$-tree cover of a metric space $(X,d)$, is a collection of $\tau$ dominating trees, such that for every pair $u,v\in X$, there exists a tree $T$ in the cover with $d_T(u,v)\le \rho\cdot d(u,v)$. Let $(X,d)$ be any metric that admits a $(\tau,\rho)$-tree cover in which all trees are $k$-HSTs of weight at most $O(l\cdot w(MST(X))$, then we can devise an oblivious reliable spanner for $X$ as follows. Sample an oblivious light $\nu/\tau$-reliable spanner $H_T$ for each tree $T$, and define $H=\bigcup_T H_T$ as their union. We define $B^+$ as the union of all the failure sets $B^+_T$ over all tree spanners.

Since in every $\nu/\tau$-reliable spanner of a tree only $\nu/\tau\cdot |B|$ additional vertices fail in expectation, the total expected number of additional failures is at most $\nu\cdot|B|$, as required. Now, if a pair $u,v$ did not fail, there is a $k$-HST $T$ in which $d_T(u,v)\le \rho\cdot d(u,v)$, and thus $H$ has stretch at most $\rho\cdot (2+\frac{O(1)}{k})$ for such a pair.

\paragraph{ Light $k$-HST Covers using Pairwise Partition Cover Scheme.}

A $(\tau,\rho,\epsilon,\Delta)$-Pairwise Partition Cover for a metric space $(X,d)$ is a collection of $\tau$ partitions, each cluster in each partition has diameter at most $\Delta$, and every pair $u,v\in X$ with $\frac{\Delta}{2\rho}\le d(u,v)\le \frac{\Delta}{\rho}$ is {\em padded} in at least one cluster $C$ of a partition. This means that the cluster $C$ contains $u,v$, and also the balls of radius $\epsilon\Delta$ around them, see \Cref{def:ppcs}. If $(X,d)$ admits such a cover for every $\Delta$, we say it has a Pairwise Partition Cover Scheme (PPCS). In \cite{FL22}, PPCS were shown for general metrics and doubling metrics. In this paper, for any parameter $0<\epsilon<1/6$, we devise a $\left(\frac{\log n}{\epsilon},\frac{2}{1-6\epsilon},\epsilon\right)$-PPCS for minor-free graphs.

In \cite{FL22} it was shown that one can obtain a $k$-HST cover from a PPCS, in such a way that every cluster of diameter $\Delta$ in the PPCS corresponds to an internal node $x$ of one of the $k$-HSTs, with label $\Gamma_x=\Delta$.
For our purposes, we want every $k$-HST in the cover to be light. To this end, we augment the reduction of \cite{FL22} by a feature that allows us to bound the lightness of the resulting $k$-HST. The idea is to use {\em nets}, see \Cref{def:net}. A basic observation for a $\Delta$-net ${\cal N}$ of a metric space $(X,d)$, is that $w(MST(X))\ge \Omega(|{\cal N}|\cdot\Delta)$. On the other hand, the weight of a $k$-HST $T$ is roughly $\sum_{x\in T}k\cdot\Gamma_x$ (every node pays for the edge to its parent in $T$). So as long as the number of internal nodes with label $\Delta$ is bounded by $|{\cal N}|$, the $k$-HST will be rather light.

Now, given some partition with diameter bound $\Delta$, we take a $\approx\epsilon\Delta$-net ${\cal N}$, and break all clusters that do not contain a net point. Then the points in the broken clusters are joined to a nearby remaining cluster. Since the net is dense enough, each cluster that was used for padding remains intact, while the number of clusters is bounded by $|{\cal N}|$. This enables us to bound the weight of the $k$-HST accordingly.

\subsubsection{Reliable Light Spanner for Minor-free Graphs with $2+\eps$ stretch}

In the special case of minor-free graphs, the framework described above will lose a factor of 2 in the stretch in two places. The first is due to the padding of the PPCS, and the second in the reliable spanners for the $k$-HSTs. While each of these losses is unavoidable,\footnote{Stretch $2$ for HST is necessary: Consider the uniform metric, every spanner with less than ${n\choose2}$ edges has stretch $2$. Every PPCS for minor free graphs must have either $\rho\ge2$ or $\tau=\Omega(n)$: Fix $\rho<2$, and consider the unweighted star graph. There are $n-1$ leaf-center pairs, while a single partition can satisfy at most a single pair.}
we can still exploit a certain property of our PPCS for minor-free graphs, to improve the stretch to near optimal $2+\epsilon$. 

In our previous approach, suppose vertices $u,v$ are padded in some cluster $C$ of the PPCS, with diameter at most $\Delta$. Then in the $k$-HST cover, we will have some tree with an internal node $x$ corresponding to $C$, whose label is $\Gamma_x=\Delta$. The way we construct the spanner path between $u,v$ is via some chosen leaf $z$ in $L(x)$, and as both $d(u,z)$, $d(v,z)$ can be as large as $\Delta$, we loose a factor of 2 here.

The main observation behind overcoming this loss, is that in our PPCS for minor-free graphs, each cluster $C$ is a ball around some center $x$, and whenever a pair $u,v$ is padded, then $x$ is very close to the shortest $u-v$ path, meaning that $d(u,x)+d(v,x)\le(1+\epsilon)\cdot d(u,v)$. While we cannot guarantee that $x$, or a vertex close to $x$, will survive the attack $B$, we can still use this to improve the stretch guarantee.
Suppose that $Z_x$ contains a surviving leaf $z$ which is closer to $x$ than both $u,v$, then
\[
d(u,z)+d(z,v)\le(d(u,x)+d(x,z))+(d(z,x)+d(x,v))\le2(d(u,x)+d(x,v))\le2(1+\epsilon)\cdot d(u,v)~.
\]
So, instead of sampling a set $Z_x$ of leaves at random from $L(x)$, we create a bias towards vertices closer to the center $x$. Concretely, order the leaves of $L(x)$ by their distance to $x$, and we would like that the probability of the $j$-th leaf in $L(x)$ to join $Z_x$ will be $\approx\frac{1}{j}$. This way, the expected size of $Z_x$ is still small, and if not too many vertices in the appropriate prefix of $L(x)$ are in $B$, then there is a good probability that such a $z\in Z_x$ exists. However, as it turns out, this requirement it too strict, since every internal node $x$ will force us to move vertices to $B^+$ that fail due many vertices in $B$ in its induced ordering.

To avoid this hurdle, we use a {\em global} ordering for all internal nodes -- a carefully chosen preorder of $T$ -- and prove that the induced order on $L(x)$ is a good enough approximation of distances to $x$ (specifically, up to an additive factor of $\approx\Gamma_x/k$).

\subsubsection{Reliable Light Spanner for the Path Graph}
There were several construction of a reliable spanner for $P_n$ in previous works \cite{BHO19,BHO20,FL22}, none of them could provide a meaningful bound on the lightness. For instance, the first step in the construction of \cite{BHO19}
was to connect the first $n/2$ vertices to the last $n/2$ vertices via a bipartite expander graph. In particular, the total weight of just this step is $\Omega(n^2)$.
The method of \cite{FL22} is to sample $\approx\nu^{-1}$ vertices as star centers, and connect all other vertices to each center. This construction also clearly isn't light, as the total weight of even one such star is $\Omega(n^2)$.

Our construction of an oblivious light $\nu$-reliable spanner for (weighted) $P_n$ is similar to the approach taken by \cite{BHO20}. It starts by sampling a laminar collection of subsets $[n]=V_0\supseteq V_1\supseteq V_2\supseteq\cdots\supseteq V_{\log n}$, where $|V_i|$ contains $\frac{n}{2^i}$ points in expectation. 
However, the construction of \cite{BHO20} used long range edges: from vertices in $V_i$ to the nearest $\approx 2^{i/2}$ other vertices in $V_i$, and thus its lightness is polynomial in $n$.\footnote{To see why the lightness is polynomial, consider just the level $i=\frac23\log n$, then $|V_i|\approx n^{1/3}$, but also the number of connected neighbors is $2^{i/2}=n^{1/3}$, so all $\approx n^{2/3}$ edges between vertices in $V_i$ are added. The average length of these edges is linear in $n$, so the lightness is $\Omega(n^{2/3})$.}

To ensure bounded lightness, we take a more local approach, and each point $a\in V_i$ adds edges to only the nearest $\ell\approx\nu^{-1}$ points in $V_i$ and $V_{i+1}$ on both its left and right sides. 
We remark that the connections to the next level are crucial in order to avoid additional logarithmic factors (since unlike \cite{BHO20}, we cannot use the exponentially far away vertices, that would have provided high probability for connection of every vertex to the next level).
The lightness follows as each edge $e$ of $P$ is expected to be ``covered'' $\ell^2$ times, in each of the $\log n$ levels.

The reliability analysis of our spanner uses the notion of {\em shadow}, introduced by \cite{BHO19}. For the path $P_n$, roughly speaking, a vertex $u$ is outside the $\alpha$-shadow of an attack $B$, if in all intervals containing $u$, there is at most an $\alpha$ fraction of failed vertices (in $B$).

The reliability argument goes as follows: a vertex $a\in[n]\setminus B$ fails and joins $B^+$ only if there exists a level $i$ in which all its  connections to $V_{i+1}$ fail. That is, its $\ell$ closest vertices in $V_{i+1}$ are in $B$. But as points are chosen to $V_{i+1}$ independently of $B$, this is an unlikely event, whose probability can be bounded as a function of the largest $\alpha$-shadow that does not contain $a$. To obtain our tight bound, we need a delicate case-analysis for the different regimes of $\alpha$-shadows.

The stretch analysis is a refinement of \cite{BHO20} stairway approach. A nice feature is that each pair in $[n]\setminus B^+$ will have a shortest path of at most $\log n$ hops in the spanner $H$.
 
\subsection{Related Work}\label{subsec:related}
\paragraph{Light fault-tolerant spanners.}
Levcopoulos \etal \cite{LNS98} introduced the notion of $f$-fault-tolerant spanner, where it is guaranteed that for every set $F$ of at most $f$ faulty nodes, $H\setminus F$ is a $t$-spanner of $X\setminus F$.
However, the parameter $f$ has to be specified in advance, and both sparsity and lightness of the spanner must polynomially depend on $f$. Thus, unlike reliable spanners,
it is impossible to construct sparse and light fault-tolerant spanners that can withstand scenarios where, say, half of the nodes fail.

Czumaj and Zhao \cite{CZ04} constructed $f$ fault-tolerant spanners for point in constant dimensional Euclidean space with optimal $O(f^2)$ lightness (improving over \cite{LNS98} $2^{O(f)}$ lightness).
This result was very recently generalized to doubling spaces by Le, Solomon, and Than \cite{LST23}, who obtain $O(f^2)$ lightness (improving over \cite{CLNS15} $O(f^2\log n)$ lightness, and \cite{Solomon14} $O(f^2+f\log n)$ lightness).

Abam \etal\cite{ABFG09} introduced the notion of \emph{region} fault-tolerant spanners for the Euclidean plane. They showed that one can construct a $t$-spanner with $O(n\log n)$ edges in such a way that if points belonging to a convex region are deleted, the residual graph is still  a spanner for the remaining points.

\paragraph{More on Light spanners.}
Light spanners were constructed for high dimensional Euclidean and doubling spaces (in similar context to our \Cref{thm:DdimLarge}) \cite{FN22,LS23}.
Subset light spanners were studied for planar and Minor free graphs \cite{Klein06,Klein08,Le20,CFKL20}, where the goal is to maintain distances only between a subset of terminals (and the lightness is defined w.r.t. the minimum Steiner tree). 
Bartal \etal constructed light prioritized and scaling spanner \cite{BFN19}, where only a small fraction of the vertex pairs suffer from large distortion.
Recently Le and Solomon conducted a systematic study of efficient constructions of light spanners \cite{LS23} (see also \cite{FS20,ADFSW19}).
Finally, light spanners were efficiently constructed in the  LOCAL \cite{KPX08}, and CONGEST \cite{EFN20} distributed models.

\subsection{Organization}
After a few preliminaries in \cref{sec:pre}, we show our reliable spanner for $k$-HSTs in \cref{sec:hst}. In \cref{sec:ppcs-minor-free} we show how to devise PPCS for minor-free graphs, and in \cref{sec:ppcs->hst} we show how to construct light $k$-HST covers based on PPCS. In \cref{sec:results} we combine the results of all previous sections, and derive our results on light reliable spanners for various metric spaces. We show our reliable spanner for the path graph in \cref{sec:path}.
In \cref{sec:minor-free} we devise a reliable spanner for minor-free graphs with improved stretch, and finally, in \cref{sec:lower} we exihibit our lower bounds for the path graph and for ultrametrics.

\section{Preliminaries}\label{sec:pre}

All logarithms (unless explicitly stated otherwise) are in base $2$.
We use $\tilde{O}$ notation to hide poly-logarithmic factors. That is $\tilde{O}(s)=O(s)\cdot\log^{O(1)}(s)$.
For a weighted graph $G = (V, E)$, denote the distance between $u, v \in V$ by $d_G(u, v)$. When $G$ is clear from context, we might write $d(u, v)$.
 For a metric space $(X, d)$, we denote the ball of $v \in X$ of radius $\Delta \ge 0$ by $B(v, \Delta)=\{u\in X~:~d(u,v)\le\Delta\}$.
 The diameter of a cluster $C\subseteq X$ is maximum pairwise distance: $\diam(C)=\max_{u,v\in C}d(u,v)$.

Let $[n]$ denote the set $\{1, \dots, n\}$, and for integers $a\le b$ let $[a:b]$ denote $\{a, \dots, b\}$, and $[a:b)$ denote $\{a,...,b-1\}$.
We next define ultrametrics and HSTs.

\begin{definition}\label{def:HST}
	A metric $(X,d)$ is a called an ultrametric if it satisfies a strong form of the triangle inequality
\[
\forall x,y,z\in X,~d(x,z)\le\max\{d(x,y),d(y,z)\}~.
\] 
Equivalently \cite{BLMN03}, if there exists a bijection $\varphi$ from $X$ to the leaves of a rooted tree $T$ in which:
	\begin{enumerate}
		\item Each node $v\in T$ is associated with a label $\Gamma_{v}$ such that $\Gamma_{v} = 0$ if and only if $v$ is a leaf, and if $u$ is a child of $v$ in $T$ then $\Gamma_{v}\ge\Gamma_{u}$.
		\item $d(x,y) = \Gamma_{\lca(\varphi(x),\varphi(y))}$ where $\lca(u,v)$ is the least common ancestor of $u,v$ in $T$.
	\end{enumerate}

For $k\ge 1$, a $k$-hierarchical well-separated tree ($k$-HST) is an ultrametric $T$ that also satisfies that whenever $u$ is a child of $v$ in $T$, then $\Gamma_{v}\geq k\cdot\Gamma_{u}$.
\end{definition}

\begin{definition}[ultrametric cover]\label{def:hst-cover}
    A \emph{$(\tau,\rho)$-ultrametric cover} for a metric space $(X,d)$, is a collection of at most $\tau$ dominating$^{\ref{foot:dominating}}$ ultrametrics $\mathcal{U} = \{(U_i,d_{U_i})\}_{i=1}^{\tau}$ over $X$, such that for every $x,y\in X$ there is an ultrametric $U_i$ for which $d_{U_i}(x,y)\le \rho\cdot d_X(x,y)$.

    The cover is called {\em $l$-light}, if the weight of every ultrametric $U_i$ is at most $l\cdot w(MST(X))$.
\end{definition}

\begin{definition}[Pairwise Partition Cover Scheme]\label{def:ppcs}
    A collection of partitions $\mathbb{P} = \{\mathcal{P}_{1},\dots,\mathcal{P}_{s}\}$
    is $(\tau,\rho,\eps,\Delta)$-pairwise partition cover if (a) $s\le \tau$, (b) every partition $\mathcal{P}_{i}$ is $\Delta$-bounded (that is, $\forall C\in\cP_i$, $\diam(C)\le\Delta$),
    and (c) for every pair $x,y$ such that $\frac{\Delta}{2\rho}\le d(x,y)\le\frac{\Delta}{\rho}$, there is a cluster $C$ in one of the partitions $\mathcal{P}_{i}$ such that $C$ contains both closed balls $B(x,\eps\Delta),B(y,\eps\Delta)$.\\
    A space $(X,d)$ admits a $(\tau,\rho,\eps)$-\emph{pairwise partition cover scheme} (PPCS) if for every $\Delta>0$, it admits a  $(\tau,\rho,\eps,\Delta)$-pairwise partition cover.
\end{definition}

\begin{definition}[$\Delta$-net]\label{def:net}
    For $\Delta > 0$ and a metric space $(X, d)$, a $\Delta$-net is a set $\mathcal{N} \subseteq X$ such that:
    \begin{enumerate}
        \item Packing: For every $u, v \in \mathcal{N}, d(u, v) > \Delta$
        \item Covering: For every $x \in X$, there exists $u \in \mathcal{N}$ satisfying $d(x, u) \le \Delta$.
    \end{enumerate}
\end{definition}

It is well known that a simple greedy algorithm can find a $\Delta$-net.

\begin{definition}\label{def:doubling}
	A metric space $(X, d)$ has doubling dimension $\ddim$, if for every $r>0$, every ball of radius $2r$ can be covered by $2^{\sddim}$ balls of radius $r$. A family of metrics is called {\em doubling}, if all the metrics in the family have uniformly bounded doubling dimension.
\end{definition}
By applying the definition iteratively, we get the following simple corollary.
\begin{lemma}[Packing Lemma]\label{lem:packing}
	If $(X,d)$ has doubling dimension $\ddim$, and ${\cal N}$ is a $\Delta$-net, then for any $R>1$, a ball of radius $R\cdot \Delta$ contains at most $(2R)^{\sddim}$ net points.
\end{lemma}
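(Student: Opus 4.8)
The plan is to iterate the doubling property of $(X,d)$ so as to cover $B(x,R\Delta)$ by a controlled number of balls small enough that each contains at most one net point, and then bound $|\cN\cap B(x,R\Delta)|$ by that number.

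Carrying this out: first I would pick the least integer $i$ with $2^i\ge 2R$, i.e.\ $i=\lceil\log_2(2R)\rceil$. One application of \defref{def:doubling} covers $B(x,R\Delta)$ by $2^{\sddim}$ balls of radius $R\Delta/2$; applying it again to each of these, and repeating $i$ times in all, covers $B(x,R\Delta)$ by at most $(2^{\sddim})^i=(2^i)^{\sddim}$ balls, each of radius $R\Delta/2^i\le\Delta/2$. Second, I would observe that any ball of radius $\Delta/2$ contains at most one point of $\cN$: two net points lying in a common such ball would be at distance at most $\Delta$ by the triangle inequality, contradicting the packing condition $d(u,v)>\Delta$ of \defref{def:net}. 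Combining the two observations, $|\cN\cap B(x,R\Delta)|$ is at most the number of covering balls, namely $(2^i)^{\sddim}$; since $2^i$ is the smallest power of two that is $\ge 2R$ (equal to $2R$ when $2R$ is a power of two, and obtained by rounding up otherwise — a change the statement harmlessly suppresses), this yields the claimed bound $(2R)^{\sddim}$.

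I do not expect a genuine obstacle — as the text notes, this is a simple corollary obtained by iterating the definition. The only point meriting care is to halve the radius all the way down to $\Delta/2$ rather than stopping at radius $\Delta$: a ball of radius $\Delta$ can contain two distinct net points whose mutual distance lies in $(\Delta,2\Delta]$, so it would not cap the net count at one per ball. Once the covering balls have radius at most $\Delta/2$, the counting step is immediate.
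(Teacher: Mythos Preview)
Your proposal is correct and follows exactly the approach the paper sketches: iterate the doubling definition until the covering balls have radius at most $\Delta/2$, then use the packing property of a net to cap each such ball at one net point. Your observation about the rounding of $2R$ to the next power of two is accurate; the paper states the bound as $(2R)^{\sddim}$ without comment, and in all its applications only the asymptotic form $O(R)^{\sddim}$ matters.
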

The proof uses the fact that a ball of radius $\Delta/2$ cannot contain two net points of ${\cal N}$.

The following lemma is an extension of \cite[Lemma 2]{FL22}, that shows it suffices to bound the expected size and lightness of an oblivious $\nu$-reliable spanner, in order to obtain worst-case guarantees.
\begin{lemma}\label{lem:expect-the-worst}
Suppose that $(X,d)$ admits an oblivious $\nu$-reliable spanner ${\cal D}$ with expected size $m$ and expected lightness $\phi$, then $(X,d)$ admits an oblivious $3\nu$-reliable spanner ${\cal D}'$ with size $3\cdot m$ and lightness $3\cdot \phi$.
\end{lemma}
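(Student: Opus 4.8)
The plan is to turn the "expectation" guarantees into "worst-case" guarantees by a standard Markov-plus-union-bound argument, at the cost of a constant blow-up in all parameters. The key observation is that the oblivious spanner ${\cal D}$ is a distribution over graphs, each of which deterministically has some size and some lightness; it is only the \emph{expected} value of these quantities (and the expected size of the faulty extension $B^+_H$) that we control. We want a distribution ${\cal D}'$ supported \emph{only} on graphs that are simultaneously small and light, while still being a reliable spanner with a slightly worse reliability parameter.

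First I would define the "bad" events for a graph $H\sim{\cal D}$: let $E_{\mathrm{size}}$ be the event that $H$ has more than $3m$ edges, and $E_{\mathrm{light}}$ the event that $H$ has lightness more than $3\phi$. By Markov's inequality, $\Pr[E_{\mathrm{size}}]\le 1/3$ and $\Pr[E_{\mathrm{light}}]\le 1/3$, so by a union bound the probability that $H$ is either too large or too heavy is at most $2/3$. Hence with probability at least $1/3$, $H$ is simultaneously a $\nu$-reliable spanner with at most $3m$ edges and lightness at most $3\phi$. Define ${\cal D}'$ to be the distribution ${\cal D}$ conditioned on the good event $\overline{E_{\mathrm{size}}\cup E_{\mathrm{light}}}$. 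By construction every $H\in\supp({\cal D}')$ has size at most $3m$ and lightness at most $3\phi$, so those two deterministic requirements of \Cref{def:reliableSpanner} are met. It remains to check that ${\cal D}'$ is still an oblivious reliable spanner, with reliability parameter $3\nu$: every $H\in\supp({\cal D}')$ is a dominating graph (since it is in $\supp({\cal D})$), and for each attack set $B$ we reuse the same faulty extension $B^+_H$ that ${\cal D}$ provides, which still satisfies the stretch condition for pairs outside $B^+_H$.

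The remaining point — and the only place any computation is needed — is bounding $\mathbb{E}_{H\sim{\cal D}'}[|B^+_H|]$. Conditioning can only increase an expectation by a factor equal to the reciprocal of the probability of the conditioning event: since $|B^+_H|\ge 0$ and the good event has probability at least $1/3$, we have
\[
\mathbb{E}_{H\sim{\cal D}'}\left[|B^+_H|\right]=\frac{\mathbb{E}_{H\sim{\cal D}}\left[|B^+_H|\cdot\mathbf{1}[\text{good}]\right]}{\Pr[\text{good}]}\le\frac{\mathbb{E}_{H\sim{\cal D}}\left[|B^+_H|\right]}{1/3}\le 3(1+\nu)|B|\le(1+3\nu)|B|\cdot 3,
\]
and one checks $3(1+\nu)\le 1+3\nu$ fails in general, so the clean bound to state is $\mathbb{E}_{H\sim{\cal D}'}[|B^+_H|]\le 3(1+\nu)|B|$; since we want the form $(1+\nu')|B|$ with $\nu'=3\nu$, note $3(1+\nu)=3+3\nu$, which is at most $1+3\nu$ only when $\nu\ge 1$, so for the small-$\nu$ regime of interest one instead takes $B'^+_H:=B^+_H$ and absorbs the extra additive $2|B|$ by noting the statement is only meaningful for $\nu<1$; the cleanest fix, which I would adopt, is to first apply the lemma's argument with a slightly smaller conditioning threshold (say, condition on size $\le 9m$ and lightness $\le 9\phi$, each failing with probability $\le 1/9$, good event probability $\ge 7/9$) so that the expectation blows up by only $9/7<3/2$, giving $\mathbb{E}[|B^+_H|]\le\tfrac{9}{7}(1+\nu)|B|\le(1+3\nu)|B|$ for $\nu<1$, while size and lightness are within a factor $9$ — or, more simply, just restate the conclusion with the honest constants that come out and observe they are all $O(1)$. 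The main (mild) obstacle is therefore purely bookkeeping: choosing the conditioning thresholds so that the three resulting constants (size blow-up, lightness blow-up, reliability blow-up) can all be taken to be the single clean constant $3$ as claimed; there is no conceptual difficulty, and the argument is essentially identical to \cite[Lemma 2]{FL22} with the lightness bad-event added to the union bound.
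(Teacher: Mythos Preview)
Your overall approach is exactly the paper's: condition ${\cal D}$ on the good event $A=\{|H|\le 3m\}\cap\{w(H)\le 3\phi\cdot w(\MST(X))\}$, which has probability at least $1/3$ by Markov and a union bound. You correctly identify that the only nontrivial step is bounding the expected size of the faulty extension under the conditioned distribution, and you correctly observe that conditioning inflates expectations by at most the reciprocal of $\Pr[A]$.

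Where you get tangled is in applying this inflation to $|B^+_H|$ itself: $3(1+\nu)|B|$ is not $\le (1+3\nu)|B|$, and your proposed fixes (changing the threshold to $9$, or ``honest constants'') are unnecessary detours. The clean trick you are missing is to work with $|B^+_H\setminus B|$ instead of $|B^+_H|$. Since $B\subseteq B^+_H$ always, we have $|B^+_H|=|B|+|B^+_H\setminus B|$, and the deterministic term $|B|$ is unaffected by conditioning. The random part satisfies $\E_{H\sim{\cal D}}[|B^+_H\setminus B|]\le\nu|B|$, so
\[
\E_{H\sim{\cal D}'}\bigl[|B^+_H\setminus B|\bigr]\le\frac{\E_{H\sim{\cal D}}\bigl[|B^+_H\setminus B|\bigr]}{\Pr[A]}\le\frac{\nu|B|}{1/3}=3\nu|B|,
\]
giving $\E_{H\sim{\cal D}'}[|B^+_H|]\le(1+3\nu)|B|$ with the clean constant $3$ everywhere, exactly as stated.
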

\begin{proof}
We define ${\cal D}'$ by conditioning on the event $A=\{(|H|\le 3m)\wedge (w(H)\le 3\phi\cdot w(MST(X)))\}$. Observe that $\Pr[|H|>3m]\le 1/3$ and also $\Pr[w(H)> 3\phi\cdot w(MST(X))]\le 1/3$, both by Markov's inequality. So that $\Pr[A]\ge 1/3$.
For any attack $B\subseteq X$, 
\begin{eqnarray*}
\E_{H\sim {\cal D}'}[|B^+_H\setminus B|]&=&\E_{H\sim {\cal D}}[|B^+_H\setminus B|~\mid A]\\
&=&\sum_{H\in\supp({\cal D})}|B^+_H\setminus B|\cdot\frac{\Pr[H\cap A]}{\Pr[A]}\\
&\le&\frac{1}{\Pr[A]}\cdot \sum_{H\in\supp({\cal D})}|B^+_H\setminus B|\cdot\Pr[H]\\
&\le& \frac{\nu\cdot |B|}{\Pr[A]}\le 3\nu\cdot|B|~.
\end{eqnarray*}
\end{proof}

\section{Light Reliable Spanner for $k$-HSTs}\label{sec:hst}

In this section we devise a light reliable spanner for the family of $k$-HSTs (see \Cref{def:HST}).
Let $T$ be the tree corresponding to the given $k$-HST, we refer to its leaves as vertices, and to the interval nodes as nodes. Each node has an arbitrary order on its children.
For a node $x$ we denote by $L(x)$ the set of leaves in the subtree rooted at $x$, and by $L=[n]$ the set of all leaves. For an internal node $x$ in $T$, let $\deg(x)$ denote the number of children of $x$. We will assume that $\deg(x)\ge 2$ (as degree $1$ nodes are never the least common ancestor, and thus can be contracted). Our goal is to prove the following theorem.
\begin{theorem}\label{thm:k-HST-new}
	\sloppy For any parameters $\nu \in (0, 1/6)$ and $k> 1$, every $k$-HST $T$ admits an oblivious $\nu$-reliable $(2 + \frac{2}{k-1})$-spanner of size $n\cdot\tilde{O}(\nu^{-1}\cdot\log\log n)^2$ and lightness $\tilde{O}(\nu^{-1}\cdot\log\log n)^2$.
\end{theorem}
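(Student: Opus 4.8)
The construction is the randomized bi-clique scheme sketched in the technical overview. For each node $x$ of $T$ we designate a sample set $Z_x\subseteq L(x)$ of size roughly $\ell\approx\nu^{-1}\cdot\tilde O(\log\log n)$, with one twist: if $x$ has a child $y$ that is \emph{heavy}, meaning $|L(y)|\ge(1-\nu^c)\cdot|L(x)|$ for an appropriate constant $c$, we do not resample, but set $Z_x:=Z_y$ (reusing the child's sample); otherwise $Z_x$ is a fresh i.i.d. sample of $\ell$ leaves drawn uniformly from $L(x)$. The spanner $H$ contains, for every node $x$ with children $x_1,\dots,x_t$, all edges of the complete bipartite graphs $Z_x\times Z_{x_j}$. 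The first routine step is to bound the size and lightness of any $H$ in the support: each node $x$ contributes at most $\ell^2$ edges per child, so $|H|=n\cdot O(\ell^2)$; for lightness, the edges emanating from level-$x$ bi-cliques have length $\le\Gamma_x$, there are $O(\ell^2)$ of them, and since the $k$-HST separation gives $w(\mathrm{MST}(T))\ge\sum_x \Omega(\Gamma_x\cdot(1-1/k))\ge\Omega(\sum_x\Gamma_x)$ (each internal node's children contribute an MST edge of length roughly $\Gamma_x$), we get lightness $O(\ell^2)=\tilde O(\nu^{-1}\log\log n)^2$. (Strictly, I would invoke \lemmaref{lem:expect-the-worst} and only bound \emph{expected} size and lightness.)

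\textbf{Stretch.} Fix leaves $u,v$ and let $x=\lca(u,v)$, with $u\in L(x_i)$, $v\in L(x_j)$, $i\ne j$. The routing rule: recursively connect $u$ to some surviving $u'\in Z_{x_i}$ and $v$ to some surviving $v'\in Z_{x_j}$, then add the two bi-clique edges $\{u',z\},\{z,v'\}$ for a surviving $z\in Z_x$. Since $u,u'\in L(x_i)$ we have $d_T(u,u')\le\Gamma_{x_i}\le\Gamma_x/k$, and $d_T(u',z)\le\Gamma_x$ (both in $L(x)$), so an induction on tree depth shows the recursive $u$–$u'$ leg costs at most $\sum_{m\ge1}\Gamma_x/k^m\le \frac{\Gamma_x}{k-1}$; likewise for $v$. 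Hence $d_H(u,v)\le 2\Gamma_x+\frac{2\Gamma_x}{k-1}=(2+\frac{2}{k-1})\cdot d_T(u,v)$, since $d_T(u,v)=\Gamma_x$. Domination is immediate as $H$ is a weighted graph over $X$ with edge weights equal to the $T$-distances. I must be careful that the recursion \emph{terminates with a valid base case} and that the same $z$ (resp.\ $u',v'$) used at a node can always be reached — this is exactly what the reliability argument must supply.

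\textbf{Reliability — the main obstacle.} Given an attack $B$, define a leaf $w\notin B$ to be \emph{bad} if some ancestor $x$ of $w$ has $Z_x\subseteq B$, and set $B^+=B\cup\{\text{bad leaves}\}$. One checks that if $u,v\notin B^+$ then the routing above goes through: every $Z_x$ on the relevant path has a surviving vertex, and a surviving vertex of $Z_{x_i}$ closest-in-recursion can be chosen consistently. The crux is $\E[|B^+\setminus B|]\le\nu|B|$. I would split leaves into two classes. (i) Leaves having an ancestor $x$ with $|L(x)\cap B|\ge(1-\nu)|L(x)|$: a charging/maximal-ancestor argument bounds the number of such leaves by $\frac{1}{1-\nu}\cdot|B|\cdot\frac{\nu}{1-\nu}=O(\nu|B|)$ (roughly, the topmost such $x$'s have disjoint $L(x)$, each of which is $\ge(1-\nu)$ attacked, so their total leaf count is $\le|B|/(1-\nu)$, and the surviving fraction is $\le\nu/(1-\nu)$). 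These are thrown into $B^+$ unconditionally. (ii) For every other leaf $w$, for each ancestor $x$ of $w$ we have $|L(x)\setminus B|\ge\nu|L(x)|$, so a fresh sample $Z_x$ misses the survivors with probability $\le(1-\nu)^\ell$; summing over ancestors, $\Pr[w\text{ bad}]\le(\#\text{ancestors of }w)\cdot(1-\nu)^\ell$. The naive bound on $\#$ancestors is linear, which is fatal; this is where the heavy-child device enters. Because $Z_x$ is \emph{reused} from a heavy child $y$, the only ancestors contributing a genuinely independent sample are those that are \emph{not} heavy parents along $w$'s root-to-leaf path, and each non-heavy step shrinks $|L(\cdot)|$ by a factor $\le(1-\nu^c)$ — wait, that still gives $\nu^{-c}\log n$ steps; the right statement is that a non-heavy \emph{parent} has $|L(y)|<(1-\nu^c)|L(x)|$ only for the heavy child, so along the path the subtree size halves (up to the $\nu^c$ slack) at most $O(\log n)$ times where a new sample is actually taken — hence $w$ has only $O(\log n)$ "sampling ancestors," giving $\Pr[w\text{ bad}]\le O(\log n)\cdot(1-\nu^c)^\ell\le n^{-2}$ for $\ell=O(\nu^{-1}\log n)$, so the expected count of bad leaves in class (ii) is $o(1)$. (To shave $\ell$ down to $\tilde O(\nu^{-1}\log\log n)$ one uses that a leaf need only survive in expectation: the contribution of $w$ to $\E[|B^+|]$ is $\Pr[w\text{ bad}]$, and one sums $\Pr[w\text{ bad}]$ only over $w$ whose relevant ancestors have $|L(x)\setminus B|$ small, which are few — a two-sided counting argument balances the sample size against the number of "endangered" leaves, yielding $\ell=\Theta(\nu^{-1}\cdot\log(\text{depth of effective ancestors}))=\tilde O(\nu^{-1}\log\log n)$.) Assembling (i) and (ii) gives $\E[|B^+\setminus B|]\le O(\nu|B|)$, and rescaling $\nu$ by a constant (and applying \lemmaref{lem:expect-the-worst}) completes the proof. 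The delicate point I expect to fight with is making the "heavy" threshold depend on $\nu$ correctly so that reusing $Z_y$ for $x$ does not blow up the failure probability when $L(x)\setminus L(y)$ happens to be entirely inside $B$ — i.e.\ the event "$Z_x\subseteq B$" for a reused sample must still be controlled by the survivors \emph{within} $L(y)$, which forces $1-(\text{heavy threshold})$ to be at most a small polynomial in $\nu$.
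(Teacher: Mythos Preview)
Your construction, stretch argument, size/lightness bookkeeping, and the brutally-attacked case (class~(i)) are all essentially the paper's. The gap is in class~(ii) of the reliability analysis. First, a correction: with heavy threshold $1-\Theta(\nu)$ (the paper uses $1-\nu/2$), a root-to-leaf path meets $O(\nu^{-1}\log n)$ distinct samples, not $O(\log n)$ --- a non-heavy step shrinks $|L(\cdot)|$ only by a factor $(1-\Theta(\nu))$, not by half. Your union bound over sampling ancestors then gives $\Pr[w\text{ bad}]\le n^{-2}$ only once $\ell=\Theta(\nu^{-1}\log n)$, which yields lightness $\tilde O(\nu^{-2}\log^2 n)$, off by $(\log n/\log\log n)^2$ from the stated bound.

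The sentence ``a two-sided counting argument balances the sample size against the number of endangered leaves'' is not the paper's mechanism and is too vague to close the gap. The paper does \emph{not} argue per-leaf for class~(ii). Instead it proves, by induction on $m=|L(x)|$ over nodes $x$ that begin heavy paths, the invariant
\[
\E\bigl[|B_2^{(x)}|\bigr]\ \le\ \frac{\ln\ln m}{\ln\ln n}\cdot\nu\cdot|B\cap L(x)|~.
\]
The inductive step has two contributions: the hypothesis applied to the non-heavy children $x_1,\dots,x_t$ of the heavy path (each with $|L(x_j)|\le(1-\nu/2)m$), which sums to $\frac{\ln\ln((1-\nu/2)m)}{\ln\ln n}\cdot\nu\cdot|B\cap L(x)|$; and the \emph{single} event $Z_{f(x)}\subseteq B$, which with $\ell=\Theta\bigl(\nu^{-1}\ln(\nu^{-1}\ln n)\bigr)$ has probability at most $\frac{\nu^2}{\Theta(\ln m\cdot\ln\ln n)}\cdot\frac{|B\cap L(x)|}{m}$ and hence contributes $\frac{\nu^2}{\Theta(\ln m\cdot\ln\ln n)}\cdot|B\cap L(x)|$ after multiplying by $m$. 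The two pieces telescope because of the elementary inequality $\ln\ln m-\ln\ln\bigl((1-\tfrac{\nu}{2})m\bigr)\ge\frac{\nu}{4\ln m}$. This $\ln\ln$ potential is exactly what lets $\ell$ scale with $\log\log n$: the failure at $x$ is charged to $|B\cap L(x)|$ aggregated over all descendants, not to the raw number of leaves under $x$, and a per-leaf union bound cannot reproduce this saving. Finally, the heavy threshold must be $1-\Theta(\nu)$ (not $1-\nu^c$ for $c>1$): when $x$ is not brutally attacked, i.e.\ $|B\cap L(x)|<(1-\nu)m$, the reused sample is drawn from $L(f(x))$ with $|L(f(x))|>(1-\nu/2)m$, so the fraction of survivors there is still at least $1-\frac{1-\nu}{1-\nu/2}\ge\nu/2$, which is what drives $(1-\nu/2)^{\ell}$ down.
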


\subsection{Decomposition of $T$ to Heavy Paths} 

We apply the following decomposition of $T$ into paths, reminiscent of the heavy-path decomposition \cite{ST83}. Each node $x\in T$ is given a tag, initially $\sigma_x=|L(x)|$, and set $D=\emptyset$. Go over the nodes of $T$ in preorder, and when visiting node $x$ with children $x_1,\dots,x_t$: If there is $1\le j\le t$ such that $\sigma_{x_j}> (1-\nu/2)\sigma_x$, set $\sigma_{x_j}=\sigma_x$ and add the edge $\{x,x_j\}$ to $D$. 
For example, if $T$ contains a path $(y_1,y_2,\dots,y_q)$ where $y_1$ is the closest vertex to the root, and $L(y_q)>(1-\nu/2)L(y_2)$ while $L(y_2)<(1-\nu/2)L(y_1)$ then it will hold that $\sigma_{y_1}\ne \sigma_{y_2}=\sigma_{y_3}=\dots=\sigma_{y_q}=|L(y_2)|$.

We claim that $\sigma_x\ge |L(x)|$ for every node $x\in T$, because we either have equality or $x$ inherit the original tag of one of its ancestors. As $1-\nu/2>1/2$, there cannot be two different children of $x$ with more than $|L(x)|/2$ leaves in their subtree, hence there can be at most one child $x_j$ for which an edge is added to $D$. So indeed $D$ is a decomposition of $T$ into heavy paths (some paths can be singletons).
Denote by ${\cal Q}$ this collection of paths, and for each $Q\in{\cal Q}$, let $f(Q)$ be the lowest vertex (farthest from the root) on $Q$. We overload this notation, and define $f(x)=f(Q)$, where $Q$ is the heavy path containing $x$. Let $F=\{f(Q)\}_{Q\in{\cal Q}}$ be the set of lowest vertices over all paths.

\begin{claim}\label{claim:heavy-new}
	Each root-to-leaf path $W$ intersects at most $O(\nu^{-1}\log n)$ paths in ${\cal Q}$.
\end{claim}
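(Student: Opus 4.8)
The claim is that along any single root-to-leaf path $W$, the number of distinct heavy paths $Q \in \mathcal{Q}$ that $W$ touches is $O(\nu^{-1}\log n)$. The natural strategy is a potential argument on the tags $\sigma_x$ (equivalently on $|L(x)|$) as we descend $W$. First I would enumerate the vertices on $W$ as $r = w_0, w_1, \dots, w_m = \ell$ from the root to a leaf, and classify each step $w_{i} \to w_{i+1}$ as either a \emph{heavy step} (the edge $\{w_i, w_{i+1}\}$ was added to $D$, so $w_i$ and $w_{i+1}$ lie on the same path of $\mathcal{Q}$) or a \emph{light step} (otherwise, so $w_{i+1}$ begins a new heavy path and is the top vertex of its path). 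The number of heavy paths that $W$ intersects is then exactly one plus the number of light steps along $W$, so it suffices to bound the number of light steps.

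Next I would quantify how much $|L(\cdot)|$ drops at a light step. If the step $w_i \to w_{i+1}$ is light, then by the decomposition rule the child $w_{i+1}$ of $w_i$ satisfies $\sigma_{w_{i+1}} = |L(w_{i+1})| \le (1-\nu/2)\,\sigma_{w_i}$. (Here one uses that $w_{i+1}$ keeps its original tag — only heavy children inherit the parent's tag — and that the condition for being heavy failed precisely because $|L(w_{i+1})| \le (1-\nu/2)\sigma_{w_i}$.) At a heavy step, $\sigma$ is unchanged along the path. So if we track the quantity $\sigma_{w_i}$ down $W$, it is non-increasing, it stays constant on heavy steps, and it shrinks by a factor of at least $(1-\nu/2)$ at each light step. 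Since $\sigma_{w_0} = |L(r)| = n$ and $\sigma$ never drops below $1$ (every subtree has at least one leaf), after $j$ light steps we have $1 \le \sigma \le (1-\nu/2)^j \cdot n$, hence $(1-\nu/2)^{-j} \le n$, i.e. $j \le \frac{\ln n}{-\ln(1-\nu/2)} = O(\nu^{-1}\log n)$ using $-\ln(1-\nu/2) \ge \nu/2$ for $\nu \in (0,1/6)$.

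Therefore the number of light steps, and hence the number of heavy paths intersecting $W$, is $O(\nu^{-1}\log n)$, which is what we want.

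\textbf{Main obstacle.} The only subtle point is the careful handling of the tags: one must make sure that at a light step the \emph{new} vertex really has tag equal to its true leaf count and not an inherited (larger) tag. This follows because tags are only inherited via heavy edges, and a light step is by definition not a heavy edge — but it requires being precise about the order of the preorder traversal, since $\sigma_{w_i}$ itself may have been set by an ancestor's heavy edge before $w_i$'s children are processed. A second, minor care point is that a heavy path may be a singleton, so "new heavy path" and "light step" must be matched up correctly in the counting; this is handled by the "one plus the number of light steps" bookkeeping above. Neither of these is a real difficulty, so I expect the proof to be short.
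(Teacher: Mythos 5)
Your proof is correct and is essentially the same argument as in the paper: both track the tag $\sigma$ along the descent of $W$, observe that it is constant across heavy edges and drops by a factor of at least $(1-\nu/2)$ whenever $W$ transitions to a new heavy path (using that a non-heavy child keeps its original tag $|L(\cdot)|$), and conclude from $\sigma_r = n$ and $\sigma \ge 1$ that the number of such transitions is $O(\nu^{-1}\ln n)$.
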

\begin{proof}
	Fix a path $Q\in {\cal Q}$. Note that all nodes in $Q$ have the same tag $\sigma_Q$. Whenever the path $W$ leaves $Q$, it will go to some node $y$ with $\sigma_y\le (1-\nu/2)\sigma_Q$. The root has tag $n$, so after leaving $2\nu^{-1}\ln n$ heavy paths, the tag will be at most
	\[
	n\cdot(1-\nu/2)^{2\nu^{-1}\ln n}< n\cdot e^{-\ln n} = 1~,
	\]
	since the tag of any internal node $x$ is at least $|L(x)|$, we must have reached a leaf.
\end{proof}

\subsection{Construction}

For each node $y\in F$, we independently sample uniformly at random a set $Z_y$ of $\ell=c\cdot\nu^{-1}\cdot\ln\left(\frac{\ln n}{\nu}\right)$ vertices from $L(y)$, where $c$ is a constant to be determined later. If there are less than $\ell$ vertices in $L(y)$, take $Z_y=L(y)$.
For each internal node $x$ in $T$ with children $x_1, \dots, x_t$, and for every $1\le j\le t$, we add the edges $\{\{y,z\} ~:~ y\in Z_{f(x)},z\in Z_{f(x_j)}\}$ to the spanner $H$.

\paragraph{Defining the set $B^+$.}
Consider an attack $B$. We say that an internal node $x\in T$ is {\em good} if $Z_{f(x)}\setminus B\neq\emptyset$. A leaf $u$ is {\em safe} if for every ancestor $x$ of $u$, $x$ is good. In other words, a leaf is safe if every ancestor $x$ sampled a leaf to $Z_{f(x)}$ which is not in $B$. \\Define $B^+$ as the set of all leaves which are not safe.

\subsection{Analysis}

\paragraph{Size Analysis.}

For each internal node $x$ in $F$ and each child $x_j$ of $x$, we added the bi-clique $Z_x\times Z_{x_j}$, which contains at most $\ell^2$ edges. Since the sum of degrees of internal nodes in $T$ is $O(n)$ (recall that all degrees are at least 2), the total number of edges added to $H$ is $O(n\cdot\ell^2)=n\cdot\tilde{O}(\nu^{-1}\cdot\log\log n)^2$.

\paragraph{Weight Analysis.}
First, we claim that the weight of the MST for the leaves of $T$ is equal to 
\begin{equation}
	\sum_{x\in T}(\deg(x)-1)\cdot\Gamma_{x}~.\label{eq:HSTweight}
\end{equation}
This can be verified by running Boruvka's algorithm, say.\footnote{In Boruvka's algorithm, we start with all vertices as singleton components. In each iteration, every component adds to the MST the edge of smallest weight leaving it (breaking ties consistently). For a $k$-HST, we use a small variation -- only components which are the deepest leaves in the HST participate in the current iteration. We claim that the connected components after the $j$-th iteration correspond to nodes of height $j$ above the leaves. Thus, in the $j$-th iteration, any node $x$ of height $j$ will add $\deg(x)-1$ edges with weight $\Gamma_x$ each, that connect the components corresponding to its children.} 
Every internal node $x$ in $F$, adds at most $\ell^2\cdot\deg(x)$ edges of weight at most $\Gamma_x$ to the spanner. The total weight is thus
\[
\sum_{x\in F}\deg(x)\cdot \ell^{2}\cdot\Gamma_{x}=O(w(MST)\cdot \ell^{2})=w(MST)\cdot\tilde{O}(\nu^{-1}\cdot\log\log n)^2)~.
\]

\paragraph{Stretch Analysis.} 
The stretch analysis is based on the following lemma.
\begin{lemma}\label{lem:stretch-hst}
Let $u\notin B^+$ be any safe leaf. Then for any ancestor $x$ of $u$ and any $v\in Z_{f(x)}\setminus B$, the spanner $H$ contains a path from $u$ to $v$ of length at most $\left(1+\frac{1}{k-1}\right)\cdot\Gamma_x$ that is disjoint from $B$.
\end{lemma}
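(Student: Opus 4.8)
The plan is to prove \lemmaref{lem:stretch-hst} by induction on the height of the ancestor $x$ above the leaf $u$ (equivalently, on $\Gamma_x$, using that consecutive internal nodes on a root-to-leaf path have labels differing by a factor of at least $k$). First I would handle the base case: if $x=f(u)$ is the lowest internal node of the heavy path containing $u$ that is an ancestor of $u$ — more precisely, if $u$ itself lies in $L(x)$ and $x$ is so low that $u\in Z_{f(x)}$ or $Z_{f(x)}$ is small — then since $u$ is safe, $x$ is good, and the argument is essentially immediate because all of $Z_{f(x)}$ sits inside $L(x)$, which has diameter at most $\Gamma_x$; any edge inside the relevant bi-cliques among descendants of $x$ has weight at most $\Gamma_x$. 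Actually the cleanest base case is when $x$ is an ancestor of $u$ with $\Gamma_x$ minimal among those for which the claim is not vacuous; I will want to be careful to set up the induction so that the ``$u\to v'$'' connection one level down is available.

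For the inductive step, let $x$ be an ancestor of $u$, let $x_i$ be the child of $x$ whose subtree contains $u$, and let $v\in Z_{f(x)}\setminus B$ be arbitrary. Since $u$ is safe, $x_i$ is good, so pick some $u'\in Z_{f(x_i)}\setminus B$. By the induction hypothesis applied to $x_i$ (whose height is strictly smaller, and $\Gamma_{x_i}\le \Gamma_x/k$), there is a $B$-disjoint path in $H$ from $u$ to $u'$ of length at most $(1+\frac{1}{k-1})\Gamma_{x_i}\le (1+\frac{1}{k-1})\frac{\Gamma_x}{k}$. Now I need a single edge from $u'$ to $v$. Here the key point is the construction: the bi-clique added at node $x$ connects $Z_{f(x)}$ to $Z_{f(x_i)}$ (for every child $x_i$), so $\{u',v\}$ is indeed an edge of $H$ provided $v\in Z_{f(x)}$ and $u'\in Z_{f(x_i)}$ — which is exactly our situation. (One subtlety: if the heavy-path decomposition merged $x$ with $x_i$, i.e. $f(x)=f(x_i)$, then $Z_{f(x)}=Z_{f(x_i)}$ and we should just take $u'=v$ and skip adding an edge; I'd note this case separately. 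Also if $x$ was merged upward with its parent so that $f(x)$ is strictly above $x$, the edge added ``at $x$'' still connects $Z_{f(x)}$ to $Z_{f(x_i)}$, so nothing changes.) The edge $\{u',v\}$ has weight $d_T(u',v)=\Gamma_{\lca(u',v)}\le \Gamma_x$, since both $u',v\in L(x)$. Both endpoints $u',v\notin B$, and the recursive path avoids $B$, so the concatenation avoids $B$. Total length is at most $\Gamma_x + (1+\frac{1}{k-1})\frac{\Gamma_x}{k} = \Gamma_x\left(1+\frac{k+1/(k-1)}{k}\right)$; a short computation shows $1+\frac{1}{k}+\frac{1}{k(k-1)} = 1+\frac{k-1+1}{k(k-1)} = 1+\frac{1}{k-1}$, matching the claimed bound.

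The main obstacle I anticipate is bookkeeping the heavy-path decomposition correctly inside the induction: one must make sure that ``$x$ is good'', which is phrased in terms of $Z_{f(x)}$, is exactly what is needed, and that the edge set of $H$ really contains $\{u',v\}$ with $u'\in Z_{f(x_i)}$, $v\in Z_{f(x)}$ regardless of whether $x$, $x_i$, or both were absorbed into heavy paths; and to handle the degenerate case $f(x)=f(x_i)$ (heavy child) where no new edge is used at $x$ and the recursion already lands at a vertex of $Z_{f(x)}$. A second, minor point is pinning down the base of the induction — the smallest ancestor $x$ of $u$ for which the statement is non-vacuous — and checking that there $u$ itself (or a vertex reachable from $u$ by edges of total length $\le \frac{1}{k-1}\Gamma_x$ within $L(x)$) lies in $Z_{f(x)}$; this should follow since for the lowest relevant node either $|L(x)|\le\ell$ so $Z_{f(x)}=L(x)\ni u$, or one descends one more level. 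Once these cases are cleanly separated, the distance arithmetic is routine and the $B$-disjointness is immediate from the definitions of \emph{safe} and \emph{good}.
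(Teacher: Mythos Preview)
Your proposal is correct and follows essentially the same inductive argument as the paper. The one place you are making life harder than necessary is the base case: the paper inducts on $|L(x)|$ and takes the base case $x=u$ itself (a leaf is never a heavy child, so $f(u)=u$ and $Z_{f(u)}=L(u)=\{u\}$, hence $v=u$ and the path has length $0\le(1+\tfrac{1}{k-1})\Gamma_u=0$), which dissolves the ``lowest relevant node'' bookkeeping you flagged. The paper also does not separate out the case $f(x)=f(x_i)$ --- the construction adds the bi-clique $Z_{f(x)}\times Z_{f(x_i)}$ regardless, and your alternative of setting $u'=v$ works just as well.
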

\begin{proof}
The proof is by induction on $|L(x)|$. The base case is when $x=u$, then $L(u)=\{u\}$ and the statement holds trivially. Let $x$ be an ancestor of $u$, and take any vertex $v\in Z_{f(x)}\setminus B$. We need to find a path in $H$ of length at most $\left(1+\frac{1}{k-1}\right)\cdot\Gamma_x$ from $u$ to $v$ that is disjoint from $B$.

Let $x_u$ be the child of $x$ whose subtree contains $u$. Since $u$ is safe, we know that $Z_{f(x_u)}\setminus B\neq\emptyset$, so take any vertex $u'\in Z_{f(x_u)}\setminus B$.
By the induction hypothesis on $x_j$, there is a path $P'$ in $H$ from $u$ to $u'$ of length at most $\left(1+\frac{1}{k-1}\right)\cdot\Gamma_{x_j}$ disjoint from $B$ (note that indeed $|L(x_j)|<|L(x)|$, as all vertices have degree at least 2). Recall that in the construction step for $x$, we added all edges from $Z_{f(x)}$ to $Z_{f(x_u)}$, in particular the edge $\{u',v\}\in H$. Note that $v\notin B$, that $u',v\in L(x)$ and therefore $d_T(u',v)\le\Gamma_x$, and as $T$ is a $k$-HST we have that $\Gamma_{x_j}\le\frac{\Gamma_x}{k}$. It follows that the path $P=P'\circ\{u',v\}$ from $u$ to $v$ in $H$ is disjoint from $B$, and has length at most
\[
\left(1+\frac{1}{k-1}\right)\cdot\Gamma_{x_j}+\Gamma_x\le \left(\frac{1+\frac{1}{k-1}}{k}\right)\cdot\Gamma_x+\Gamma_x = \left(1+\frac{1}{k-1}\right)\cdot\Gamma_x
\]
\end{proof}

Fix a pair of leaves $u,v\notin B^+$, and let $x=\lca(u,v)$. 
Since both are safe, $Z_{f(x)}\setminus B\neq\emptyset$, and pick any $z\in Z_{f(x)}\setminus B$. By \cref{lem:stretch-hst} there are paths in $H$ from $u$ to $z$ and from $v$ to $z$, both disjoint from $B$, of combined length at most 
\[
2\cdot\left(1+\frac{1}{k-1}\right)\cdot\Gamma_x = \left(2+\frac{2}{k-1}\right)\cdot d_T(u,v)~.
\]

\paragraph{Reliability Analysis.}

For every $x\in T$, denote by $B^{(x)}$ the set of all vertices in $u\in L(x)\setminus B$, such that there is an ancestor $z$ of $u$ in the subtree rooted at $x$ for which $Z_{f(z)}\subseteq B$. In other words, those are the leaves (outside $B$) who are not safe due to a bad ancestor in the subtree rooted at $x$.

We say that a node $x\in T$ is \emph{brutally attacked} if $|B\cap L(x)|\ge(1-\nu)\cdot |L(x)|$, that is at least a $1-\nu$ fraction of the decedent leaves of $x$ are in the attack $B$.
Denote by $B^{(x)}_1\subseteq B^{(x)}$ 
the set of vertices $u\in L(x)\setminus B$ that  have a brutally attacked ancestor $y$ in the subtree rooted at $x$.
Denote by  $B^{(x)}_2=B^{(x)}\setminus B^{(x)}_1$ the rest of the vertices in $B^{(x)}$.

We next argue that the number of vertices added to $B^+$ (in the worst case) 
due to brutally attacked nodes is bounded by $O(\nu)\cdot|B|$.
Let $A_{\rm ba}$ be the set of $T$ nodes which are brutally attacked, and they are maximal w.r.t. the order induced by $T$. That is, $x\in A_\ba$ if and only if $x$ is brutally attacked, while for every ancestor $y$ of $x$, $y$ is not brutally attacked.
Clearly, for every $x\in A_\ba$ it holds that $|B_{1}^{(x)}|\le|L(x)\setminus B|\le\nu\cdot|L(x)|\le\frac{\nu}{1-\nu}\cdot|L(x)\cap B|$.
In total, for the root $r$ of $T$ it holds that 
\[
|B_{1}^{(r)}|=\sum_{x\in A_{\ba}}|B_{1}^{(x)}|\le\sum_{x\in A_{\ba}}\frac{\nu}{1-\nu}\cdot|L(x)\cap B|\le\frac{\nu}{1-\nu}\cdot|B|\le2\nu\cdot|B|~.
\]

Next we bound the damage done (in expectation) due to non brutally attacked nodes. Denote $\beta=\frac{1}{\ln\ln n}$.
We will prove for any node $x\in T$ which is not a heavy child, by induction on $|L(x)|$ that 
\begin{equation}\label{eq:ind-hst}
	\E[|B_2^{(x)}|]\le \max\left\{0,\nu\cdot\beta\cdot\ln\ln(|L(x)|)\cdot |B\cap L(x)|\right\}~.
\end{equation}
The base case  where $|L(x)|\le\nu^{-1}$ holds trivially as $B_2^{(x)}=\emptyset$.
Indeed, consider a descendent leaf $v\notin B$ of $x$.
For every ancestor internal node $y$ of $v$, which is a descendent of $x$, it holds that $f(y)=y$ ($y$ does not have heavy children as $|L(y)|-1=(1-\frac{1}{|L(y)|})\cdot|L(y)|<(1-\frac{\nu}{2})\cdot|L(y)|$). In particular $v\in Z_{f(y)}\setminus B$. It follows that $v\notin B_2^{(x)}$, and thus $B_2^{(x)}=\emptyset$.
In general, 
let $x\in T$ be an inner node, which is not a heavy child.
Denote $m=|L(x)|>\nu^{-1}$. 
$x$ is the first vertex in a heavy path $Q=(x=y_1,y_2,...,y_s)\in{\cal Q}$. Let $x_1,\dots,x_t$ be the children of all the nodes in $Q$. Observe that none of $x_1,\dots,x_t$ is a heavy child, and that $L(x_1),\dots,L(x_t)$ is a partition of $L(x)$.
The main observation is that all the vertices in $Q$ use the same sample $Z_{f(x)}$, so a leaf $u$ is in $B_2^{(x)}$ if at least one the following holds: 
\begin{enumerate}
	\item $u\in B_2^{(x_j)}$ for some $1\le j\le t$, or 
	\item $Z_{f(x)}\subseteq B$.
\end{enumerate} 
We conclude that 

\begin{equation}\label{eq:abcd}
	\E[|B_{2}^{(x)}|]\le\sum_{j=1}^{t}\E[|B_{2}^{(x_{j})}|]+|L(x)|\cdot\Pr[Z_{f(x)}\subseteq B]
	~.
\end{equation}
In what follows we bound each of the two summands. For the first, we use the induction hypothesis on $x_j$ (clearly $|L(x_j)|<m=|L(x)|$), to get that
\[
\mathbb{E}\left[\left|B_{2}^{(x_{j})}\right|\right]\le\max\left\{ 0,\nu\cdot\beta\cdot\ln\ln(|L(x_{j})|)\cdot|B\cap L(x_{j})|\right\} 
~.
\]

By definition of a heavy path, for every $1\le j\le t$, $|L(x_j)|\le (1-\nu/2)\cdot\sigma_Q=(1-\nu/2)\cdot m$. 
It holds that $(1-\frac{\nu}{2})\cdot m\ge(1-\frac{\nu}{2})\cdot\nu^{-1}\ge\nu^{-1}-\frac{1}{2}\ge5.5$, and in particular, $\ln\ln\left((1-\frac{\nu}{2})\cdot m\right)>0$.
It follows that
\begin{eqnarray}\label{eq:abdd}
	\sum_{j=1}^{t}\E[|B_{2}^{(x_{j})}|] & \le\sum_{j=1}^{t}\nu\cdot\beta\cdot\ln\ln\left(\left(1-\frac{\nu}{2}\right)\cdot m\right)\cdot|B\cap L(x_{j})|\nonumber\\
	& =\nu\cdot\beta\cdot\ln\ln\left(\left(1-\frac{\nu}{2}\right)\cdot m\right)\cdot|B\cap L(x)|~.
\end{eqnarray}

For the second summand, we now analyze the probability of the event $Z_{f(x)}\subseteq  B$. 
If $|B\cap L(x)|\ge (1-\nu)\cdot|L(x)|$, then $x$ is brutally attacked and thus $B_2^{(x)}=\emptyset$ and (\ref{eq:ind-hst}) holds. We thus can assume $|B\cap L(x)|< (1-\nu)\cdot|L(x)|$.
By the heavy path decomposition, it holds that $|L(f(x))|>(1-\frac{\nu}{2})\cdot m$. In the case that $|L(f(x))|\le\ell$ we take $Z_{f(x)}=L(f(x))$, and as $|L(f(x))|>(1-\frac{\nu}{2})\cdot m>(1-\nu)m>| B\cap L(x)|$, 
there must be a vertex in $Z_{f(x)}\setminus B$. In particular, $\Pr\left[Z_{f(x)}\subseteq B\right]=0$. Otherwise, we have that  $|L(f(x))|>\ell$.
As $Z_{f(x)}$ is chosen from $L(f(x))$ independently of $ B$,
by \Cref{lem:sterling}, the probability that all of the $\ell$ vertices in $Z_{f(x)}$ are chosen from $ B\cap L(f(x))$ is at most
\begin{align}
	\Pr\left[Z_{f(x)}\subseteq B\right] & =\frac{{|B\cap L(f(x))| \choose \ell}}{{|L(f(x))| \choose \ell}}\le O(\sqrt{\ell})\cdot\left(\frac{|B\cap L(f(x))|}{|L(f(x))|}\right)^{\ell}\nonumber\\
	& \le O(\sqrt{\ell})\cdot\left(\frac{1-\nu}{1-\frac{\nu}{2}}\right)^{\ell-1}\cdot\frac{|B\cap L(f(x))|}{m}\nonumber\\
	& \overset{(*)}{\le}\frac{\nu^{2}\cdot\beta}{4\cdot\ln n}\cdot\frac{|B\cap L(f(x))|}{m}\le\frac{\nu^{2}\cdot\beta}{4\cdot\ln m}\cdot\frac{|B\cap L(x)|}{m}~,\label{eq:HSTdefOfelld}
\end{align}
where the inequality $^{(*)}$ uses that $\frac{1-\nu}{1-\frac{\nu}{2}}\le 1-\frac{\nu}{2}\le e^{-\nu/2}$, and taking a large enough constant $c$ in the definition of $\ell$.
By plugging (\ref{eq:abdd}) and (\ref{eq:HSTdefOfelld}) into (\ref{eq:abcd}) we conclude that,
\begin{align*}
	\mathbb{E}\left[\left|B_{2}^{(x)}\right|\right] & \le\sum_{j=1}^{t}\E[|B_{2}^{(x_{j})}|]+m\cdot\Pr[Z_{f(x)}\subseteq B]\\
	& \le\nu\cdot\beta\cdot\ln\ln\left(\left(1-\frac{\nu}{2}\right)\cdot m\right)\cdot|B\cap L(x)|+\frac{\nu^{2}\cdot\beta}{4\cdot\ln m}\cdot|B\cap L(x)|\\
	& \overset{(**)}{\le}\nu\cdot\beta\cdot\ln\ln m\cdot\left|B\cap L(x)\right|~,
\end{align*}
which concludes the proof of (\ref{eq:ind-hst}), and thus the induction step. It remains to validate $^{(**)}$:

\begin{align*}
	\ln\ln m-\ln\ln\left((1-\frac{\nu}{2})\cdot m\right) & =\ln\frac{\ln m}{\ln\left((1-\frac{\nu}{2})\cdot m\right)}\ge\ln\frac{\ln m}{\ln m-\ln(1+\frac{\nu}{2})}\\
	& \ge\ln\left(1+\frac{\ln(1+\frac{\nu}{2})}{\ln m}\right)\ge\frac{\ln(1+\frac{\nu}{2})}{2\ln m}\ge\frac{\nu}{4\ln m}~,
\end{align*}
using $\ln(1+x)\ge\frac x2$ for $0<x<1$. Finally, by applying (\ref{eq:ind-hst}) on the root $r$ of $T$, we get that 
\[
\E[|B^{+}\setminus B|]=\E[|B_{1}^{(r)}|+|B_{2}^{(r)}|]\le(2\nu+\nu\cdot\beta\cdot\ln\ln n)\cdot|B|=3\nu\cdot|B|~.
\]

 \Cref{thm:k-HST-new} follows by rescaling $\nu$ by a  factor of $3$.

\subsection{Improved Stretch for Small Max Degree HST}
In this subsection we slightly modify \Cref{thm:k-HST-new} to obtain a spanner with stretch $(1 + \frac{2}{k-1})$, while increasing the lightness and sparsity to be linear in the maximum degree of the HST. 
Later, we will use \Cref{thm:k-HST-degree-new} to construct an oblivious light $(1+\eps)$-reliable spanner for doubling metrics.
\begin{theorem}\label{thm:k-HST-degree-new}
	\sloppy Consider a $k$-HST $T$ of maximum degree $\delta$. For any parameters $\nu \in (0, 1/6)$ and $k>1$, $T$ admits an oblivious $\nu$-reliable $(1 + \frac{2}{k-1})$-spanner of size $n\cdot \delta\cdot \tilde{O}\left(\nu^{-1}\cdot\log\log n\right)^{2}$ and lightness $\delta\cdot \tilde{O}(\nu^{-1}\cdot\log\log n)^2$.
\end{theorem}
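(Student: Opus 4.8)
The plan is to reuse the construction and analysis of \Cref{thm:k-HST-new} almost verbatim, changing only the edge set of the spanner. We keep the heavy-path decomposition $\mathcal{Q}$, the sets $F$ and the map $f(\cdot)$, the independent samples $Z_y$ of $\ell=c\cdot\nu^{-1}\ln(\frac{\ln n}{\nu})$ leaves for each $y\in F$, and the \emph{identical} definition of the faulty extension $B^+$ (a leaf is unsafe iff some ancestor $x$ has $Z_{f(x)}\subseteq B$). The only modification: for every internal node $x$ with children $x_1,\dots,x_t$, in addition to the bicliques $Z_{f(x)}\times Z_{f(x_j)}$ we also add to $H$ all bicliques $Z_{f(x_i)}\times Z_{f(x_j)}$ for $1\le i<j\le t$. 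Since $B^+$ is a function only of the samples $\{Z_y\}_{y\in F}$ and not of the edges of $H$, and since adding edges can only shorten distances, the reliability analysis is unchanged word for word and still gives $\E[|B^+\setminus B|]\le 3\nu\cdot|B|$; as before we finish by rescaling $\nu$ by a factor of $3$.

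For size and weight: each internal node $x$ now contributes $\binom{\deg(x)}{2}+\deg(x)=O(\deg(x)^2)$ bicliques, each consisting of at most $\ell^2$ edges whose endpoints all lie in $L(x)$, hence each edge has weight at most $\Gamma_x$. Using $\deg(x)\le\delta$ and $\deg(x)\ge 2$ (so $\deg(x)\le 2(\deg(x)-1)$) we bound $\deg(x)^2\le 2\delta\cdot(\deg(x)-1)$. Summing over internal nodes and recalling $\sum_x(\deg(x)-1)=O(n)$ gives size $O(\delta\cdot n\cdot\ell^2)=n\cdot\delta\cdot\tilde O(\nu^{-1}\log\log n)^2$, while $\sum_x(\deg(x)-1)\cdot\Gamma_x=w(\mathrm{MST})$ by \eqref{eq:HSTweight} gives total weight $O(\delta\cdot\ell^2)\cdot w(\mathrm{MST})$, i.e.\ lightness $\delta\cdot\tilde O(\nu^{-1}\log\log n)^2$.

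The one genuinely new point is the stretch, and here the payoff of the extra edges is that a spanner path uses a \emph{single} long edge instead of two. Fix $u,v\notin B^+$ and let $x=\lca(u,v)$, with $x_i$ (resp.\ $x_j$) the child of $x$ whose subtree contains $u$ (resp.\ $v$); these are distinct since $x$ is the least common ancestor, so $L(x_i)\cap L(x_j)=\emptyset$. Safety of $u$ forces $x_i$ to be good, so pick $u'\in Z_{f(x_i)}\setminus B$; similarly pick $v'\in Z_{f(x_j)}\setminus B$. \Cref{lem:stretch-hst} still applies (we kept all the $Z_{f(x)}\times Z_{f(x_j)}$ edges), and its proof in fact keeps the path inside the relevant subtree, so it yields a $B$-free $u$–$u'$ path inside $L(x_i)$ of length at most $(1+\tfrac1{k-1})\Gamma_{x_i}$ and a $B$-free $v$–$v'$ path inside $L(x_j)$ of length at most $(1+\tfrac1{k-1})\Gamma_{x_j}$. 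These two paths are vertex-disjoint, and the edge $\{u',v'\}$ belongs to $H$ (the new biclique at $x$), has both endpoints outside $B$, and has weight at most $\Gamma_x$ since $u',v'\in L(x)$. Concatenating, and using $\Gamma_{x_i},\Gamma_{x_j}\le\Gamma_x/k$ together with $(1+\tfrac1{k-1})\cdot\tfrac2k=\tfrac2{k-1}$, the resulting $u$–$v$ walk in $H[X\setminus B]$ has length at most
\begin{align*}
\Big(1+\tfrac1{k-1}\Big)(\Gamma_{x_i}+\Gamma_{x_j})+\Gamma_x
&\le \Big(1+\tfrac1{k-1}\Big)\cdot\tfrac{2\Gamma_x}{k}+\Gamma_x\\
&= \tfrac{2\Gamma_x}{k-1}+\Gamma_x = \Big(1+\tfrac2{k-1}\Big)\Gamma_x = \Big(1+\tfrac2{k-1}\Big)\,d_T(u,v).
\end{align*}

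I do not expect a real obstacle: the construction, the reliability bound, and the size/weight bookkeeping are immediate adaptations of \Cref{thm:k-HST-new}. The only things to watch are (i) that the two recursive subpaths indeed stay within the disjoint subtrees $L(x_i),L(x_j)$, so that gluing them with the single bridge edge $\{u',v'\}$ is legitimate and $B$-free, and (ii) the arithmetic $(1+\tfrac1{k-1})\cdot\tfrac2k=\tfrac2{k-1}$, which makes the factor-$\tfrac1k$ savings from descending one level exactly absorb the extra additive $\Gamma_x$ term — this is what pins the stretch at precisely $1+\tfrac2{k-1}$.
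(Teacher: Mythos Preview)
Your proposal is correct and essentially identical to the paper's own proof: the same heavy-path decomposition, samples $Z_y$, and definition of $B^+$ are retained, the only change is adding the sibling bicliques $Z_{f(x_i)}\times Z_{f(x_j)}$, and the stretch argument routes through a single bridge edge $\{u',v'\}$ after invoking \Cref{lem:stretch-hst} on each side. Your size/weight bookkeeping via $\deg(x)^2\le 2\delta(\deg(x)-1)$ is slightly more explicit than the paper's, but the content is the same.
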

\begin{proof}
	The construction will follow the exact same lines of \Cref{thm:k-HST-new} with a small tweak. We will use the heavy path decomposition $\cQ$, and for every node $y\in F$, we will sample a set $Z_y$ of size $\ell$ from $L(y)$. The set $B^+$ (and the definition of safe), remain exactly the same.
	The only difference is in the definition of bi-cliques. Specifically, for each internal node $x=x_0$ in $T$ with children $x_1, \dots, x_t$, for every $0\le j< j'\le t$, we add the edges $\{\{y,z\} ~:~ y\in Z_{f(x_j)},z\in Z_{f(x_{j'})}\}$ to the spanner $H$. That is, in addition to adding edges from $Z_{f(x)}$ (the sample set of $x$) to all the other sampled sets (of the children of $x$), we also add all the edges between the two sets $Z_{f(x_j)},Z_{f(x_{j'})}$ of every pair of children of $x$.
	
	As $B^+$ is defined in the exact same way, for every attack $B$ we have $\E[|B^+|]\le(1+\nu)\cdot |B|$.
	
	For the size analysis, consider an internal node $x$ of degree $\deg(x)\le\delta$, we add at most $\ell^2\cdot{\deg(x)+1\choose2}\le \ell^2\cdot\delta\cdot\deg(x)$ edges. In total, the size of the spanner is bounded by $n\cdot \ell^2\cdot\delta = n\cdot\delta\cdot \tilde{O}(\nu^{-1}\cdot\log\log n)^2$.
	
	For the lightness analysis, the total weight added due to an internal node $x$ of degree $\deg(x)\le\delta$ is at most $\ell^2\cdot\delta\cdot\deg(x)\cdot\Gamma_x$. Thus, the total weight added due to the bi-cliques is $\sum_{x\in T}\deg(x)\cdot \ell^{2}\cdot\delta\cdot\Gamma_{x}=\delta\cdot \tilde{O}(\nu^{-1}\cdot\log\log n)^2\cdot w(MST)$.
	
	It remains to analyze the stretch. The argument is similar to \Cref{thm:k-HST-new}, where the main difference is that a $u-v$ path will be using only a single edge in the highest level (instead of two). Note that since we only add additional edges to $H$ in this variant, \Cref{lem:stretch-hst} still holds.
Fix a pair of leaves $u,v\notin B^+$, and let $x=\lca(u,v)$.
Let $x_u$ (resp., $x_v$) be the child of $x$ whose subtree contains $u$ (resp., $v$). Since both $u,v$ are safe, $Z_{f(x_u)}\setminus B\neq\emptyset$ and $Z_{f(x_v)}\setminus B\neq\emptyset$, so pick any $u'\in Z_{f(x_u)}\setminus B$ and $v'\in Z_{f(x_v)}\setminus B$. By the construction step for $x$, we added all edges in $Z_{f(x_u)}\times Z_{f(x_v)}$, in particular, $\{u',v'\}\in H$. Note that $d_T(u',v')\le \Gamma_x$, since both are in $L(x)$.
By \Cref{lem:stretch-hst} there is a path $P_u$ (resp., $P_v$) in $H$ from $u$ to $u'$ (resp., $v$ to $v'$), which is disjoint from $B$, and of length at most $\left(1+\frac{1}{k-1}\right)\cdot\Gamma_{x_u}$ (resp., $\left(1+\frac{1}{k-1}\right)\cdot\Gamma_{x_v}$).  Since $T$ is a $k$-HST we have that $\Gamma_{x_u},\Gamma_{x_v}\le\frac{\Gamma_x}{k}$, therefore the path $P=P_u\circ \{u',v'\}\circ P_v$ is a $u-v$ path in $H$, disjoint from $B$, and has total length at most
\[
2\cdot\left(1+\frac{1}{k-1}\right)\cdot\frac{\Gamma_x}{k} +\Gamma_x = \left(1+\frac{2}{k-1}\right)\cdot d_T(u,v)~.
\]

\end{proof}

\section{Pairwise Partition Cover for Minor Free Graphs}\label{sec:ppcs-minor-free}

In this section we construct a \emph{Pairwise Partition Cover Scheme} (PPCS, recall \Cref{def:ppcs}) for metrics arising from shortest paths of graphs excluding a fixed minor. 
The main building block in the construction of our PPCS is the so called Shortest Path Decomposition (SPD) introduced by \cite{AFGN22}. Roughly speaking, this is a recursive decomposition of the graph into shortest paths, and the measure of interest is the depth of the recursion, as captured by the following definition.

\begin{definition}[$\SPDdepth$]\label{def:SPD-K} A graph has an $\SPDdepth$ $1$ if and only if it is a (weighted) path. A graph $G$ has an $\SPDdepth$ $k \geq 2$ if there exists a \emph{shortest path} $P$, such that deleting $P$ from the graph $G$ results in a graph whose connected components all have $\SPDdepth$ at most $k-1$. 
\end{definition}

It is shown in \cite{AFGN22} that $n$-vertex graphs excluding a fixed minor have $\SPDdepth$ $O(\log n)$ (this follows by using the balanced separator consisting of $O(1)$ shortest paths, by \cite{AG06}).
We will prove the following lemma:
\begin{lemma}\label{lem:pairwise_spddepth}
    For any parameter $0<\varepsilon < 1/6$, any graph $G = (V, E)$ with $\SPDdepth$ $k$ admits a $\left(\frac{k}{\varepsilon}, \frac{2}{1 - 6\varepsilon}, \varepsilon\right)$-PPCS.
\end{lemma}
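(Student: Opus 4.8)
Fix a scale $\Delta>0$; I will build a $\big(\tfrac{k}{\eps},\tfrac{2}{1-6\eps},\eps,\Delta\big)$-pairwise partition cover, which (as $\Delta$ is arbitrary) gives the PPCS; assume $1/\eps\in\N$ (otherwise round up, affecting only constants). Fix a recursive shortest‑path decomposition of $G$ of $\SPDdepth$ $k$: for a vertex $v$ write $G=H^v_1\supseteq H^v_2\supseteq\cdots$ for the chain of recursion subgraphs containing $v$, where $H^v_{j+1}$ is the connected component of $H^v_j\setminus P^v_j$ that contains $v$ and $P^v_j$ is the shortest path removed from $H^v_j$; this is defined up to the level ($\le k$) at which $v$ lies on the removed path. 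Every vertex lies on a unique removed path, at each level the recursion subgraphs have disjoint vertex sets, and since subpaths of shortest paths are shortest, each removed path $P_{G'}$ of a subgraph $G'$ is isometric inside $G'$ to a real interval, parametrised by arc length; for $u\in V(G')$ and an arc position $\tau$ on $P_{G'}$ set $d'_{G'}(u,\tau)=\min_{w\in V(P_{G'})}\big(d_{G'}(u,w)+|\mathrm{arc}(w)-\tau|\big)$. I now define $k/\eps$ partitions $\mathcal P_{i,a}$, indexed by a level $i\in[k]$ and an offset $a\in\{0,\dots,1/\eps-1\}$: on each level‑$i$ subgraph $G'$ place \emph{centers} at arc positions $\{a\eps\Delta+j\Delta:j\in\mathbb Z\}\cap[0,\mathrm{len}(P_{G'})]$; a vertex whose removal level is $<i$ becomes a singleton, and any other vertex $v$ — which lies in a unique level‑$i$ subgraph $G'$ — is assigned to the center $c$ of $P_{G'}$ minimising $d'_{G'}(v,c)$ (ties broken consistently) if this minimum is $\le\Delta/2$, and is a singleton otherwise. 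This is a partition, and if $v,v'$ are assigned to the same center $c$ (at position $\tau$) in $G'$ then $d_G(v,v')\le d_{G'}(v,v')\le d'_{G'}(v,\tau)+d'_{G'}(v',\tau)\le\Delta$ (the triangle inequality through path vertices being legitimate since $|\mathrm{arc}(w)-\mathrm{arc}(w')|=d_{G'}(w,w')$ on the shortest path $P_{G'}$), so every $\mathcal P_{i,a}$ is $\Delta$‑bounded.

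\textbf{Choosing the capture level for a pair.} Fix $x,y$ with $\tfrac{(1-6\eps)\Delta}{4}\le d_G(x,y)=D\le\tfrac{(1-6\eps)\Delta}{2}$, a $G$‑shortest $x$–$y$ path $\pi$, and w.l.o.g. $i_x\le i_y$, where $i_x$ (resp.\ $i_y$) is the least level at which the ball $B_{H^x_{\cdot}}(x,\eps\Delta)$ (resp.\ of $y$) meets the removed path; let $i$ be the least level at which $\pi$ meets $P_{H^x_\cdot}$, and put $i^*=\min\{i,i_x\}$. The technical lynchpin is a \emph{non‑escape lemma}: if $B_{H^v_j}(v,\eps\Delta)\cap P^v_j=\emptyset$ for all $j<\ell$ then $B_G(v,\eps\Delta)\subseteq V(H^v_\ell)$ — a $G$‑shortest path from $v$ into its $\eps\Delta$‑ball stays inside that ball, and the first level at which it could leave the current component would exhibit a vertex of $P^v_j$ within $d_{H^v_j}$‑distance $\eps\Delta$ of $v$, a contradiction. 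A routine induction shows that for $j<i$ the recursion chains of $x$ and $y$ coincide and $\pi\subseteq H^x_j$; hence $G^*:=H^x_{i^*}=H^y_{i^*}$, $\pi\subseteq V(G^*)$ and $d_{G^*}(x,y)=D$, while the non‑escape lemma together with $i^*\le i_x\le i_y$ yields $B_G(x,\eps\Delta)\cup B_G(y,\eps\Delta)\subseteq V(G^*)$, and in turn $d_{G^*}(x,u)\le\eps\Delta$ for $u\in B_G(x,\eps\Delta)$ (symmetrically for $y$). Finally there is a path vertex $p^*\in V(P_{G^*})$ with $d_{G^*}(x,p^*),d_{G^*}(y,p^*)\le D+\eps\Delta$: if $i^*=i$ take $p^*\in\pi\cap P_{G^*}$ (so both distances are $\le D$); if $i^*=i_x$ take the vertex of $P_{G^*}$ realising $d_{G^*}(x,\cdot)\le\eps\Delta$ and bound the distance to $y$ through $x$.

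\textbf{Padding.} Choose the offset $a$ so that $P_{G^*}$ carries a center $c$ with $|\mathrm{arc}(c)-\mathrm{arc}(p^*)|\le\eps\Delta$ (the centers over all offsets form the $\eps\Delta$‑lattice, so this is possible; the ends of $P_{G^*}$ cost only negligible extra loss, absorbed into the $6$, and a slightly finer lattice resolution removes the boundary tie below). For every $u\in B_G(x,\eps\Delta)\cup B_G(y,\eps\Delta)$ (all inside $V(G^*)$) we get $d'_{G^*}(u,c)\le d_{G^*}(u,p^*)+|\mathrm{arc}(c)-\mathrm{arc}(p^*)|\le(\eps\Delta+(D+\eps\Delta))+\eps\Delta=D+3\eps\Delta\le\Delta/2$, so $u$ is eligible for $c$'s cluster; and for any other center $c'$ on $P_{G^*}$, using once more that arc distance equals $d_{G^*}$ on $P_{G^*}$, one gets $\Delta\le|\mathrm{arc}(c)-\mathrm{arc}(c')|\le d'_{G^*}(u,c)+d'_{G^*}(u,c')$, hence $d'_{G^*}(u,c')\ge\Delta-d'_{G^*}(u,c)\ge d'_{G^*}(u,c)$, so $c$ is the closest center to $u$ and $u$ is assigned to it. Therefore $B_G(x,\eps\Delta)\cup B_G(y,\eps\Delta)$ lies in a single cluster of $\mathcal P_{i^*,a}$, proving the padding property with $\rho=\tfrac{2}{1-6\eps}$, and completing the construction.

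\textbf{Main obstacle.} I expect the real work to be the bookkeeping between the global metric $d_G$ and the many subgraph metrics $d_{H^v_j}$: pinning the ``capture level'' $i^*$ down as the minimum of one $\pi$‑based and two ball‑based indices, proving the non‑escape lemma, and then checking that at that one level all three structural facts hold simultaneously — that $x,y$ and both $\eps\Delta$‑balls sit inside the single subgraph $G^*$, that one surviving path vertex $p^*$ is $d_{G^*}$‑close to both endpoints, and that the one‑dimensional structure of the shortest path $P_{G^*}$ makes the center lattice behave like a line. The weighted case forces the minor technical device of treating each removed shortest path as a continuous segment (the auxiliary distance $d'_{G'}$) rather than a discrete vertex set, but this does not change the shape of the argument.
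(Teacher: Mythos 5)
Your proof is correct (up to the minor boundary issue you flag) but takes a genuinely different route from the paper's. The paper proves the lemma by induction on the $\SPDdepth$: it removes the top-level shortest path $P$, recurses on the components of $G\setminus P$ to obtain $(k-1)\eps^{-1}$ partitions, and adds $\eps^{-1}$ new partitions built from disjoint balls $B(z,\Delta/2)$ around an $\eps\Delta$-net of $P$ (the net points are indexed modulo $\eps^{-1}$ so that same-index centers are $>\Delta$ apart, hence the balls are disjoint). The padding argument then either passes the pair down the recursion -- noting that $B_u$, $B_v$, $P_{uv}$ are literally unchanged in the relevant component of $G\setminus P$, since they are disjoint from $P$ -- or, if $P$ meets $\pi$ or a ball, a nearby net point captures the pair. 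You instead flatten the recursion, indexing $k\eps^{-1}$ partitions by (recursion level, offset) and placing a $\Delta$-spaced arc-length lattice of centers on each removed path. The price of flattening is your non-escape lemma and the $i,i_x,i_y,i^*$ bookkeeping: that is precisely the content the paper dispatches with its one-line recursion observation that the balls and path survive unchanged into the component. Each packaging has its uses: the paper's recursion keeps the level bookkeeping invisible and the proof short; your flattened version makes the capture level of a pair an explicit object, which is convenient if one later wants to reason about the partition family level by level (as is done, in effect, in \Cref{sec:minor-free}).

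Two points to tighten in a write-up. First, nearest-center assignment can tie at radius exactly $\Delta/2$; rather than perturbing the lattice, mirror the paper: same-offset arc positions on $P_{G^*}$ are $\ge\Delta$ apart, so by your own inequality $|\mathrm{arc}(c)-\mathrm{arc}(c')|\le d'_{G^*}(u,c)+d'_{G^*}(u,c')$, the sets $\{u:d'_{G^*}(u,c)\le\Delta/2\}$ can intersect only when both distances are exactly $\Delta/2$; declaring the clusters to be these balls (with the measure-zero overlap broken arbitrarily, and all uncovered vertices singletons) makes the padding step unconditional. Second, spell out that every $u\in B_G(x,\eps\Delta)\cup B_G(y,\eps\Delta)$ is not a singleton in $\mathcal{P}_{i^*,a}$: from $u\in V(G^*)$ it follows that $u\notin P^x_j$ for all $j<i^*$ and that $u$'s recursion chain agrees with $x$'s through level $i^*$, hence $u$'s removal level is $\ge i^*$; this is a consequence of the non-escape argument, but the padding step quietly relies on it and it deserves a sentence.
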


In particular, as graphs excluding a fixed minor have $\SPDdepth=O(\log n)$, we obtain the following corollary.
\begin{corollary}\label{thm:minor_free:pairwise}
    For any parameter $\varepsilon < 1/6$, every graph $G = (V, E)$ that excludes a fixed minor, admits a $\left(\frac{O(\log n)}{\varepsilon}, \frac{2}{1 - 6\varepsilon}, \varepsilon\right)$-PPCS 
\end{corollary}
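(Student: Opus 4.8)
The plan is to obtain the corollary as a direct instantiation of \Cref{lem:pairwise_spddepth}. That lemma supplies, for every graph of $\SPDdepth$ $k$ and every $\varepsilon<1/6$, a $\left(\frac{k}{\varepsilon},\frac{2}{1-6\varepsilon},\varepsilon\right)$-PPCS. Hence the only thing still needed is a bound of the form $\SPDdepth(G)=O(\log n)$ for every $n$-vertex graph $G$ excluding a fixed minor; plugging $k=O(\log n)$ into the parameters of \Cref{lem:pairwise_spddepth} then yields exactly the claimed $\left(\frac{O(\log n)}{\varepsilon},\frac{2}{1-6\varepsilon},\varepsilon\right)$-PPCS.

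For the $\SPDdepth$ bound I would simply invoke the result of \cite{AFGN22}, which establishes $\SPDdepth(G)=O(\log n)$ for every $n$-vertex $K_h$-minor-free graph $G$, with the hidden constant depending only on $h$. For completeness I would recall the one-line reason behind it: by \cite{AG06}, any $K_h$-minor-free graph admits a $\tfrac{2}{3}$-balanced separator that is the union of $O_h(1)$ shortest paths; peeling these shortest paths off one at a time and recursing on the resulting connected components (whose vertex counts have dropped by a constant factor after $O_h(1)$ peels) is precisely a shortest-path decomposition, and its depth satisfies a recurrence of the form $D(n)\le O_h(1)+D(\tfrac{2}{3}n)$, giving $D(n)=O_h(\log n)$. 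I would cite this chain of reasoning rather than reprove it, since it is external to the present paper.

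Combining the two: fix $\varepsilon<1/6$ and an $n$-vertex graph $G$ excluding a fixed minor; set $k=\SPDdepth(G)=O(\log n)$; apply \Cref{lem:pairwise_spddepth} to $G$ with parameter $\varepsilon$ to get a $\left(\frac{k}{\varepsilon},\frac{2}{1-6\varepsilon},\varepsilon\right)$-PPCS, which is a $\left(\frac{O(\log n)}{\varepsilon},\frac{2}{1-6\varepsilon},\varepsilon\right)$-PPCS as desired. There is no real obstacle here — the entire mathematical content resides in \Cref{lem:pairwise_spddepth} and in the cited $\SPDdepth=O(\log n)$ bound. The only bookkeeping point worth stating explicitly is that the constant absorbed into $O(\log n)$ depends only on the excluded minor, and in particular not on $\varepsilon$ or $n$, which is consistent with the $O(\cdot)$ notation in the statement.
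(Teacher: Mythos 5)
Your proposal matches the paper's argument exactly: both invoke \Cref{lem:pairwise_spddepth} with the $\SPDdepth=O(\log n)$ bound for minor-free graphs from \cite{AFGN22} (which rests on the $O(1)$-shortest-path balanced separators of \cite{AG06}), and your recursion sketch is the same one-line justification the paper gives parenthetically. Nothing to add.
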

\begin{proof}[Proof of \Cref{lem:pairwise_spddepth}]
We will assume for simplicity (and w.l.o.g.) that $\epsilon^{-1}$ is an integer.
    Fix $\Delta > 0$. We will prove  by induction on the $\SPDdepth$, that graphs with $\SPDdepth$ $k$ admit a $\left(\frac{k}{\varepsilon}, \frac{2}{1 - 6\varepsilon}, \varepsilon,\Delta\right)$-PPC, assuming all graphs with $\SPDdepth$ less than $k$ admits a $\left(\frac{k-1}{\varepsilon}, \frac{2}{1 - 6\varepsilon}, \varepsilon,\Delta\right)$-PPC.
    For the base case, we think of a graph with $\SPDdepth$ 0 as the empty graph, where there is nothing to prove.

    Let $G=(V,E)$ be a connected graph with $\SPDdepth$ $k$, denote by $d(u,v)$ the shortest path distance between $u,v\in V$, and let $P$ be a shortest path in $G$ such that every connected component in $G \backslash P$ has $\SPDdepth$ at most $k-1$.
\paragraph{Construction.}
The basic idea is quite simple, we use the $\frac{k-1}{\varepsilon}$ partitions for the connected components of $G\setminus P$, and create $\frac{1}{\varepsilon}$ new partitions, whose goal is proving padding for pairs $u,v$ such that $P$ intersect the shortest $u-v$ path, or the balls $B_G(u, \varepsilon\Delta), B_G(v, \varepsilon\Delta)$.

    We start by defining the new partitions ${\cal P}_{new}=\{\PP_1,\dots,\PP_{\varepsilon^{-1}}\}$. Let $\mathcal{N} = \{z_1, \dots, z_l\} \subseteq P$ be an $\varepsilon\Delta$-net for $P$ (recall \Cref{def:net}). Fix one endpoint of $P$, and assume that $(z_1,\dots,z_l)$ are sorted by their distance to this endpoint of $P$.
    For every $i \in \{0,1,\dots,\varepsilon^{-1}-1\}$, let $\mathcal{N}_i = \{z_j ~:~ j \equiv i \mod \varepsilon^{-1}\}$.
    For every $z_p,z_q \in \mathcal{N}_i$ with $1\le p < q\le l$,  we have that
    \[d(z_p, z_q) = \sum_{j=p}^{q-1}{d(z_j, z_{j+1})} > (p - q)\varepsilon\Delta \ge\Delta ~.\] 
    The equality holds as $P$ is a shortest path, the first inequality holds since the distance between net points is larger than $\varepsilon\Delta$, and the last inequality by definition of ${\cal N}_i$. Thus, the balls $B(z_p, \Delta/2), B(z_q, \Delta / 2)$ are disjoint.

    For every $0\le i \le \varepsilon^{-1}-1$, we set $\PP_i$ to contain the clusters $\{B(z, \Delta/2)\}_{z \in \mathcal{N}_i}$, and add the rest of the vertices (those that are not contained in any of these balls) as singleton clusters.

    Let $G_1, \dots, G_t$ be the connected components of $G \backslash P$, where $t$ is the number of connected components.
    For every $1\le j\le t$, we apply the induction hypothesis  on $G_j$, which yields a   $\left(\frac{k-1}{\varepsilon}, \frac{2}{1 - 6\varepsilon}, \varepsilon,\Delta\right)$-PPC for $G_j$. This is a collection $\cF^{(j)} = \{\PP_1^{(j)}, \dots \PP_{\varepsilon^{-1}(k-1)}^{(j)}\}$ of $\varepsilon^{-1}(k-1)$ partitions. 
    For every $1\le i \le \varepsilon^{-1}(k-1)$, we construct a partition $\HH_i$ for $G$, by taking $\cup_{j=1}^t \PP^{(j)}_i$, and adding the remaining vertices (note these are the vertices of $P$) as singleton clusters. 
    We return $\cF=\{\PP_{i}\}_{i=0}^{\eps^{-1}-1} \cup \{\HH_i\}_{1\le i \le \varepsilon^{-1}(k-1)}$ as the PPC for $G$. It remains to show that $\cF$ is indeed a $\left(\frac{k}{\varepsilon}, \frac{2}{1 - 6\varepsilon}, \varepsilon, \Delta\right)$-PPC.

    \paragraph{Correctness.}
    First observe that $\cF$ is a set of partitions: for $0\le i\le\varepsilon^{-1}-1$, $\PP_i$ is a partition by definition, while for $1\le i\le \varepsilon^{-1}\cdot(k-1)$, $\HH_i$ is a partition since the connected components are pairwise disjoint. The number of partitions is $\varepsilon^{-1} + \varepsilon^{-1}(k-1) = \varepsilon^{-1} \cdot k$ as required.

    \paragraph{Diameter bound.} Note that $\PP$ is $\Delta$-bounded, because every cluster is either a ball of radius $\Delta/2$, a singleton, or a cluster in a $\Delta$-bounded partition $\HH_i$.
    
\paragraph{Padding property.}
    Let $u, v \in V$, and denote by $P_{uv}$ the shortest $u-v$ path in $G$, and by $B_u=B(u, \varepsilon\Delta)$, $B_v=B(v, \varepsilon\Delta)$. If $\Delta>0$ is such that $\frac{(1 - 6\varepsilon)\Delta}{4}\le d(u, v) \le \frac{(1 - 6\varepsilon)\Delta}{2}$, then we need to show that at least one of the partitions in $\PP$ contains a cluster $C$ such that both $B_u, B_v$ are contained in $C$.

Suppose first that $P$ is disjoint from $P_{uv}\cup B_u\cup B_v$.
   In this case, there exists a connected component $G_j$ in $G \backslash P$, such that $B_u\cup B_v\cup P_{uv} \subseteq G_j$, and therefore $d_{G_j}(u, v) = d(u, v)$. Thus, by the induction hypothesis, there is a cluster $C$ in $\cF^{(j)}$ which contains both $B_u,B_v$, and this cluster is also in one of the $\HH_i$, and thus in $\cF$. (While in general, distances in $G_j$ can be larger from those of $G$, the balls $B_u,B_v$ and $P_{uv}$ remain exactly the same, as they are disjoint from $P$.)

    Consider now the case (see \Cref{fig:minor_free:pairwise} (a)), where $P$ intersects $P_{uv}$. Let $x\in P\cap P_{uv}$  be an (arbitrary) vertex in the intersection. By the covering property of nets, there exists $z \in \mathcal{N}$ such that $d(x, z) \le \varepsilon \Delta$. We bound the distance from any $y \in B_u$ to $z$ by the triangle inequality,
    \begin{align*}
        d(z, y) & \le d(z, x) + d(x, u) + d(u, y)                                            \\
                  & \le d(z, x) + d(v, u) + d(u, y)                                            \\
                  & \le \varepsilon\Delta + \frac{(1 - 6\varepsilon)\Delta}{2} + \varepsilon\Delta
        \le \Delta / 2.
    \end{align*}

   Thus, the cluster $C=B(z,\Delta/2)$ satisfies $B_u \subseteq C$ and by a symmetric argument $B_v \subseteq C$, as required.

    The remaining case is that $P$ intersects $B_u$ or $B_v$. Assume w.l.o.g.\ $P$ intersects $B_v$, and let $x\in P\cap B_v$ (see \Cref{fig:minor_free:pairwise} (b)). As before, there exists $z \in \mathcal{N}$ such that $d(x, z) \le \varepsilon \Delta$.

    Let $y \in B_u$. By the triangle inequality

    \begin{align*}
        d_G(z, y) &\le d_G(z, x) + d_G(x, v) + d_G(v, u) + d_G(u, y) \\
        &\le \varepsilon\Delta + \varepsilon\Delta + \frac{(1 - 6\varepsilon)\Delta}{2} + \varepsilon\Delta \le \Delta/2,
    \end{align*}
    hence $B_u\subseteq C:=B(z, \Delta/2)$. The argument for $B_u$ is simpler. So both balls are in the same cluster $C$, as required.
\end{proof}
\begin{figure}[htbp]
    \centering
    \includegraphics[width=0.7\textwidth]{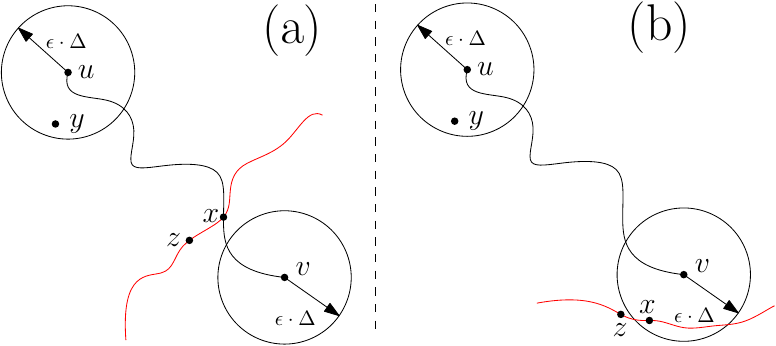}
    \caption{\it Illustration of the proof of \Cref{lem:pairwise_spddepth}, where we show if $P$ (colored red) intersects either $P_{uv}$ (figure (a)) or any of $B_u,B_v$ (figure (b)), then both $B_u, B_v$ are in $B(z, \Delta / 2)$.
    \label{fig:minor_free:pairwise}}
\end{figure}

\section{From Pairwise Partition Cover to Light $k$-HST Cover}\label{sec:ppcs->hst}

In this section we devise a light $k$-HST cover (see \Cref{def:hst-cover}) from a Pairwise Partition Cover Scheme (PPCS, see \Cref{def:ppcs}). The framework essentially follows that of \cite{FL22}, except that we need to guarantee also a bound on the lightness of each tree in the cover. To this end, we ensure each cluster contains a net point (recall \Cref{def:net}).

The following simple claim, which lower bounds the MST weight with respect to a net, is proven in \cite[Claim 1]{FN22}.
\begin{claim}[\cite{FN22}]\label{claim:net:light}
    Let $\mathcal{N}$ be a $\Delta$-net of a metric space $(X, d)$. Then $|\mathcal{N}| \le\left\lceil \frac{2}{\Delta}\cdot w(\MST(X))\right\rceil $.
\end{claim}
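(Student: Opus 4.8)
The plan is to bound $|\mathcal{N}|$ by producing a single walk that visits every point of $X$ and has length at most $2\,w(\MST(X))$, and then to exploit the fact that consecutive net points encountered along this walk must lie at distance more than $\Delta$.

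Concretely, I would first let $T=\MST(X)$ and fix the order $v_1,v_2,\dots,v_n$ in which the vertices of $X$ are first visited during a depth-first traversal of $T$. The classical ``double the tree'' bound gives $\sum_{i=1}^{n-1} d(v_i,v_{i+1})\le 2\,w(T)$: the Euler tour of $T$ has length exactly $2\,w(T)$, it encounters $v_1,\dots,v_n$ in this order, and for each $i$ the sub-walk of the tour between the first visit of $v_i$ and the first visit of $v_{i+1}$ has length at least $d(v_i,v_{i+1})$ by the triangle inequality, while these sub-walks occupy pairwise edge-disjoint portions of the tour. Next, let $z_1,\dots,z_m$ (with $m=|\mathcal{N}|$) be the net points listed in the order they occur within $v_1,\dots,v_n$. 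Deleting all non-net vertices from the sequence cannot increase the sum of consecutive distances — each internal deletion replaces two terms $d(a,b)+d(b,c)$ by the single term $d(a,c)\le d(a,b)+d(b,c)$, and a boundary deletion only removes a term — so $\sum_{i=1}^{m-1} d(z_i,z_{i+1})\le 2\,w(T)$. Finally, the packing property of a $\Delta$-net gives $d(z_i,z_{i+1})>\Delta$ for every $i$, hence $(m-1)\Delta< 2\,w(T)$, i.e.\ $m< \tfrac{2}{\Delta}w(T)+1$; since $m$ is an integer this rearranges to $m\le \lceil \tfrac{2}{\Delta}w(T)\rceil$, which is the claim. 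The trivial cases $m\le 1$ (in particular $|X|=1$) are dispatched separately in one line.

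I do not anticipate a genuine obstacle here: the argument is a routine combination of the double-tree inequality with shortcutting via the triangle inequality. The only points requiring a little care are stating the Euler-tour bound so that the edge-disjointness of the sub-walks is manifest, performing the final rounding step correctly, and separately handling degenerate instances where $\MST(X)$ has zero weight.
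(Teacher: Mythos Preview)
Your argument is correct. The paper does not actually prove this claim; it simply cites \cite[Claim 1]{FN22} and uses the result as a black box. Your proof via the Euler tour of the MST, shortcutting by the triangle inequality, and then invoking the strict packing inequality $d(z_i,z_{i+1})>\Delta$ is the standard one, and your rounding step is fine: from $(m-1)\Delta < 2\,w(\MST(X))$ you get $m-1 < \tfrac{2}{\Delta}w(\MST(X)) \le \lceil \tfrac{2}{\Delta}w(\MST(X))\rceil$, and since both sides of $m-1 < \lceil\cdot\rceil$ are integers this yields $m \le \lceil \tfrac{2}{\Delta}w(\MST(X))\rceil$.
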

The main result of this section is captured by the following theorem.
\begin{theorem}\label{thm:pairwise_partition_cover_to_ultrametric_cover}
Fix any integer $\tau\ge 1$, and parameters $\rho\ge 1$ and $0<\epsilon<1/12$.
    Suppose that a given metric space $(X,d)$ admits a $(\tau,\rho,\epsilon)$-PPCS, then for any $k \ge \frac{8\rho}{\varepsilon}$, $(X,d)$  admits a $O(k\log n)$-light  $\left(O(\frac{\tau}{\epsilon}\log{k}),\rho(1+3\epsilon)\right)$-$k$-HST cover.
\end{theorem}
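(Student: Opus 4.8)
The plan is to follow the standard reduction from partition covers to HST covers (as in \cite{FL22}), but to insert a "net sparsification" step at every scale so that each resulting $k$-HST is light. Fix a $(\tau,\rho,\epsilon)$-PPCS. The relevant distance scales are $\Delta_i = k^i$ for $i \in \mathbb{Z}$; note that since the metric has aspect ratio at most $\poly(n)$ (after rescaling), only $O(\log_k n)$ scales are nontrivial, which is where the $\log_k n$ and hence the $\log n$ factors will enter. For each scale $\Delta_i$, invoke the PPCS to get $\tau$ partitions $\mathcal{P}_1^{(i)},\dots,\mathcal{P}_\tau^{(i)}$, each $\Delta_i$-bounded, padding all pairs at distance in $[\frac{\Delta_i}{2\rho},\frac{\Delta_i}{\rho}]$. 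Before using a partition $\mathcal{P}_j^{(i)}$, take an $\epsilon\Delta_i$-net $\mathcal{N}_j^{(i)}$ of $X$ and discard every cluster of $\mathcal{P}_j^{(i)}$ that contains no net point, reassigning its vertices to an arbitrary neighboring surviving cluster (one whose center/closure is within $2\epsilon\Delta_i$, which exists by the covering property of the net). After this surgery the partition is still $(1+O(\epsilon))\Delta_i$-bounded, it has at most $|\mathcal{N}_j^{(i)}| = O(\Delta_i^{-1}\cdot w(\MST(X)))$ clusters by \Cref{claim:net:light}, and crucially any cluster $C$ that was used for padding a pair $u,v$ at distance $\ge \frac{\Delta_i}{2\rho}$ is untouched: since $k \ge 8\rho/\epsilon$, the padding balls have radius $\epsilon\Delta_i \ge 8\rho\epsilon \cdot \frac{\Delta_i}{2\rho}\cdot\frac{1}{8} $... more carefully, $C$ contains balls of radius $\epsilon\Delta_i$ around $u$ and $v$, and $d(u,v)\ge \Delta_i/(2\rho) \ge 4\epsilon\Delta_i$ by the choice of $k$, so $C$ contains two points at mutual distance $> 2\epsilon\Delta_i$, hence at least one net point lies in $C$ (by maximality/covering of the net one of $u,v$ is within $\epsilon\Delta_i$ of a net point that must then lie in $C$ — this is the one slightly delicate spot and I will argue it by choosing the net greedily so that net points are themselves "central"). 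Thus the padding guarantee is preserved while the cluster count at scale $\Delta_i$ is bounded by $O(\Delta_i^{-1} w(\MST))$.

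Next, assemble the HSTs. Following \cite{FL22}, for each index $j\in[\tau]$ and each "offset" we build $k$-HSTs whose internal node at depth corresponding to scale $\Delta_i$ is labelled $\Gamma_x = \Delta_i$ and whose children are the clusters of the (sparsified) partition $\mathcal{P}_j^{(i)}$ refining $x$'s cluster. Because a single chain of partitions $\mathcal{P}_j^{(\cdot)}$ need not be laminar across scales, we use the standard trick of producing $O(\log k)$ "shifted" families per index $j$ so that for any pair $u,v$ there is one HST in which $u,v$ are separated exactly at the scale $\Delta_i$ with $\Delta_i/(2\rho)\le d(u,v)\le \Delta_i/\rho$ and co-located in a padding cluster one level down; this gives $d_T(u,v) \le \Delta_i \le k\rho\cdot d(u,v)$... wait, we want $\rho(1+3\epsilon)$: the point is that $u,v$ sit in the same child-cluster at scale $\Delta_{i}$ only if... — the correct accounting is that $u,v$ are padded together at scale $\Delta_{i'}$ where $\Delta_{i'}/\rho \ge d(u,v)$ and we pick the smallest such scale, so $\Delta_{i'} \le \rho\cdot d(u,v)\cdot(1+O(\epsilon))$ wait no — I will instead follow \cite{FL22} exactly: the pair is separated in the HST at a node of label $\Delta_{i}$ with $\Delta_i$ the smallest scale exceeding $2\rho\, d(u,v)$, but they are re-merged one level below because of padding, giving stretch $\rho(1+3\epsilon)$; the extra $3\epsilon$ absorbs both the net-reassignment slack and the geometric scale rounding. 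The number of HSTs is $\tau$ (indices) times $O(\log k)$ (shifts) times $O(1/\epsilon)$ (the sub-offsets needed to handle the padding scale vs. separation scale discrepancy), for a total of $O(\frac{\tau}{\epsilon}\log k)$.

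Finally, bound the lightness of each HST $T$ in the cover. By \eqref{eq:HSTweight}, $w(\MST(\text{leaves of }T)) = \sum_{x\in T}(\deg(x)-1)\Gamma_x$, and more to the point the weight of $T$ as a weighted tree (each node pays $\Gamma_{\text{parent}}$ for its parent edge) is $O(\sum_{x\in T}\Gamma_x) \le O(\sum_i (\#\text{internal nodes at scale }\Delta_i)\cdot \Delta_i)$. By the sparsification step, the number of internal nodes at scale $\Delta_i$ is at most the number of surviving clusters $\le O(\Delta_i^{-1} w(\MST(X)))$, so each scale contributes $O(w(\MST(X)))$, and summing over the $O(\log_k n) = O(\log n)$ nontrivial scales gives $w(T) = O(\log n \cdot w(\MST(X)))$. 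Hmm, the theorem claims $O(k\log n)$-lightness, not $O(\log n)$ — the extra factor of $k$ comes because in a $k$-HST a node at scale $\Delta_i$ whose parent is at scale $\Delta_{i'}$ with $i' > i$ possibly by more than one pays $\Gamma_{\text{parent}}$ which can be up to $k\Delta_i$ when $i'=i+1$ and more generally the "compression" of long edgeless chains forces the per-node charge up by a factor $k$; alternatively, counting $w(T)\le \sum_x \Gamma_x \le k\sum_x \Gamma_{\text{child-scale below }x}$ and telescoping yields the stated $O(k\log n)$. I will present this weight computation carefully, as it is the new ingredient.

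\textbf{Main obstacle.} The delicate point is the interaction between net sparsification and the padding guarantee: I must ensure that discarding net-point-free clusters (and the subsequent reassignment) never destroys a cluster that some well-separated pair needs for padding, \emph{and} that the reassignment slack ($2\epsilon\Delta_i$ per vertex, potentially compounding) does not blow up either the diameter bound or the stretch beyond $\rho(1+3\epsilon)$. The resolution is the inequality $k\ge 8\rho/\epsilon$, which makes the padding radius $\epsilon\Delta_i$ large relative to the "quantization error" of working at geometrically spaced scales $\Delta_i = k^i$, leaving enough room. Everything else — the shift trick, the per-scale cluster count, the MST-vs-net bound — is either routine or directly quotable (\Cref{claim:net:light}, \cite{FL22}).
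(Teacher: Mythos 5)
Your plan captures the main new ingredient correctly — at each scale $\Delta_i$, kill every cluster of the PPCS that contains no point of an $O(\epsilon\Delta_i)$-net and reassign its vertices to a nearby surviving cluster, then charge each internal node of the resulting HST to a distinct net point and invoke \Cref{claim:net:light} to get $O(k)\cdot w(\MST)$ per scale and hence $O(k\log n)$ lightness overall. That is exactly the role the net plays in the paper's \Cref{lem:ppcs-hier-light}. However, there is a genuine gap in your handling of the \emph{laminarity} of the partitions, and it is not the one you flag as "the one slightly delicate spot."

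You write that "a single chain of partitions $\mathcal{P}_j^{(\cdot)}$ need not be laminar across scales, we use the standard trick of producing $O(\log k)$ 'shifted' families per index $j$" to handle this. The shift/offset trick (replacing $\Delta_i = k^i$ by $\Delta_i(l) = l\cdot k^i$ over $O(\epsilon^{-1}\log k)$ values of $l$) handles only the problem of matching a pair's distance to one of the geometrically spaced scales. It does nothing to make the clusters of $\mathcal{P}_j^{(i-1)}$ refine those of $\mathcal{P}_j^{(i)}$: a cluster of the low-level partition can straddle two high-level clusters for \emph{every} choice of shift. Without laminarity you simply cannot define an HST from the chain, and this is the entire reason the paper's \Cref{lem:ppcs-hier-light} has a \emph{second} surgical step (after the net-sparsification step), which moves each entire low-level cluster $C'$ into a single high-level cluster $C$ intersecting it. This second surgery is where the additional diameter inflation $2(1+\epsilon)\Delta_{i-1} \le \epsilon\Delta_i/2$ enters, and it is precisely this bound that requires $k \ge 8\rho/\epsilon$. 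In your account, the hypothesis $k\ge 8\rho/\epsilon$ is invoked only to ensure the padding balls are large relative to the net scale; but that is a much weaker requirement, and the actual constraint comes from controlling the compounding diameter growth of the laminarity surgery. Because your proposal never performs that surgery, and because the shifts don't substitute for it, the "build the HST from the chain" step does not go through as written.

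A secondary issue: your stretch bookkeeping is internally inconsistent (you oscillate between "separated at scale $\Delta_i$" and "merged at scale $\Delta_i$"), and the concrete claim $d_T(u,v) \le \rho(1+3\epsilon)\,d(u,v)$ is asserted rather than derived. Once the laminar hierarchical partition is in place, the argument is clean: there is an offset $l$ and index $i$ with $\Delta_i/(2\rho) \le d(u,v) \le \Delta_i/\rho$; padding puts $u,v$ and their $\epsilon\Delta_i$-balls in a common PPCS cluster $C$; both surgeries leave $B(u,\epsilon\Delta_i)$ and $B(v,\epsilon\Delta_i)$ (and hence the two $(i-1)$-level clusters containing $u,v$, whose diameter is $\le (1+\epsilon)\Delta_i/k < \epsilon\Delta_i$) inside $\tilde C$; so $\lca_T(u,v)$ has label $(1+\epsilon)\Delta_i \le \rho(1+\epsilon)^2 d(u,v) \le \rho(1+3\epsilon)d(u,v)$. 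You should add this, and also the second surgery, to complete the proof; with those in place your lightness and cluster-count computations are essentially the paper's.
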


    Assume w.l.o.g. that the minimal distance in $(X,d)$ is $1$, and let $\Phi$ be the maximal distance. Fix a real number  $1\le l\le k$, and for $-1\le i\le \log_k\Phi$, let $\Delta_i(l) = l\cdot k^i$ (for brevity we will omit $l$ 
    when it is clear from context), and let $\mathcal{N}_i$ be an $\frac{\varepsilon\Delta_i}{4}$-net. The following lemma shows how to change a collection of pairwise partition covers, so it will become hierarchical and each cluster will contains a net point.

 \begin{lemma}\label{lem:ppcs-hier-light}
    Fix a real number  $1\le l\le k$. For each integer $-1\le i\le \log_k\Phi$, let $\{\PP_1^i, \dots \PP_\tau^i\}$ be a $(\tau,\rho,\epsilon, \Delta_i)$-pairwise partition cover.
    Then there exists a collection of $(\tau, (1 + \varepsilon)\rho, 0, (1 + \varepsilon)\Delta_i)$-pairwise partition covers $\{\tilde{\PP}_1^i, \dots \tilde{\PP}_\tau^i\}_{i=-1}^{\log_k\Phi}$ that satisfies the following two properties:
    \begin{enumerate}
    \item For every $-1\le i\le \log_k\Phi$ and $1\le j\le\tau$, $|\tilde{\PP}_j^i| \le |\mathcal{N}_i|$.
    
        \item For every  $1\le j\le\tau$, the partitions ${\{\tilde{\PP}_j^i \}}_{i \ge -1}$ are hierarchical (that is, for each $0\le i\le \log_k\Phi$, every cluster of $\tilde{\PP}_j^{i-1}$ is contained in a cluster of $\tilde{\PP}_j^i$).   
    \end{enumerate}
    \end{lemma}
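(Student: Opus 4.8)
\medskip

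The plan is to process the scales from bottom ($i=-1$) to top, at each scale first making the partition hierarchical with respect to the one below it, and then ``snapping'' clusters to net points of $\mathcal{N}_i$ so that the cluster count is bounded by $|\mathcal{N}_i|$. I would work one index $j$ at a time, since the constructions for different $j$ are independent. So fix $j$ and abbreviate $\PP^i = \PP^i_j$, $\tilde\PP^i = \tilde\PP^i_j$.

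\medskip

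\textbf{Step 1 (hierarchy by lifting).} Define $\tilde\PP^{-1}$ from $\PP^{-1}$ (and the net $\mathcal{N}_{-1}$) as in Step~2 below. Then, inductively, given $\tilde\PP^{i-1}$, form an intermediate partition $\PP'^i$ of $X$ whose clusters are obtained by taking each cluster $C \in \PP^i$ and replacing it by the union of all clusters of $\tilde\PP^{i-1}$ that it ``mostly'' contains -- concretely, assign each lower cluster $C' \in \tilde\PP^{i-1}$ to the cluster $C \in \PP^i$ containing (say) a fixed representative point of $C'$, and let the new cluster be the union of all $C'$ assigned to $C$. By construction $\PP'^i$ refines to $\tilde\PP^{i-1}$ (it is a union of whole lower clusters), so the hierarchy property holds; and since a lower cluster $C'$ has diameter $\le (1+\varepsilon)\Delta_{i-1} = \tfrac{1+\varepsilon}{k}\Delta_i$, moving from $C$ to the union of assigned $C'$ inflates the diameter by at most $2 \cdot \tfrac{1+\varepsilon}{k}\Delta_i \le \tfrac{\varepsilon}{2}\Delta_i$ (using $k \ge 8\rho/\varepsilon \ge 8/\varepsilon$). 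Crucially, I need the padding pairs at scale $i$ to survive this: if $u,v$ with $\tfrac{\Delta_i}{2\rho} \le d(u,v) \le \tfrac{\Delta_i}{\rho}$ were padded in $C\in\PP^i$ (so $B(u,\varepsilon\Delta_i), B(v,\varepsilon\Delta_i) \subseteq C$), then $B(u,\tfrac{\varepsilon}{2}\Delta_i)$ and $B(v,\tfrac{\varepsilon}{2}\Delta_i)$ are still inside $C$, hence inside the lifted cluster. So $\PP'^i$ is a $(\tau, \rho, \tfrac{\varepsilon}{2}, (1+\tfrac{\varepsilon}{2})\Delta_i)$-pairwise partition cover and refines to $\tilde\PP^{i-1}$.

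\medskip

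\textbf{Step 2 (snap to net, bound cluster count).} Given $\PP'^i$ (with diameter $\le (1+\tfrac{\varepsilon}{2})\Delta_i$), delete every cluster that contains no point of $\mathcal{N}_i$, and reassign each vertex $x$ of a deleted cluster to any cluster of $\PP'^i$ that contains a net point within distance $\tfrac{\varepsilon}{4}\Delta_i + (1+\tfrac{\varepsilon}{2})\Delta_i$ of $x$ -- more carefully: $x$ lies in some deleted cluster $C''$; pick the net point $z \in \mathcal{N}_i$ with $d(x,z) \le \tfrac{\varepsilon}{4}\Delta_i$ (covering), and $z$ lies in some surviving cluster (or its own cluster survives since it contains $z$); reassign $x$ there. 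This merge moves $x$ a distance at most $\tfrac{\varepsilon}{4}\Delta_i$ plus the diameter of the target cluster, inflating diameters by at most $\tfrac{\varepsilon}{4}\Delta_i + (1+\tfrac{\varepsilon}{2})\Delta_i$... this is too crude; instead I bound the diameter of the \emph{final} cluster directly: every point in it is within $\tfrac{\varepsilon}{4}\Delta_i$ of a net point that was in the original $\PP'^i$-cluster, so the final diameter is at most $(1+\tfrac{\varepsilon}{2})\Delta_i + 2\cdot\tfrac{\varepsilon}{4}\Delta_i = (1+\varepsilon)\Delta_i$. This defines $\tilde\PP^i$. Since every surviving cluster contains a distinct net point of $\mathcal{N}_i$, we get $|\tilde\PP^i| \le |\mathcal{N}_i|$, which is Property~1. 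Hierarchy (Property~2) is preserved because snapping only merges whole clusters of $\PP'^i$ (a deleted cluster, being a union of whole $\tilde\PP^{i-1}$-clusters, is absorbed into a surviving cluster, which is also such a union), so $\tilde\PP^{i-1}$ still refines $\tilde\PP^i$.

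\medskip

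\textbf{Step 3 (padding survives the snap).} This is the delicate point and the main obstacle. I must check that the padding pairs at scale $i$ are not destroyed when clusters without net points are deleted. Take $u,v$ with $\tfrac{\Delta_i}{2\rho}\le d(u,v)\le\tfrac{\Delta_i}{\rho}$, padded in cluster $C \in \PP'^i$ with $B(u,\tfrac{\varepsilon}{2}\Delta_i), B(v,\tfrac{\varepsilon}{2}\Delta_i)\subseteq C$. I claim $C$ contains a net point, hence survives: since $\mathcal{N}_i$ is an $\tfrac{\varepsilon}{4}\Delta_i$-net, there is $z\in\mathcal{N}_i$ with $d(u,z)\le\tfrac{\varepsilon}{4}\Delta_i < \tfrac{\varepsilon}{2}\Delta_i$, so $z\in B(u,\tfrac{\varepsilon}{2}\Delta_i)\subseteq C$. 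Thus $C$ is a cluster of $\tilde\PP^i$ as well, and $u,v$ are still padded there with $0$-padding as claimed (we only needed $u,v\in C$ after the snap, and both survive in $C$; actually $B(u,0)=\{u\}$ etc., so $0$-padding just means $u,v\in C$, which is immediate). Hence $\{\tilde\PP^i_j\}_j$ is a $(\tau,(1+\varepsilon)\rho,0,(1+\varepsilon)\Delta_i)$-pairwise partition cover -- the stretch becomes $(1+\varepsilon)\rho$ because the diameter bound grew from $\Delta_i$ to $(1+\varepsilon)\Delta_i$ while the distance window for padded pairs is still defined relative to $\Delta_i$; equivalently, rescaling so the diameter is $\Delta_i' = (1+\varepsilon)\Delta_i$, a pair at distance $\in[\tfrac{\Delta_i}{2\rho},\tfrac{\Delta_i}{\rho}] = [\tfrac{\Delta_i'}{2\rho(1+\varepsilon)},\tfrac{\Delta_i'}{\rho(1+\varepsilon)}]$ is padded, i.e. the effective stretch is $\rho(1+\varepsilon)$ (I would double-check whether the clean statement wants $(1+\varepsilon)\rho$ exactly or whether a slightly different window bookkeeping is intended, but the loss is $(1\pm O(\varepsilon))$ either way).

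\medskip

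I expect Step~3 -- verifying that the net is fine enough that every padding cluster automatically contains a net point -- to be the crux, and the reason the net scale is chosen as $\tfrac{\varepsilon\Delta_i}{4}$ rather than, say, $\tfrac{\varepsilon\Delta_i}{2}$: we spend one factor of $2$ on covering and need headroom left inside the padding ball. The only other thing to be careful about is the very bottom scale $i=-1$: there is no $\tilde\PP^{-2}$ to be hierarchical over, so $\tilde\PP^{-1}$ is obtained from $\PP^{-1}$ by Step~2 alone, and the induction in Step~1 starts from $i=0$.
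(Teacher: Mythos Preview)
Your overall plan---enforce hierarchy and bound the cluster count by snapping to net points---matches the paper's, but you perform the two operations in the opposite order, and this creates a genuine gap in Step~2. Your diameter bound there (``every point in it is within $\tfrac{\varepsilon}{4}\Delta_i$ of a net point that was in the original $\PP'^i$-cluster'') relies on reassigning each vertex $x$ of a deleted cluster to the surviving cluster containing \emph{$x$'s own} nearby net point, i.e.\ vertex-by-vertex. But your hierarchy claim (``a deleted cluster $\ldots$ is absorbed into a surviving cluster'') asserts that the entire deleted cluster $C''$ lands in a \emph{single} surviving cluster. These are incompatible: under vertex-by-vertex reassignment, two vertices $x,x'$ in the same lower cluster $C'\in\tilde\PP^{i-1}\subseteq C''$ may have nearby net points lying in different surviving clusters, splitting $C'$ and destroying the hierarchy. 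Conversely, if you merge all of $C''$ into one surviving cluster, both pieces already have diameter up to $(1+\tfrac{\varepsilon}{2})\Delta_i$, so the result can have diameter close to $2\Delta_i$.

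The paper avoids this by reversing the order: it snaps to the net \emph{first} on the raw $\PP^i$ (vertex-by-vertex, yielding $\hat\PP^i$ with diameter at most $\Delta_i+\tfrac{\varepsilon}{2}\Delta_i$ and $|\hat\PP^i|\le|\mathcal N_i|$), and only \emph{then} enforces hierarchy by moving each whole $\tilde\PP^{i-1}$-cluster into a $\hat\PP^i$-cluster it intersects. Because hierarchy is the last step, it is automatically preserved; the second step cannot increase the cluster count; and the extra diameter it contributes is only $2(1+\varepsilon)\Delta_{i-1}\le\tfrac{\varepsilon}{2}\Delta_i$. The trade-off is that the padding argument becomes slightly subtler: one must check that the lower clusters $\tilde C_u,\tilde C_v$ containing $u,v$ have diameter $<\varepsilon\Delta_i$, hence lie inside $B(u,\varepsilon\Delta_i),B(v,\varepsilon\Delta_i)\subseteq C\subseteq\hat C$, so $u$ and $v$ are not moved away during the hierarchy step. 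Your order could be salvaged by reassigning whole $\tilde\PP^{i-1}$-clusters (rather than vertices or whole $\PP'^i$-clusters) in Step~2, but then the diameter comes out to $(1+\tfrac{3\varepsilon}{2})\Delta_i$ rather than $(1+\varepsilon)\Delta_i$, requiring a rescaling you did not budget for.
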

    \begin{proof}
    Fix $j \in [\tau]$. We show how to construct ${\{\tilde{\PP}_j^i \}}_{i \ge -1}$ by induction on $i$.
    For $i=-1$, since $\Delta_{-1}= l/k\le 1$, there is no padding requirement, and we may take the trivial partition to singletons. Assume that for some $0\le i\le\log_k\Phi$, we constructed $\tilde{\PP}_j^{i-1}$ that satisfies both properties, and we will show how to construct $\tilde{\PP}_j^{i}$.

    Start with the partition ${\PP}_j^{i}$. The first change will force every cluster to contain a net point.
    For each cluster $C \in\PP_j^{i}$, if $C \cap \mathcal{N}_i = \emptyset$, we remove $C$ from ${\PP}_j^{i}$. Then for every $v \in C$ we add $v$ to the cluster in ${\PP}_j^{i}$ containing the nearest net point in $\mathcal{N}_i$ to $v$. This creates a partition ${\hat P}_j^{i}$. 
    Now every cluster contains at least one net point, therefore $|{\hat P}_j^i| \le |\mathcal{N}_i|$. Also observe that the new cluster of $v$ will not be removed.

    The second change will guarantee the hierarchical property. For each cluster $C' \in \tilde{\PP}_j^{i-1}$, move all the vertices of $C'$ to some cluster $C \in {\hat P}_j^{i}$ which intersects $C'$. Call the resulting partition $\tilde{\PP}_j^{i}$, which satisfies the second property by construction.

Observe that it is no longer true that every cluster of $\tilde{\PP}_j^{i}$ contains a net point (it could have moved in the second change). Nevertheless, the number of clusters in $\tilde{\PP}_j^{i}$ did not change.
It remains to show that $\{\tilde{\PP}_1^i, \dots \tilde{\PP}_\tau^i\}$ is indeed a $(\tau, (1 + \varepsilon)\rho, 0, (1 + \varepsilon)\Delta_i)$-pairwise partition cover.

    \paragraph{Diameter bound.}
    We start by showing that each cluster $\tilde{C} \in \tilde{\PP}_j^{i}$ has diameter at most $(1 + \varepsilon)\Delta_i$, by induction on $i$. The base case $i=-1$ is trivial since every cluster has diameter $0$.
    Assume the claim holds for $i-1$ and we will prove it for $i$.

    Let $C \in {\PP}_j^{i}$ be the cluster before the updates leading to $\tilde{C}$. In the first change we may have moved vertices from other clusters (those without a net point) to $C$, creating the cluster $\hat{C}$. By the covering property of nets, these vertices are at distance most $\frac{\varepsilon\Delta_i}{4}$ from some net point in $C$.
      For any $u\in \hat{C}$, let $r_u\in C$ be the closest point to $u$ in $C$ (not necessarily a net point). Then for any $u,v\in\hat{C}$,
      \begin{equation}\label{eq:ddiam}
      d(u,v)\le d(u,r_{u})+d(r_{u},r_{v})+d(r_{v},v)\le\frac{\varepsilon\Delta_{i}}{4}+\diam(C)+\frac{\varepsilon\Delta_{i}}{4}=\diam(C)+\frac{\varepsilon\Delta_{i}}{2}~.
      \end{equation}
    In particular,  $\diam(\hat{C})\le\diam(C)+\frac{\varepsilon\Delta_i}{2}$.

    In the second change, we may have added to $\hat{C}$ entire clusters $C' \in \tilde{\PP}_j^{i-1}$ which intersect it, creating $\tilde{C}$ (note that we may have also removed points from $C$, but this surely will not increase the diameter). The diameter of each $C'$ is at most $(1 + \varepsilon)\Delta_{i-1}$ by the induction hypothesis. 
    Hence, by a similar argument to above, \[\diam(\tilde{C})\le\diam(\hat{C})+2\diam(C')\le\diam(\hat{C})+2(1 + \varepsilon)\Delta_{i-1}~.\]  
    Recall that $k\ge 8\rho/\varepsilon\ge (1 + \varepsilon)4/\varepsilon$, and so $2(1 + \varepsilon)\Delta_{i-1}=2(1 + \varepsilon)\Delta_i/k\le \varepsilon\Delta_i/2$.   
    We conclude that 
    \[
        \diam(\tilde{C})\le \diam(\hat{C})+\frac{\varepsilon\Delta_i}{2} \le \diam(C)+2\cdot\frac{\varepsilon\Delta_i}{2}\le (1+\varepsilon)\cdot\Delta_i~.
    \]

\paragraph{Padding property.}
    It remains to show that for $u, v \in X$, if there exists $-1\le i\le\log_k\Phi$ such that $\frac{\Delta_i}{2\rho} = \frac{(1 + \varepsilon)\Delta_i}{2(1 + \varepsilon)\rho} \le d(u, v) \le \frac{(1 + \varepsilon)\Delta_i}{2(1 + \varepsilon)\rho} = \frac{\Delta_i}{\rho}$, then both $u,v$ are contained in a single cluster in at least one of the partitions $\{\tilde{\PP}_1^{i},...,\tilde{\PP}_\tau^{i}\}$. By the padding property of $\{{\PP}_1^{i},...,{\PP}_\tau^{i}\}$, there exists $1\le j\le\tau$ and a cluster $C\in \PP_j^i$, such that $B(u, \varepsilon \Delta_i), B(v, \varepsilon \Delta_i) \subseteq C $.
    We argue that $u, v \in \tilde{C}$ for the cluster $\tilde{C}\in \tilde{\PP}_j^i$ created from $C$ by our construction.
    
    By the covering property of nets, there is a net point of ${\cal N}_i$ in $B(u, \varepsilon \Delta_i) \subseteq C$, thus $C$ was not removed in the first change, and there is a corresponding cluster $\hat{C}\in\hat{\PP}_j^{i}$ (note that $C\subseteq \hat{C})$. 
    
    Let $\tilde{C}_u, \tilde{C}_v \in \tilde{\PP}_j^{i-1}$ be the clusters containing $u, v$ respectively. The diameter of $\tilde{C}_u, \tilde{C}_v$ is bounded by $(1 + \varepsilon)\Delta_{i-1} = (1 + \varepsilon)\cdot\frac{\Delta_i} {k}\le \frac{(1 + \varepsilon)\varepsilon}{8\rho}\Delta_i < \varepsilon \Delta_i$. Thus, these clusters are contained in $B(u, \varepsilon\Delta_i), B(v, \varepsilon\Delta_i)$ respectively, and therefore also in $\hat{C}$.
    So after the second change, $u,v$ do not move to any other cluster, and are both in $\tilde{C}$. 
    
    This concludes the proof that $\{\tilde{\PP}_1^i, \dots \tilde{\PP}_\tau^i\}$ is a $(\tau, (1 + \varepsilon)\rho, 0, (1 + \varepsilon)\Delta_i)$-pairwise partition cover.

\end{proof}
We are now ready to prove the main theorem of this section.
\begin{proof}[Proof of \Cref{thm:pairwise_partition_cover_to_ultrametric_cover}]
Fix $l \in \{{(1 + \varepsilon)}^c ~:~ c \in [0, \log_{1 + \varepsilon}{k}]\}$. Since $(X,d)$ admits a PPCS, for every integer $i\ge-1$ there exist $\{\PP_1^i, \dots \PP_\tau^i\}$ that is a $(\tau,\rho,\epsilon, \Delta_i)$-pairwise partition cover. Apply \Cref{lem:ppcs-hier-light} to obtain a $(\tau, (1 + \varepsilon)\rho, 0, (1 + \varepsilon)\Delta_i)$-pairwise partition cover $\{\tilde{\PP}_1^i, \dots \tilde{\PP}_\tau^i\}$ that satisfy both properties described in the lemma.  

For every $j \in [\tau]$ we construct a single $k$-HST $T$ from the collection of partitions ${\{\tilde{\PP}_j^i \}}_{-1\le i\le\log_k\Phi}$. There is a bijection from the nodes of $T$ to the clusters of the partitions. The leaves of $T$ correspond to the singleton clusters of $\tilde{\PP}_j^{-1}$. For each $0\le i \le\log_k\Phi$, and each cluster $C\in \tilde{\PP}_j^i$, create a node $x=x(C)$  with label $\Gamma_x=(1+\varepsilon)\cdot\Delta_i$, and connect $x$ to all the nodes corresponding to the clusters $\{C' \subseteq C~:~ C'\in \tilde{\PP}_j^{i-1}\}$ (here we use the fact that this pairwise partition cover is hierarchical). Since the label of every such $C'$ is $(1+\varepsilon)\cdot\Delta_{i-1}=(1+\varepsilon)\cdot\Delta_i/k$, and the distance between every two points in $C$ is at most $(1+\varepsilon)\cdot\Delta_i$, this $T$ is indeed a dominating $k$-HST.

We construct $\tau$ of these $k$-HSTs for every $l$, and the collection of all these is our $k$-HST cover for $(X,d)$. The number of $k$-HSTs is indeed $\tau\cdot(1+\log_{1 + \varepsilon}{k}) = O(\frac\tau\eps\cdot \log k)$, as required. It remains to bound the lightness of each $T$, and argue about the stretch of this cover.
     
\paragraph{Lightness bound.}
Now we show that for any $k$-HST $T$ created as above, its lightness is $O(k\log n)$. Recall that the weight of $T$ is $\sum_{x\in T}(\deg(x)-1)\cdot\Gamma_x$ (see equation (\ref{eq:HSTweight})). For any $0\le i\le \log_k\Phi$, by construction the sum of degrees of nodes corresponding to clusters of $\tilde{\PP}_j^i$ is exactly equal to $|\tilde{\PP}_j^{i-1}|$. By the first property of the lemma we have that $|\tilde{\PP}_j^{i-1}|\le |{\cal N}_{i-1}|$, so
\begin{eqnarray*}
w(T)&=&\sum_{x\in T}(\deg(x)-1)\cdot\Gamma_x\\
&\le& \sum_{i=0}^{\log_k\Phi}|\tilde{\PP}_j^{i-1}|\cdot\Delta_i\\
&\le&k\cdot \sum_{i=0}^{\log_k\Phi}|{\cal N}_{i-1}|\cdot\Delta_{i-1}
\end{eqnarray*}

Denote $W=w(MST(X))$. If $\Phi\ge n^3$, we bound separately the lower terms in the sum,
\begin{eqnarray*}
 k\cdot\sum_{i=0}^{\log_k(\Phi/n^3)}|{\cal N}_{i-1}|\cdot\Delta_{i-1}&\le& k\cdot \sum_{i=0}^{\log_k(\Phi/n^3)}n\cdot l\cdot k^i\\
 &\le& 2n\cdot k^2\cdot (\Phi/n^3)\\
 &\le& 2W~,
\end{eqnarray*}
using that $l\le k$ and $\Phi\le W$. For the remaining terms, we have by \Cref{claim:net:light} that $|{\cal N}_i|\cdot\Delta_i=O(W)$, therefore
\begin{eqnarray*}
k\cdot\sum_{i=\max\{0,\log_k(\Phi/n^3)\}}^{\log_k\Phi}|{\cal N}_{i-1}|\cdot\Delta_{i-1}&\le& k\cdot  \sum_{i=\max\{0,\log_k(\Phi/n^3)\}}^{\log_k\Phi} O(W)\\
&=& O(k\cdot\log n\cdot W)~,
\end{eqnarray*}
so the lightness of each tree is indeed $O(k\log n)$.

    \paragraph{Stretch bound.}
    Fix any $u, v \in X$, and let $D=\rho\cdot(1+\varepsilon)\cdot d(u,v)$. Let $i=\lfloor\log_kD\rfloor$, and note that $k^i\le D<k^{i+1}$, so there exists integer  $0\le c\le \log_{1+\varepsilon}k$ such that $l\cdot k^i\le D<(1+\varepsilon)\cdot l\cdot k^i$ (recall that $l=(1+\varepsilon)^c$).
    With these choices of $l$ and $i$ we get that
    \[
       \frac{\Delta_i}{2\rho} \le \frac{\Delta_i}{\rho\cdot(1 + \varepsilon)} \le  d(u, v) \le \frac{\Delta_i}{\rho}.
    \]
    By the padding property of $\{\tilde{\PP}_j^i\}_{1\le j\le\tau}$, there exists $j \in [\tau]$ and a cluster $C \in \tilde{\PP}_j^i$ such that $u, v \in C$. So in the $k$-HST $T$ created from $\tilde{\PP}_j^i$, there is a node $x$ corresponding to $C$ with $\Gamma_x=(1+\varepsilon)\Delta_i$, and so \[d_T(u, v) \le  (1 + \varepsilon)\Delta_i \le \rho\cdot{(1+\varepsilon)^2} \cdot d(u, v) \le \rho\cdot{(1+3\varepsilon)} \cdot d(u, v)~.\]

\end{proof}

\subsection{$k$-HST Cover for Doubling Metrics.}

The following lemma asserts that in our construction of $k$-HST cover described above, every tree has bounded degree.
\begin{lemma}\label{lem:pairwise-doubling-bounded}
	If a metric space $(X,d)$ has doubling dimension $\ddim$, then every $T$ in the $k$-HST cover of \Cref{thm:pairwise_partition_cover_to_ultrametric_cover} has maximum degree $O(k/\varepsilon)^{\sddim}$.
\end{lemma}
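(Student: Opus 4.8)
The plan is to bound the number of children of an arbitrary node $x$ in a tree $T$ of the $k$-HST cover by tracing back through the construction of $T$ in the proof of \Cref{thm:pairwise_partition_cover_to_ultrametric_cover}. Recall that $x$ corresponds to some cluster $C \in \tilde{\PP}_j^i$ for some level $i$, with label $\Gamma_x = (1+\varepsilon)\Delta_i$, and its children correspond to the clusters $C' \subseteq C$ with $C' \in \tilde{\PP}_j^{i-1}$. So it suffices to show that the number of clusters of $\tilde{\PP}_j^{i-1}$ contained in $C$ is $O(k/\varepsilon)^{\sddim}$.

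The first step is to recall from \Cref{lem:ppcs-hier-light} (and its proof) that $|\tilde{\PP}_j^{i-1}|$ did not increase beyond $|\mathcal{N}_{i-1}|$, and more usefully, that each cluster of the intermediate partition $\hat{\PP}_j^{i-1}$ contains a net point of $\mathcal{N}_{i-1}$, which is an $\frac{\varepsilon\Delta_{i-1}}{4}$-net. Although the hierarchical move (the ``second change'') in the construction of $\tilde{\PP}_j^{i-1}$ from $\hat{\PP}_j^{i-1}$ can shuffle points around, it does not change the \emph{number} of clusters — and crucially, it does not merge clusters. So the number of clusters of $\tilde{\PP}_j^{i-1}$ contained in $C$ is at most the number of clusters of $\hat{\PP}_j^{i-1}$ that intersect $C$ — wait, this needs a little care, since after the move a cluster of $\tilde{\PP}_j^{i-1}$ sits inside $C$ only if its ``host'' cluster of $\hat{\PP}_j^{i-1}$ does. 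In any case, each distinct child cluster can be charged to a distinct cluster of $\hat{\PP}_j^{i-1}$ that contributed points ending up inside $C$, and each such cluster of $\hat{\PP}_j^{i-1}$ contains a distinct net point of $\mathcal{N}_{i-1}$ lying within distance $\diam(C) + \frac{\varepsilon\Delta_{i-1}}{4}$ of any fixed point of $C$.

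The second step is the packing bound: since $\diam(C) \le (1+\varepsilon)\Delta_i = (1+\varepsilon) k \Delta_{i-1}$, all these net points lie in a ball of radius $R \cdot \Delta_{i-1}$ with $R = (1+\varepsilon)k + \tfrac{\varepsilon}{4} = O(k)$ around any fixed point of $C$. By the Packing Lemma (\Cref{lem:packing}) applied to the $\frac{\varepsilon\Delta_{i-1}}{4}$-net $\mathcal{N}_{i-1}$ — rescaling so that the net parameter is $\Delta_{i-1}$ amounts to replacing $R$ by $\frac{4R}{\varepsilon} = O(k/\varepsilon)$ — the number of net points in this ball is at most $\left(\frac{8R}{\varepsilon}\right)^{\sddim} = O(k/\varepsilon)^{\sddim}$. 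Hence $\deg(x) = O(k/\varepsilon)^{\sddim}$, as claimed.

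The main obstacle I expect is the bookkeeping of the ``second change'' in \Cref{lem:ppcs-hier-light}: one must verify that distinct children of $x$ in $T$ genuinely map to distinct net points of $\mathcal{N}_{i-1}$ near $C$, given that the hierarchical-move step can relocate the points of a cluster away from its original net point. The clean way around this is to charge each child cluster $C' \in \tilde{\PP}_j^{i-1}$ to the net point that its originating cluster in $\hat{\PP}_j^{i-1}$ was built around (each $\hat{\PP}$-cluster has at least one), observe this charging is injective because the hierarchical move does not split or merge clusters (it only moves whole $\tilde{\PP}_j^{i-1}$-clusters into $\hat{\PP}_j^{i-1}$-clusters of the \emph{next} level up, not within level $i-1$), and note the charged net point lies within $\diam(C) + O(\varepsilon\Delta_{i-1})$ of $C$ since some point of $C'$ — hence of the charged $\hat{\PP}$-cluster — lies in $C$. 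Everything else is the routine packing computation.
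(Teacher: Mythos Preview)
Your proposal is correct and follows essentially the same approach as the paper's proof: identify each child cluster with a distinct net point of $\mathcal{N}_{i-1}$ via its originating $\hat{\PP}_j^{i-1}$-cluster, then bound the number of such net points near $C$ by the Packing Lemma. One minor correction: the charged net point lies within $\diam(\hat{C}')=O(\Delta_{i-1})$ (not $O(\varepsilon\Delta_{i-1})$) of the witness point $p\in C'\cap\hat{C}'\subseteq C$, but since $\Delta_{i-1}=\Delta_i/k$ this still places all the relevant net points in a ball of radius $O(\Delta_i)=O(k\Delta_{i-1})$ and gives the same $O(k/\varepsilon)^{\sddim}$ bound; the paper's own proof carries an analogous imprecision in the intermediate distance estimate.
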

\begin{proof}
	Let $x\in T$ be any node with children $x_1,...,x_t$. The node $x$ corresponds to a cluster $\tilde{C} \in \tilde{\PP}^i_j$, and its children to clusters $\tilde{C}_1, \dots, \tilde{C}_t \in \tilde{\PP}^{i-1}_j$ contained in $\tilde{C}$.
	Recall that in the partition $\hat{\PP}^{i-1}_j$, 
	every cluster contains a net point from an $\varepsilon\Delta_{i-1}/4$-net ${\cal N}_{i-1}$. Since every cluster of 
	$\tilde{\PP}^{i-1}_j$ was a cluster of $\hat{\PP}^{i-1}_j$, the clusters $\tilde{C}_1, \dots, \tilde{C}_t$ correspond to different net points. The maximal distance between any two such net points is 
	\[
	\diam(\tilde{C})+2\varepsilon\Delta_{i-1}/4<2\Delta_i~,
	\]
	so all these net points are contained in a ball of radius $2\Delta_i$. Since $\Delta_{i-1}=\Delta_i/k$, by the packing lemma (\Cref{lem:packing}) we conclude that $t\le O(k/\varepsilon)^{\ddim}$.

\end{proof}

 Filtser and Le \cite{FL22} constructed a PPCS for doubling metrics:
\begin{lemma}[\cite{FL22}]\label{lem:pairwise-partition-doubling}
	Every metric space $(X,d)$ with doubling dimension $\ddim$ admits an  $(\eps^{-O(\sddim)}, 1+\eps,\eps)$-pairwise partition cover scheme for any $\eps \in (0,1/16)$.
\end{lemma}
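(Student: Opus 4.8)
The plan is to build, for every scale $\Delta>0$, a family of $\eps^{-O(\ddim)}$ partitions out of a single sufficiently fine net, using the packing consequence of the doubling property (\Cref{lem:packing}) both to bound the number of partitions and to guarantee the disjointness that makes each one a genuine partition. First I would fix $\Delta>0$, let $\mathcal{N}$ be a $\gamma\Delta$-net of $(X,d)$ for a small $\gamma=\Theta(\eps)$ to be pinned down in the padding step, and form the ``conflict graph'' $\mathcal{G}$ on vertex set $\mathcal{N}$ joining two net points whenever their distance is at most $c\Delta$ for an absolute constant $c$ (say $c=2$). By the Packing Lemma every ball of radius $c\Delta$ contains at most $(2c/\gamma)^{\ddim}$ net points, so $\mathcal{G}$ has maximum degree $D\le(2c/\gamma)^{\ddim}=\eps^{-O(\ddim)}$ and hence a proper coloring with $K:=D+1=\eps^{-O(\ddim)}$ colors; write $\mathcal{N}=\mathcal{N}_1\sqcup\dots\sqcup\mathcal{N}_K$ for the color classes, so any two net points of the same color are strictly more than $c\Delta$ apart.

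Next, for each color $i\in[K]$ I would produce one partition $\mathcal{P}_i$ by a greedy carving: scan the net points of $\mathcal{N}_i$ in a fixed order, and when $v$ is processed form the cluster of all not-yet-assigned points lying in $B(v,R)$, where $R$ is a carving radius of order $\Delta$ tuned together with $\gamma$ so that the carved clusters have diameter at most $\Delta$; finally each remaining point becomes a singleton cluster. Because same-color net points are more than $c\Delta>2R$ apart, the carved balls are pairwise disjoint, so $\mathcal{P}_i$ is a valid partition and is $\Delta$-bounded by construction; the output is $\mathbb{P}=\{\mathcal{P}_1,\dots,\mathcal{P}_K\}$, with $\tau=K=\eps^{-O(\ddim)}$. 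For the padding property, given a pair $x,y$ with $\tfrac{\Delta}{2\rho}\le d(x,y)\le\tfrac{\Delta}{\rho}$ I would use the covering property of $\mathcal{N}$ to pick a net point $v$ within $\gamma\Delta$ of an appropriately chosen anchor of the pair; then every point of $B(x,\eps\Delta)\cup B(y,\eps\Delta)$ lies within $\gamma\Delta+d(x,y)+\eps\Delta\le R$ of $v$, and no earlier same-color net point $u$ can have stolen any of these points, since such a $u$ would satisfy $d(u,v)\le 2R<c\Delta$, contradicting that $u,v$ share a color. Hence $v$'s cluster in $\mathcal{P}_i$ ($i$ being $v$'s color) contains both balls, giving a $(\eps^{-O(\ddim)},\,1+O(\eps),\,\eps)$-PPCS, which is the claimed bound after rescaling $\eps$.

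\textbf{Where the difficulty lies.} The routine parts — the count $\eps^{-O(\ddim)}$ and the validity of each $\mathcal{P}_i$ as a $\Delta$-bounded partition — fall out immediately from the Packing Lemma. The delicate step is the last one: choosing $\gamma$ and $R$ so that simultaneously (i) every carved cluster has diameter at most $\Delta$ and (ii) each relevant pair, whose distance can be almost $\Delta/\rho$, still fits inside one carved ball together with the $\eps\Delta$ slack around each endpoint. This constant-chasing is exactly what determines the attainable value of $\rho$ — in particular whether it can be driven down to $1+\eps$ rather than a larger constant — and it hinges on the net being fine enough relative to $\eps$ and on anchoring the covering net point carefully with respect to $x$ and $y$.
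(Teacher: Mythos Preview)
First, note that the paper does not prove this lemma; it is quoted from \cite{FL22} without argument, so there is no in-paper proof to compare against.

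That said, your proposal has a real gap, precisely at the step you flag as delicate. Each non-singleton cluster in your scheme is (a subset of) a single ball $B(v,R)$, so its diameter is at most $2R$, forcing $R\le\Delta/2$. Your padding inequality $\gamma\Delta+d(x,y)+\eps\Delta\le R\le\Delta/2$ then forces $d(x,y)\le(\tfrac12-\gamma-\eps)\Delta$, i.e.\ $\rho\ge 2/(1-2\gamma-2\eps)\ge 2$. No choice of anchor rescues this: in a general (even doubling) metric there need not be any point $m$ with $\max\{d(m,x),d(m,y)\}$ close to $d(x,y)/2$---take the four-point path metric $p\!-\!x\!-\!y\!-\!q$ with $d(x,y)$ close to $\Delta$ and $d(p,x)=d(y,q)=\eps\Delta$; any ball of radius $\le\Delta/2$ containing $p$ must exclude $q$. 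Single-ball carving around net points is therefore inherently a $\rho\approx 2$ construction, not $\rho=1+O(\eps)$.

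To reach $\rho=1+\eps$ one needs clusters that are ``two-centered''. The route in \cite{FL22} works with \emph{pairs} of net points: for a $\Theta(\eps\Delta)$-net $\mathcal{N}$, associate to each pair $(u,w)\in\mathcal{N}^2$ at distance at most $(1-\Theta(\eps))\Delta$ a cluster of the form $B(u,\Theta(\eps\Delta))\cup B(w,\Theta(\eps\Delta))$; this has diameter at most $\Delta$ yet contains $B(x,\eps\Delta)\cup B(y,\eps\Delta)$ once $u,w$ are the net points nearest $x,y$. Packing bounds both the number of partners of each net point and the number of conflicting pairs by $\eps^{-O(\ddim)}$, so the same coloring wrapper you propose---applied to pairs rather than to single net points---still yields $\eps^{-O(\ddim)}$ partitions.
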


By applying \Cref{thm:pairwise_partition_cover_to_ultrametric_cover} (and using \Cref{lem:pairwise-doubling-bounded}), we conclude
\begin{corollary}\label{cor:HSTcoverDoubling}
	\sloppy For any $\eps \in (0,1/16)$, every $n$-point metric space $(X,d)$ with doubling dimension $\ddim$ admits an $O(\eps^{-1}\cdot\log n)$-light $(\eps^{-O(\sddim)},1+\eps)$-$\frac{16}{\eps}$-HST cover, furthermore, the maximum degree of any tree in the cover is $\eps^{-O(\sddim)}$.
\end{corollary}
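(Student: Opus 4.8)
The plan is to combine the PPCS for doubling metrics from \Cref{lem:pairwise-partition-doubling} with the PPCS-to-$k$-HST-cover reduction of \Cref{thm:pairwise_partition_cover_to_ultrametric_cover}, and then invoke the bounded-degree property from \Cref{lem:pairwise-doubling-bounded}. First I would fix the target separation $k=\frac{16}{\eps}$ (which clearly satisfies $k\ge\frac{8\rho}{\eps'}$ for a suitable rescaled padding parameter $\eps'$), and apply \Cref{lem:pairwise-partition-doubling} with a padding parameter $\eps'\approx\eps$ to obtain a $(\eps^{-O(\sddim)},1+\eps',\eps')$-PPCS. Here $\tau=\eps^{-O(\sddim)}$ and $\rho=1+\eps'$, both constants (up to the doubling factor).

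Next I would feed this PPCS into \Cref{thm:pairwise_partition_cover_to_ultrametric_cover}. That theorem requires $k\ge 8\rho/\eps'$; since $\rho=1+\eps'\le 2$, taking $k=16/\eps$ suffices as long as $\eps'$ is chosen as a small constant multiple of $\eps$. The conclusion of the theorem gives an $O(k\log n)$-light $\bigl(O(\frac{\tau}{\eps'}\log k),\rho(1+3\eps')\bigr)$-$k$-HST cover. Plugging in $k=O(1/\eps)$ gives lightness $O(k\log n)=O(\eps^{-1}\log n)$; the number of trees is $O(\frac{\tau}{\eps'}\log k)=\eps^{-O(\sddim)}\cdot O(\eps^{-1}\log(1/\eps))=\eps^{-O(\sddim)}$ (absorbing the $\log(1/\eps)$ and $\eps^{-1}$ into the exponential-in-$\sddim$ factor); and the stretch is $\rho(1+3\eps')=(1+\eps')(1+3\eps')\le 1+\eps$ after appropriately rescaling $\eps'$ relative to $\eps$ at the outset.

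Finally, for the maximum-degree claim, I would invoke \Cref{lem:pairwise-doubling-bounded}, which states that every tree $T$ in the cover produced by \Cref{thm:pairwise_partition_cover_to_ultrametric_cover} has maximum degree $O(k/\eps')^{\sddim}$. With $k=O(1/\eps)$ and $\eps'=\Theta(\eps)$, this is $O(1/\eps^2)^{\sddim}=\eps^{-O(\sddim)}$, as claimed. The only mild subtlety — and the one thing I would be careful about — is the bookkeeping of the several $\eps$-like parameters: the padding parameter in the doubling PPCS, the resulting approximate padding $\rho$, and the final stretch target $1+\eps$. I would simply introduce $\eps'=c\eps$ for a small absolute constant $c$ so that all the $(1+O(\eps'))$ factors multiply out to at most $1+\eps$, and check that the constraint $\eps'<1/16$ from \Cref{lem:pairwise-partition-doubling} and $\eps'<1/12$, $k\ge 8\rho/\eps'$ from \Cref{thm:pairwise_partition_cover_to_ultrametric_cover} are all met for $\eps\in(0,1/16)$. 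No genuinely hard step arises here; this corollary is a direct composition of the two preceding black boxes plus the degree lemma, and the proof is essentially "apply \Cref{lem:pairwise-partition-doubling}, then \Cref{thm:pairwise_partition_cover_to_ultrametric_cover}, then \Cref{lem:pairwise-doubling-bounded}, and simplify."
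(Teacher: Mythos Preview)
Your proposal is correct and follows essentially the same route as the paper: apply \Cref{lem:pairwise-partition-doubling} to get the PPCS, feed it into \Cref{thm:pairwise_partition_cover_to_ultrametric_cover} with $k=16/\eps$, invoke \Cref{lem:pairwise-doubling-bounded} for the degree bound, and absorb all $\poly(1/\eps)$ and $\log(1/\eps)$ factors into $\eps^{-O(\sddim)}$ after rescaling $\eps$. The paper is slightly terser about the parameter bookkeeping (it just applies the lemmas with the same $\eps$ and rescales at the end), whereas you explicitly introduce $\eps'=c\eps$, but this is only a cosmetic difference.
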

\begin{proof}
	Using \Cref{lem:pairwise-partition-doubling}, consider a $(\eps^{-O(\sddim)}, 1+\eps,\eps)$-PPCS for $X$.
	Fix $k=\frac{16}{\eps}$.
	By \Cref{thm:pairwise_partition_cover_to_ultrametric_cover}, $X$ admits a $O(\eps^{-1}\cdot\log n)$-light $(\eps^{-O(\sddim)},1+O(\eps))$-$k$-HST cover.
	Furthermore, by \Cref{lem:pairwise-doubling-bounded}, every HST in the cover has maximum degree $O(\frac{k}{\eps})^{\sddim}=\eps^{-O(\sddim)}$.
	The corollary follows by rescaling $\eps$ accordingly.
\end{proof}

\section{Reliable Spanners for Metric Spaces}\label{sec:results}
We begin this section by proving a meta theorem, which given a light $k$-HST cover, constructs an oblivious light reliable spanner. In the following subsections, we will apply this meta-theorem to obtain the main results of the paper.

\begin{theorem}[Light Reliable Spanner from Light HST Cover]\label{thm:ultrametric_cover_to_reliable_spanner}
	Consider an $n$ point metric space $(X,d)$ that admits $\psi$-light $(\tau, \rho)$-$k$-HST cover $\cT$, for some $k>1$.
	Then for every parameter $\nu \in (0, 1/6)$, $X$ admits an oblivious $\nu$-reliable $(2 +\frac{2}{k-1})\cdot \rho$-spanner of size $n\cdot \tilde{O}\left(\tau^{3}\cdot(\nu^{-1}\cdot\log\log n)^2)\right)$ and lightness $\psi\cdot\tilde{O}(\tau^{3}\cdot(\nu^{-1}\cdot\log\log n)^2)$.
\end{theorem}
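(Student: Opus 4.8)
The proof combines the $k$-HST spanner of \Cref{thm:k-HST-new} with the union-over-trees argument sketched in the technical overview. First I would reduce to building an oblivious spanner whose \emph{expected} size and lightness meet the claimed bounds (up to constant factors), since \Cref{lem:expect-the-worst} then upgrades these to worst-case guarantees at the cost of a constant factor in $\nu$, size and lightness. So fix $\nu\in(0,1/6)$, and let $\cT=\{T_1,\dots,T_\tau\}$ be the given $\psi$-light $(\tau,\rho)$-$k$-HST cover.

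\textbf{Construction.} For each tree $T_i$ in the cover, invoke \Cref{thm:k-HST-new} with reliability parameter $\nu/\tau$ to sample an oblivious $(\nu/\tau)$-reliable $(2+\tfrac{2}{k-1})$-spanner $H_i$ for $T_i$; the samples for different trees are independent. Set $H=\bigcup_{i=1}^{\tau}H_i$ (all edges live on the common vertex set $X$), and define the distribution $\cD$ over such $H$. For an attack $B\subseteq X$, let $B_i^{+}$ be the faulty extension of $B$ guaranteed by $H_i$, and set $B^{+}=\bigcup_{i=1}^{\tau}B_i^{+}$.

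\textbf{The three things to check.} \emph{(i) Reliability.} By linearity of expectation and the guarantee of each $H_i$, $\E\big[|B^{+}\setminus B|\big]\le\sum_{i=1}^{\tau}\E\big[|B_i^{+}\setminus B|\big]\le\tau\cdot\tfrac{\nu}{\tau}\cdot|B|=\nu\cdot|B|$, so $\E[|B^{+}|]\le(1+\nu)|B|$. \emph{(ii) Stretch.} Fix $u,v\notin B^{+}$. By the cover property there is a tree $T_i$ with $d_{T_i}(u,v)\le\rho\cdot d(u,v)$; since $u,v\notin B_i^{+}$ and $H_i$ is a $(\nu/\tau)$-reliable $(2+\tfrac{2}{k-1})$-spanner of $T_i$, the graph $H_i[X\setminus B]\subseteq H[X\setminus B]$ contains a $u$–$v$ path of length at most $(2+\tfrac{2}{k-1})\cdot d_{T_i}(u,v)\le(2+\tfrac{2}{k-1})\rho\cdot d(u,v)$. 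Domination of $d$ is immediate since each $H_i$ dominates $d_{T_i}$ which dominates $d$. \emph{(iii) Size and lightness.} Each tree $T_i$ is an $n$-vertex $k$-HST, so by \Cref{thm:k-HST-new}, $H_i$ has size $n\cdot\tilde O((\nu/\tau)^{-1}\log\log n)^2=n\cdot\tilde O(\tau^2\nu^{-2}\log^2\log n)$ and lightness, relative to $w(\MST(T_i))$, of $\tilde O(\tau^2\nu^{-2}\log^2\log n)$. Here I need the (routine) bound $w(\MST(T_i))=O(w(T_i))\le O(\psi\cdot w(\MST(X)))$, using that $\cT$ is $\psi$-light and that an MST of a tree's leaf set is at most the tree's weight (each leaf pays for its root path, or more simply the leaves with the tree edges form a connected dominating graph). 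Summing over the $\tau$ trees multiplies both bounds by $\tau$, giving expected size $n\cdot\tilde O(\tau^3\nu^{-2}\log^2\log n)$ and expected lightness $\psi\cdot\tilde O(\tau^3\nu^{-2}\log^2\log n)$. Applying \Cref{lem:expect-the-worst} and rescaling $\nu$ by a constant yields the stated worst-case bounds, absorbing the constant into the $\tilde O$.

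\textbf{Main obstacle.} The only non-formulaic point is bookkeeping the lightness normalization: \Cref{thm:k-HST-new} measures lightness against $w(\MST(T_i))$ (equivalently, by \eqref{eq:HSTweight}, against $\sum_{x\in T_i}(\deg(x)-1)\Gamma_x$), whereas the target lightness is against $w(\MST(X))$. One must therefore relate $w(\MST(T_i))$ to $w(T_i)$ and invoke the $\psi$-lightness of the cover; since $T_i$ dominates $(X,d)$, $w(\MST(T_i))$ could a priori exceed $w(\MST(X))$, and the $\psi$-lightness hypothesis is exactly what controls this. Everything else — linearity of expectation for $B^{+}$, choosing the per-tree parameter to be $\nu/\tau$, and the stretch chaining — is direct.
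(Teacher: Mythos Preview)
Your proof is correct and essentially identical to the paper's: set $\nu'=\nu/\tau$, apply \Cref{thm:k-HST-new} to each tree, take the union, and define $B^+=\bigcup_i B_i^+$; the reliability, stretch, size, and lightness calculations are the same. The only minor difference is that your detour through \Cref{lem:expect-the-worst} is unnecessary, since the size and lightness bounds in \Cref{thm:k-HST-new} are already worst-case over the support (the sample sets $Z_y$ have deterministic size $\ell$), so the union bound is worst-case as well; relatedly, the paper simply uses $w(\MST(T_i))\le\psi\cdot w(\MST(X))$ directly (by \eqref{eq:HSTweight}, $w(T_i)=w(\MST(T_i))$), so your ``main obstacle'' dissolves once the definitions are unpacked.
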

\begin{proof}
	For every $k$-HST $T\in\cT$, using \Cref{thm:k-HST-new}
	we construct a $\nu'$-reliable spanner $H_T$ for $T$ for $\nu'=\frac{\nu}{\tau}$. The final spanner we return is $H=\cup_{T\in\cT}H_T$.
	By \Cref{thm:k-HST-new}, the size of the spanner is $|H|=\tau\cdot n\cdot \tilde{O}(\nu'^{-1}\cdot\log\log n)^2=n\cdot \tilde{O}\left(\tau^{3}\cdot(\nu^{-1}\cdot\log\log n)^2\right)$, while the lightness is 
	\begin{align*}
		w(H)\le\sum_{T\in\cT}w(H_{T}) & \le\sum_{T\in\cT}\tilde{O}(\nu'^{-1}\cdot\log\log n)^2\cdot w(MST(T))\\
		& \le\psi\cdot\tilde{O}(\tau^{3}\cdot(\nu^{-1}\cdot\log\log n)^2)\cdot w(MST(X))
	\end{align*}
	Consider an attack $B\subseteq X$. For every spanner $H_T$, let $B^+_T$ be the respective super set, and denote $B^+=\cup_{T\in\cT}B^+_T$. It holds that 
	\[
	\E\left[\left|B^{+}\setminus B\right|\right]\le\sum_{T\in\cT}\E\left[\left|B_{T}^{+}\setminus B\right|\right]\le\tau\cdot\nu'\cdot|B|=\nu\cdot|B|~.
	\]
	Finally, consider a pair of points $u,v\notin B^+$. The is some $k$-HST $T\in\cT$ such that $d_T(u,v)\le\rho\cdot d_X(u,v)$. As $u,v\notin B^+_T$, it holds that 
	\[
	d_{H\setminus B}(u,v)\le d_{H_{T}\setminus B}(u,v)\le(2+\frac{2}{k-1})\cdot d_{T}(u,v)\le(2+\frac{2}{k-1})\cdot\rho\cdot d_{X}(u,v)~.
	\]
\end{proof}
By using \Cref{thm:k-HST-degree-new} instead of \Cref{thm:k-HST-new} in the proof of \Cref{thm:ultrametric_cover_to_reliable_spanner} (and keeping all the rest intact) we obtain:
\begin{corollary}\label{cor:ultrametric_cover_to_reliable_spanner_doubling}
	Consider an $n$ point metric space $(X,d)$ that admits $\psi$-light $(\tau, \rho)$-$k$-HST cover $\cT$, where all the trees in $\cT$ have maximum degree $\delta$.
	Then for every parameter $\nu \in (0, 1/6)$, $X$ admits an oblivious $\nu$-reliable $(1 +\frac{2}{k-1})\cdot \rho$-spanner of size $n\cdot\delta\cdot\tilde{O} \left(\tau^{3}\cdot(\nu^{-1}\cdot\log\log n)^2\right)$ and lightness $\psi\cdot\delta\cdot\tilde{O}(\tau^{3}\cdot(\nu^{-1}\cdot\log\log n)^2)$.
\end{corollary}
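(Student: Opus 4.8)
The plan is to follow the proof of \Cref{thm:ultrametric_cover_to_reliable_spanner} essentially verbatim, replacing the black-box invocation of \Cref{thm:k-HST-new} by the bounded-degree variant \Cref{thm:k-HST-degree-new}. Given the $\psi$-light $(\tau,\rho)$-$k$-HST cover $\cT$ in which every tree has maximum degree $\delta$, set $\nu'=\nu/\tau$. For each $T\in\cT$ use \Cref{thm:k-HST-degree-new} to sample an oblivious $\nu'$-reliable $(1+\frac{2}{k-1})$-spanner $H_T$ of $T$, of size $n\cdot\delta\cdot\tilde{O}(\nu'^{-1}\cdot\log\log n)^2$ and lightness $\delta\cdot\tilde{O}(\nu'^{-1}\cdot\log\log n)^2$ (this is legitimate since $\nu'\le\nu<1/6$ and $T$ has maximum degree $\delta$). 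Return $H=\bigcup_{T\in\cT}H_T$, and for an attack $B$ take $B^{+}=\bigcup_{T\in\cT}B^{+}_T$, the union of the faulty extensions guaranteed for the individual spanners $H_T$.

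Next come the four routine accounting steps, each identical in spirit to \Cref{thm:ultrametric_cover_to_reliable_spanner}. For the size, summing over the $\tau$ trees and substituting $\nu'=\nu/\tau$ gives $|H|\le\tau\cdot n\cdot\delta\cdot\tilde{O}(\nu'^{-1}\log\log n)^2 = n\cdot\delta\cdot\tilde{O}(\tau^{3}\cdot(\nu^{-1}\log\log n)^2)$, where the $\tilde{O}$ absorbs the extra $\log^{O(1)}\tau$ factors. For the lightness, since the cover is $\psi$-light we have $w(MST(T))\le\psi\cdot w(MST(X))$ for every $T\in\cT$, so $w(H)\le\sum_{T\in\cT}\delta\cdot\tilde{O}(\nu'^{-1}\log\log n)^2\cdot w(MST(T))\le\psi\cdot\delta\cdot\tilde{O}(\tau^{3}\cdot(\nu^{-1}\log\log n)^2)\cdot w(MST(X))$. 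For reliability, by linearity of expectation $\E[|B^{+}\setminus B|]\le\sum_{T\in\cT}\E[|B^{+}_T\setminus B|]\le\tau\cdot\nu'\cdot|B|=\nu\cdot|B|$. For the stretch, given a pair $u,v\notin B^{+}$ pick $T\in\cT$ with $d_T(u,v)\le\rho\cdot d_X(u,v)$; since $u,v\notin B^{+}_T$, \Cref{thm:k-HST-degree-new} yields $d_{H_T[X\setminus B]}(u,v)\le(1+\frac{2}{k-1})\cdot d_T(u,v)$, hence $d_{H[X\setminus B]}(u,v)\le(1+\frac{2}{k-1})\cdot\rho\cdot d_X(u,v)$.

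There is no genuine obstacle here. The only arithmetic change relative to \Cref{thm:ultrametric_cover_to_reliable_spanner} is the single extra $\delta$ factor that \Cref{thm:k-HST-degree-new} incurs in both size and lightness, which propagates linearly through the union, and the improved per-tree stretch $1+\frac{2}{k-1}$ (coming from the sibling bi-cliques, which let a spanner path use only one long top-level edge) that simply replaces $2+\frac{2}{k-1}$ in the final bound. The one thing worth double-checking is that $k$ and the degree bound $\delta$ are exactly those supplied by the hypothesis and are not altered, and that the oblivious spanner of \Cref{thm:k-HST-degree-new} has worst-case (not merely expected) size and lightness — it does, as every sample of that construction adds at most $\ell^2\cdot\delta\cdot\deg(x)$ edges per node — so no appeal to \Cref{lem:expect-the-worst} is needed.
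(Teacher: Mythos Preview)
Your proposal is correct and matches the paper's own approach exactly: the paper simply states that the corollary follows ``by using \Cref{thm:k-HST-degree-new} instead of \Cref{thm:k-HST-new} in the proof of \Cref{thm:ultrametric_cover_to_reliable_spanner} (and keeping all the rest intact),'' and you have carried out precisely that substitution with the four accounting steps made explicit.
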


\subsection{Doubling Metrics}
By applying \Cref{cor:ultrametric_cover_to_reliable_spanner_doubling}, on the HST cover of \Cref{cor:HSTcoverDoubling} (and rescaling $\eps$) we obtain:

\begin{corollary}\label{thm:doubling}
	For any $\eps,\nu \in (0,1/16)$, every $n$-point metric space $(X,d_X)$ with doubling dimension $\ddim$ admits
	$\nu$-reliable $(1 + \eps)$-spanner with size $n\cdot\eps^{-O(\sddim)}\cdot\tilde{O}(\nu^{-1}\cdot\log\log n)^2$, and lightness $\eps^{-O(\sddim)}\cdot\tilde{O}(\nu^{-2}\cdot\log n)$.	
\end{corollary}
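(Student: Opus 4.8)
The plan is to apply the meta-theorem \Cref{cor:ultrametric_cover_to_reliable_spanner_doubling} to the light bounded-degree $k$-HST cover provided by \Cref{cor:HSTcoverDoubling}, and then clean up the parameters by rescaling. First, I would invoke \Cref{cor:HSTcoverDoubling} with a parameter $\eps'$ (to be set as a constant fraction of the target $\eps$), which gives an $O(\eps'^{-1}\cdot\log n)$-light $(\eps'^{-O(\sddim)}, 1+\eps')$-$k$-HST cover with $k=\frac{16}{\eps'}$, where every tree has maximum degree $\delta=\eps'^{-O(\sddim)}$. So here the parameters feeding into the meta-theorem are $\tau=\eps'^{-O(\sddim)}$, $\rho=1+\eps'$, $\psi=O(\eps'^{-1}\log n)$, $k=16/\eps'$, and $\delta=\eps'^{-O(\sddim)}$.

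Next I would plug these into \Cref{cor:ultrametric_cover_to_reliable_spanner_doubling}. The stretch becomes $(1+\frac{2}{k-1})\cdot\rho = (1+\frac{2}{16/\eps'-1})(1+\eps') = (1+O(\eps'))(1+\eps') = 1+O(\eps')$, which is at most $1+\eps$ after choosing $\eps'$ to be a small enough constant multiple of $\eps$. The size bound is $n\cdot\delta\cdot\tilde{O}(\tau^3\cdot(\nu^{-1}\log\log n)^2) = n\cdot\eps'^{-O(\sddim)}\cdot\tilde{O}((\nu^{-1}\log\log n)^2)$, since $\delta$ and $\tau^3$ are both of the form $\eps'^{-O(\sddim)}$ and absorb into a single $\eps'^{-O(\sddim)}$ factor; after rescaling this is $n\cdot\eps^{-O(\sddim)}\cdot\tilde{O}(\nu^{-1}\log\log n)^2$, matching the claimed size. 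For the lightness, the meta-theorem gives $\psi\cdot\delta\cdot\tilde{O}(\tau^3\cdot(\nu^{-1}\log\log n)^2) = O(\eps'^{-1}\log n)\cdot\eps'^{-O(\sddim)}\cdot\tilde{O}((\nu^{-1}\log\log n)^2)$. Collecting the $\eps'$-dependence into $\eps'^{-O(\sddim)}$, and noting that $\log n\cdot\tilde{O}((\log\log n)^2) = \tilde{O}(\log n)$ by the definition of $\tilde{O}$ (polylog factors in the argument are swallowed), this is $\eps'^{-O(\sddim)}\cdot\tilde{O}(\nu^{-2}\cdot\log n)$, which after rescaling $\eps$ becomes $\eps^{-O(\sddim)}\cdot\tilde{O}(\nu^{-2}\log n)$ as claimed.

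There is essentially no hard obstacle here — this corollary is a direct composition of two black boxes already proved in the paper. The only points requiring a little care are bookkeeping: (i) making sure the rescaling of $\eps$ is legitimate, i.e. that $\eps\in(0,1/16)$ can be translated to an $\eps'$ in the valid range $(0,1/16)$ for \Cref{cor:HSTcoverDoubling} while keeping the final stretch genuinely $\le 1+\eps$; (ii) confirming that the $(\log\log n)^2$ factor from the HST spanner, when multiplied by the $\log n$ lightness of the cover, still collapses into $\tilde{O}(\nu^{-2}\log n)$ — this is fine because $\tilde{O}$ hides arbitrary polylogarithmic factors; and (iii) checking that $\tau^3$ and $\delta$, each being $\eps'^{-O(\sddim)}$, combine to a single $\eps'^{-O(\sddim)}$ (the constant in the exponent just grows, which is harmless). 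I would state these steps in one or two short paragraphs and conclude by the rescaling argument.
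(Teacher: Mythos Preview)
Your proposal is correct and is exactly the argument the paper gives: apply \Cref{cor:ultrametric_cover_to_reliable_spanner_doubling} to the bounded-degree $k$-HST cover of \Cref{cor:HSTcoverDoubling} and rescale $\eps$. The paper states this in a single line; your expanded bookkeeping (absorbing $\tau^3$ and $\delta$ into $\eps^{-O(\sddim)}$, and the $(\log\log n)^2$ factor into $\tilde O(\log n)$) is precisely what is implicit there.
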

Note that the shortest path metric of the path graph has doubling dimension $1$. 
Hence the lower bound of \Cref{thm:PathObliviousLB} apply. In particular, for constant $\ddim$ and $\eps$,  \Cref{thm:doubling} is tight up to lower order terms.

\subsection{General Metric Spaces}
In this subsection we construct oblivious light reliable spanner for general metric spaces. We begin with the pairwise partition cover of Filtser and Le \cite{FL22}.
\begin{lemma}[\cite{FL22}]\label{lem:pairwise-partition-general}
	Every $n$-point metric space $(X,d_X)$ admits an 
	$(O(n^{1/t}\log n), 2t+\eps,\frac{\varepsilon}{2t(2t+\eps)})$-PPCS for any $\eps \in [0,1]$ and integer $t\ge 1$.
\end{lemma}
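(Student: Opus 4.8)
The plan is to reduce the construction of the PPCS to a classical \emph{sparse cover} for general metrics, built by region growing in the style of Awerbuch--Peleg. Fix $\Delta>0$, abbreviate $\rho=2t+\eps$, and let $\eps'=\frac{\eps}{2t(2t+\eps)}$ denote the target padding parameter. The first observation is purely arithmetic: $\frac{\Delta}{\rho}+\eps'\Delta=\frac{\Delta}{2t+\eps}+\frac{\eps\Delta}{2t(2t+\eps)}=\frac{\Delta}{2t}$, so, setting $r:=\frac{\Delta}{2t}$, every pair $x,y$ with $d(x,y)\le\frac{\Delta}{\rho}$ satisfies $B(x,\eps'\Delta)\cup B(y,\eps'\Delta)\subseteq B(x,r)$ (indeed $d(x,w)\le d(x,y)+d(y,w)\le\frac{\Delta}{\rho}+\eps'\Delta=r$ for $w\in B(y,\eps'\Delta)$). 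Hence it suffices to produce a family of at most $O(n^{1/t}\log n)$ partitions, each having clusters of diameter at most $\Delta=2t\cdot r$, with the property that every ball of radius $r$ is entirely contained in a single cluster of at least one partition of the family; feeding such a family into \Cref{def:ppcs} yields a $(\tau,\rho,\eps',\Delta)$-PPC, and repeating the construction for every $\Delta$ gives the PPCS.

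To build a single partition, I would run region growing at scale $r$: while unclustered points remain, pick an arbitrary unclustered point $v$, grow the nested balls $B(v,r),B(v,2r),\dots,B(v,tr)$, and let $i\le t$ be the least index with $|B(v,ir)|\le n^{1/t}\cdot|B(v,(i-1)r)|$. Such an $i$ exists since the product $\prod_{i=1}^{t}\frac{|B(v,ir)|}{|B(v,(i-1)r)|}$ telescopes to at most $|B(v,tr)|\le n$, so the minimum ratio is at most $n^{1/t}$. Output the still-unclustered points of $B(v,ir)$ as a new cluster and delete them. Each cluster then has radius at most $tr$, hence diameter at most $2tr=\Delta$, so the partition is $\Delta$-bounded; and any point $u$ with $d(u,v)\le(i-1)r$ has $B(u,r)\subseteq B(v,ir)$, so this partition \emph{pads} the radius-$r$ ball of $u$ --- I call such $u$ the \emph{core} of the cluster.

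The crux, and the step I expect to be hardest, is showing that $K=O(t\cdot n^{1/t})$ independent runs of the above procedure already pad the radius-$r$ ball of \emph{every} point in at least one of the $K$ resulting partitions (and $O(t\cdot n^{1/t})=O(n^{1/t}\log n)$, since for $t>\log n$ one may simply take $t=\log n$, where $n^{1/t}\le 2$). This is precisely the sparsity guarantee underlying Awerbuch--Peleg sparse covers: fixing a point $u$, one notes that $u$ is left unpadded by a partition only when $u$ lands in the outer shell $B(v,ir)\setminus B(v,(i-1)r)$ of the cluster that captured it, and charges this event against the volume $|B(v,ir)|$; the bounded-ratio stopping rule then limits how many such charges a given point can absorb before the relevant balls are exhausted, capping the number of partitions failing $u$ at $O(t\cdot n^{1/t})$. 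This volume-charging argument --- and making the count come out to exactly $O(t\,n^{1/t})$ rather than something larger --- is the technical heart; by contrast the reduction and the single-partition construction are routine, and the diameter and padding bounds are read off the construction directly. Finally, note the role of the additive $\eps$: it is exactly the slack that simultaneously keeps the radius-$tr$, diameter-$2tr$ clusters $\Delta$-bounded and leaves a strictly positive padding margin $\eps'\Delta$ (taking $\eps=0$ recovers the degenerate PPCS with zero padding).
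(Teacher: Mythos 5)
The paper cites this lemma from [FL22] without giving a proof, so there is no in-paper argument to compare against; I can only evaluate your proposal on its own terms. Your reduction step is correct --- the arithmetic $\frac{\Delta}{\rho}+\eps'\Delta=\frac{\Delta}{2t}$ is right, so it does suffice to produce $\Delta$-bounded partitions with the property that every ball of radius $r=\Delta/(2t)$ is contained in a single cluster of some partition --- and the ball-growing with ratio-$n^{1/t}$ stopping rule correctly bounds each cluster's diameter by $2tr=\Delta$.

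The "crux" you leave open, however, hides two genuine problems. First, your claim that a core point $u$ (captured by $v$'s cluster with $d(u,v)\le(i-1)r$) is padded is false as stated: the cluster output is only the \emph{still-unclustered} part of $B(v,ir)$, so if some $w\in B(u,r)$ was carved out by an earlier cluster $C_{v'}$, then $B(u,r)\not\subseteq C_v$ even though $B(u,r)\subseteq B(v,ir)$. This does occur --- take $d(u,w)=r$ with $w$ just inside $B(v',i'r)$ and $u$ just outside --- so a point can be unpadded while lying deep inside the ball that swallowed it, and the charging ledger you sketch, which bills every failure to the \emph{capturing} ball $B(v,ir)$, does not see this second failure mode at all. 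Second, the "$K$ independent runs" are unspecified: your procedure is deterministic up to "pick an arbitrary unclustered point," and without saying what varies across runs (processing order, seed sequence, randomized radii) there is nothing to iterate; meanwhile the Awerbuch--Peleg multiplicity argument you invoke controls overlapping \emph{covers}, not families of \emph{partitions} that each must pad, and converting the former into the latter is exactly the nontrivial content you need and do not supply. As written this is a plausible outline with the heart of the matter missing, and the fact that your claimed count $O(t\,n^{1/t})$ is strictly smaller than the quoted $O(n^{1/t}\log n)$ for small $t$ is one more sign the accounting has not actually been carried out.
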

By applying \Cref{thm:pairwise_partition_cover_to_ultrametric_cover}, we conclude
\begin{corollary}\label{cor:HSTcoverGeneralMetric}
	\sloppy Every $n$-point metric space $(X,d_X)$ admits a $O(\eps^{-1}\cdot t^{3}\cdot\log n)$-light $\left(n^{1/t}\cdot\log n\cdot\tilde{O}(\frac{t^{2}}{\eps}),2t+\eps\right)$-$\frac{200\cdot t^3}{\eps}$-HST cover 	
	for any $\eps \in (0,1/3)$ and integer $t\ge 1$.
\end{corollary}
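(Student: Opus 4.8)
The plan is to obtain the cover by plugging the pairwise partition cover scheme for general metrics (\Cref{lem:pairwise-partition-general}) directly into the PPCS‑to‑light‑HST‑cover reduction (\Cref{thm:pairwise_partition_cover_to_ultrametric_cover}). No new construction is needed here; the content is entirely in selecting the parameters and in checking that the hypotheses of \Cref{thm:pairwise_partition_cover_to_ultrametric_cover} hold.

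Concretely, I would first fix $\eps_1=\Theta(\eps)$ (small enough, see the last paragraph) and invoke \Cref{lem:pairwise-partition-general} with stretch parameter $\eps_1$ and integer $t$, obtaining a $\bigl(O(n^{1/t}\log n),\,2t+\eps_1,\,\epsilon_0\bigr)$‑PPCS whose padding parameter is $\epsilon_0=\frac{\eps_1}{2t(2t+\eps_1)}=\Theta(\eps/t^2)$. I would then verify the three preconditions of \Cref{thm:pairwise_partition_cover_to_ultrametric_cover}: the number of partitions $\tau_0=O(n^{1/t}\log n)$ is a positive integer, $\rho_0=2t+\eps_1\ge 1$, and $0<\epsilon_0<1/12$ — the last holding since $\epsilon_0\le\frac{\eps_1}{4t^2}\le\frac{\eps}{4}<\frac{1}{12}$ using $\eps<1/3$.

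Next I would apply \Cref{thm:pairwise_partition_cover_to_ultrametric_cover} with $k=\frac{200\,t^3}{\eps}$, after confirming the requirement $k\ge\frac{8\rho_0}{\epsilon_0}=\frac{16t(2t+\eps_1)^2}{\eps_1}$; as $\eps_1=\Theta(\eps)$ and $(2t+\eps_1)^2=O(t^2)$, the right‑hand side is $\Theta(t^3/\eps)$, and a one‑line estimate (using $2t+\eps_1\le\sqrt{12.5}\,t$ for integer $t\ge 1$ and $\eps<1/3$) shows the absolute constant $200$ is large enough. The theorem then yields an $O(k\log n)$‑light $\bigl(O(\tfrac{\tau_0}{\epsilon_0}\log k),\,\rho_0(1+3\epsilon_0)\bigr)$‑$k$‑HST cover. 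Unwinding: the lightness is $O(k\log n)=O(\eps^{-1}t^3\log n)$; the number of trees is $O(\tfrac{\tau_0}{\epsilon_0}\log k)=O\bigl(n^{1/t}\log n\cdot\tfrac{t^2}{\eps}\cdot\log\tfrac{t}{\eps}\bigr)=n^{1/t}\log n\cdot\tilde{O}(t^2/\eps)$; and the separation is $k=\frac{200\,t^3}{\eps}$, exactly as claimed.

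It remains to pin down the stretch. \Cref{thm:pairwise_partition_cover_to_ultrametric_cover} gives stretch $\rho_0(1+3\epsilon_0)=(2t+\eps_1)+\frac{3\eps_1}{2t}=2t+\eps_1\bigl(1+\tfrac{3}{2t}\bigr)$, so taking $\eps_1\le\frac{2t\eps}{2t+3}$ — which is $\Theta(\eps)$ for every fixed $t\ge 1$ — makes this at most $2t+\eps$, matching the statement. The only step demanding any care is this constant‑chasing: because \Cref{lem:pairwise-partition-general} already loses a factor $2t$ in the stretch, its padding parameter is only $\Theta(\eps/t^2)$ rather than $\Theta(\eps/t)$, which is precisely why the HST separation must scale like $t^3/\eps$ and why one must check that the constant $200$ (and not something smaller) suffices. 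There is no genuine obstacle — the substantive work lives in \Cref{lem:pairwise-partition-general} and \Cref{thm:pairwise_partition_cover_to_ultrametric_cover}, which are invoked here as black boxes.
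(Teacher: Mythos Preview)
Your proposal is correct and follows essentially the same route as the paper: invoke \Cref{lem:pairwise-partition-general}, feed the resulting PPCS into \Cref{thm:pairwise_partition_cover_to_ultrametric_cover}, and unpack the parameters (the paper simply rescales $\eps$ at the end rather than introducing $\eps_1$ up front, but this is cosmetic). One small slip in your constant check: the estimate $2t+\eps_1\le\sqrt{12.5}\,t$ bounds $\frac{8\rho_0}{\epsilon_0}$ by $\frac{200t^3}{\eps_1}$, not $\frac{200t^3}{\eps}$, and since $\eps_1<\eps$ this is the wrong direction---but plugging in $\eps_1=\frac{2t\eps}{2t+3}$ and computing $(2t+\eps_1)^2(2t+3)\le 25t^3$ directly (checking $t=1$ separately) confirms the constant $200$ indeed suffices.
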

\begin{proof}
	Using \Cref{lem:pairwise-partition-general}, consider a $(O(n^{1/t}\log n), 2t+\eps,\frac{\eps}{2t(2t+\eps)})$-PPCS for $X$.
	Fix $k=\frac{8\cdot(2t+\eps)}{\frac{\eps}{2t(2t+\eps)}}=\frac{16t\cdot(2t+\eps)^{2}}{\eps}\ge\frac{64t^{3}}{\eps}$.
	Note that $k=O(\eps^{-1}\cdot t^{3})$.
	By \Cref{thm:pairwise_partition_cover_to_ultrametric_cover} (note that indeed $\frac{\eps}{2t(2t+\eps)}<1/12$), $X$ admits a $\phi$-light $(\tau,\rho)$-$k$-HST cover for
	\begin{align*}
		\phi= & O(k\log n)=O(\eps^{-1}\cdot t^{3}\cdot\log n)\\
		\tau= & O(\frac{n^{1/t}\log n}{\frac{\eps}{2t(2t+\eps)}}\cdot\log k)=O(n^{1/t}\cdot \eps^{-1}\cdot t^{2}\cdot\log n\cdot\log(t/\eps))=n^{1/t}\cdot\log n\cdot\tilde{O}(t^{2}/\eps)\\
		\rho= & (2t+\eps)(1+\frac{3\eps}{2t(2t+\eps)})=2t+\eps+\frac{3\eps}{2t}<2t+3\eps~.
	\end{align*}
	The corollary follows by rescaling $\eps$ by 3, and noting that every $k$-HST is also a $\frac{64t^3}{\eps}$-HST. 
\end{proof}

By applying \Cref{thm:ultrametric_cover_to_reliable_spanner} on the HST cover from \Cref{cor:HSTcoverGeneralMetric} we obtain:
\begin{corollary}\label{thm:general}
	For any parameters $\nu \in (0, 1/6)$, $t \in \N$, $\eps \in (0, 1/2)$, any metric space admits an oblivious $\nu$-reliable $(12t + \eps)$-spanner with size $\tilde{O}\left(n^{1+1/t}\cdot\nu^{-2}\cdot\eps^{-3}\right)$ and lightness $n^{1/t}\cdot\tilde{O}(\nu^{-2}\cdot\eps^{-4})\cdot\polylog(n)$.
\end{corollary}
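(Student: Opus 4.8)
\textbf{Proof proposal for \Cref{thm:general}.}
The plan is to invoke the meta-theorem \Cref{thm:ultrametric_cover_to_reliable_spanner} as a black box, feeding it the light $k$-HST cover constructed in \Cref{cor:HSTcoverGeneralMetric}, and then to carry out the bookkeeping on the parameters. First I would set up the parameters of the cover: by \Cref{cor:HSTcoverGeneralMetric}, for any $\eps'\in(0,1/3)$ and integer $t\ge 1$, the metric space $(X,d_X)$ admits a $\psi$-light $(\tau,\rho)$-$k$-HST cover with
\[
\psi = O(\eps'^{-1}\cdot t^3\cdot\log n),\qquad \tau = n^{1/t}\cdot\log n\cdot\tilde{O}(t^2/\eps'),\qquad \rho = 2t+\eps',\qquad k=\tfrac{200 t^3}{\eps'}~.
\]
Since $k = \Theta(t^3/\eps')$ is large (at least, say, $t\ge 1$ forces $k>2$, and $\tfrac{2}{k-1}\le \eps'$ after possibly shrinking constants), the stretch factor coming out of the meta-theorem is
\[
\left(2+\tfrac{2}{k-1}\right)\cdot\rho \le (2+\eps')\cdot(2t+\eps') \le 4t + O(t\eps')~.
\]

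Second I would plug these into \Cref{thm:ultrametric_cover_to_reliable_spanner}, which yields an oblivious $\nu$-reliable spanner of stretch $(2+\tfrac{2}{k-1})\rho$, of size $n\cdot\tilde{O}(\tau^3\cdot(\nu^{-1}\log\log n)^2)$ and lightness $\psi\cdot\tilde{O}(\tau^3\cdot(\nu^{-1}\log\log n)^2)$. Expanding $\tau^3 = n^{3/t}\cdot\polylog(n)\cdot\tilde{O}(t^6/\eps'^3)$ would give a $n^{3/t}$ dependence, which is not what the statement claims — so the key trick here is simply to \emph{reparametrize}: apply the construction with $t' = 3t$ in place of $t$, so that $\tau^3$ contributes $n^{3/t'} = n^{1/t}$. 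Under this substitution the stretch becomes $4t' + O(t'\eps') = 12t + O(t\eps')$, and then rescaling $\eps'$ by a constant factor (absorbing the $t$ into it, i.e.\ taking $\eps' \approx \eps/t$) turns this into $12t+\eps$ exactly, as required. One should double-check that $\eps' = \Theta(\eps/t)$ still lies in the admissible range $(0,1/3)$, which holds for $\eps\in(0,1/2)$ and $t\ge 1$.

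Third, I would finish the parameter computation. With $t'=3t$ and $\eps'=\Theta(\eps/t)$: the size is
\[
n\cdot\tilde{O}\!\left(\tau^3\cdot(\nu^{-1}\log\log n)^2\right) = n\cdot\tilde{O}\!\left(n^{1/t}\cdot \tfrac{t'^{6}}{\eps'^{3}}\cdot \nu^{-2}\right)\cdot\polylog(n) = \tilde{O}\!\left(n^{1+1/t}\cdot\nu^{-2}\cdot\eps^{-3}\right)~,
\]
where the $\polylog(n)$ and $\poly(t)$ factors are swallowed by the $\tilde{O}$ (note $t\le \log n$ w.l.o.g., else $n^{1/t}=\Theta(1)$ and the bound is trivial). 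Similarly the lightness is
\[
\psi\cdot\tilde{O}\!\left(\tau^3\cdot(\nu^{-1}\log\log n)^2\right) = O(\eps'^{-1}t'^3\log n)\cdot\tilde{O}\!\left(n^{1/t}\cdot\tfrac{t'^6}{\eps'^3}\cdot\nu^{-2}\right)\cdot\polylog(n) = n^{1/t}\cdot\tilde{O}(\nu^{-2}\cdot\eps^{-4})\cdot\polylog(n)~,
\]
matching the claimed bound.

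The only genuinely delicate point — and the place I would be most careful — is the interplay of the two rescalings (of $t$ and of $\eps'$) and making sure the final stretch is $12t+\eps$ rather than $12t+O(\eps)$ or $12t\cdot(1+\eps)$; this is purely a matter of tracking constants in $\rho$ and in $(2+\tfrac{2}{k-1})$ and choosing the hidden constant in $\eps' = c\eps/t$ appropriately. Everything else is a direct and mechanical substitution into the two cited results, so there is no real obstacle beyond this constant-chasing.
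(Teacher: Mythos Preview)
Your proposal is correct and follows essentially the same route as the paper: apply \Cref{thm:ultrametric_cover_to_reliable_spanner} to the cover of \Cref{cor:HSTcoverGeneralMetric}, then reparametrize $t\mapsto 3t$ to convert the $n^{3/t}$ into $n^{1/t}$, and finally rescale~$\eps$. The only cosmetic difference is that the paper bounds the stretch more sharply as $(2+\tfrac{2}{k-1})(2t+\eps)\le 4t+3\eps$ (using $\tfrac{2}{k-1}=O(\eps/t^3)$ rather than your looser $\tfrac{2}{k-1}\le\eps'$), so only a constant rescaling of $\eps$ is needed instead of your $\eps'=\Theta(\eps/t)$; since you already invoke $t\le\log n$, the extra $\poly(t)$ factors this introduces are harmless.
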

\begin{proof}
	We can assume that $t\le \log n$, as taking larger $t$ will not reduce size or lightness.
	Using \Cref{thm:ultrametric_cover_to_reliable_spanner} on the $k$-HST cover from \Cref{cor:HSTcoverGeneralMetric}, we obtain an oblivious $\nu$-reliable spanner with stretch $(2+\frac{2}{200\cdot t^{3}/\eps})\cdot(2t+\eps)\le4t+3\eps$, size 
	\[
	n\cdot \tilde{O}\left(\left(n^{1/t}\cdot\log n\cdot\tilde{O}(t^{2}/\eps)\right)^{3}\cdot(\nu^{-1}\cdot\log\log n)^2\right)=\tilde{O}\left(n^{1+3/t}\cdot\nu^{-2}\cdot\eps^{-3}\right)~.
	\]	
	and lightness
	\[
	\eps^{-1}\cdot t^{3}\cdot\log n\cdot\tilde{O}\left(\left(n^{1/t}\cdot\log n\cdot\tilde{O}(t^{2}/\eps)\right)^{3}\cdot(\nu^{-1}\cdot\log\log n)^2\right)=n^{3/t}\cdot\tilde{O}(\nu^{-2}\cdot\eps^{-4})\cdot\polylog(n)~,
	\]
	where in the equality we assumed $\nu,\eps\ge\frac1n$ (as trivially every spanner has size and lightness $O(n^2)$).
	The corollary follows by replacing $t$ with $3t$ (and scaling $\eps$ accordingly).	
\end{proof}

For stretch $t=\log n$, the lightness of \Cref{thm:general} is $\approx\nu^{-2}\cdot\polylog(n)$, while by \Cref{thm:PathObliviousLB}, $\Omega(\nu^{-2}\cdot\log n)$ lightness is necessary (even for preserving only the connectivity of the path metric).
In \Cref{sec:logNstretchSpanner} (see \Cref{cor:HSTcoverGeneralMetricLogn}) we construct a light reliable $O(\log n)$-spanner with lightness $\tilde{O}(\nu^{-2}\cdot\log^4n)$.

\subsection{Minor Free Graphs}
In this subsection we use \Cref{cor:HSTcoverGeneralMetric} to obtain a reliable $(4+\eps)$-spanner for minor free graphs. Later, in \Cref{thm:MinorFreeOptimalStretch} we will improve the stretch to a near optimal $2+\eps$. Nevertheless, if the goal is to minimize lightness, the result in this subsection is better.
By applying \Cref{thm:pairwise_partition_cover_to_ultrametric_cover} on the PCSS of \Cref{thm:minor_free:pairwise} 
we conclude
\begin{corollary}\label{cor:HSTcoverMinorFree}
	\sloppy Let $G$ be an $n$-vertex graph excluding a fixed minor. For any $\eps \in (0,1/12)$, $G$  admits a $O(\frac{\log n}{\eps})$-light $\left(\log n\cdot\tilde{O}(\eps^{-2}),2+\eps\right)$-$\frac{32}{\eps}$-HST cover.
\end{corollary}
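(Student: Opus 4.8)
The plan is to plug the PPCS for minor-free graphs from \Cref{thm:minor_free:pairwise} into the generic PPCS-to-$k$-HST-cover reduction of \Cref{thm:pairwise_partition_cover_to_ultrametric_cover}, and then do the parameter bookkeeping. First I would introduce an auxiliary parameter $\epsilon' = \Theta(\epsilon)$, to be fixed at the end, small enough that $\epsilon' < 1/12$. By \Cref{thm:minor_free:pairwise}, $G$ admits a $\left(\tau,\rho,\epsilon'\right)$-PPCS with $\tau = O(\frac{\log n}{\epsilon'})$ and $\rho = \frac{2}{1-6\epsilon'}$; since $\epsilon'<1/12$ we have $1-6\epsilon'>1/2$, hence $\rho < 4$.

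Next I would set $k = \frac{32}{\epsilon'}$, and note that this choice satisfies the hypothesis $k \ge \frac{8\rho}{\epsilon'}$ of \Cref{thm:pairwise_partition_cover_to_ultrametric_cover}, because $\frac{8\rho}{\epsilon'} < \frac{32}{\epsilon'} = k$. Also $\epsilon'<1/12$, so the constraint $0<\epsilon'<1/12$ on the padding parameter is met. Applying \Cref{thm:pairwise_partition_cover_to_ultrametric_cover} with this $k$ then yields an $O(k\log n)$-light $\left(O(\frac{\tau}{\epsilon'}\log k),\ \rho(1+3\epsilon')\right)$-$k$-HST cover of $G$. Substituting the parameters: the lightness is $O(k\log n) = O(\frac{\log n}{\epsilon'})$; the number of trees is $O(\frac{\tau}{\epsilon'}\log k) = O\!\left(\frac{\log n}{\epsilon'^2}\cdot\log\frac{1}{\epsilon'}\right) = \log n\cdot\tilde O(\epsilon'^{-2})$; and the stretch is $\rho(1+3\epsilon') = \frac{2(1+3\epsilon')}{1-6\epsilon'}$.

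It remains to choose $\epsilon'$ and rescale. Using $\frac{1}{1-6\epsilon'} \le 1 + 12\epsilon'$ for $\epsilon'<1/12$, one gets $\frac{2(1+3\epsilon')}{1-6\epsilon'} \le 2(1+3\epsilon')(1+12\epsilon') \le 2 + C\epsilon'$ for an absolute constant $C$; taking $\epsilon' = \epsilon/C$ makes the stretch at most $2+\epsilon$. With this choice $k = \frac{32C}{\epsilon} \ge \frac{32}{\epsilon}$, so every tree in the cover, being a $k$-HST with $k\ge \frac{32}{\epsilon}$, is in particular a $\frac{32}{\epsilon}$-HST (recall a $k$-HST is a $k'$-HST for every $k'\le k$). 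Finally the lightness $O(\frac{\log n}{\epsilon'}) = O(\frac{\log n}{\epsilon})$ and the number of trees $\log n\cdot\tilde O(\epsilon'^{-2}) = \log n\cdot\tilde O(\epsilon^{-2})$ match the claimed bounds. Since the argument is essentially substitution into two black boxes, I do not expect a genuine obstacle; the only points requiring care are verifying the hypothesis $k \ge \frac{8\rho}{\epsilon'}$ and checking that the compounded factors $(1+3\epsilon')$ and $\frac{1}{1-6\epsilon'}$ keep the stretch within $2+\epsilon$ after rescaling.
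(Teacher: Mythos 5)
Your proposal is correct and follows essentially the same route as the paper: fix $k=32/\eps'$, plug the PPCS of \Cref{thm:minor_free:pairwise} into \Cref{thm:pairwise_partition_cover_to_ultrametric_cover}, read off lightness $O(k\log n)$, tree count $O(\frac{\tau}{\eps'}\log k)$ and stretch $\rho(1+3\eps')=2+O(\eps')$, and rescale $\eps$ at the end (noting the resulting tree is still a $\frac{32}{\eps}$-HST). You are slightly more explicit than the paper in checking the hypothesis $k\ge 8\rho/\eps'$ and in spelling out the constant-factor rescaling, but these are exactly the "details omitted" the paper leaves implicit.
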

\begin{proof}
	Fix $k=\frac{32}{\eps}$, and apply \Cref{thm:pairwise_partition_cover_to_ultrametric_cover} on the PCSS of \Cref{thm:minor_free:pairwise}. As a result we obtain a  $O(\frac{\log n}{\eps})$-light 
	$(\tau,\rho)$-$k$-HST cover for
	\begin{align*}
		\tau & =O(\frac{\eps^{-1}\log n}{\eps}\log k)=\log n\cdot\tilde{O}(\eps^{-2})\\
		\rho & =\frac{2}{1-6\eps}\cdot(1+3\eps)=2+O(\eps)~.
	\end{align*}
	The corollary follows by rescaling $\eps$ accordingly (and noting that it will still be $\frac{32}{\eps}$-HST cover).
\end{proof}

By applying \Cref{thm:ultrametric_cover_to_reliable_spanner} on the HST cover from \Cref{cor:HSTcoverMinorFree} we obtain:
\begin{corollary}\label{thm:MinorFreeStretch4}
	Let $G$ be an $n$-vertex graph excluding a fixed minor. For any $\eps,\nu \in (0,1/20)$, $G$  admits  an oblivious $\nu$-reliable $(4 + \eps)$-spanner with size $\tilde{O}\left(n\cdot\eps^{-6}\cdot\nu^{-2}\right)$ and lightness $\tilde{O}(\eps^{-7}\cdot\log^{4}n\cdot\nu^{-2})$.
\end{corollary}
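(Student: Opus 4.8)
The plan is to instantiate the meta-theorem \Cref{thm:ultrametric_cover_to_reliable_spanner} with the light $k$-HST cover for minor-free graphs from \Cref{cor:HSTcoverMinorFree}. First I would rescale $\eps$ by a suitable absolute constant so that the hypotheses of both black boxes hold throughout (\Cref{cor:HSTcoverMinorFree} needs $\eps\in(0,1/12)$; the condition $\nu\in(0,1/6)$ of \Cref{thm:ultrametric_cover_to_reliable_spanner} is already implied by $\nu<1/20$). By \Cref{cor:HSTcoverMinorFree}, $G$ admits a $\psi$-light $(\tau,\rho)$-$k$-HST cover with $\psi=O(\tfrac{\log n}{\eps})$, $\tau=\log n\cdot\tO(\eps^{-2})$, $\rho=2+\eps$, and $k=\tfrac{32}{\eps}$. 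Plugging this cover into \Cref{thm:ultrametric_cover_to_reliable_spanner} produces an oblivious $\nu$-reliable spanner of stretch $(2+\tfrac{2}{k-1})\cdot\rho$, size $n\cdot\tO(\tau^{3}\cdot(\nu^{-1}\log\log n)^2)$, and lightness $\psi\cdot\tO(\tau^{3}\cdot(\nu^{-1}\log\log n)^2)$.

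The remaining work is purely parameter bookkeeping. For the stretch, since $k=32/\eps$ we have $\tfrac{2}{k-1}\le\tfrac{\eps}{8}$ (using $\eps<1$), hence
\[
\Bigl(2+\tfrac{2}{k-1}\Bigr)\cdot\rho\;\le\;\Bigl(2+\tfrac{\eps}{8}\Bigr)(2+\eps)\;=\;4+2\eps+\tfrac{\eps}{4}+\tfrac{\eps^2}{8}\;\le\;4+3\eps,
\]
which after undoing the constant rescaling of $\eps$ becomes $4+\eps$. For the size, substitute $\tau^{3}=\log^{3}n\cdot\tO(\eps^{-6})$, and fold the polylogarithmic and poly-$\log\log$ factors (which are $\log^{O(1)}n$ whenever $\eps,\nu\ge 1/n$, an assumption that is w.l.o.g.\ since every spanner trivially has size $O(n^2)$) into $\tO(\cdot)$, obtaining $\tO(n\cdot\eps^{-6}\cdot\nu^{-2})$. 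For the lightness, multiply the per-tree bound by $\psi=O(\eps^{-1}\log n)$, which gives $\tO(\eps^{-7}\cdot\log^{4}n\cdot\nu^{-2})$.

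I do not anticipate a real obstacle: the statement is a direct corollary, and the only care needed is to keep the $\tO$ notation consistent so that the polylog slack coming from $\tau$ and from the $\log\log n$ term in \Cref{thm:k-HST-new} is folded correctly, and to verify that the hypotheses of the two reductions survive the constant rescalings of $\eps$ and $\nu$. Note that no parameter $t$ appears here, unlike in \Cref{thm:general}, because minor-free graphs use the fixed stretch-$2$ PPCS of \Cref{thm:minor_free:pairwise} rather than the general-metric PPCS; this is exactly why the stretch lands at $4+\eps$ rather than $12t+\eps$.
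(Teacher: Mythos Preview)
Your proposal is correct and follows essentially the same route as the paper: apply \Cref{thm:ultrametric_cover_to_reliable_spanner} to the $k$-HST cover from \Cref{cor:HSTcoverMinorFree}, compute the stretch as $(2+\tfrac{2}{k-1})(2+\eps)\le 4+O(\eps)$, substitute $\tau$ and $\psi$ to read off the size and lightness, and rescale $\eps$. Your bookkeeping is in fact slightly more explicit than the paper's (e.g.\ the justification for absorbing the $\log n$ and $\log\log n$ factors into $\tO$ via the w.l.o.g.\ assumption $\eps,\nu\ge 1/n$), but there is no substantive difference.
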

\begin{proof}
	Using \Cref{thm:ultrametric_cover_to_reliable_spanner} upon the $k$-HST cover from \Cref{cor:HSTcoverMinorFree}, we obtain a $\nu$-reliable spanner with stretch $(2+\frac{10}{32/\eps})\cdot(2+\eps)<4+3\eps$, size $n\cdot \tilde{O}\left(\left(\frac{\log n}{\eps^{2}}\right)^{3}\cdot\nu^{-2}\right)=\tilde{O}\left(n\cdot\eps^{-6}\cdot\nu^{-2}\right)$, and lightness
	$O(\frac{\log n}{\eps})\cdot\tilde{O}((\eps^{-2}\cdot\log n)^{3}\cdot\nu^{-2})=\tilde{O}(\eps^{-7}\cdot\log^{4}n\cdot\nu^{-2})$.	
	The corollary follows by rescaling $\eps$ accordingly.
\end{proof}

\subsection{Doubling Metric of High Dimension}
Consider a metric space with a moderately large doubling dimension $\ddim$, e.g. $\sqrt{\log n}$. The reliable spanner from \Cref{thm:doubling} has exponential dependence on the dimension in both size and lightness, which might be too large. 
Nevertheless, such a metric space is much more structured than a general metric space (that has doubling dimension $O(\log n)$), and thus we expect to be able to construct better spanners for such graphs (compared to \Cref{thm:general}).
Such a phenomena was previously shown for light spanners \cite{FN22}, and for reliable sparse spanners \cite{Fil23}.
We begin by observing that a PPCS for such metric spaces follow by the sparse covers of Filtser \cite{Fil19padded}.

\begin{lemma}[\cite{Fil19padded} implicit]\label{lem:pairwise-partitionlarge-ddim}
	\sloppy  Every $n$-point metric space $(X,d_X)$ with doubling dimension $\ddim$ admits a
	$(2^{O(\frac{\sddim}{t})}\cdot\ddim\cdot t, t,\frac1t)$-PPCS, for any $\Omega(1)\le t\le\ddim$.
\end{lemma}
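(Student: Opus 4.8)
The plan is to obtain this PPCS directly from the sparse cover construction for doubling metrics of \cite{Fil19padded}. Observe first that a pairwise partition cover is a weak relative of a sparse cover: at a fixed scale $\Delta$ we only have to handle pairs $x,y$ with $\frac{\Delta}{2t}\le d(x,y)\le\frac{\Delta}{t}$, and for such a pair both padding balls $B(x,\frac{\Delta}{t})$ and $B(y,\frac{\Delta}{t})$ are contained in $B\bigl(x,\frac{2\Delta}{t}\bigr)$, since every point of $B(y,\frac{\Delta}{t})$ lies within $d(x,y)+\frac{\Delta}{t}\le\frac{2\Delta}{t}$ of $x$. Hence it suffices to construct a collection of $\Delta$-bounded partitions in which \emph{every} ball of radius $\frac{2\Delta}{t}$ is contained in a single cluster of one of them.

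First I would import the sparse cover theorem for doubling metrics from \cite{Fil19padded}: a metric space with doubling dimension $\ddim$ admits, for every base scale $r>0$ and every parameter $s\ge 1$, a collection of $2^{O(\ddim/s)}\cdot\ddim\cdot s$ partitions, each of diameter at most $c\cdot s\cdot r$ for a universal constant $c$, such that every ball of radius $r$ is fully contained in a single cluster of one of these partitions. The construction in \cite{Fil19padded} is literally a union of this many partitions (built from shifted hierarchical nets), so although it is usually phrased as a sparse cover, the partition form is exactly what is produced.

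Then I would instantiate this at base scale $r:=\frac{2\Delta}{t}$ and parameter $s:=\frac{t}{2c}$, so that the diameter bound becomes $c\cdot s\cdot r=\Delta$ while the padded radius is exactly $\frac{2\Delta}{t}$. The number of partitions is $2^{O(\ddim/s)}\cdot\ddim\cdot s=2^{O(\ddim/t)}\cdot\ddim\cdot t$. Combined with the reduction of the first paragraph, these partitions form a $\bigl(2^{O(\ddim/t)}\cdot\ddim\cdot t,\ t,\ \frac1t,\ \Delta\bigr)$-pairwise partition cover, and since $\Delta>0$ was arbitrary we obtain the desired PPCS. The hypothesis $t=\Omega(1)$ is what guarantees $s=\frac{t}{2c}\ge 1$ so that the imported theorem applies; and once $t$ exceeds $\ddim$ the factor $2^{O(\ddim/t)}$ is already a constant, so nothing is lost by running the argument with $t$ capped at $\ddim$ — hence the stated range $\Omega(1)\le t\le\ddim$.

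The argument is essentially a parameter-matching exercise, so the only place that needs care is the bookkeeping of universal constants: keeping the padded radius at least $\frac{2\Delta}{t}$ while the cluster diameter stays at most $\Delta$, and checking that $s$ falls in the admissible range of the imported theorem. The genuine content — the existence of sparse covers for doubling metrics with the $2^{O(\ddim/s)}$-type blow-up — is precisely what we cite from \cite{Fil19padded}, so there is no substantive obstacle beyond recalling and instantiating it correctly.
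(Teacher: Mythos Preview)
Your proposal is correct and follows essentially the same approach as the paper: both cite the sparse partition cover of \cite{Fil19padded} (a collection of $2^{O(\ddim/t)}\cdot\ddim\cdot t$ many $\Delta$-bounded partitions in which every ball of radius $\frac{2\Delta}{t}$ is contained in a single cluster), and then use the triangle inequality to observe that for a pair $x,y$ with $d(x,y)\le\frac{\Delta}{t}$, both padding balls $B(x,\frac{\Delta}{t})$ and $B(y,\frac{\Delta}{t})$ lie inside $B(x,\frac{2\Delta}{t})$. The only difference is cosmetic: you do the triangle-inequality reduction first and then cite the cover, whereas the paper cites the cover first and then verifies padding.
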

\begin{proof}
	Fix the scale parameter $\Delta>0$. 
	Filtser \cite{Fil19padded} constructed a collection $\mathbb{P} = \{\mathcal{P}_{1},\dots,\mathcal{P}_{s}\}$ of $s=2^{O(\frac{\sddim}{t})}\cdot\ddim\cdot t$, $\Delta$-bounded partitions, such that every ball of radius $R=\frac{2}{t}\cdot\Delta$ is fully contained in some cluster, in one of the partitions.
	We argue that $\mathbb{P}$ is an $(2^{O(\frac{\sddim}{t})}\cdot\ddim\cdot t, t,\frac1t)$-PPCS.
	
	Consider two points $x,y$ such that $d_X(x,y)\le \frac12R=\frac{\Delta}{t}$. There is some partition $\cP_i\in\mathbb{P}$, and a cluster $C\in\cP_i$ such that $B_X(x,R)\subseteq C$.
	For every point $z\in B_X(y,\frac12R)$, it holds that $d_X(x,z)\le d_X(x,y)+d_X(y,z)\le\frac12R+\frac12R=R$, implying $z\in B_X(x,R)$, and in particular $B_X(y,\frac12\cdot R)\subseteq C$. Similarly $B_X(x,\frac12\cdot R)\subseteq C$. It follows that $\mathbb{P}$ is a $(2^{O(\frac{\sddim}{t})}\cdot\ddim\cdot t, t,\frac1t)$-PPCS as required.
\end{proof}

By applying \Cref{thm:pairwise_partition_cover_to_ultrametric_cover}, we conclude
\begin{corollary}\label{cor:HSTcoverLargeDdim}
	Every $n$-point metric space $(X,d_X)$ with doubling dimension $\ddim$ admits an 
	$O(t^2\cdot\log n)$-light $\left(2^{O(\frac{\sddim}{t})}\cdot\ddim\cdot t\cdot\log t,t\right)$-$\frac{t^2}{2}$-HST cover, for any $\Omega(1)\le t\le\ddim$.
\end{corollary}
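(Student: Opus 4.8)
The plan is to invoke \Cref{thm:pairwise_partition_cover_to_ultrametric_cover} directly on the PPCS guaranteed by \Cref{lem:pairwise-partitionlarge-ddim}, much as was done in the proofs of \Cref{cor:HSTcoverDoubling} and \Cref{cor:HSTcoverGeneralMetric}. Set the parameters of the PPCS to be $\tau_0 = 2^{O(\sddim/t)}\cdot\ddim\cdot t$, $\rho_0 = t$, and $\epsilon_0 = \frac1t$. Since $t\ge\Omega(1)$ we may assume (by possibly adjusting the hidden constant) that $\epsilon_0 = 1/t < 1/12$, so the hypothesis of \Cref{thm:pairwise_partition_cover_to_ultrametric_cover} on $\epsilon$ is met.

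Next I would fix the HST parameter $k$. The theorem requires $k\ge \frac{8\rho_0}{\epsilon_0} = \frac{8t}{1/t} = 8t^2$, so taking $k = 8t^2$ works; the statement claims a $\frac{t^2}{2}$-HST cover, and since every $8t^2$-HST is in particular a $\frac{t^2}{2}$-HST (larger separation implies the weaker one), this is consistent — one should state the final bound with $k=8t^2$ and then remark that it is also a $\frac{t^2}{2}$-HST cover, mirroring the closing sentence of the proof of \Cref{cor:HSTcoverMinorFree}. With these substitutions, \Cref{thm:pairwise_partition_cover_to_ultrametric_cover} yields a $\psi$-light $\left(O(\tfrac{\tau_0}{\epsilon_0}\log k),\,\rho_0(1+3\epsilon_0)\right)$-$k$-HST cover, where $\psi = O(k\log n)$.

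It then remains only to simplify each of the three quantities. For the lightness, $\psi = O(k\log n) = O(t^2\log n)$, matching the claim. For the stretch, $\rho_0(1+3\epsilon_0) = t(1+3/t) = t+3 = O(t)$; writing the bound as $t$ in the statement is a mild abuse, but it is consistent with how \Cref{cor:HSTcoverDoubling} and \Cref{cor:HSTcoverLargeDdim}'s siblings absorb lower-order additive terms (one can add a sentence noting the $+3$ is absorbed, or rescale $t$). For the number of trees, $O\!\left(\frac{\tau_0}{\epsilon_0}\log k\right) = O\!\left(\tau_0\cdot t\cdot\log(t^2)\right) = 2^{O(\sddim/t)}\cdot\ddim\cdot t^2\cdot\log t$; here the claimed bound $2^{O(\sddim/t)}\cdot\ddim\cdot t\cdot\log t$ is off by a factor of $t$ from this naive computation, so the cleanest route is to note $t\le\ddim$ and hence $2^{O(\sddim/t)}$ already dominates any fixed polynomial factor in $t$ when $t = o(\sddim/\log t)$ — alternatively, absorb the extra $t$ into the $2^{O(\sddim/t)}$ term by a constant change in the exponent's hidden constant, exactly the kind of bookkeeping done in \Cref{cor:HSTcoverGeneralMetric}. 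I expect this last reconciliation of the polynomial-in-$t$ factors against the $2^{O(\sddim/t)}$ term to be the only delicate point; everything else is a direct substitution into the already-proved meta-theorem.
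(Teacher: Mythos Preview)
Your approach is exactly the paper's: plug the PPCS of \Cref{lem:pairwise-partitionlarge-ddim} into \Cref{thm:pairwise_partition_cover_to_ultrametric_cover} with $k=8t^2$, then simplify and rescale $t$. In fact your bookkeeping is cleaner than the paper's: the paper's printed proof sets $\eps=\tfrac{1}{12}$ and uses the stretch formula $\rho=\tfrac{2}{1-6\eps}\cdot t$, both of which are leftovers from the minor-free proof and do not match the parameters here (the PPCS has padding $\epsilon=1/t$, and \Cref{thm:pairwise_partition_cover_to_ultrametric_cover} outputs stretch $\rho(1+3\epsilon)=t+3$, not $4t$). Your choice $\epsilon_0=1/t$ is the correct one, and your resulting count $O\!\left(\frac{\tau_0}{\epsilon_0}\log k\right)=2^{O(\sddim/t)}\cdot\ddim\cdot t^2\cdot\log t$ is what the computation actually gives.

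The one genuine gap in your proposal is the final reconciliation. You suggest absorbing the extra factor of $t$ into $2^{O(\sddim/t)}$ by adjusting the hidden constant, but this fails at the top of the range: when $t=\Theta(\ddim)$ we have $2^{O(\sddim/t)}=O(1)$, which cannot swallow a factor of $t=\Theta(\ddim)$ (you would need $t\log t=O(\ddim)$, which is false for $t=\ddim$). So either the stated bound in the corollary is off by a factor of $t$, or one must argue differently. For the paper's purposes this is harmless, since the only downstream use (\Cref{thm:DdimLarge}) immediately absorbs all polynomial-in-$t$ factors into $\poly(\ddim)$; but as a standalone proof of the corollary exactly as stated, your absorption step does not go through.
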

\begin{proof}
	Fix $k=8t^2$, $\eps=\frac{1}{12}$, and apply \Cref{thm:pairwise_partition_cover_to_ultrametric_cover} on the PCSS of \Cref{lem:pairwise-partitionlarge-ddim}. As a result we obtain a  $O(t^2\cdot\log n)$-light 
	$(\tau,\rho)$-$k$-HST cover for
	\begin{align*}
		\tau & =O(\frac{2^{O(\frac{\sddim}{t})}\cdot\ddim\cdot t}{\eps}\cdot\log k)=2^{O(\frac{\sddim}{t})}\cdot\ddim\cdot t\cdot\log t\\
		\rho & =\frac{2}{1-6\eps}\cdot t=4t~.
	\end{align*}
	The corollary follows by rescaling $t$ accordingly.
\end{proof}

By applying \Cref{thm:ultrametric_cover_to_reliable_spanner} on the HST cover from \Cref{cor:HSTcoverLargeDdim} we obtain:
\begin{corollary}\label{thm:DdimLarge}
	Every $n$-point metric space $(X,d_X)$ with doubling dimension $\ddim$ admits an  oblivious $\nu$-reliable $t$-spanner with size $n\cdot\tilde{O}\left(\nu^{-2}\right)\cdot2^{O(\frac{\sddim}{t})}\cdot\poly(\ddim,\log\log n)$ and lightness $2^{O(\frac{\sddim}{t})}\cdot\tilde{O}(\log n\cdot\nu^{-2})\cdot\poly(\ddim)$, for any $\Omega(1)\le t\le\ddim$.
\end{corollary}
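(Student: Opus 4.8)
The plan is to simply invoke the machinery already assembled in this section: the $k$-HST cover from \Cref{cor:HSTcoverLargeDdim}, fed into the meta-theorem \Cref{thm:ultrametric_cover_to_reliable_spanner}. First I would fix the target stretch parameter $t$ with $\Omega(1)\le t\le\ddim$, and apply \Cref{cor:HSTcoverLargeDdim} to obtain an $O(t^2\cdot\log n)$-light $(\tau,\rho)$-$k$-HST cover of $(X,d_X)$, where $\tau = 2^{O(\sddim/t)}\cdot\ddim\cdot t\cdot\log t$, $\rho = t$, and $k = \frac{t^2}{2}$ (so in particular $\psi = O(t^2\log n)$ and $k>1$ since $t=\Omega(1)$ can be taken large enough). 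Then I would plug this cover into \Cref{thm:ultrametric_cover_to_reliable_spanner}: for any $\nu\in(0,1/6)$ this yields an oblivious $\nu$-reliable $\left(2+\frac{2}{k-1}\right)\cdot\rho$-spanner of size $n\cdot\tilde O\!\left(\tau^3\cdot(\nu^{-1}\log\log n)^2\right)$ and lightness $\psi\cdot\tilde O\!\left(\tau^3\cdot(\nu^{-1}\log\log n)^2\right)$.

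Next I would simplify the three parameters. For the \textbf{stretch}: since $k=\frac{t^2}{2}$, we have $\left(2+\frac{2}{k-1}\right)\cdot t = 2t + \frac{2t}{t^2/2-1} = 2t + O(1/t) \le 3t$ for $t$ bounded below by a suitable constant; after rescaling $t$ by a constant factor this becomes stretch $t$ exactly, which matches the statement. For the \textbf{size}: $\tau^3 = 2^{O(\sddim/t)}\cdot\ddim^3\cdot t^3\cdot\log^3 t = 2^{O(\sddim/t)}\cdot\poly(\ddim)$ (absorbing $t\le\ddim$ and $\log t$ into $\poly(\ddim)$), and $(\nu^{-1}\log\log n)^2$ together with the $\tilde O(\cdot)$ polylog factors gives $\tilde O(\nu^{-2})\cdot\poly(\log\log n)$; so the total size is $n\cdot\tilde O(\nu^{-2})\cdot 2^{O(\sddim/t)}\cdot\poly(\ddim,\log\log n)$ as claimed. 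For the \textbf{lightness}: multiply the size-type bound by $\psi/n = O(t^2\log n) = O(\log n)\cdot\poly(\ddim)$, giving $2^{O(\sddim/t)}\cdot\tilde O(\log n\cdot\nu^{-2})\cdot\poly(\ddim)$, again as claimed. (One should also note $t\le\log n$ may be assumed w.l.o.g., as in \Cref{thm:general}, so that all the $\poly(t)$ factors are harmless; but since the statement allows $\poly(\ddim)$ and $t\le\ddim$, this is automatic.)

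Honestly there is no real obstacle here — this corollary is a bookkeeping composition of \Cref{cor:HSTcoverLargeDdim} and \Cref{thm:ultrametric_cover_to_reliable_spanner}, entirely parallel to how \Cref{thm:doubling}, \Cref{thm:general}, and \Cref{thm:MinorFreeStretch4} are derived from their respective covers. The only points requiring a modicum of care are (i) checking that the constraint $k\ge\frac{8\rho}{\varepsilon}$ needed inside \Cref{thm:pairwise_partition_cover_to_ultrametric_cover} is already verified inside the proof of \Cref{cor:HSTcoverLargeDdim} (it is, via the choice $k=8t^2$ and $\varepsilon=1/12$ against the PPCS of \Cref{lem:pairwise-partitionlarge-ddim}), and (ii) tracking the constant-factor rescalings of $t$ so that the final stretch is exactly $t$ rather than $3t$. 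The mild subtlety worth flagging in the write-up is that, unlike the doubling case of \Cref{thm:doubling}, here we deliberately do \emph{not} use the bounded-degree improvement (\Cref{cor:ultrametric_cover_to_reliable_spanner_doubling}), since for large $\ddim$ the degree bound $O(k/\varepsilon)^{\ddim}$ would be catastrophic; we use the generic \Cref{thm:k-HST-new}/\Cref{thm:ultrametric_cover_to_reliable_spanner} which only costs us the factor-$2$ in the stretch, which is acceptable here.
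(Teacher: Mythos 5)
Your proposal is correct and matches the paper's proof exactly: both invoke \Cref{thm:ultrametric_cover_to_reliable_spanner} on the $k$-HST cover of \Cref{cor:HSTcoverLargeDdim} and then absorb the $\tau^3$, $\psi$, and $(2+\frac{2}{k-1})\rho$ factors into the stated bounds via a constant rescaling of $t$. Your write-up is in fact slightly more careful than the paper's (which says ``scaling $\eps$ accordingly'' where it really means rescaling $t$), and your closing remark about deliberately avoiding the bounded-degree variant \Cref{cor:ultrametric_cover_to_reliable_spanner_doubling} in the high-dimensional regime is a correct and useful observation that the paper leaves implicit.
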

\begin{proof}
	Using \Cref{thm:ultrametric_cover_to_reliable_spanner} upon the $k$-HST cover from \Cref{cor:HSTcoverLargeDdim}, we obtain a $\nu$-reliable spanner with stretch $(2+\frac{20}{t^{2}})\cdot t$, size $n\cdot\tilde{O}\left(\nu^{-2}\right)\cdot2^{O(\frac{\sddim}{t})}\cdot\poly(\ddim,\log\log n)$
, and lightness $2^{O(\frac{\sddim}{t})}\cdot\tilde{O}(\log n\cdot\nu^{-2})\cdot\poly(\ddim)$.
	The corollary follows by scaling $\eps$ accordingly.
\end{proof}

A particularly interesting choice of parameters is $t=\ddim$, where we will get an oblivious $\nu$-reliable $\ddim$-spanner of size $n\cdot\tilde{O}\left(\nu^{-2}\right)\cdot\poly(\ddim,\log\log n)$,
and lightness $\tilde{O}(\log^{2}n\cdot\nu^{-2})\cdot\poly(\ddim)$.

\subsection{General Ultrametric}
A major part of this paper is devoted to constructing light reliable spanners for $k$-HST. However, 
\Cref{thm:k-HST-new} requires $k>1$, and the stretch grows as $k$ is closer to 1. What about the general case of $1$-HST (a.k.a ultrametric)? A stretch of $8$ can be obtained trivially by first embedding the ultrametric into a 2-HST with distortion 2 (see \cite{BLMN03}). However, we would like preserve the near optimal stretch of $2+\varepsilon$.
In this subsection we provide an answer for this question. We begin be constructing a $k$-HST cover for ultramterics.

\begin{lemma}\label{lem:HSTcoverForUltrametric}
	For every $\eps\in(0,1)$, every ultrametric admits an $\eps^{-1}$-light $\left(O(\eps^{-1}\log\frac1\eps),1+\eps\right)$-$\frac1\eps$-HST cover.
\end{lemma}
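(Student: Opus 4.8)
The plan is to build, for a given ultrametric $(X,d)$ represented by a tree $T'$, a collection of $O(\eps^{-1}\log\frac1\eps)$ dominating $\frac1\eps$-HSTs. The key obstacle is that the label sequence along a root-to-leaf path in $T'$ can be arbitrary, so a single ``contraction to powers of $1/\eps$'' rounds two labels to the same value too often, which would force one node to absorb many distinct original levels and blow up the weight. The standard fix is to use $O(\log_{1/\eps}(1/\eps)) = O(\eps^{-1}\log\frac1\eps)$ — wait, more precisely, $1+\log_{1/\eps} k$ with $k=1/\eps$ gives $O(1)$ shifts at base $1/\eps$; the $\eps^{-1}$ factor in $\tau$ will instead come from the choice of offsets within a multiplicative window, exactly as in \Cref{thm:pairwise_partition_cover_to_ultrametric_cover}. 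So first I would fix, for each offset parameter $l$ ranging over a geometrically spaced set of size $O(\eps^{-1}\log\frac1\eps)$ inside $[1,1/\eps)$ (say $l \in \{(1+\eps)^c\}$ capped appropriately, or a cleaner explicit set), the grid of scales $\Delta_i(l) = l\cdot(1/\eps)^i$.

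The second step is the construction of the tree $T_l$ for each offset $l$: contract $T'$ to this grid. Concretely, a node $v$ of $T'$ with label $\Gamma_v$ is mapped to ``level $i$'' where $i$ is the unique index with $\Delta_{i-1}(l) < \Gamma_v \le \Delta_i(l)$ (treating leaves, $\Gamma_v=0$, as level $-\infty$). All original nodes mapped to the same level that lie on a common branch are merged into one $T_l$-node, which is assigned label $\Delta_i(l)$; a $T_l$-node at level $i$ is connected as parent to the merged blocks at the next occupied level below it. Since consecutive occupied levels differ by a factor of at least $1/\eps$, $T_l$ is a genuine $\frac1\eps$-HST, and since every label only increased (from $\Gamma_v$ to $\Delta_i(l) \ge \Gamma_v$), $T_l$ dominates $(X,d)$. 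This is the routine part.

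The third step is the stretch bound: fix $u,v\in X$ with $D := d(u,v) = \Gamma_{\lca(u,v)}$ in $T'$. Choose $i = \lfloor \log_{1/\eps} D\rfloor$, so $(1/\eps)^i \le D < (1/\eps)^{i+1}$, and then pick the offset $l$ from our set with $l\cdot(1/\eps)^i \le D < (1+\eps)\cdot l\cdot (1/\eps)^i$, i.e. $\Delta_{i-1}(l) < D \le \Delta_i(l)$ and moreover $D \ge \Delta_i(l)/(1+\eps)$. Then $\lca(u,v)$ in $T'$ is mapped into a $T_l$-node of label $\Delta_i(l) \le (1+\eps) D$, and distances only shrink under LCA-merging from the same branch in the obvious way, so $d_{T_l}(u,v) \le \Delta_i(l) \le (1+\eps)\cdot d(u,v)$. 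The main thing to check here is that the offsets do cover every possible value of $D$ — this needs the geometric set of $l$'s to be fine enough with ratio $(1+\eps)$ and to span a full multiplicative period $[1,1/\eps)$, which is why $|\{l\}| = O(\log_{1+\eps}(1/\eps)) = O(\eps^{-1}\log\frac1\eps)$ (using $\log(1+\eps)\ge\eps/2$).

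The fourth step — the one I expect to be the real crux — is the lightness bound $w(T_l) \le \eps^{-1}\cdot w(\MST(X))$. By \eqref{eq:HSTweight}, $w(T_l) = \sum_{x\in T_l}(\deg(x)-1)\Gamma_x$; grouping by level, the contribution of level $i$ is $\Delta_i(l)$ times (number of $T_l$-nodes at level $i-1$), which equals $\Delta_i(l)$ times the number of merged blocks at level $i-1$. I would bound the number of level-$(i-1)$ blocks by $|X|$ trivially is not enough; instead I use the key structural fact for \emph{ultrametrics} specifically: the leaf sets $L(x)$ of the $T_l$-nodes at any fixed level partition $X$, so at level $i-1$ there are at most $|X|$ blocks — still not enough directly, but each such block, having diameter $\le \Delta_{i-1}(l)$ and the blocks being pairwise at distance $> \Delta_{i-1}(l)$ in the ultrametric (any two points in different level-$(i-1)$ blocks have LCA label $> \Delta_{i-1}(l)$ in $T'$, hence $\ge$ the next grid value $\ge \Delta_i(l)/\eps \cdot \eps = \Delta_{i-1}(l)\cdot(1/\eps)$), form a ``$\Delta_{i-1}(l)$-separated'' set of representatives. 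Picking one leaf per block gives a set that is itself a $\Delta_{i-1}(l)$-net-like object, so by the MST-versus-net bound (\Cref{claim:net:light}) the number of level-$(i-1)$ blocks is $O(w(\MST(X))/\Delta_{i-1}(l))$. Hence level $i$ contributes $\Delta_i(l)\cdot O(w(\MST)/\Delta_{i-1}(l)) = O((1/\eps)\cdot w(\MST))$ — wait, that's per level and there are $\Theta(\log_{1/\eps}\Phi)$ levels, which would give an extra $\log$ factor. To remove it, I would instead telescope: the blocks at level $i-1$ refine those at level $i$, and summing $\Delta_i(l)\cdot(\#\text{level-}(i-1)\text{ blocks})$ over $i$ telescopes against the net bound because $\sum_i \Delta_i \cdot |\mathcal N_{i-1}|$ with $\Delta_i = \Delta_{i-1}/\eps$ and $|\mathcal N_{i-1}|\le \lceil 2w(\MST)/\Delta_{i-1}\rceil$ gives $\sum_i (1/\eps)\cdot 2 w(\MST) \le$ too much unless we also use that the number of \emph{occupied} levels with a nontrivial merge is bounded — alternatively, observe that each original $T'$-node $v$ contributes its ``parent edge weight increase'' at most once, and charge $\Delta_i(l)$ for $v$ to the $\Delta_{i-1}(l)$-net, using $\Delta_i = \Delta_{i-1}/\eps$, to get total $\le \eps^{-1}\sum_{v} (\text{net charge}) \le \eps^{-1} w(\MST)$ after the telescoping collapses. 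Getting this charging argument clean, avoiding the spurious $\log\Phi$ factor, is the part I would spend the most care on; the rest follows the template of \Cref{lem:ppcs-hier-light} and \Cref{thm:pairwise_partition_cover_to_ultrametric_cover}.
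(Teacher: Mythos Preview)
Your construction and the stretch argument (steps 1--3) are exactly the paper's: round each label $\Gamma_v$ up to the nearest point of the grid $\{(1+\eps)^i\cdot\eps^{-j}\}$, then contract nodes sharing their parent's label; the offset $i$ ranges over $\{0,\dots,\lceil\log_{1+\eps}\frac1\eps\rceil\}$, and the choice of $i$ for a given pair is precisely your choice of $l$.

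The gap is in step 4. Your net-based accounting genuinely does pick up a $\log_{1/\eps}\Phi$ factor, and the ``telescoping'' fix you sketch does not collapse: with $n_i$ clusters at level $i$, the identity $w(T_l)=\sum_i\Delta_i(n_{i-1}-n_i)$ rearranges (Abel) to $\sum_i n_i(\Delta_{i+1}-\Delta_i)$, and plugging in $n_i=O(w(\MST)/\Delta_i)$ together with $\Delta_{i+1}-\Delta_i=(\eps^{-1}-1)\Delta_i$ still gives $O(\eps^{-1}w(\MST))$ \emph{per level}. There is no cancellation to exploit here.

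The paper's lightness argument is a one-liner you are circling around but not stating: rounding does not change the tree structure or any degree, and each label grows by at most a factor $\eps^{-1}$, so by \eqref{eq:HSTweight}
\[
w(T_l^{\text{rounded}})=\sum_{x}(\deg(x)-1)\Gamma_x^{\text{new}}\le \eps^{-1}\sum_{x}(\deg(x)-1)\Gamma_x=\eps^{-1}\cdot w(\MST(X)).
\]
Contraction of a node into its parent with the same label leaves the induced ultrametric (hence its MST weight) unchanged, so $w(T_l)\le\eps^{-1}w(\MST(X))$. No nets, no summation over scales, no charging. Replace your fourth step with this observation and the proof is complete.
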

\begin{proof}
	Consider a $1$-HST $T$. Fix $N=\left\lceil \log_{1+\eps}\frac{1}{\eps}\right\rceil=O(\eps^{-1}\log\frac1\eps)$.
	For every $i\in\{0,1,\dots,N \}$, let $T_i$ be the HST $T$, where we change the label of every internal node $x$, from $\Gamma_x$ to $(1+\eps)^i\cdot \frac{1}{\eps^j}$, for $j\in\mathbb{Z}$ such that 
	\[
	(1+\eps)^{i}\cdot\frac{1}{\eps^{j-1}}<\Gamma_{x}\le(1+\eps)^{i}\cdot\frac{1}{\eps^{j}}~.
	\]
	Finally, contract all the internal nodes that have the same label as their father. 
	As a result, we obtain a dominating $\frac{1}{\eps}$-HST $T_i$, where the distance between every two vertices is increased by at most a factor of $\frac1\eps$. In particular, $T_i$ has weight at most $\frac1\eps$ times larger than $T$.
	It remains to show that the distance between every pair of leaves is preserved up to a factor of $1+\eps$ in one of the $\frac1\eps$-HST's in the cover.
	Consider a pair $u,v$ with lca $x$, and let $i\in\{0,\dots,N\}$, $j\in\mathbb{Z}$ such that $(1+\eps)^{i-1}\cdot\frac{1}{\eps^{j}}<\Gamma_{x}\le(1+\eps)^{i}\cdot\frac{1}{\eps^{j}}$.
	In the HST $T_i$, the label of the lca of $u,v$ will be changed to $(1+\eps)^{i}\cdot\frac{1}{\eps^{j}}$, and hence $d_{T_{i}}(u,v)=(1+\eps)^{i}\cdot\frac{1}{\eps^{j}}<(1+\eps)\cdot\Gamma_{x}=(1+\eps)\cdot d_{T}(u,v)$.
\end{proof}
By applying \Cref{thm:ultrametric_cover_to_reliable_spanner} on the HST cover from \Cref{lem:HSTcoverForUltrametric} (and scaling $\eps$ accordingly) we obtain:
\begin{theorem}\label{thm:GeneralUltrametric}
	\sloppy For any parameters $\nu,\eps \in (0, 1/12)$, every ultrametric ($1$-HST) $T$ admits an oblivious $\nu$-reliable $(2 + \eps)$-spanner of size $n\cdot\tilde{O}\left(\eps^{-3}\cdot(\nu^{-1}\cdot\log\log n)^2\right)$
	and lightness $\tilde{O}(\eps^{-4}\cdot(\nu^{-1}\cdot \log\log n)^2)$.
\end{theorem}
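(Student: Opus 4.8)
The plan is to obtain \Cref{thm:GeneralUltrametric} as a direct composition of two black boxes already established: the light $k$-HST cover for ultrametrics of \Cref{lem:HSTcoverForUltrametric}, and the meta-theorem \Cref{thm:ultrametric_cover_to_reliable_spanner} which turns any light $k$-HST cover into an oblivious reliable light spanner (the latter in turn invokes \Cref{thm:k-HST-new} with reliability parameter $\nu/\tau$ on each tree of the cover). Concretely, fix a parameter $\eps'$ to be chosen a constant factor smaller than $\eps$; applying \Cref{lem:HSTcoverForUltrametric} to the given ultrametric $T$ produces an $\eps'^{-1}$-light $\bigl(O(\eps'^{-1}\log\tfrac1{\eps'}),\,1+\eps'\bigr)$-$\tfrac1{\eps'}$-HST cover. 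Since $k=1/\eps'>1$ and $\nu\in(0,1/12)\subseteq(0,1/6)$, the hypotheses of \Cref{thm:ultrametric_cover_to_reliable_spanner} are met, and we feed it this cover with $\psi=\eps'^{-1}$, $\tau=O(\eps'^{-1}\log\tfrac1{\eps'})$, $\rho=1+\eps'$, $k=1/\eps'$.

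The output is an oblivious $\nu$-reliable spanner of stretch $\bigl(2+\tfrac{2}{1/\eps'-1}\bigr)(1+\eps')=\bigl(2+\tfrac{2\eps'}{1-\eps'}\bigr)(1+\eps')$. For $\eps'<1/12$ we have $\tfrac{2\eps'}{1-\eps'}<3\eps'$, so the stretch is at most $(2+3\eps')(1+\eps')\le 2+6\eps'$ (using $3\eps'^2<\eps'$); choosing $\eps'=\eps/6$ makes the stretch at most $2+\eps$. For the size, the meta-theorem gives $n\cdot\tilde{O}(\tau^3\cdot(\nu^{-1}\log\log n)^2)$, and since $\tau^3=O(\eps'^{-3}\log^3\tfrac1{\eps'})=\tilde{O}(\eps^{-3})$ (with $\eps'=\Theta(\eps)$, and the $\log\tfrac1\eps$ factors absorbed into $\tilde{O}$), this is $n\cdot\tilde{O}(\eps^{-3}\cdot(\nu^{-1}\log\log n)^2)$. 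For the lightness, it equals $\psi\cdot\tilde{O}(\tau^3\cdot(\nu^{-1}\log\log n)^2)=\eps'^{-1}\cdot\tilde{O}(\eps^{-3}\cdot(\nu^{-1}\log\log n)^2)=\tilde{O}(\eps^{-4}\cdot(\nu^{-1}\log\log n)^2)$, exactly as claimed.

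There is no genuine obstacle in this final step: all the technical content sits upstream — in \Cref{lem:HSTcoverForUltrametric} (re-rounding every internal label $\Gamma_x$ to $(1+\eps)^i\eps^{-j}$, contracting equal-label parent-child pairs, and tracking the resulting weight blow-up against $w(\MST(T))$) and in \Cref{thm:ultrametric_cover_to_reliable_spanner} (the union-of-per-tree-spanners argument with $\nu'=\nu/\tau$). The only things requiring a bit of care here are the arithmetic that collapses the $(2+\tfrac{O(1)}{k})\cdot\rho$ stretch to $2+\eps$ via the rescaling $\eps'=\eps/6$, and confirming that the $\mathrm{poly}(\log\tfrac1\eps)$ factors coming from $\tau$ are swallowed by the $\tilde{O}$ notation. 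Hence the proof is a short plug-and-play argument.
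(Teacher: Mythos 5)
Your proposal is correct and matches the paper's proof exactly: the paper derives \Cref{thm:GeneralUltrametric} in one line by plugging the cover of \Cref{lem:HSTcoverForUltrametric} into \Cref{thm:ultrametric_cover_to_reliable_spanner} ``and scaling $\eps$ accordingly,'' and you have simply made the rescaling $\eps'=\eps/6$ and the resulting stretch/size/lightness bookkeeping explicit.
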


\section{Light Reliable Spanner for the Path Graph}\label{sec:path}
In this section we present our hop-bounded oblivious reliable 1-spanner for the weighted path graph.

Let $P_n = ([n], E)$ be a weighted path on $n$ vertices and let $\nu \in (0, 1)$, $h \in [\log n]$ be two parameters of the construction. The parameter $\nu$ is the input reliablity parameter, while the parameter $h$ governs the tradeoff between the hop-bound of the spanner, to its size and lightness. 
As previous works \cite{FL22,Fil23} were concerned with the hop parameter (as in some scenarios it governs stretch), we prove \Cref{thm:path} for a general hop parameter $h$.

\begin{theorem}\label{thm:path}
    For any parameters $\nu \in (0, 1)$, and $h \in [\log n]$, any weighted path graph $P_n$ admits an oblivious $\nu$-reliable $(2h+1)$-hop $1$-spanner with lightness $O\left(h n^{2/h}\cdot \left(\frac{\log (h/\nu)}{\nu}\right)^2\right)$ and size $O\left(n^{1+1/h}\cdot\frac{\log (h/\nu)}{\nu}\right)$.
\end{theorem}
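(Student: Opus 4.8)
The plan is to adapt the construction of Buchin, Har-Peled and Ol\'ah~\cite{BHO20}, replacing their long-range inter-level edges by short-range ones, which is exactly what makes the spanner light. \textbf{Construction.} First sample a laminar family $[n]=V_0\supseteq V_1\supseteq\cdots\supseteq V_h$, where each point of $V_{i-1}$ is placed in $V_i$ independently with probability $n^{-1/h}$; thus $\E|V_i|=n^{1-i/h}$, and $V_h$ is typically a single point. Set $\ell=\Theta(\nu^{-1}\log(h/\nu))$. For every level $i$ and every $a\in V_i$, add to $H$ the edges from $a$ to its $\ell$ nearest points of $V_i$ on each side, and to its $\ell$ nearest points of $V_{i+1}$ on each side, where an edge $\{a,b\}$ receives weight $d_{P_n}(a,b)$. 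Domination $d_H\ge d_{P_n}$ is immediate from the triangle inequality along $P_n$. I would first dispatch the easy parts (size, lightness, stretch) and then concentrate on reliability.

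\textbf{Size and lightness.} By \Cref{lem:expect-the-worst} it suffices to bound the \emph{expected} size and lightness and then rescale $\nu$ by a constant. Each point of $V_i$ contributes $O(\ell)$ edges, so $\E[|E(H)|]=O(\ell)\cdot\sum_{i=0}^h\E|V_i|=O(\ell)\cdot\sum_i n^{1-i/h}=O(n^{1+1/h}\cdot\ell)$, since once $h\le\log n$ the geometric sum is carried by its leading terms. For lightness, write $w(H)=\sum_{e\in P_n}w(e)\cdot c(e)$, where $c(e)$ counts the $H$-edges whose defining $P_n$-sub-path contains $e$. Fixing a path-edge $e$ and a level $i$: the $V_i$--$V_i$ edges crossing $e$ form only $O(\ell^2)$ pairs, while for the $V_i$--$V_{i+1}$ edges, using that $V_{i+1}$ is sparser than $V_i$ by a factor of about $n^{1/h}$, the expected number of such edges crossing $e$ is $O(n^{2/h}\ell^2)$. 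Summing over the $h$ levels gives $\E[c(e)]=O(h\cdot n^{2/h}\ell^2)$, hence $\E[w(H)]=O(h\cdot n^{2/h}\ell^2)\cdot w(P_n)=O(h\cdot n^{2/h}\ell^2)\cdot w(\MST)$, which is the claimed lightness.

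\textbf{Stretch and hop bound.} This is a refinement of the ``stairway'' argument of \cite{BHO20}. Given $u,v\notin B^+$, one climbs monotonically from $u$ through surviving representatives $u=u_0\in V_0,\ u_1\in V_1,\dots$, at each level moving to one of the $\ell$ nearest surviving points of the next level on the side of $v$, and symmetrically from $v$, until the two representatives land within $\ell$ of each other inside a common level $V_i$, where a single $V_i$--$V_i$ edge joins them. This uses at most $i\le h$ hops on each side and one crossing edge, so at most $2h+1$ hops; since the resulting $P_n$-walk is monotone, its length equals $d_{P_n}(u,v)$, giving stretch $1$. The catch is that this climb must actually succeed, which is exactly what the choice of $B^+$ in the reliability analysis guarantees.

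\textbf{Reliability --- the main obstacle.} Following \cite{BHO19}, for $\alpha\in(0,1)$ call a vertex $u$ in the $\alpha$-shadow of $B$ if some interval $I\ni u$ has $|I\cap B|\ge\alpha|I|$; a standard charging argument yields $|\mathrm{shadow}_\alpha(B)\setminus B|\le\tfrac{1-\alpha}{\alpha}\,|B|$. Define $B^+$ to be $B$ together with every surviving vertex that gets stuck in the climb above, i.e.\ that has some level $i$ at which all its $\ell$ nearest points of $V_{i+1}$ (on the relevant side) lie in $B$. Conditioning on the $V$'s being sampled last: for a fixed $a$, its $\ell$ nearest $V_{i+1}$-points lie w.h.p.\ inside a window $W$ of $\Theta(\ell\cdot n^{(i+1)/h})$ path-vertices around $a$, and given $|V_{i+1}\cap W|$ they are a uniform sub-sample of it; hence, by a Stirling-type estimate as in \Cref{lem:sterling}, the probability they all lie in $B$ is at most $\poly(\ell)\cdot\beta^{\Omega(\ell)}$, where $\beta$ is the largest $\alpha$ with $a$ in the $\alpha$-shadow. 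Now I would run a dyadic case analysis: partition the surviving vertices into bands according to their shadow parameter $\alpha\in[1-2^{-j},1-2^{-j-1})$; the $j$-th band has size $\le 2^{-j}|B|$ by the shadow bound, while each of its members joins $B^+$ with probability $\le \poly(\ell)\cdot(1-2^{-j-1})^{\Omega(\ell)}$ (union bound over the $h$ levels). Summing $\sum_j 2^{-j}|B|\cdot\poly(\ell)\,(1-2^{-j-1})^{\Omega(\ell)}$, the small-$j$ terms are killed by the exponential decay and the large-$j$ terms by the factor $2^{-j}$ being already tiny, and for $\ell=\Theta(\nu^{-1}\log(h/\nu))$ the total is $O(\nu)\cdot|B|$. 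Getting the threshold between these two regimes right, and separately handling vertices far from $B$ (for which one must argue the window still behaves as expected), is the delicate bookkeeping I expect to be the crux. The theorem then follows by rescaling $\nu$ via \Cref{lem:expect-the-worst}.
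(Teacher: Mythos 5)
Your construction connects each $a\in V_i$ only to its $\ell$ nearest points of $V_i$ (on each side) and to its $\ell$ nearest points of $V_{i+1}$. The paper's actual construction is different: it connects $x\in V_i$ to \emph{every} $V_i$-point lying in the window $[v,u]$, where $u$ (resp.\ $v$) is the $\ell$-th nearest point of $V_{i+1}$ to the right (resp.\ left) of $x$. Since $V_{i+1}$ is sparser than $V_i$ by a factor $\approx n^{1/h}$, this window typically contains $\Theta(\ell\cdot n^{1/h})$ points of $V_i$, so the paper's spanner has roughly $n^{1/h}$ times as many edges per node as yours, and per-node the edge count is a random variable (the waiting time for $\ell$ successes in a Bernoulli($p$) sequence) rather than a fixed $\ell$. (The technical overview in the paper uses the simplified description ``$\ell$ nearest in $V_i$ and $V_{i+1}$,'' which you faithfully followed, but the construction in \Cref{sec:path} is wider; this is what makes the size come out as $O(n^{1+1/h}\ell)$ rather than the $O(n\ell)$ your construction would produce.)

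This matters for the stitching. In the proof of \Cref{lem:path:safe_connected}, at the crossing level $i$ (where $u_i<v_i$ but $u_{i+1}\ge v_{i+1}$) the argument relies verbatim on ``$u_i$ is connected to all the vertices $[u_i:u_{i+1}]\cap V_i$, and $v_i$ is connected to all the vertices $[v_{i+1}:v_i]\cap V_i$,'' so that one of the edges $\{u_i,v_i\}$ or $\{v_i,u_{i+1}\}$ is guaranteed to exist. In your sparser construction, $v_i$ can lie in $(u_i,u_{i+1})$ while still being $\gg\ell$ away from $u_i$ in $V_i$-rank, so neither edge need exist. Your replacement claim --- that the two monotone climbs eventually ``land within $\ell$ of each other inside a common level $V_i$'' --- is not established and is not true in general: when both paths overshoot each other ($v_{i+1}<u_i<v_i<u_{i+1}$), a careful count shows that $u_{i+1}$ and $v_{i+1}$ are only guaranteed to be within about $2\ell$ of each other in $V_{i+1}$-rank, so your $\ell$-nearest edges at level $i+1$ need not connect them; and in the case where a climb runs out of $V_{i+1}$ points, your construction does not supply the ``connect to everything on that side of $V_i$'' edge used in item~\ref{enum:usable:connected} of the usable-path definition. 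You would either need to widen the $V_i$-$V_i$ neighborhoods to the $V_{i+1}$-determined window (which is exactly what the paper does), or supply a genuinely different stitching argument with a proof, not just the claim that the paths converge.

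On the reliability side, your plan (shadows, dyadic bands, a Stirling bound) is the same strategy as the paper's, but your route via a deterministic window $W$ of size $\Theta(\ell\cdot n^{(i+1)/h})$ glosses over the point the paper handles by conditioning on the \emph{random} interval $I$ whose right endpoint is the $\ell$-th $V_{i+1}$-point: conditioned on $I$, the remaining $\ell-1$ points of $V_{i+1}\cap I$ are a uniform subset of $I\setminus\{y\}$, which is exactly what licenses the hypergeometric-type bound $\binom{|I\cap B|}{\ell-1}/\binom{|I|}{\ell-1}\le O(\sqrt\ell)\alpha^{\ell-1}$ for a vertex outside the $\alpha$-shadow, uniformly over $I$. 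With a fixed $W$ you would still have to control the fluctuation of $|V_{i+1}\cap W|$ and the event that the $\ell$ nearest points escape $W$, which you do not. This is fixable, but it is not a ``delicate bookkeeping'' footnote; it is the argument.

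Finally, a small arithmetic slip: for your construction, $\sum_{i=0}^h\E|V_i|=\Theta(n)$ (the $i=0$ term dominates), so your size would actually be $O(n\ell)$, and your per-edge covering count for $V_i$--$V_{i+1}$ edges comes out as $O(\ell^2 n^{1/h})$, not $O(\ell^2 n^{2/h})$. That these are \emph{smaller} than the theorem's bounds is not a bonus: it is a symptom that the construction is too sparse to give stretch $1$.
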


 By setting $h = \lfloor (\log n -1)/2\rfloor $, we get the following corollary:

\begin{corollary}\label{cor:path:logn}
    For any weighted path graph $P_n$, and parameter  $\nu \in (0, 1)$, there is an oblivious $\nu$-reliable,  $\log n$-hop $1$-spanner 
    with lightness $\tilde{O}(\nu^{-2}\cdot\log n)$ and size $O\left(\nu^{-1}\cdot n\cdot \log\left(\frac{\log n}{\nu}\right)\right)$.
\end{corollary}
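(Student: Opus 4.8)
The plan is to obtain \Cref{cor:path:logn} as a direct instantiation of \Cref{thm:path}, choosing the hop parameter $h$ so that $n^{1/h}=O(1)$ while still $2h+1\le\log n$. I would take $h=\lfloor(\log n-1)/2\rfloor$. When $n\ge 8$ this is a legal choice of parameter, i.e.\ $1\le h\le\log n$; for the finitely many cases $n\le 7$ one instead takes $H$ to be the complete graph on $[n]$ with edge weights equal to path distances, which is a $1$-hop $1$-spanner, is $\nu$-reliable with $B^{+}=B$ for every attack $B$, and has $O(1)$ edges and $O(1)$ lightness since $n=O(1)$.

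Assume now $n\ge 8$, so $h\ge 1$, and apply \Cref{thm:path} with this $h$. The resulting spanner is $(2h+1)$-hop bounded, and $2h+1\le(\log n-1)+1=\log n$, which gives the claimed hop bound. For the remaining two parameters I would first record the estimate $n^{1/h}=O(1)$: since $h\ge(\log n-3)/2$, the exponent $(\log n)/h$ is bounded by an absolute constant (a short check gives $n^{1/h}\le 16$ for every $n\ge 8$), and hence $n^{2/h}=O(1)$ as well; moreover $h=O(\log n)$ and $\log(h/\nu)=\log(\log n/\nu)$.

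Plugging these into the bounds of \Cref{thm:path}, the size $O(n^{1+1/h}\cdot\log(h/\nu)/\nu)$ becomes $O(n\cdot\log(\log n/\nu)/\nu)=O(\nu^{-1}\,n\,\log(\log n/\nu))$, and the lightness $O(h\,n^{2/h}\,(\log(h/\nu)/\nu)^{2})$ becomes $O(\log n\cdot\log^{2}(\log n/\nu)\cdot\nu^{-2})$. Since $\log^{2}(\log n/\nu)$ is polylogarithmic in $n$ and $1/\nu$, it is absorbed into the $\tilde O(\cdot)$ notation, yielding lightness $\tilde O(\nu^{-2}\log n)$, exactly as stated.

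There is no genuine obstacle here: all the substantive content — the laminar sampling $[n]=V_{0}\supseteq V_{1}\supseteq\cdots\supseteq V_{\log n}$, the local $\ell\approx\nu^{-1}$-nearest-neighbor edges within each $V_{i}$ and between $V_{i}$ and $V_{i+1}$, and the shadow-based reliability analysis together with the stairway stretch-and-hop analysis — is already carried out in the proof of \Cref{thm:path}. The only points needing care are the elementary verification that the chosen $h$ makes $n^{1/h}$ a constant, and the bookkeeping of the trivial small-$n$ case.
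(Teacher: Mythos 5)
Your proof is correct and follows exactly the paper's route: the paper proves the corollary by instantiating \Cref{thm:path} with $h=\lfloor(\log n-1)/2\rfloor$, which is precisely your choice. Your additional bookkeeping (verifying $n^{1/h}=O(1)$, checking $2h+1\le\log n$, and handling the small-$n$ edge cases where $h\ge 1$ fails) is the right elementary verification that the paper leaves implicit.
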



\subsection{Construction}
Let $[n]= V_0 \supseteq V_1 \supseteq\dots \supseteq V_{h}$ be a hierarchy of randomly selected sets, such that for all $1 \le i \le h$, every vertex of $V_{i-1}$ is taken into $V_i$ independently with probability $p = n^{-1/h}$.
Let $\ell =c\cdot\nu^{-1}\cdot\ln\left(\frac h\nu\right)$ for some constant $c$ to be fixed later. Assume w.l.o.g.\ that $\ell$ is an integer.

For every index $0\le i < h$ and $x \in V_i$, let $x\le u_1<...<u_\ell=u$ be the first $\ell$ vertices of $V_{i+1}$ that lie to the right of $x$, and similarly $x\ge v_1>...>v_\ell=v$ the first $\ell$ vertices of $V_{i+1}$ that lie to the left of $x$. If there are less than $\ell$ such vertices to the right (resp., left), we simply define $u=v_n$ as the last vertex (resp., $v=v_1$ as the first vertex). Now, for every $y\in [v,u]\cap V_i$, add the edge $\{x,y\}$ to the spanner $H$. In other words, we connect $x\in V_i$ to every vertex of $V_i$ that is not farther than the first $\ell$ neighbors of $x$ in $V_{i+1}$ (in either direction).
    
Finally, vertices in $V_h$ connect to all other vertices in $V_h$.
Denote by $E_i$ the edges we added at step $i$ to the spanner.

\subsection{Analysis}

\paragraph{Size analysis.}
Take $0\le i<h$, and condition on any fixed choice of $V_i$. Consider any vertex $x\in V_i$, and arrange the vertices of $V_i$ that lie to the right of $x$ in increasing order. For each such vertex we throw an independent coin with probability $p$ for success (meaning it goes to $V_{i+1}$ with this probability). Note that the number of edges $x$ adds to the right in step $i$ is essentially the number of coins we throw until the $\ell$-th success. (In fact, the number of edges can only be smaller if there are less than $\ell$ successes when we run out of vertices in $V_i$.) The expected number of trials until we see $\ell$ successes is $\ell/p$. The same argument holds for the left side edges. 

This bound holds for any choice of $V_i$. Note that for $0\le i\le h$, $E[|V_i|] = n p^i$, so the expected number of edges added in step $i$ for $0\le i<h$ is at most
\[
n p^i\cdot 2\ell/p = 2n p^{i-1}\cdot \ell~,
\]
and over the first $h$ steps it is at most
\[
2n\ell\cdot\sum_{i=0}^{h-1}p^{i-1} = O(n\ell/p) = O(n^{1+1/h}\cdot\ell)~,
\]
using that $p=n^{-1/h}\le 1/2$.
For $i=h$ we add at most $|V_h|^2$ edges. In expectation:
\begin{equation}
\E[|V_{h}|^{2}]=\sum_{i}\Pr[v_{i}\in V_{h}]+\sum_{i\ne j}\Pr[v_{i},v_{j}\in V_{h}]=n\cdot p^{h}+n\cdot(n-1)\cdot p^{2h}<2~.\label{eq:edgesEh}
\end{equation}
We conclude that the expected size of the spanner is $O\left(n^{1+1/h}\cdot\ell\right)$.
\paragraph{Lightness Analysis.}
Fix any edge $\{u,v\}\in E(P_n)$, we say that a spanner edge $\{x,y\}$ {\em crosses} the edge $\{u,v\}$ if $x\le u$ and $v\le y$. Let $c(u,v)$ denote the number of times $\{u,v\}$ is crossed. Observe that the weight of each spanner edge is equal to the sum of weights of edges in $P_n$ that it crosses, therefore, the total weight of the spanner is
\[
\sum_{e\in P(n)}c(e)\cdot w(e)~.
\]
Thus, it suffices to show that for every edge $e\in E(P_n)$: $$\E[c(e)]\le O(h n^{2/h}\cdot\ell^2)~.$$ 

To this end, fix an edge $\{u,v\}\in E(P_n)$, and an index $0\le i<h$. We will bound the expected number of edges in $E_i$ that cross $\{u,v\}$. Condition on any fixed choice of $V_i$. Note that an edge $\{x,y\}$ with $x,y\in V_i$, $x\le u$ and $y\ge v$ is added to $E_i$ by $x$ iff there are less than $\ell$ vertices of $V_{i+1}$ in the interval $[x:y)$.

Consider the vertices of $V_i$ from $u$ to the left in decreasing order, and similarly to the above lemma, let $X$ be a random variable counting the number of coins (with probability $p$ for success) we throw until getting $\ell$ successes. Denote by $Y$ the symmetric random variable, when considering vertices of $V_i$ from $v$ to the right, in increasing order. Then observe that at most $X\cdot Y$ edges of $E_i$ cross $\{u,v\}$. Since $X,Y$ are independent, we have that
\[
\E[X\cdot Y] = \E[X]\cdot \E[Y]\le(\ell/p)^2~.
\]
By (\ref{eq:edgesEh}),  the expected number of edges in $E_h$ is bounded by $2$, so each edge of $P_n$ is expected to be crossed at most twice by edges in $E_h$.
 Overall, when considering all the $h+1$ levels, for each $e\in E(P_n)$
\[
\E[c(e)]\le O(h\cdot\ell^2/p^2) = O\left(h\cdot n^{2/h}\cdot \ell^2\right)~,
\]
We conclude that the expected lightness of the spanner is $O\left(h\cdot n^{2/h} \cdot\ell^2\right)$.

\paragraph{Stretch and hop-bound analysis.}
We say a path $p = (v_0, \dots, v_k)$ is monotone if it is either monotone increasing: $v_0 \le \dots \le v_k$, or monotone decreasing: $v_0 \ge \dots \ge v_k$.
The following definition is crucial for our analysis of which vertices survive an attack $B$, and which will be added to $B^+$.

\begin{definition}
    We say a monotone increasing (resp.\ decreasing) path $p = (v_0, \dots, v_k)$ of the spanner $H$ is \emph{usable} for $v_0$ if the following holds.
    \begin{enumerate}
        \item For every $0\le i\le k$, $v_i \in V_i$.\label{enum:usable:in_vi}
        \item For every  $0\le i< k$, if $v_i \neq v_{i+1}$, then $\{v_i, v_{i+1}\} \in E_i$.\label{enum:usable:from_e_i}
        \item $v_k$ is connected in $H$ to all vertices in $V_k \cap [v_k:n]$ (resp.\ $V_k \cap [1:v_k]$)\label{enum:usable:connected}
    \end{enumerate}

    We say a vertex $v$ is \emph{safe} w.r.t. an attack $B\subseteq V$, if it has a monotone increasing \emph{usable} path and a monotone decreasing \emph{usable} path which are both disjoint from the attack $B$.
\end{definition}

The following lemma asserts that the spanner contains a shortest path that is not damaged by the attack (also with a bounded number of hops) between safe vertices.

\begin{lemma}\label{lem:path:safe_connected}
    If $u, v \in [n]$ are \emph{safe} w.r.t. an attack $B$, then the spanner contains a $(2h+1)$-hop monotone path between $u,v$ that is disjoint from $B$.
\end{lemma}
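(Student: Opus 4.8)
The plan is to follow the ``stairway'' approach of \cite{BHO20}: cut a prefix off $u$'s increasing usable path, cut a suffix off $v$'s decreasing usable path, and splice them with a single bridging edge. Assume $u<v$ (if $u>v$ swap the two vertices, and if $u=v$ there is nothing to prove). Since $u$ is safe it has a monotone increasing usable path $p_u=(u=a_0,a_1,\dots,a_h)$ disjoint from $B$, and since $v$ is safe it has a monotone decreasing usable path $p_v=(v=b_0,b_1,\dots,b_h)$ disjoint from $B$; by property~\ref{enum:usable:in_vi} we have $a_i,b_i\in V_i$ for all $i$ (and we may take both paths to have length $h$, so as to use the clique on $V_h$). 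The key observation is monotonicity in the \emph{level} index: the positions satisfy $a_0\le a_1\le\dots\le a_h$ while $b_0\ge b_1\ge\dots\ge b_h$, so $a_i-b_i$ is nondecreasing in $i$. Let $i^*$ be the smallest index with $a_{i^*}\ge b_{i^*}$, if one exists; note $i^*\ge 1$ since $a_0=u<v=b_0$.

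First I would dispose of the case that no such $i^*$ exists, i.e.\ $a_h<b_h$. Then $b_h\in V_h\cap[a_h:n]$, so by property~\ref{enum:usable:connected} of $p_u$ the edge $\{a_h,b_h\}$ lies in $H$. The walk $(a_0,\dots,a_h,b_h,b_{h-1},\dots,b_0)$ has positions $u=a_0\le\dots\le a_h\le b_h\le\dots\le b_0=v$, hence is monotone; it uses at most $h+1+h=2h+1$ edges (dropping any trivial step between equal consecutive vertices); and all its vertices come from $p_u\cup p_v$, so it avoids $B$.

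The main obstacle is the remaining case, $a_h\ge b_h$: here the naive bridge $\{a_h,b_h\}$ need not be present, since the top-level clique only joins $a_h$ to $V_h$-vertices on its right. Instead I would bridge at level $i^*-1$, where $a_{i^*-1}<b_{i^*-1}$ but $a_{i^*}\ge b_{i^*}$, splitting on the position of $b_{i^*}$. If $b_{i^*}\ge a_{i^*-1}$, then $b_{i^*}\in[a_{i^*-1},a_{i^*}]$; since $\{a_{i^*-1},a_{i^*}\}\in E_{i^*-1}$, the interval of $V_{i^*-1}$-vertices to which $a_{i^*-1}$ is joined in step $i^*-1$ reaches at least as far right as $a_{i^*}$, hence contains $b_{i^*}$, and $b_{i^*}\in V_{i^*}\subseteq V_{i^*-1}$, so $\{a_{i^*-1},b_{i^*}\}\in E_{i^*-1}\subseteq H$. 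The walk $(a_0,\dots,a_{i^*-1},b_{i^*},b_{i^*-1},\dots,b_0)$ is monotone, avoids $B$, and has at most $(i^*-1)+1+i^*\le 2h$ edges. Symmetrically, if $b_{i^*}<a_{i^*-1}$, then $a_{i^*-1}\in(b_{i^*},b_{i^*-1})$ lies in the step-$(i^*-1)$ reach-interval of $b_{i^*-1}$ (using $\{b_{i^*-1},b_{i^*}\}\in E_{i^*-1}$) and $a_{i^*-1}\in V_{i^*-1}$, so $\{b_{i^*-1},a_{i^*-1}\}\in E_{i^*-1}\subseteq H$, yielding the monotone $B$-avoiding walk $(a_0,\dots,a_{i^*-1},b_{i^*-1},b_{i^*-2},\dots,b_0)$ with at most $2i^*-1\le 2h-1$ edges.

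The only routine verifications left are the ``reach-interval'' claim --- that whenever $\{x,x'\}\in E_j$ with $x\le x'$, every vertex of $V_j$ in $[x,x']$ is also an $E_j$-neighbour of $x$, which is immediate from the definition of $E_j$ via the $\ell$-th $V_{j+1}$-neighbours of $x$ on each side (the left one lies $\le x$ and the right one lies $\ge x'$) --- and the observation that equalities among consecutive positions only shorten the walks and cause no trouble (the same argument also goes through verbatim with $h$ replaced by the possibly shorter lengths of $p_u,p_v$). Putting the cases together, $H$ always contains a monotone path between $u$ and $v$ of at most $2h+1$ hops that is disjoint from $B$.
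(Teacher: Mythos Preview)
Your approach is the same as the paper's: locate the first level at which the two monotone usable paths cross, and bridge there using the interval-of-neighbours property of the construction. Two points deserve more care, however.

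First, the parenthetical ``we may take both paths to have length $h$'' is not justified: a usable path may terminate at some $k<h$ precisely because there are fewer than $\ell$ vertices of $V_{k+1}$ on one side, and then there need be no way to extend it inside $V_{k+1}\setminus B$ (property~\ref{enum:usable:from_e_i} demands an $E_k$-edge, not merely an $H$-edge). The paper handles this by assuming w.l.o.g.\ that $u$'s path is no longer than $v$'s, comparing at the shorter length $k$, and in the non-crossing case invoking property~\ref{enum:usable:connected} of the \emph{shorter} path. Your final parenthetical gestures at this, but the argument is not quite ``verbatim'': when $p_v$ is the shorter path you must use property~\ref{enum:usable:connected} of $p_v$ (connectivity to the left), not of $p_u$.

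Second, your reach-interval claim as stated (``whenever $\{x,x'\}\in E_j$ with $x\le x'$, every $V_j$-vertex in $[x,x']$ is an $E_j$-neighbour of $x$'') is false in that generality: if the edge was contributed by $x'$ looking left and $x'\notin V_{j+1}$, then $x$'s right-reach can stop strictly before $x'$. What makes your applications correct is that in each use the far endpoint ($a_{i^*}$ or $b_{i^*}$) lies in $V_{i^*}=V_{(i^*-1)+1}$; this forces at most $\ell$ vertices of $V_{i^*}$ in the interval, so the near endpoint's reach does extend far enough. Adding this hypothesis (and its symmetric version for the $b$-side case, where you need neighbours of $b_{i^*-1}$, not of $b_{i^*}$) fixes the claim.
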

\begin{proof}
 
    Assume w.l.o.g.\ that $u < v$ and let $(u=u_0, \dots, u_k)$ be a \emph{usable} monotone increasing path of $u$ and $(v=v_0, \dots, v_j)$ a monotone decreasing \emph{usable} path of $v$. Additionally, assume w.l.o.g.\ that $k \le j$. 

    If $u_k \le v_k$, then by \cref{enum:usable:connected}, $u_k$ is connected to every vertex in $[u_k:n]\cap V_k$, in particular the spanner contains the edge $\{u_k,v_k\}$. Thus, we may take the monotone path $u_0, \dots, u_k, v_k, \dots, v_0$.

    Otherwise, there exists $i < k$ s.t. $u_i < v_i$ and $u_{i+1} \ge v_{i+1}$. Recall that by our spanner construction, $u_i$ is also connected to all the vertices $[u_i:u_{i+1}] \cap V_i$, and $v_i$ is connected to all the vertices $[v_{i+1}:v_i] \cap V_i$.
    If $v_i\le u_{i+1}$ then $v_i \in [u_i:u_{i+1}]$, and we may use the monotone path $u_0, \dots, u_i, v_{i}, \dots, v_0$.
    Else, $u_{i+1} < v_i$, therefore $u_{i+1} \in [v_{i+1}:v_i]$, and as $u_{i+1}\in V_i$ as well, we have the motonote path $u_0, \dots, u_{i+1}, v_{i}, \dots, v_0$.

    It remains to bound the number of hops. Note that by \cref{enum:usable:in_vi}, a \emph{usable} path contains at most $h$ edges, and every $u-v$ path we considered here is a concatenation of (a prefix of) two such paths, so the number of edges used is at most $2h+1$.
 \end{proof}

\paragraph{Reliability analysis.}
Let $B$ be an oblivious attack. For any spanner $H$ in the support of the distribution, the faulty extension $B^+:=B_H^{+}$ will consist of $B$ and all the vertices $v$ that are not {\em safe}. Recall that the attack is oblivious to our choice of the random sets $V_i$. In the remainder of this section, for each vertex we analyse the probability that it is safe, which will depend on the number of faulty vertices in its neighborhoods, as captured by the notion of {\em shadow}.

\begin{definition}[\cite{BHO19}]\label{def:shaddow}
    Let $P_n$ be a path graph and let $B$ be a subset of its vertices $(B \subseteq [n])$.
    The \emph{left $\alpha$-shadow} of $B$ is all the vertices $b$ such for some $a \in [n], a\le b$, $|[a:b]\cap B|\ge\alpha\cdot|[a:b]|$, denoted by $\mathcal{S}_L(\alpha,B)$. The \emph{right $\alpha$-shadow} $\mathcal{S}_R(\alpha,B)$ is defined symmetrically. The set  $\mathcal{S}_\alpha(B)=\mathcal{S}_L(\alpha,B)\cup \mathcal{S}_R(\alpha,B)$ is called the \emph{$\alpha$-shadow} of $B$.
    If $B$ is clear from context, we may simply write $\mathcal{S}_\alpha$ for the \emph{$\alpha$-shadow} of $B$.
\end{definition}

\begin{lemma}[\cite{BHO19}]\label{lem:small_shadow}
For any $B\subseteq [n]$:
\begin{itemize}
\item   For every $\alpha\in [\frac23,1)$ , $|\mathcal{S}_\alpha|\le \frac{|B|}{2\alpha-1}$.

\item    For every $\alpha\in (0,1)$, $|\mathcal{S}_\alpha|\le O\left(\frac{|B|}{\alpha}\right)$.
\end{itemize}
\end{lemma}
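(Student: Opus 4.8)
The plan is to reformulate the shadow in terms of \emph{heavy intervals}. Call an interval $I=[a:b]\subseteq[n]$ \emph{$\beta$-heavy} if $|I\cap B|\ge\beta|I|$. By definition every $b\in\mathcal{S}_L(\alpha,B)$ is the right endpoint of an $\alpha$-heavy interval and every $b\in\mathcal{S}_R(\alpha,B)$ is the left endpoint of one, so $\mathcal{S}_\alpha(B)\subseteq U_\alpha:=\bigcup\{I : I\text{ is }\alpha\text{-heavy}\}$. Hence for the first item it suffices to prove $|U_\alpha|\le\frac{|B|}{2\alpha-1}$, and for this I would invoke the following key lemma, valid for any $\beta\in[\tfrac12,1)$: \emph{if an interval $R$ is covered by $\beta$-heavy subintervals, then $R$ itself is $(2\beta-1)$-heavy.} Granting this, write $U_\alpha$ as the disjoint union of its maximal runs $U_1,\dots,U_m$; since an $\alpha$-heavy interval is connected and contained in $U_\alpha$, it lies inside a single $U_j$, so each $U_j$ is covered by $\alpha$-heavy subintervals, hence is $(2\alpha-1)$-heavy. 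Summing over the disjoint runs, $|U_\alpha|=\sum_j|U_j|\le\frac{1}{2\alpha-1}\sum_j|U_j\cap B|\le\frac{|B|}{2\alpha-1}$, as desired.

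The heart of the argument — and the step I expect to be the main obstacle — is the key lemma. Given a cover of $R=[p:q]$ by $\beta$-heavy subintervals, I would extract a ``staircase'' subcover $W_1,\dots,W_m$ greedily: let $W_1=[p:b_1]$ be a cover interval containing $p$ with $b_1$ maximal, and having chosen $W_i=[a_i:b_i]$ with $b_i<q$, let $W_{i+1}$ be a cover interval containing $b_i+1$ with $b_{i+1}$ maximal. One checks that the staircase terminates with $b_m=q$, and that maximality forces $a_{i+2}>b_i$ for every $i$, so \emph{non-consecutive staircase intervals are disjoint} while consecutive ones may overlap. The overlaps $O_i:=W_i\cap W_{i+1}$ are then pairwise disjoint (since $O_i\subseteq W_i$, $O_{i+1}\subseteq W_{i+2}$, and $W_i\cap W_{i+2}=\emptyset$), so $\sum_i|O_i|\le|R|$. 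Now $R$ is the disjoint union of $W_1$ and the sets $W_i\setminus W_{i-1}$ for $i\ge2$, with $|(W_i\setminus W_{i-1})\cap B|\ge\beta|W_i|-|O_{i-1}|$; adding these bounds and using $\sum_i|W_i|=|R|+\sum_i|O_i|$ yields $|R\cap B|\ge\beta|R|-(1-\beta)\sum_i|O_i|\ge\beta|R|-(1-\beta)|R|=(2\beta-1)|R|$. The delicate points are that the greedy process really reaches $q$, that maximality genuinely gives the disjointness $a_{i+2}>b_i$, and the overlap bookkeeping; none is deep but each needs care.

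For the second item I would argue directly about $\mathcal{S}_L$ (the bound for $\mathcal{S}_R$ is symmetric, and then $|\mathcal{S}_\alpha|\le|\mathcal{S}_L|+|\mathcal{S}_R|$). I claim every maximal run $R=[p:q]$ of $\mathcal{S}_L$ is $\alpha$-heavy. Pick $a\le q$ minimal with $[a:q]$ $\alpha$-heavy. If $a>p$ then $a-1\in[p:q]\subseteq\mathcal{S}_L$, so some $[a':a-1]$ is $\alpha$-heavy, and concatenating, $[a':q]$ is $\alpha$-heavy with $a'<a$, contradicting minimality; hence $a\le p$. If $a<p$ then $p-1\notin\mathcal{S}_L$ by maximality of the run, so $[a:p-1]$ is not $\alpha$-heavy, i.e.\ $|[a:p-1]\cap B|<\alpha(p-a)$, and subtracting this from $|[a:q]\cap B|\ge\alpha(q-a+1)$ shows $[p:q]$ is $\alpha$-heavy (the case $a=p$ is immediate, and $p=1$, $q=n$ are handled identically with the convention that nothing lies to the left of $1$). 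Therefore $|\mathcal{S}_L|=\sum_{\text{runs }R}|R|\le\frac1\alpha\sum_R|R\cap B|\le\frac{|B|}{\alpha}$, and likewise $|\mathcal{S}_R|\le\frac{|B|}{\alpha}$, giving $|\mathcal{S}_\alpha|\le\frac{2|B|}{\alpha}=O(|B|/\alpha)$. (For $\alpha\ge\tfrac23$ the first item already gives $|\mathcal{S}_\alpha|\le\frac{|B|}{2\alpha-1}\le3|B|$, but this run argument is what covers the full range $\alpha\in(0,1)$.)
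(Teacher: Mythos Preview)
The paper does not give its own proof of this lemma; it is quoted verbatim from \cite{BHO19} and used as a black box. So there is nothing in the present paper to compare your argument against.

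That said, your proof is correct. For the first item, the reduction to the ``key lemma'' (a region covered by $\beta$-heavy intervals is itself $(2\beta-1)$-heavy) via maximal runs of $U_\alpha$ is clean, and your greedy staircase argument for the key lemma checks out: the maximality of $b_{i+1}$ does force $a_{i+2}>b_i$ (since otherwise $W_{i+2}\ni b_i+1$ with a strictly larger right endpoint), the overlaps $O_i$ are pairwise disjoint as you argue, and the inclusion--exclusion $\sum_i|W_i|=|R|+\sum_i|O_i|$ holds because only consecutive $W_i$ meet and no triple does. The arithmetic then gives $(2\beta-1)|R|$ exactly. Note your argument actually works for all $\alpha>\tfrac12$, not just $\alpha\ge\tfrac23$.

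For the second item, the claim that each maximal run of $\mathcal{S}_L$ is itself $\alpha$-heavy is correct and your minimal-$a$ argument is sound: the concatenation step uses that two abutting $\alpha$-heavy intervals form an $\alpha$-heavy interval, and the subtraction step uses that $p-1\notin\mathcal{S}_L$ forces $[a:p-1]$ to be strictly below the $\alpha$ threshold. The edge case $p=1$ collapses to $a=p$, as you note. This yields $|\mathcal{S}_L|\le|B|/\alpha$ and hence $|\mathcal{S}_\alpha|\le 2|B|/\alpha$.
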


The following lemma provides a quantitative bound, exponential in the parameter $\ell$, on the failure probability of vertices outside a certain shadow.

\begin{lemma}\label{lem:not-safe}
For any $0<\alpha<1$, if $x\in [n]\setminus S_\alpha$, then 
\[
\Pr[x\textrm{ is not safe}]\le O(\sqrt{\ell}\cdot h\cdot\alpha^{\ell-1})~.
\]
\end{lemma}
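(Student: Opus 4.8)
The plan is to reduce the statement to the failure probability of a greedy attempt to build a monotone usable path, union bound over the $h$ levels, and bound each level via a Stirling-type estimate fed by the shadow hypothesis.

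First, by the definition of \emph{safe} and a union bound, together with the left–right symmetry of the construction, it suffices to bound $\Pr[x\text{ has no monotone increasing usable path disjoint from }B]$ by $O(\sqrt\ell\cdot h\cdot\alpha^{\ell-1})$ and double it. So I would analyze the following greedy process, starting at $v_0:=x$ (note $x\notin B$, since otherwise $x\in\mathcal{S}_\alpha$). At level $i<h$: if $v_i\in V_{i+1}$, set $v_{i+1}:=v_i$; otherwise let $u_1<\dots<u_\ell$ be the first $\ell$ vertices of $V_{i+1}$ strictly to the right of $v_i$ — if fewer than $\ell$ such vertices exist, stop and output $(v_0,\dots,v_i)$ (this is a usable path, since then $v_i$ is joined in $H$ to all of $V_i\cap[v_i,n]$, so \cref{enum:usable:connected} holds); if $u_1,\dots,u_\ell$ all lie in $B$, declare \emph{failure}; otherwise set $v_{i+1}$ to the leftmost $u_j\notin B$ and continue. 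If the process reaches level $h$, output $(v_0,\dots,v_h)$, which is usable since $v_h\in V_h$ is joined to all of $V_h$. By induction $v_i\in V_i\setminus B$ throughout, so a successful run produces a monotone increasing usable path disjoint from $B$; hence $\Pr[x\text{ has no such path}]\le\Pr[\text{failure}]$. The run fails iff, for some $0\le i<h$, the event $\mathcal{E}_i=\{\text{the run reaches level }i,\ v_i\notin V_{i+1},\ \text{there are}\ge\ell\text{ vertices of }V_{i+1}\text{ right of }v_i,\ \text{and the first }\ell\text{ all lie in }B\}$ occurs, so $\Pr[\text{failure}]\le\sum_{i=0}^{h-1}\Pr[\mathcal{E}_i]$.

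The core of the proof is the per-level bound $\Pr[\mathcal{E}_i]\le O(\sqrt\ell\cdot\alpha^{\ell-1})$. I would condition on the run's history through level $i$ and on $V_i$; this fixes $v_i\in V_i\setminus B$ and leaves $V_{i+1}$ a $p$-thinning of $V_i$ that is independent of $B$. Writing $w_1<w_2<\dots$ for $V_i\cap(v_i,n]$ and $\beta(t)=|\{j<t:w_j\in B\}|$, conditioning on the rank $t$ of the $\ell$-th vertex retained into $V_{i+1}$ yields $\Pr[\mathcal{E}_i\mid v_i,V_i]\le\sum_{t\ge\ell}p^\ell\binom{\beta(t)}{\ell-1}(1-p)^{t-\ell}$. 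The geometric input is the shadow hypothesis $x\notin\mathcal{S}_R(\alpha,B)$: every interval $[x,c]$ contains fewer than an $\alpha$-fraction of $B$, and I would argue this passes to the thinned sequence, namely $\beta(t)\le\alpha(t-1)$ for every relevant $t$, using that $V_i$ is independent of $B$ and that $\ell\approx\nu^{-1}\log(h/\nu)$ is large enough to suppress fluctuations. Then $\binom{\beta(t)}{\ell-1}\le\binom{\lfloor\alpha(t-1)\rfloor}{\ell-1}\le O(\sqrt\ell)\cdot\alpha^{\ell-1}\binom{t-1}{\ell-1}$ by \Cref{lem:sterling}, and the negative-binomial identity $p^\ell\sum_{t\ge\ell}\binom{t-1}{\ell-1}(1-p)^{t-\ell}=1$ collapses the sum to $O(\sqrt\ell)\cdot\alpha^{\ell-1}$. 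Taking expectations over the conditioning, summing over the $h$ levels, and doubling finishes the proof.

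The hard part will be exactly this transfer of the deterministic shadow inequality — which is a statement about intervals of $[n]$ — to the thinned sequence $V_i$, because the greedy position $v_i$ can drift to the right of $x$, so one cannot directly invoke $x\notin\mathcal{S}_R(\alpha,B)$ for windows starting at $v_i$. I expect to need a bound on how far (in $V_i$-rank, which is at most $\approx\ell h$) the greedy can have drifted from $x$, combined with the independence of $V_i$ from $B$ and a sufficiently large constant in $\ell$; extracting precisely the clean exponent $\alpha^{\ell-1}$ and the $O(\sqrt\ell)$ prefactor, rather than a lossier bound, is what forces the Stirling estimate of \Cref{lem:sterling} here.
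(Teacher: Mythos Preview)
Your overall structure---greedy path construction, per-level failure analysis, union bound over $h$ levels, Stirling estimate---matches the paper's. But there is a genuine gap at exactly the point you flag as ``the hard part.''

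You look for the first $\ell$ vertices of $V_{i+1}$ to the right of the \emph{current} position $v_i$, and then need $\beta(t)\le\alpha(t-1)$, i.e., that the $B$-fraction among the $V_i$-vertices $w_1,\dots,w_{t-1}$ to the right of $v_i$ is at most $\alpha$. But the shadow hypothesis only controls $B$-fractions in intervals of $[n]$ anchored at $x$, not $V_i$-fractions anchored at the drifted $v_i$. Your proposed fix (bound the drift, use independence of $V_i$ from $B$, absorb with a larger constant in $\ell$) is not an argument: for a fixed realization of $V_i$ and $v_i$, the inequality $\beta(t)\le\alpha(t-1)$ can simply fail, and controlling the probability of that failure would not yield the clean $\alpha^{\ell-1}$ exponent the lemma states.

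The paper sidesteps both the drift and the thinning issues with one move: at every level $i$, it considers the first $\ell$ vertices of $V_{i+1}$ to the right of the \emph{fixed starting point} $x$, not of $v_i$. Conditioning on the position $y$ of the $\ell$-th such vertex, the remaining $\ell-1$ are a uniformly random $(\ell-1)$-subset of $[x,y)$ (since each vertex of $[n]$ lies in $V_{i+1}$ independently with probability $p^{i+1}$), so the shadow bound $|[x,y]\cap B|\le\alpha\,|[x,y]|$ applies \emph{directly} and gives $\binom{|I\cap B|}{\ell-1}\big/\binom{|I|}{\ell-1}\le O(\sqrt\ell)\,\alpha^{\ell-1}$ via \Cref{lem:sterling}. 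The resulting path is still monotone and usable because $V_{i+1}\subseteq V_i$: assuming no failure so far, $v_i$ is the first element of $V_i\setminus B$ with $v_i\ge x$, and $v_{i+1}$ the first element of $V_{i+1}\setminus B$ with $v_{i+1}\ge x$; hence $v_{i+1}\ge v_i$, and since $v_{i+1}$ is among the first $\ell$ elements of $V_{i+1}$ to the right of $x\le v_i$, it is also within reach of $v_i$, so $\{v_i,v_{i+1}\}\in E_i$.
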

\begin{proof}
Note that $x\notin B$, as otherwise by definition it will be contained in $S_\alpha$ for any $0\le\alpha\le 1$. We will try to construct a usable monotone increasing path for $x$, $(v_0,v_1,...,v_k)$ for some $0\le k\le h$, that is disjoint from $B$. Initially set $v_0=x\in V_0\setminus B$. Assume we built the path until $v_i\in V_i\setminus B$, and now we attempt to find the next vertex $v_{i+1}\in V_{i+1}$.

Consider the first $\ell$ vertices in $V_{i+1}$ that lie to the right of $x$.
If there are less than $\ell$ such neighbors, then observe that there are less than $\ell$ vertices in $V_{i+1}$ to the right of $v_i$ as well (as $v_i\ge x$). In this case, by the spanner construction, $v_i$ connects to all vertices in $V_i$ to its right, and we can set $k=i$ and stop the process (observe that $v_k$ will satisfy item \ref{enum:usable:connected} in the definition of usable path, so indeed we may stop here). Otherwise, if there is a vertex in $V_{i+1}\setminus B$ among the first $\ell$ neighbors of $x$, we may take the first such vertex as $v_{i+1}$. Note that the path remains monotone: $v_i\le v_{i+1}$. This is because $v_{i+1}\in V_i$, i.e. it was a valid choice for $v_i$, and we always take the first possible vertex.

We conclude that the only case the path-building fails is the event that all these $\ell$ vertices in $V_{i+1}$ fall in $B$. 

By the virtue of $x\notin S_\alpha$, we have that in any interval $[x:y]$ (for $y>x$), at most $\alpha$ fraction of the vertices are in $B$. Fix any $y>x$, and condition on the event that $y$ is the smallest such that the first $\ell$ neighbors in $V_{i+1}$ to the right of $x$ are in the interval $I=[x:y]$. Recall that every vertex is sampled to $V_{i+1}$ obliviously to the attack $B$. Note that the conditioning does create dependencies and change the probability to be in $V_{i+1}$, but the main observation is, that except for the vertex $y\in V_{i+1}$, every set of $\ell-1$ vertices in $[x:y)$ has equal probability to be the remaining $\ell-1$ vertices of $V_{i+1}$. Thus, the failure probability at step $i+1$, which is the probability that these $\ell$ vertices in $V_{i+1}$ are all taken from the set $B$, is at most
\begin{equation}\label{eq:binom}
\frac{\binom{|I\cap  B|}{\ell-1}}{\binom{|I|}{\ell-1}}\le \frac{\binom{\alpha|I|}{\ell-1}}{\binom{|I|}{\ell-1}}\le O(\sqrt{\ell}\cdot \alpha^{\ell-1})~.
\end{equation}
The last inequality uses standard approximation of binomial coefficients, see \appendixref{app:sterling} for a proof. The lemma follows by noticing that the bound obtained is independent of $y$, and by taking a union bound over both sides (left and right) of the at most $h$ steps $i=0,1,...,h-1$.
\end{proof}

We will consider two regimes of shadows separately, the first when $\alpha$ is close to 1, and the second for small $\alpha$. For the first regime, define for each index $0\le j\le\lfloor\log\frac{1}{3\nu}\rfloor$, $\alpha_j=1-2^j\cdot\nu$. Note that for any such $j$, $\alpha_j\ge 2/3$, so by the first item in \Cref{lem:small_shadow} we have 

\[
|S_{\alpha_j}|\le \frac{|B|}{2\alpha_j-1}= \frac{|B|}{1-2^{j+1}\nu}\le (1+2^{j+2}\nu)|B|~.
\]
Since all vertices of $B$ are included in any shadow, it follows that
\begin{equation}\label{eq:Shadow-large}
|S_{\alpha_j}\setminus B|\le 2^{j+2}\nu|B|~.
\end{equation}
For the smaller shadows, by the second item in \Cref{lem:small_shadow} we have 
\begin{equation}\label{eq:Shadow-small}
|S_{2^{-j}}|\le O(2^j|B|)~.
\end{equation}

\begin{lemma}\label{lem:B+}
$\E[|B^+|]\le (1+O(\nu))|B|$.
\end{lemma}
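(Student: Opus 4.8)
The plan is to bound $\E[|B^+\setminus B|]$, since $B^+\supseteq B$ gives $\E[|B^+|]=|B|+\E[|B^+\setminus B|]$. The attack $B$ is oblivious to the random sets $V_i$, and a vertex of $B$ is never safe (no path out of it can avoid $B$), so $B^+\setminus B$ is exactly the set of non-safe vertices outside $B$, and $\E[|B^+\setminus B|]=\sum_{x\in[n]\setminus B}\Pr[x\text{ is not safe}]$. We may assume $B\neq\emptyset$ (otherwise every vertex is trivially safe and $B^+=\emptyset$), and we may assume $\nu<1/6$ (for larger $\nu$ run the construction with parameter $1/6$; then $\E[|B^+\setminus B|]=O(1)\cdot|B|=O(\nu)\cdot|B|$). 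The idea is to partition $[n]\setminus B$ into dyadic \emph{shells} cut out by the shadows $S_\alpha(B)$, exploiting the monotonicity $S_\alpha(B)\subseteq S_{\alpha'}(B)$ for $\alpha\ge\alpha'$: on each shell the number of vertices is controlled by a shadow–size bound (\eqref{eq:Shadow-large} or \eqref{eq:Shadow-small}), and the per-vertex failure probability by \Cref{lem:not-safe}, and the shell contributions will form a geometric series.

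\emph{Regime 1 (shadows with $\alpha$ near $1$).} Use the thresholds $\alpha_j=1-2^j\nu$ for $0\le j\le J:=\lfloor\log\tfrac1{3\nu}\rfloor$ (so $J\ge 1$ and $\alpha_J\ge 2/3$). For the innermost shell $S_{\alpha_0}(B)\setminus B$ bound $\Pr[\cdot]\le 1$, and use \eqref{eq:Shadow-large} with $j=0$ to get a contribution of at most $4\nu|B|$. For $1\le j\le J$, a vertex $x\in S_{\alpha_j}(B)\setminus S_{\alpha_{j-1}}(B)$ lies outside $S_{\alpha_{j-1}}(B)$, so \Cref{lem:not-safe} with $\alpha=\alpha_{j-1}$ gives $\Pr[x\text{ not safe}]\le O\!\big(\sqrt\ell\cdot h\cdot(1-2^{j-1}\nu)^{\ell-1}\big)\le O\!\big(\sqrt\ell\cdot h\cdot e^{-2^{j-1}\nu(\ell-1)}\big)$, while \eqref{eq:Shadow-large} bounds the number of such vertices by $2^{j+2}\nu|B|$. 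Writing $t:=\nu(\ell-1)$ and summing over $j$, the total is $O\!\big(\nu|B|\cdot\sqrt\ell\,h\cdot\sum_{j\ge 1}2^{j+2}e^{-2^{j-1}t}\big)=O\!\big(\nu|B|\cdot\sqrt\ell\,h\cdot e^{-t}\big)$, since for $t\ge 1$ the sum is dominated by its first term.

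\emph{Regime 2 (small shadows).} It remains to cover the vertices outside $S_{\alpha_J}(B)$, which are contained in $S_{1/2}(B)$. Use the thresholds $2^{-i}$ for $i\ge 1$: the $i=1$ shell is $S_{1/2}(B)\setminus S_{\alpha_J}(B)$, and for $i\ge 2$ it is $S_{2^{-i}}(B)\setminus S_{2^{-(i-1)}}(B)$; these shells exhaust $[n]$ by $i=\lceil\log n\rceil$, because $S_{2^{-\lceil\log n\rceil}}(B)\supseteq S_{1/n}(B)=[n]$ when $B\neq\emptyset$. A vertex in the $i$-th shell lies outside $S_{2^{-(i-1)}}(B)$ (outside $S_{\alpha_J}(B)$ when $i=1$), so \Cref{lem:not-safe} gives failure probability $O(\sqrt\ell\,h\,2^{-(i-1)(\ell-1)})$ (respectively $O(\sqrt\ell\,h\,(2/3)^{\ell-1})$ for $i=1$), and \eqref{eq:Shadow-small} bounds the number of vertices by $O(2^i|B|)$. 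The product for shell $i\ge 2$ is $O\!\big(|B|\sqrt\ell\,h\,2^{\,i-(i-1)(\ell-1)}\big)$, whose exponent drops by $\ell-2\ge 1$ with each increment of $i$, so the sum over $i$ is dominated by the two leading terms and equals $O\!\big(|B|\cdot\sqrt\ell\,h\,(2/3)^{\ell}\big)$.

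\emph{Wrap-up.} Combining the three pieces, $\E[|B^+\setminus B|]\le 4\nu|B|+O\!\big(\nu|B|\,\sqrt\ell\,h\,e^{-\Omega(\nu\ell)}\big)+O\!\big(|B|\,\sqrt\ell\,h\,2^{-\Omega(\ell)}\big)$, and it remains to check that the last two terms are $O(\nu|B|)$. This is where the choice $\ell=c\,\nu^{-1}\ln(h/\nu)$ with $c$ a sufficiently large absolute constant is used: then $\nu\ell=\Theta(c\ln(h/\nu))$, so both $e^{-\Omega(\nu\ell)}$ and $2^{-\Omega(\ell)}$ are at most $(h/\nu)^{-\Omega(c)}$, which decays polynomially in $h/\nu$ and, for $c$ large, dominates the $\mathrm{poly}(\ell,h)$ prefactor $\sqrt\ell\cdot h$ with a factor of $\nu$ to spare (here one keeps $\ln(h/\nu)$ as a logarithm rather than bounding it crudely, and splits into the cases $h\ge1/\nu$ and $h\le1/\nu$). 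This yields $\E[|B^+\setminus B|]=O(\nu)\,|B|$, hence $\E[|B^+|]\le(1+O(\nu))|B|$. The main obstacle is arranging the shell decomposition so that each vertex is charged either to a genuinely tiny shadow ($O(\nu|B|)$ vertices, where the trivial probability bound suffices) or to an exponentially small failure probability that outweighs the shadow's size — and then confirming that the two geometric series converge fast enough for the stated $\ell$, uniformly in $n,h,\nu$.
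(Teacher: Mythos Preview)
Your proof is correct and follows essentially the same shell decomposition as the paper: the same levels $\alpha_j=1-2^j\nu$ in the near-$1$ regime, the same dyadic levels $2^{-i}$ in the small-$\alpha$ regime, the same transition shell $S_{1/2}\setminus S_{\alpha_J}$, and the same use of \Cref{lem:not-safe} together with \eqref{eq:Shadow-large} and \eqref{eq:Shadow-small}. One small slip: for the $i=1$ shell you invoke \Cref{lem:not-safe} with $\alpha=\alpha_J$, but $\alpha_J\in[2/3,5/6)$, so the correct upper bound is $(5/6)^{\ell-1}$ rather than $(2/3)^{\ell-1}$ (the paper uses $5/6$); this is harmless since both are $2^{-\Omega(\ell)}$ and your wrap-up goes through unchanged.
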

\begin{proof}
First, consider the case that $B=\emptyset$. Note that in this case, every vertex is safe, as it has a monotone increasing and a monotone decreasing usable paths. To see the former: for $0 \le i < h$, every vertex $v \in V_i$ is either connected to the closest vertex of $V_{i+1}$ that lie to the right of $v$, or, if there is no such vertex, then $v$ is connected to every vertex in $V_i \cap [v:n]$. Thus one can easily build a monotone increasing path. Therefore, in this case $B^+ = B=\emptyset$.

Notice that
\begin{align}\label{eq:bplus_sum_unasfe}
\E[|B^+|]\le|B|+\sum_{x\in [n]\setminus B}\Pr[x\textrm{ is not safe}]~.
\end{align}

We analyze \Cref{eq:bplus_sum_unasfe} by considering vertices in different shadow regimes separately, i.e.,
\[
[n] = S_{\alpha_0}+\sum_{j=1}^{\lfloor\log\frac{1}{3\nu}\rfloor}\left(S_{\alpha_j}\setminus S_{\alpha_{j-1}}\right) + S_{1/2}\setminus S_{\alpha_{\lfloor\log\frac{1}{3\nu}\rfloor}} + \sum_{j=2}^{\log n}S_{2^{-j}}\setminus S_{2^{-(j-1)}}~.
\]
Note that $S_{1/n}=[n]$, as $B\neq\emptyset$, so every vertex was accounted for.

It holds that
\begin{align*}
\mathbb{E}\left[|B^{+}|\right]\le & \underset{(1)}{\underbrace{|\mathcal{S}_{\alpha_{0}}|}}+\underset{(2)}{\underbrace{\sum_{j=1}^{\log\frac{1}{3\nu}}\sum_{x\in\mathcal{S}_{\alpha_{j}}\setminus\mathcal{S}_{\alpha_{j-1}}}\Pr\left[x\in B^{+}\right]}}\\
 & \quad+\underset{(3)}{\underbrace{\sum_{x\in\mathcal{S}_{\frac{1}{2}}\setminus\mathcal{S}_{\alpha_{\lfloor\log\frac{1}{3\nu}\rfloor}}}\Pr\left[x\in B^{+}\right]}}+\underset{(4)}{\underbrace{\sum_{j=2}^{\log n}\sum_{x\in\mathcal{S}_{2^{-j}}\setminus\mathcal{S}_{2^{-(j-1)}}}\Pr\left[x\in B^{+}\right]}}~.
\end{align*}
We next bound each one of the summands:\footnote{For convenience we will ignore the $-1$ in the exponent of $\alpha$ in \cref{lem:not-safe}, it can easily be handled by increasing slightly $\ell$.}
\begin{enumerate}
\item By \Cref{eq:Shadow-large}, $(1)=|S_{\alpha_0}|\le (1+4\nu)\cdot |B|$.
\item Fix $1\le j\le\lfloor\log\frac{1}{3\nu}\rfloor$, and $x\notin  S_{\alpha_{j-1}}$, then by \Cref{lem:not-safe} the probability that $x$ is not safe is at most
\[
O(\sqrt{\ell}\cdot h)\cdot(1-2^{j-1}\nu)^{c\cdot\nu^{-1}\cdot\ln(h/\nu)}\le O(h/\nu)^{2}\cdot e^{-2^{j-1}\cdot c\cdot\ln(h/\nu)}\le2^{-2j}~,
\]
where the last inequality holds for large enough constant $c$.
 By \Cref{eq:Shadow-large}, $|S_{\alpha_j}\setminus B|\le 4\nu\cdot 2^j |B|$. Summing over all indices $j$ we conclude $(2)\le\sum_{j=1}^{\log\frac{1}{3\nu}}4\nu\cdot2^{j}|B|\cdot2^{-2j}\le4\nu\cdot|B|$.

\item
For the transition between large and small shadows, whenever $x\in S_{1/2}\setminus S_{\alpha_{\lfloor\log\frac{1}{3\nu}\rfloor}}$, since $\alpha_{\lfloor\log\frac{1}{3\nu}\rfloor}\le 5/6$ we have that the probability that $x$
is not safe is at most 
\[
O(h/\nu)^{2}\cdot(5/6)^{-c\cdot\nu^{-1}\cdot\ln(h/\nu)}\le\nu~,
\]
for large enough $c$.
By \Cref{eq:Shadow-small}, $|\mathcal{S}_{\frac{1}{2}}\setminus B|\le O(|B|)$, thus $(3)\le O(\nu|B|)$.

\item
For $2\le j \le \log n$ and $x\notin S_{2^{-(j-1)}}$, by \Cref{lem:not-safe} the probability that $x$ is not safe is at most
\[
O(\sqrt{\ell}h)\cdot(2^{-(j-1)})^{c\cdot\nu^{-1}\cdot\ln(h/\nu)}\le O(h/\nu)^{2}\cdot(\nu/h)^{j\cdot c}\le2^{-2j}\cdot\nu~,
\]
for large enough constant $c$.
By \Cref{eq:Shadow-small}, $|S_{2^{-j}}|\le O(2^j|B|)$. It follows that $(4)\le\sum_{j=2}^{\log n}O(2^{j}|B|)\cdot2^{-2j}\cdot\nu=O(\nu)\cdot|B|$.

\end{enumerate}
Combining the $4$ cases together, we conclude that $\mathbb{E}\left[B^{+}\right]\le(1+O(\nu))\cdot|B|$,  as required.
\end{proof}

\begin{proof}[Proof of \Cref{thm:path}]
    The bounds on the expected size and lightness of the spanner were shown above, and by \Cref{lem:expect-the-worst}, they can be translated to worst-case bounds, incurring only a constant loss.

    Recall that we set $B^+$ to be all the vertices which are not \emph{safe}. By \Cref{lem:path:safe_connected} we get a shortest path with $2h+1$ hops for any pair of \emph{safe} vertices.
 By \Cref{lem:B+}, the expected size of $B^+$ is $(1+O(\nu))|B|$, the theorem follows by rescaling $\nu$ by a constant.
\end{proof}

\section{Improved Light Reliable Spanners for Minor-free Graphs}\label{sec:minor-free}

In this section we refine our techniques in order to obtain near optimal stretch for light reliable spanners of minor-free graphs. More generally, we show that a certain property of the Pairwise Partition Cover Scheme (PPCS) allows us to improve the stretch to be almost $2$, which is near optimal, while increasing the lightness by polylog factors. We begin by formally defining this property, which could be useful for other graph families as well. Throughout this section $G=(X,E,w)$ is a weighted graph with $n$ vertices excluding a constant size minor. $d_G$ denotes the shortest path metric in $G$. That is $d_G(u,v)$ denotes the minimum weight of a path from $u$ to $v$ in $G$.

\paragraph{Centrally-padded PPCS for Minor-free Graphs.}
The property of PPCS we will exploit is captured by the following definition.

\begin{definition}
	A $(\tau,\rho,\eps,\Delta)$-pairwise partition cover $\mathbb{P} = \{\mathcal{P}_{1},\dots,\mathcal{P}_{s}\}$ of a metric space $(X,d)$ is called {\em centrally-padded}, if every cluster $C$ in every partition has a designated center $x\in X$, and for every pair $u,v$ such that $\frac{\Delta}{2\rho}\le d_G(u,v)\le\frac{\Delta}{\rho}$, there is a cluster $C$ in one of the partitions $\mathcal{P}_{i}$ such that $C$ contains both closed balls $B(u,\eps\Delta),B(v,\eps\Delta)$, and also 
	\begin{equation}\label{eq:centrally-padded}
		d_G(u,x)+d_G(v,x)\le (1+32\epsilon)\cdot d_G(u,v)~.
	\end{equation}
\end{definition}

The following lemma asserts that our construction of PPCS for minor-free graphs in \Cref{sec:ppcs-minor-free} is in fact centrally-padded.

\begin{lemma}
	For any minor-free graph $G$ with $n$ vertices and $0<\epsilon<1/12$, there exists $\left(O(\varepsilon^{-1}\log n), \frac{2}{1 - 6\varepsilon}, \varepsilon\right)$-PPCS which is centrally-padded.
\end{lemma}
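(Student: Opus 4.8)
The plan is to re-examine the PPCS construction for minor-free graphs built in the proof of \Cref{lem:pairwise_spddepth} (equivalently \Cref{thm:minor_free:pairwise}) and simply \emph{name} a center for every cluster it produces, then verify that inequality~\eqref{eq:centrally-padded} is already satisfied. Recall that this construction recurses on $\SPDdepth$: removing a shortest path $P$ from $G$ leaves components $G_1,\dots,G_t$ of smaller $\SPDdepth$; the ``new'' partitions consist of balls $B(z,\Delta/2)$ around the points $z$ of an $\varepsilon\Delta$-net $\mathcal N\subseteq P$ (plus singletons), while all other clusters are inherited from the PPCSs built recursively on the $G_j$'s (plus singleton vertices of $P$). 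I would declare the center of a ball cluster $B(z,\Delta/2)$ to be $z$, the center of a singleton $\{v\}$ to be $v$, and let every inherited cluster keep the center the recursion assigned to it. Since $\SPDdepth=O(\log n)$ here, the number of partitions, the stretch $\rho=\tfrac{2}{1-6\varepsilon}$, and the padding parameter $\varepsilon$ are all unchanged from \Cref{thm:minor_free:pairwise}, so only~\eqref{eq:centrally-padded} has to be checked.

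To check it, fix a pair $u,v$ with $\tfrac{\Delta}{2\rho}\le d_G(u,v)\le\tfrac{\Delta}{\rho}$, and record the one quantitative fact used throughout: $\Delta\le 2\rho\, d_G(u,v)=\tfrac{4}{1-6\varepsilon}\,d_G(u,v)<8\,d_G(u,v)$ for $\varepsilon<1/12$. Now follow the three cases from the proof of \Cref{lem:pairwise_spddepth}. (i) If $P$ avoids $P_{uv}\cup B_u\cup B_v$, the pair is padded recursively inside the component $G_j$ containing $B_u\cup B_v\cup P_{uv}$, and I would invoke the inductive centrally-padded guarantee of that sub-PPCS. (ii) If $P$ meets $P_{uv}$ at a vertex $x$, then $x$ lies on a shortest $u$--$v$ path, so $d_G(u,x)+d_G(x,v)=d_G(u,v)$; the padding cluster is $B(z,\Delta/2)$ with $d_G(x,z)\le\varepsilon\Delta$, and two triangle inequalities give $d_G(u,z)+d_G(v,z)\le d_G(u,v)+2\varepsilon\Delta\le(1+16\varepsilon)\,d_G(u,v)$. (iii) If $P$ meets, say, $B_v$ at a vertex $x$ (so $d_G(x,v)\le\varepsilon\Delta$), then the padding cluster $B(z,\Delta/2)$ has $d_G(v,z)\le 2\varepsilon\Delta$ and hence $d_G(u,z)+d_G(v,z)\le d_G(u,v)+4\varepsilon\Delta\le(1+32\varepsilon)\,d_G(u,v)$. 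In every case the same cluster still contains $B(u,\varepsilon\Delta)$ and $B(v,\varepsilon\Delta)$, exactly as already shown in \Cref{lem:pairwise_spddepth}. The constant $32$ is tuned to cover case~(iii), which is the worst of the three.

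I expect the one genuinely non-trivial point to be case~(i): the center $x$ handed up by the recursion satisfies $d_{G_j}(u,x)+d_{G_j}(v,x)\le(1+32\varepsilon)\,d_{G_j}(u,v)$ with respect to the metric of the \emph{subgraph} $G_j$, not the metric of $G$. Here I would use that $G_j$ is an induced subgraph, so $d_G\le d_{G_j}$ pointwise, together with $d_{G_j}(u,v)=d_G(u,v)$ (which holds because $P_{uv}$ is disjoint from $P$ and therefore survives inside $G_j$); these two facts turn the inductive inequality into $d_G(u,x)+d_G(v,x)\le(1+32\varepsilon)\,d_G(u,v)$, as required. The base cases ($\SPDdepth\in\{0,1\}$) are trivial or collapse into case~(ii), and everything else is the routine constant-pushing indicated above.
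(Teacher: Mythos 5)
Your proof is correct and is essentially the paper's own: both declare the ball centers to be the net points $z$ on the removed shortest path $P$, both record $\Delta<8\,d_G(u,v)$ (from $d_G(u,v)\ge\tfrac{(1-6\eps)\Delta}{4}$ with $\eps<1/12$), and both verify~\eqref{eq:centrally-padded} via the same triangle-inequality bookkeeping in the two cases ``$P$ meets $P_{uv}$'' and ``$P$ meets $B_u$ or $B_v$''. The only cosmetic difference is that you treat the ``$P$ avoids $P_{uv}\cup B_u\cup B_v$'' case by an explicit induction on $\SPDdepth$ (correctly converting the inherited $d_{G_j}$-bound to a $d_G$-bound using $d_G\le d_{G_j}$ and $d_{G_j}(u,v)=d_G(u,v)$), whereas the paper implicitly unrolls the recursion to \emph{the first} shortest path in the SPD that intersects $P_{uv}\cup B_u\cup B_v$; the underlying argument is the same.
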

\begin{proof}
	Consider the construction of \Cref{lem:pairwise_spddepth}. Recall that every cluster is a ball centered at a net point, so we naturally define its center as that net point. For any $u,v\in X$ with $\frac{(1-6\epsilon)\Delta}{4}\le d(u,v)\le \frac{(1-6\epsilon)\Delta}{2}$, we found the first shortest path $P$ in the SPD that intersects $P_{uv}$ (the shortest $u-v$ path) or at least one of the balls $B_u=B(u,\epsilon\Delta)$, $B_v=B(v,\epsilon\Delta)$ (see \Cref{fig:minor_free:pairwise}). We denoted $x\in P$ as a vertex on that intersection. Then we found a net-point $z\in {\cal N}$ on $P$ at distance at most $\epsilon\Delta$ from $x$, and consider the cluster $C=B(z,\Delta/2)$. 
	
	If $x\in P_{uv}$ then 
	\[
	d(z,u)+d(z,v)\le 2d(z,x)+d(x,u)+d(x,v) \le 2\epsilon\Delta+d(u,v)\le (1+16\epsilon)\cdot d(u,v)~.
	\]
	Otherwise, w.l.o.g.\ $x\in B_u$ and we get that
	\[
	d(z,u)+d(z,v)\le d(z,u)+d(z,u)+d(u,v)\le 4\epsilon\Delta+d(u,v)\le (1+32\epsilon)\cdot d(u,v)~,
	\]
	as required.

\end{proof}

\paragraph{$k$-HST Cover.} The next step is to compute an $k$-HST cover, which is done exactly in the same manner as in \Cref{thm:pairwise_partition_cover_to_ultrametric_cover}, so we get a $O(k\log n)$-light $\left(O(\epsilon^{-2}\log n\cdot\log k),\frac{2(1+3\epsilon)}{1-6\epsilon}\right)$-$k$-HST cover. The main point is that we will use these $k$-HSTs to construct reliable spanners, but the edge weights and the stretch guarantees will be with respect to the original distances in the graph. That is, in some sense we ignore the distances induced by the $k$-HSTs, and just use their laminar structure. The property that we will use from the proof of \Cref{thm:pairwise_partition_cover_to_ultrametric_cover} is the following.
\begin{itemize}
	\item For every pair $u,v$, there exists a cluster $C$ of diameter at most $\Delta$ in the PPCS in which $u,v$ are centrally-padded, and so $C$ contains a net point. Thus, there will be a $k$-HST in the cover with an internal node $x$ and label $\Gamma_x=\Delta$ corresponding to $C$, that contains $u,v$.  
\end{itemize}
We remark that $L(x)$ is not necessarily equal to $C$, since we changed $C$ a bit before making it an internal node of the $k$-HST (to guarantee the laminar structure, and a bound on the lightness). 
The main result of this section is the following theorem.
\begin{theorem}\label{thm:MinorFreeOptimalStretch}
	Let $G=(V,E)$ be a graph with $n$ vertices that excludes a fixed minor. Then for any $0<\epsilon<1/12$ and $0<\nu'<1$, $G$ admits an oblivious $\nu'$-reliable $2(1+\epsilon)$-spanner of size $\tilde{O}\left(\frac{n}{\epsilon^6\nu'^2}\right)$ and lightness $\tilde{O}\left(\frac{\log^8 n}{\epsilon^7\nu'^2}\right)$.
\end{theorem}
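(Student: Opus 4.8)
The plan is to instantiate the general pipeline (centrally-padded PPCS $\to$ light $k$-HST cover $\to$ per-tree reliable spanner), but with two twists relative to \Cref{thm:MinorFreeStretch4}: I keep only the \emph{laminar structure} of the $k$-HSTs in the cover and measure all edge weights and path lengths in the original metric $d_G$, and I replace the uniform samples $Z_y$ of \Cref{sec:hst} by samples biased towards cluster centers along a global preorder of the tree. Concretely, start from the centrally-padded PPCS for minor-free graphs proved above, apply \Cref{thm:pairwise_partition_cover_to_ultrametric_cover} (exactly as in the ``$k$-HST Cover'' discussion above) with $k=\Theta(1/\epsilon)$ to get a $O(k\log n)$-light $k$-HST cover $\cT$ with $\tau=O(\epsilon^{-2}\log n\log k)=\tilde{O}(\epsilon^{-2}\log n)$ trees, in each of which every internal node $x$ corresponds to a centrally-padded cluster $C(x)$ with designated center $c_x$ (a net point, hence a leaf) and $d_G$-diameter at most $\Gamma_x$. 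As in \Cref{thm:ultrametric_cover_to_reliable_spanner} I will build a $(\nu'/\tau)$-reliable spanner on each tree and take the union; reliability then composes to $\nu'$ by a union bound over the $\tau$ trees.

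The per-tree construction: fix a $k$-HST $T\in\cT$ and write $\nu=\nu'/\tau$. First fix a global preorder $\pi$ of the leaves of $T$ by, recursively at each internal node $x$, listing the children of $x$ in increasing order of the $d_G$-distance from $c_x$ to a representative leaf of the child's subtree. Since $T$ is a $k$-HST, all leaves of one child's subtree lie within $\Gamma_x/k$ of each other, so for every internal $x$ the order $\pi$ induces on $L(x)$ approximates the order of $L(x)$ by $d_G(\cdot,c_x)$ up to an additive slack of $O(\Gamma_x/k)=O(\epsilon\cdot\Gamma_x)$ in distances. Next run the heavy-path decomposition of \Cref{sec:hst}; for every bottom node $y\in F$ sample $Z_y\subseteq L(y)$ by the harmonic rule that the $j$-th leaf of $L(y)$ in $\pi$-order is included independently with probability $\min\{1,\ell/j\}$, where $\ell=c\cdot\nu^{-1}\cdot\ln(\log n/\nu)$ as in \Cref{thm:k-HST-new} (so the first $\ell$ leaves of each $L(y)$ — those closest to the center — are always sampled, and $\E[|Z_y|]=O(\ell\log n)$, costing only a $\mathrm{polylog}$ factor over \Cref{thm:k-HST-new}). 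The spanner $H_T$ is the union of the bi-cliques $Z_{f(x)}\times Z_{f(x_j)}$ over internal nodes $x$ and children $x_j$, with each edge $\{a,b\}$ given weight $d_G(a,b)$. Since $L(x)$ has $d_G$-diameter $O(\Gamma_x)$, the expected size is $O\!\big(\sum_x\deg(x)\,|Z_{f(x)}|^2\big)=n\cdot\tilde{O}(\nu^{-2})$ and the expected weight is $O\!\big(\sum_x\deg(x)\,|Z_{f(x)}|^2\,\Gamma_x\big)=\tilde{O}(\nu^{-2})\cdot w(MST(T))$; multiplying by $\tau$, by the $O(k\log n)$ lightness of each $T$, substituting $\nu=\nu'/\tau$, $k=\Theta(1/\epsilon)$, $\tau=\tilde{O}(\epsilon^{-2}\log n)$, and applying \Cref{lem:expect-the-worst} to pass to worst case, yields the claimed $\tilde{O}\!\big(n/(\epsilon^6\nu'^2)\big)$ size and $\tilde{O}\!\big(\log^8 n/(\epsilon^7\nu'^2)\big)$ lightness.

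For the stretch, declare $u\notin B$ \emph{safe} if for every ancestor $x$ the sample $Z_{f(x)}$ contains a point of $L(f(x))\setminus B$ that is not $\pi$-after $u$, and put every non-safe vertex (and all of $B$) into $B^+_T$. Given $u,v\notin\bigcup_T B^+_T$, they are centrally padded in some cluster $C(x)$ with $\Gamma_x=\Delta=\Theta(d_G(u,v))$ and $d_G(u,c_x)+d_G(v,c_x)\le(1+O(\epsilon))d_G(u,v)$; assume $d_G(u,c_x)\le d_G(v,c_x)$, so $d_G(u,c_x)\le\tfrac12(1+O(\epsilon))d_G(u,v)$. Safety of $u$ at $x$ gives a survivor $z\in Z_{f(x)}$ that is $\pi$-before $u$, so by the ordering property $d_G(c_x,z)\le d_G(c_x,u)+O(\epsilon\,d_G(u,v))$. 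By \Cref{lem:stretch-hst} (still valid since we only add edges) $H_T$ contains $B$-avoiding paths of $d_G$-length $O(\Gamma_x/k)=O(\epsilon\,d_G(u,v))$ from $u$ to some $u'\in Z_{f(x_u)}\setminus B$ and from $v$ to some $v'\in Z_{f(x_v)}\setminus B$, and the edges $\{u',z\},\{z,v'\}$ are present in $H_T$; concatenating,
\[
d_{H_T\setminus B}(u,v)\le\big(d_G(u,c_x)+d_G(c_x,v)\big)+2\,d_G(c_x,z)+O(\epsilon\,d_G(u,v))\le 2(1+O(\epsilon))\,d_G(u,v),
\]
and rescaling $\epsilon$ gives stretch $2(1+\epsilon)$.

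The main obstacle is the reliability analysis, namely showing $\E[|B^+_T\setminus B|]\le O(\nu)|B|$ for this new safety rule, which I would obtain by transplanting the shadow argument of \Cref{sec:path} onto the tree: a leaf $u$ is non-safe at an ancestor $x$ only if every sampled $\pi$-prefix member of $L(f(x))$ up to $u$ lies in $B$, and since the first $\ell$ positions are always sampled this forces the $\ell$ leaves nearest $c_x$ into $B$, and more generally a large reciprocal-weighted mass $\sum_p 1/p$ of the prefix into $B$, so the failure probability is $\exp(-\Omega(\ell))$ unless $u$ lies in a ``dense'' regime — the tree analogue of being inside the $\alpha$-shadow of $B$, whose size is $O(\nu|B|)$ for the relevant $\alpha$ by \Cref{lem:small_shadow}-type bounds. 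Two genuine difficulties arise here. (i) The heavy-path decomposition makes all nodes of a heavy path share one sample $Z_{f(\cdot)}$ whose $\pi$-order must approximate distances to their (distinct) centers; I expect to fix this by listing each node's heavy child first in $\pi$ and propagating the top-of-path center as the reference down the heavy path, so that $L(f(x))$ stays a $\pi$-prefix of $L(x)$ and the center approximation survives with only the accumulated $O(\Gamma_x/(k-1))$ slack — and should this fail one can drop the heavy-path optimisation at the cost of an extra $\log n$ factor in $\ell$, which is absorbed by the $\tilde{O}(\cdot)$. (ii) The non-uniform (harmonic) sampling couples the per-prefix failure events to $B$ through the ordering, so the brutally-attacked versus mildly-attacked case split of \Cref{sec:hst} must be redone with reciprocal-sum potentials replacing the density ratios $|B\cap L(x)|/|L(x)|$, and the induction of \Cref{eq:ind-hst} replaced by one over $\sum_{p\le j}1/p$-type quantities; I expect (ii) to be the technically heaviest part.
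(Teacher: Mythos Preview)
Your high-level pipeline (centrally-padded PPCS $\to$ light $k$-HST cover $\to$ per-tree spanner with harmonic bias towards centers, edges weighted by $d_G$) is exactly the paper's, and your stretch argument is essentially identical to theirs. The divergence is in the per-tree construction and the reliability analysis, and here the paper is considerably simpler than what you propose.

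The paper \emph{drops the heavy-path decomposition entirely} for this theorem. Every internal node $x$ gets its own sample $Z_x\subseteq L(x)$, with the $j$-th leaf in preorder sampled with probability $\min\{1,\,c\ln n/(j\nu)\}$ --- note the $\ln n$ (not $\ln(\log n/\nu)$). This extra $\log n$ in the threshold, which you mention only as a fallback, is the paper's primary choice: it lets a crude union bound over all (up to $n$) ancestors succeed, because the per-ancestor failure probability comes out to $n^{-3}$. Consequently the reliability analysis collapses to a single-threshold shadow argument on the global preorder path $P$: if $u\notin\mathcal S_{1-\nu}(B)$ then $u$ is safe with probability $\ge 1-n^{-2}$ (their Lemma in the reliability paragraph), and since $|\mathcal S_{1-\nu}(B)\setminus B|\le 4\nu|B|$ while the outside contributes at most $n\cdot n^{-2}\le \nu|B|$, one gets $\E[|B_T^+\setminus B|]\le 5\nu|B|$ immediately. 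There is no brutally-attacked/mildly-attacked split, no reciprocal-sum induction, no multi-regime case analysis.

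Both obstacles you flag are real for your route and are exactly what the paper's choice eliminates. Obstacle~(i) --- that $Z_{f(x)}$ is sampled from $L(f(x))$, whose preorder approximates distances to $c_{f(x)}$ rather than $c_x$, and that $u$ need not even lie in $L(f(x))$ --- is a genuine problem; your proposed fix (list the heavy child first) does not by itself guarantee that the $\pi$-prefix of $L(f(x))$ up to $u$'s global position tracks distances to $c_x$, since the ranks used for sampling are \emph{within} $L(f(x))$. Obstacle~(ii) simply does not arise once the heavy-path layer is removed and the threshold is raised to $\ln n/\nu$: the harmonic product bound reduces to $n^{-(c/2\nu)\ln(1/\alpha)}$, which for $\alpha\le 1-\nu$ already gives $n^{-\Omega(c)}$. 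In short, the paper takes your ``should this fail'' fallback as the main road, pays a polylog factor (absorbed by $\tilde O$), and thereby avoids both difficulties.
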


\sloppy Let $k=c'/\epsilon$, for a constant $c'$ to be determined later.
We create a $O(k\log n)$-light $\left(\tau,\frac{2(1+3\epsilon)}{1-6\epsilon}\right)$-$k$-HST cover for the graph $G$, with $\tau=O(\epsilon^{-2}\log n\cdot\log k)$, as discussed above. Since we desire a $\nu'$-reliable spanner for $G$, we will use the parameter $\nu=\frac{\nu'}{5\tau}$ when devising a $\nu$-reliable spanner for each $k$-HST in the cover.

Let $T$ be one of the $k$-HSTs in the cover.
Note that every internal node of $T$ corresponds to a cluster $C$ in the centrally-padded PPCS, which have a center $x$. In a hope to avoid confusion, we will refer to $x$ both as the cluster center, and as the internal node of $T$. 

Recall that in \Cref{sec:hst} every internal node chose an arbitrary ordering on its leaves, which was used to define the preorder of $T$. Here, the order will not be arbitrary. Instead, it will be defined with respect to distances in $G$. That is, each internal node $x$ orders its children $x_1,\dots,x_t$ (each is a net point in the graph) by their distance to $x$ (in $G$). Then, let $P$ be the resulting preorder path on the leaves of $T$.

The intuition behind the sampling of the random bi-cliques, is that we want vertices ``near'' $x$ to be chosen, since the centrally-padded property gives us a better stretch guarantee going through $x$, than just $2\Gamma_x$. To this end, let $L(x)=(v_1,v_2,...,v_s)$ be the ordering given by the restriction of $P$ to $L(x)$. We sample each $v_j$ independently to be included in $Z_x$ with probability $p_j=\min\{1,\frac{c\cdot \ln n}{j\cdot\nu}\}$, for a constant $c$ to be determined later. 

The edges of the spanner are: For every internal node $x\in T$ with children $x_1,\dots,x_t$, for every $j=1,\dots,t$, we add all the edges $\{\{y,z\}~:~y\in Z_x,z\in Z_{x_j}\}$ to the spanner $H$, weighted according to the distances in the graph $G$. The final spanner will consist of the union of all $O(k\log n)$ spanners for all the $k$-HSTs in the cover.

\paragraph{Safe Leaves.}
Fix a $k$-HST $T$. Under an attack $B$, 
we say that a vertex $u$ is {\em safe} w.r.t. $B$, if for every ancestor $x\in T$ of $u$, $Z_x\setminus B$ contain a vertex $y$ such that 
\begin{equation}\label{eq:ssafe}
	d_G(x,y)\le d_G(x,u)+2\Gamma_x/k~. 
\end{equation}
In other words, we want that every ancestor $x$ of $u$ to have a surviving vertex $y$ in its sample set, which is not much farther than the distance of $u$ to the center $x$. 

Denote $B_T^+$ as all the 
vertices which are not safe in $T$ w.r.t. $B$. The final bad set is defined as $B^+=B\cup\bigcup_TB_T^+$. The following claim will be useful for bounding the size and lightness of our spanner.

\begin{claim}\label{clkaim:size}
	Fix any $T$ in the cover, then for any $x\in T$,
	\[
	\E[|Z_x|]\le O((\ln^2 n)/\nu)~.
	\]
\end{claim}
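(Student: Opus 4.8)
The plan is to compute the expectation directly via linearity and reduce it to a harmonic sum. Recall from the construction that, for an internal node $x$ with $L(x)=(v_1,v_2,\dots,v_s)$ ordered by the restriction of the preorder path $P$, each $v_j$ is placed into $Z_x$ independently with probability $p_j=\min\{1,\frac{c\cdot\ln n}{j\cdot\nu}\}$. So by linearity of expectation,
\[
\E[|Z_x|]=\sum_{j=1}^{s}p_j\le\sum_{j=1}^{s}\frac{c\cdot\ln n}{j\cdot\nu}=\frac{c\cdot\ln n}{\nu}\cdot\sum_{j=1}^{s}\frac{1}{j}~,
\]
where the inequality is valid because the truncation $p_j=\min\{1,\cdot\}$ can only decrease each term relative to $\frac{c\ln n}{j\nu}$.

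Next I would bound the harmonic number by $\sum_{j=1}^{s}\frac1j\le 1+\ln s$. Since the leaves of $T$ correspond bijectively to (a subset of) the $n$ vertices of $G$, we have $s=|L(x)|\le n$, and therefore $\sum_{j=1}^{s}\frac1j\le 1+\ln n=O(\ln n)$.

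Combining the two estimates yields
\[
\E[|Z_x|]\le\frac{c\cdot\ln n}{\nu}\cdot O(\ln n)=O\!\left(\frac{\ln^2 n}{\nu}\right)~,
\]
with the absolute constant $c$ from the construction absorbed into the $O(\cdot)$. There is essentially no obstacle in this claim: the only subtlety worth stating explicitly is that the truncation keeps the bound $p_j\le\frac{c\ln n}{j\nu}$ valid for every $j$, and that $|L(x)|\le n$ lets us replace $\ln s$ by $\ln n$. (If one wanted a slightly sharper statement one could keep $\ln|L(x)|$, but $\ln n$ suffices for all later applications.)
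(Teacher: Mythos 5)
Your proof is correct and takes essentially the same route as the paper: linearity of expectation over the independent inclusion events, dropping the truncation to bound $p_j\le\frac{c\ln n}{j\nu}$, and then bounding the harmonic sum by $O(\ln n)$ since $|L(x)|\le n$.
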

\begin{proof}
	Let $L(x)=(v_1,\dots,v_s)$, with the order induced by the restriction of $P$ to $L(x)$, then 
	\[
	\E[|Z_x|]=\sum_{j=1}^sp_j\le\frac{c\ln n}{\nu}\cdot\sum_{j=1}^s\frac1i = O\left(\frac{\ln^2 n}{\nu}\right)~.
	\]
\end{proof}

\paragraph{Size Analysis.}
Fix any tree $T$ in the cover. Denote $z_x=|Z_x|$, and note that these random variables $\{z_x\}_{x\in T}$ are independent, so $\E[z_x\cdot z_y]=\E[z_x]\cdot\E[z_y]$ whenever $x\neq y$. Using \Cref{clkaim:size},
the expected number of edges added to $H$ by the random bi-cliques is
\begin{align*}
	\E\left[\sum_{x\in T}\sum_{i=1}^{\deg(x)}z_{x}\cdot z_{x_{i}}\right] & =\sum_{x\in T}\sum_{i=1}^{\deg(x)}\E[z_{x}]\cdot\E[z_{x_{i}}].\\
	& =O(\nu^{-2}\cdot\log^{4}n)\cdot\sum_{x\in T}\deg(x)=O(\nu^{-2}\cdot n\log^{4}n)~.
\end{align*}

The final spanner is a union of $\tau=O(\epsilon^{-2}\log n\cdot\log k)$ spanners for each $T$ in the cover, and $\nu'=\nu\cdot\tau$, so the final size is
$\tau\cdot O((\frac{\tau}{\nu'})^{2}\cdot n\log^{4}n)=n\cdot\nu'^{-2}\cdot\epsilon^{-6}\cdot\log^{7}n\cdot\log^{3}k=\nu'^{-2}\cdot\epsilon^{-6}\cdot\tilde{O}(n)$.

\paragraph{Lightness Analysis.}

Let $T$ be any $k$-HST in the cover, and recall that the MST weight of $T$ is equal to
\[
\sum_{x\in T}(\deg(x)-1)\cdot\Gamma_x~.
\]
Each edge that $x$ adds to the spanner has weight at most $\Gamma_x$ (even though we use the graph distance, as $T$ is dominating). Using \Cref{clkaim:size}, the total weight of edges in the random bi-cliques is expected to be at most
\begin{eqnarray*}
	\E\left[\sum_{x\in T}\sum_{i=1}^{\deg(x)}\Gamma_x\cdot z_x\cdot z_{x_i}\right]&=&\sum_{x\in T}\Gamma_x\sum_{i=1}^{\deg(x)}\E[z_x]\cdot \E[z_{x_i}]\\
	&\le& O((\log^4 n)/\nu^2)\cdot\sum_{x\in T}\Gamma_x\cdot\deg(x)\\
	&=& O((\log^4 n)/\nu^2)\cdot w(MST(T))~. 
\end{eqnarray*}

Since every $k$-HST has lightness $O(k\log n)$, and there are $\tau=O(\epsilon^{-2}\log n\cdot\log k)$ trees in the cover, and $\nu'=\nu\cdot\tau$, the lightness of the resulting spanner compared to $G$ is
\begin{align*}
	\sum_{T\text{ in the cover}}O\left(\frac{\log^{4}n}{\nu^{2}}\right)\cdot\frac{w(MST(T))}{w(MST(G))} & =O\left(\frac{\tau^{3}}{\nu'^{2}}\cdot k\cdot\log^{5}n\right)\\
	& =O\left(\frac{k\cdot\log^{8}n\cdot\log^{3}k}{\nu'^{2}\cdot\eps^{6}}\right)\\
	& =\nu'^{-2}\cdot\tilde{O}(\eps^{-7}\cdot\log^{9}n)
\end{align*}

\paragraph{Reliability Analysis.}
Fix an attack $B$ and a tree $T$, and define the shadow with respect to the path $P$ and the set $B$ (recall \cref{def:shaddow}). We start by showing that for any internal node $x$, the preorder of $L(x)$ almost respects the distances to the center $x$ in $G$.

\begin{claim}\label{claim:order}
	Fix any node $x\in T$, and let $L(x)=(v_1,\dots,v_s)$ be the ordering given by the restriction of $P$ to $L(x)$. Then for any $1\le i<j\le s$ we have that
	\[
	d_G(x,v_i)\le d_G(x,v_j)+2\Gamma_x/k~.
	\]
\end{claim}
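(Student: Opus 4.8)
My plan is to isolate three structural facts about the $k$-HST cover used in this section and then close the claim with a short case analysis. (i) The preorder $P$, restricted to $L(x)$, first lists the leaves of $L(x_1)$ (in their induced order), then those of $L(x_2)$, and so on, where $x_1,\dots,x_t$ are the children of $x$ sorted so that $d_G(x,x_1)\le\cdots\le d_G(x,x_t)$ --- this is exactly how $P$ is defined in this section. (ii) Since $T$ is a dominating $k$-HST, for any child $x_m$ of $x$ and any $u,v\in L(x_m)$ one has $d_G(u,v)\le d_T(u,v)\le\Gamma_{x_m}\le\Gamma_x/k$. (iii) Each child/center $x_m$ lies in its own leaf set $L(x_m)$. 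Facts (i) and (ii) are immediate from the construction and the domination property; fact (iii) is the one point that needs to be traced carefully through \Cref{sec:ppcs->hst}.

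Given these, I would finish by a two-case analysis, with no induction on the tree. Write $v_i\in L(x_a)$ and $v_j\in L(x_b)$; by (i) and $i<j$ we get $a\le b$. If $a=b$, then $v_i,v_j\in L(x_a)$, so by (ii) $d_G(v_i,v_j)\le\Gamma_x/k$, and the triangle inequality gives $d_G(x,v_i)\le d_G(x,v_j)+\Gamma_x/k$, which is even stronger than required. If $a<b$, then $d_G(x,x_a)\le d_G(x,x_b)$ by the ordering, while $d_G(x_a,v_i)\le\Gamma_x/k$ and $d_G(v_j,x_b)\le\Gamma_x/k$ by (ii)--(iii); hence
\begin{align*}
d_G(x,v_i)&\le d_G(x,x_a)+d_G(x_a,v_i)\le d_G(x,x_b)+\Gamma_x/k\\
&\le d_G(x,v_j)+d_G(v_j,x_b)+\Gamma_x/k\le d_G(x,v_j)+2\Gamma_x/k~,
\end{align*}
which is exactly the claim.

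The step I expect to require the most care is fact (iii): that the net point designated as the center of a cluster still belongs to that cluster after the net-forcing and hierarchical modifications of \Cref{lem:ppcs-hier-light}. In the PPCS of \Cref{lem:pairwise_spddepth} every cluster is a ball around a net point, so this holds at the PPCS level; the net-forcing step only adds vertices to clusters that already contain a net point, and the hierarchical step should be carried out so that each center stays inside the cluster it names (this affects none of the bounds proved earlier). Alternatively, one can relax (iii) to ``$x_m$ lies within $\Gamma_x/k$ of $L(x_m)$'' at the cost of replacing the additive $2\Gamma_x/k$ above by a larger constant multiple of $\Gamma_x/k$; this is harmless, since such a constant only feeds into the choice of $c'$ in $k=c'/\epsilon$ and into the constant in the safety condition \eqref{eq:ssafe}.
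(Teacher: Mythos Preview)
Your argument is correct and essentially identical to the paper's: the paper likewise uses that the preorder on $L(x)$ respects the ordering of children by $d_G(x,\cdot)$, that any leaf of a child $x_m$ is within $\Gamma_x/k$ of the center $x_m$ in $G$, and then chains the same triangle inequalities (treating your cases $a=b$ and $a<b$ uniformly, since $d_G(x,x_a)\le d_G(x,x_b)$ holds also when $a=b$). The paper uses your fact (iii) implicitly without comment, so your explicit flag that this needs to be arranged in the hierarchical step of \Cref{lem:ppcs-hier-light} is a fair observation rather than a deviation.
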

\begin{proof}
	Let $x_{i'}$ (resp., $x_{j'}$) be the child of $x$ whose subtree contains $v_i$ (resp., $v_j$). Since $v_i$ appears in $P$ before $v_j$, it follows that the order on the children of $x$ is such that $x_{i'}$ appears before $x_{j'}$ (we allow $i'=j'$). By our definition it means that $d_G(x,x_{i'})\le d_G(x,x_{j'})$. 
	
	As $T$ is a  $k$-HST, all distances in $T$ between vertices in $L(x_{i'})$ (resp., $L(x_{j'})$) are at most $\frac{\Gamma_x}{k}$. Since $T$ is dominating, this also holds for the graph distances, which gives that both $d_G(v_i,x_{i'}),d_G(v_j,x_{j'})\le\frac{\Gamma_x}{k}$. We conclude that
	\begin{align*}
		d_{G}(x,v_{i}) & \le d_{G}(x,x_{i'})+d_{G}(x_{i'},v_{i})\\
		& \le d_{G}(x,x_{j'})+d_{G}(x_{i'},v_{i})\\
		& \le d_{G}(x,v_{j})+d_{G}(x_{j'},v_{j})+d_{G}(x_{i'},v_{i})\\
		& \le d_{G}(x,v_{j})+\frac{2\cdot\Gamma_x}{k}~.
	\end{align*}
\end{proof}

\begin{lemma}\label{lem:safe-MF}
	For every tree $T$, $0<\alpha\le 1-\nu$, and any vertex $u\notin S_\alpha( B)$,  
	\[
	\Pr[u\textrm{ is not safe}]\le n^{-2}~.
	\]
\end{lemma}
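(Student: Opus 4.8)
The plan is to fix an ancestor $x$ of the leaf $u$ in $T$, bound the probability that $x$ \emph{witnesses} $u$ being unsafe (that is, that $Z_x\setminus B$ contains no vertex $y$ with $d_G(x,y)\le d_G(x,u)+2\Gamma_x/k$) by $n^{-c}$ for a large constant $c$, and then take a union bound over the ancestors of $u$. Since an $n$-leaf tree whose internal nodes all have degree at least $2$ has at most $n-1$ internal nodes, $u$ has at most $n$ ancestors, so this union bound loses only a factor of $n$; taking $c\ge 3$ then yields the claimed bound $n^{-2}$.

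For the per-ancestor bound I would first set up notation: let $L(x)=(v_1,\dots,v_s)$ be the ordering of $L(x)$ induced by the restriction of the preorder path $P$ to $L(x)$ — exactly the ordering used when sampling $Z_x$ — and let $j^*$ be the index with $v_{j^*}=u$. Note $u\notin B$, because $B\subseteq S_\alpha(B)$ as $\alpha<1$. The key reduction is: if any $v_i$ with $i\le j^*$ lands in $Z_x\setminus B$, then $x$ does not witness $u$ unsafe. Indeed, \Cref{claim:order} applied with $j=j^*$ (and trivially for $i=j^*$) gives $d_G(x,v_i)\le d_G(x,v_{j^*})+2\Gamma_x/k=d_G(x,u)+2\Gamma_x/k$, so $y=v_i$ satisfies the safety condition $(\ref{eq:ssafe})$. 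Hence it suffices to show that with probability at least $1-n^{-c}$, \emph{some} $v_i$ with $i\le j^*$ is sampled into $Z_x$ and avoids $B$.

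Next I would count the survivors among $v_1,\dots,v_{j^*}$. Since the leaves of the subtree rooted at $x$ form a contiguous block of $P$, the vertices $v_1,\dots,v_{j^*}$ — being its first $j^*$ elements — form an interval of $P$ whose right endpoint is $u$. Because $u\notin S_L(\alpha,B)\subseteq S_\alpha(B)$, this interval contains fewer than $\alpha\cdot j^*$ vertices of $B$, so more than $(1-\alpha)j^*\ge\nu j^*$ of $v_1,\dots,v_{j^*}$ avoid $B$ (using $\alpha\le 1-\nu$). Each index $i\le j^*$ is sampled into $Z_x$ independently with probability $p_i\ge p_{j^*}=\min\{1,\tfrac{c\ln n}{j^*\nu}\}$, since $p_i$ is non-increasing in $i$. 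If $p_{j^*}=1$ then $u=v_{j^*}$ is always sampled and $u\notin B$, so $x$ never witnesses $u$ unsafe; otherwise $p_{j^*}=\tfrac{c\ln n}{j^*\nu}\in(0,1)$, and the probability that none of the more than $\nu j^*$ surviving prefix-vertices is sampled is at most $(1-p_{j^*})^{\nu j^*}\le e^{-p_{j^*}\cdot\nu j^*}=e^{-c\ln n}=n^{-c}$.

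The main points to get right are: (a) that $\{v_1,\dots,v_{j^*}\}$ is genuinely a $P$-interval ending at $u$, so that the left-shadow bound on $u$ applies to it — this relies on the contiguity of subtree leaves under a preorder traversal; and (b) that the additive slack $2\Gamma_x/k$ built into the definition of ``safe'' is exactly what \Cref{claim:order} supplies, so the surviving prefix vertices really do certify safety at $x$. Once these are in place, the remainder is a routine computation with the sampling probabilities $p_i$ and the union bound over the ancestors of $u$.
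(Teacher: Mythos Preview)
Your proposal is correct and follows essentially the same approach as the paper: reduce the per-ancestor failure to the event that no prefix vertex $v_i$ with $i\le j^*$ lands in $Z_x\setminus B$ (via \Cref{claim:order}), use the shadow hypothesis to lower-bound the number of non-$B$ prefix vertices, exploit the independent sampling with probabilities $p_i$, and finish with a union bound over ancestors. The only notable difference is computational: the paper argues that the worst case places all of $B$ in the first $\lfloor\alpha j\rfloor$ positions and bounds $\prod_{i>\alpha j}(1-p_i)$ via a harmonic sum to get $n^{-\frac{c}{2\nu}\ln(1/\alpha)}\le n^{-3}$, whereas you use the uniform lower bound $p_i\ge p_{j^*}$ on the $\ge\nu j^*$ survivors to get $n^{-c}$ directly --- a slightly cleaner route to the same conclusion.
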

\begin{proof}
	Let $x$ be any ancestor of $u$, and let  $L(x)=(v_1,\dots,u=v_j,\dots,v_s)$ be the ordering given by the restriction of $P$ to $L(x)$. If we want that $u$ will not fail to be safe due to $x$, it suffices that $Z_x\setminus B$ contains a vertex in the prefix $(v_1,\dots,v_j)$. This is because \Cref{claim:order} suggests that any such vertex will satisfy (\ref{eq:ssafe}).
	
	Since $u\notin S_\alpha( B)$, it follows that at most $\alpha$ fraction of the vertices $(v_1,\dots,v_j)$ are in $ B$. As the probability of being sampled to $Z_x$ decreases with the index, it can be easily checked that the worst possible case is that $\{v_1,\dots,v_{\lfloor\alpha\cdot j\rfloor}\}\subseteq B$ (in any other case the probability of success will only be higher). We assume that $p_{\lfloor\alpha j\rfloor+1}<1$, as otherwise $v_{\lfloor\alpha j\rfloor+1}$ is surely sampled into $Z_x$. This means $p_i=\frac{c\cdot\ln n}{i\cdot\nu}$ for all $i>\alpha\cdot j$. Note that $Z_x$ is sampled independently of $ B$, thus
	\begin{align}\label{eq:bboud}
		\Pr[Z_{x}\cap\{v_{\lfloor\alpha j\rfloor+1},\dots,v_{j}\}=\emptyset] & =\prod_{i=\lfloor\alpha\cdot j\rfloor+1}^{j}(1-p_{i})\\
		\nonumber & \le e^{-\sum_{i=\lfloor\alpha\cdot j\rfloor+1}^{j}p_{i}}\\
		\nonumber & =e^{-\sum_{i=\lfloor\alpha\cdot j\rfloor+1}^{j}\frac{c\cdot\ln n}{i\cdot\nu}}\\
		\nonumber & \le n^{-\frac{c}{2\nu}\cdot(\ln j-\ln(\alpha\cdot j))}\\
		\nonumber & =n^{-\frac{c}{2\nu}\cdot\ln(\frac{1}{\alpha})}\\
		\nonumber & \le n^{-3}~.
	\end{align}
	The last inequality holds as $\alpha\le 1-\nu$, so
	\[
	\ln(\frac{1}{\alpha})\ge\ln(\frac{1}{1-\nu})>\ln(1+\nu)\ge\frac{\nu}{2}
	\]
	and by picking a large enough constant $c\ge12$. The lemma follows by a union bound over all possible ancestors $x$.
\end{proof}

We are now ready to bound the final set $B^+$.
\begin{lemma}
	$\E[|B^+|]\le (1+\nu')\cdot|B|~.$
\end{lemma}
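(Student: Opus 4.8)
The plan is to bound $\E[|B^+\setminus B|]$ one tree at a time. Recalling that $B^+=B\cup\bigcup_T B_T^+$ ranges over the $\tau=O(\epsilon^{-2}\log n\cdot\log k)$ trees of the cover, and that $B_T^+$ is exactly the set of vertices that are not safe in $T$, a union bound together with linearity of expectation gives
\[
\E[|B^+\setminus B|]\le\sum_{T}\E[|B_T^+\setminus B|]=\sum_{T}\sum_{u\notin B}\Pr[u\text{ not safe in }T]~.
\]
So it suffices to show that each summand over $T$ is at most roughly $\nu\cdot|B|$, and then use $\nu=\nu'/(5\tau)$ to add up.

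First I would handle the trivial case $B=\emptyset$: since $p_1=\min\{1,\tfrac{c\ln n}{\nu}\}=1$, the first leaf $v_1$ of each $L(x)$ in the preorder always lies in $Z_x$, and by \Cref{claim:order} (applied with $i=1$) this $v_1$ satisfies (\ref{eq:ssafe}) for every $u\in L(x)$; hence when $B=\emptyset$ every vertex is safe in every tree, so $B^+=\emptyset$ and the claim holds. Assume henceforth $|B|\ge 1$.

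Next, fix a tree $T$ with its preorder path $P$, and set $\alpha=1-\nu$. As $\nu<1/5$ we have $\alpha\in[2/3,1)$, so \Cref{lem:small_shadow} yields $|S_\alpha(B)|\le\tfrac{|B|}{2\alpha-1}=\tfrac{|B|}{1-2\nu}\le(1+4\nu)|B|$, and since $B\subseteq S_\alpha(B)$ we get $|S_\alpha(B)\setminus B|\le 4\nu|B|$. I would then split $\sum_{u\notin B}\Pr[u\text{ not safe in }T]$ according to whether $u$ lies in the $\alpha$-shadow. The vertices of $S_\alpha(B)\setminus B$ contribute at most $|S_\alpha(B)\setminus B|\le 4\nu|B|$ (bounding each probability by $1$), whereas for $u\notin S_\alpha(B)$, \Cref{lem:safe-MF} applies (with this $\alpha=1-\nu$) and bounds each failure probability by $n^{-2}$, contributing at most $n\cdot n^{-2}=n^{-1}$. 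Hence $\E[|B_T^+\setminus B|]\le 4\nu|B|+n^{-1}$, and summing over the $\tau$ trees,
\[
\E[|B^+\setminus B|]\le 4\nu\tau|B|+\tau/n=\tfrac{4\nu'}{5}|B|+\tau/n~.
\]

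The only delicate point — and the main (essentially the only) obstacle — is absorbing the additive $\tau/n$ into the remaining slack $\tfrac{\nu'}{5}|B|$, since the probabilistic core of the argument is already packaged in \Cref{lem:small_shadow} and \Cref{lem:safe-MF}. I would handle it by choosing the constant $c$ in the sampling probabilities large enough that the bound of \Cref{lem:safe-MF} improves to $n^{-C}$ for a sufficiently large constant $C$: its proof bounds each per-ancestor failure probability by $n^{-\Theta(c)}$, so enlarging $c$ only inflates $\E[|Z_x|]$, and hence the size and lightness, by a constant factor (absorbed in the $\tilde{O}$). Then, using the standard reduction that we may assume $\epsilon,\nu'\ge 1/n$ (otherwise the claimed bounds are vacuous, as any graph admits a spanner of size and lightness $O(n^2)$), we have $\tau=\poly(n)$, so the total out-of-shadow contribution $\sum_T n\cdot n^{-C}=\tau\cdot n^{1-C}$ is below $\tfrac{\nu'}{5}|B|$ once $C$ is large enough (recall $\nu'|B|\ge\nu'\ge 1/n$). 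Combining the two pieces, $\E[|B^+\setminus B|]\le\tfrac{4\nu'}{5}|B|+\tfrac{\nu'}{5}|B|=\nu'|B|$, i.e. $\E[|B^+|]\le(1+\nu')|B|$, as required.
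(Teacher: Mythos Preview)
Your proposal is correct and follows essentially the same approach as the paper: handle $B=\emptyset$ via the deterministically-sampled first leaf, then for each tree split according to the $(1-\nu)$-shadow on the preorder path, bound the in-shadow part by $4\nu|B|$ via \Cref{lem:small_shadow}, and the out-of-shadow part via \Cref{lem:safe-MF}, finally summing over the $\tau$ trees using $\nu=\nu'/(5\tau)$. The only difference is in absorbing the additive $\tau/n$ term: the paper simply asserts $n^{-1}\le\nu|B|$ from the assumption $\nu'\ge 1/n$ (which, strictly speaking, requires slightly more care since $\nu=\nu'/(5\tau)$), whereas you propose boosting the constant $c$ so that \Cref{lem:safe-MF} yields $n^{-C}$---a perfectly valid and arguably cleaner way to close this minor gap.
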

\begin{proof}
	First consider the case $B=\emptyset$. In this case $B^+=\emptyset$ as well. This is because all the vertices are safe. Indeed, for every tree $T$ and node $x$, the first leaf in $L(x)$ is sampled to $Z_x$ with probability $1$, and thus $Z_x\not\subseteq B$. 
	Thus we will assume $B\ne \emptyset$.
	We can also assume that $\nu'\ge \frac1n$, as otherwise \cref{thm:MinorFreeOptimalStretch} holds trivially with $H=G$.
	
	Fix a tree $T$ in the $k$-HST cover.
	As $\nu< 1/3$, by the first item of \Cref{lem:small_shadow}, the shadow $S_{1-\nu}( B)$ of the path $P$ satisfies
	\[
	|S_{1-\nu}( B)|\le\frac{| B|}{2(1-\nu)-1}\le(1+4\nu)\cdot| B|~.
	\]
	
	By \Cref{lem:safe-MF}, every vertex outside $S_{1-\nu}( B)$ joins $B^+_T$ with probability at most $n^{-2}$.
	It follows that
	\begin{align*}
		\E\left[\left|B_{T}^{+}\right|\right] & \le |S_{1-\nu}( B)\setminus B|+\sum_{v\notin S_{1-\nu}( B)}\Pr\left[v\in B_T^{+}\right]\\
		& \le4\nu\cdot| B|+n\cdot\frac{1}{n^{2}}\\
		& \le4\nu\cdot|B|+\nu\cdot|B|=5\nu\cdot|B|~.
	\end{align*}
	Summing up over all the $\tau$ trees in the cover, and recalling that $\nu=\frac{\nu'}{5\tau}$, we conclude that
	\[
	\E\left[\left|B^{+}\setminus B\right|\right]\le\sum_{T\text{ in the cover}}\E\left[\left|B_{T}^{+}\right|\right]\le\sum_{T\text{ in the cover}}5\nu\cdot|B|\le5\tau\cdot\nu\cdot|B|=\nu'\cdot|B|~.
	\]
	
\end{proof}

\paragraph{Stretch Analysis.}

Let $u,v\notin B^+$ be two safe leaves in the $k$-HST $T$ that has an internal node $x$ corresponding to a cluster $C$ in which $u,v$ are centrally-padded. Let $\Delta$ be the diameter bound on $C$, and so $\Gamma_x=\Delta$. By definition of padding we have that $u,v\in L(x)$ and
\begin{equation}\label{eq:eqeq}
	\frac{(1-6\epsilon)\Delta}{4}\le d_G(u,v)\le \frac{(1-6\epsilon)\Delta}{2}~,
\end{equation}
and as $\epsilon<1/12$, it follows that $\Delta\le 8d_G(u,v)$.

Let $x_i,x_j$ be the children of $x$ in $T$ which are the ancestors of $u$ and $v$ respectively.\footnote{Since $k=c'/\epsilon>8$, and all vertices in $L(x_i)$ are at distance at most $\Gamma_{x_i}\le\Delta/k\le 8d_G(u,v)/k$ from each other, it cannot be that $i=j$.}
As $u,v\notin B^+$, it holds that there are vertices $z\in Z_x\setminus  B$, $u'\in Z_{x_i}\setminus B$, and $v'\in Z_{x_j}\setminus B$, and by (\ref{eq:ssafe}) we also have that
\begin{equation}\label{eq:end}
	d_G(x,z)\le \min\{d_G(x,u),d_G(x,v)\}+2\Delta/k~.
\end{equation}

It follows that the edges $\{u',z\},\{v',z\}\in H$ survive the attack, and furthermore
\begin{eqnarray*}
	d_G(u',z)&\le& d_G(u',u)+d_G(u,z)\\
	&\le& \Gamma_{x_i}+d_G(u,x)+d_G(x,z)\\
	&\stackrel{(\ref{eq:end})}{\le}& \Delta/k+2d_G(u,x)+2\Delta/k\\
	&=& 2d_G(u,x)+3\Delta/k~.
\end{eqnarray*}
The second inequality uses that $u,u'\in L(x_i)$, and that $T$ is dominating. For the third one we observe that $\Gamma_{x_i} \le\Delta/k$, as $T$ is $k$-HST. A symmetric calculation shows that 
\[
d_G(v',z)\le 2d_G(v,x)+3\Delta/k~.
\]

Since we used the same bi-clique construction as we did in \cref{thm:k-HST-new}, and a more restrictive definition of {\em safe}, we have that \cref{lem:stretch-hst} still holds (with $f(x)=x$, since we did not apply the heavy-path decomposition here).
In particular, $H$ contains a $u-u'$ path (resp., $v-v'$ path) which is disjoint from $B$, of length at most $\left(1+\frac{1}{k-1}\right)\cdot\Gamma_{x_i}$ (resp.,  $\left(1+\frac{1}{k-1}\right)\cdot\Gamma_{x_j}$). As $T$ is a $k$-HST, $\Gamma_{x_i},\Gamma_{x_j}\le\frac{\Delta}{k}$, so we have that 
\begin{eqnarray*}
	d_{H\setminus B}(u,v)&\le& d_{H\setminus B}(u,u')+d_G(u',z)+d_G(z,v')+d_{H\setminus B}(v',v)\\
	&\le&2(d_G(u,x)+d_G(v,x)+3\Delta/k)+\left(2+\frac{2}{k-1}\right)\cdot \frac{\Delta}{k}\\
	&\le&2 d_G(u,v)\cdot\left(1+O(\epsilon)\right)~.
\end{eqnarray*}
The last inequality uses that $\Delta\le 8d_G(u,v)$, the definition of centrally-padded (\ref{eq:centrally-padded}), and the choice of $k=\Theta(1/\epsilon)$.

\section{Lower Bounds}\label{sec:lower}
This section is devoted to proving lower bounds. All of our lower bounds hold for any finite stretch, that is, even if one requires from the reliable spanners just to preserve connectivity. For an attack $B$ on a spanner $H$, a valid super set $B^+_H$ should satisfy that all the points in $X\setminus B^+$ belong to the same connected component in $H\setminus B$.
In \Cref{subsec:PathDeterministicLB} we show that every deterministic reliable spanner for the path must have at least $\Omega(n)$ lightness. This lower bound holds for any constant $\nu>0$.
The main point of this lower bound is that deterministic reliable spanners have huge lightness. In particular, we did not attempt to optimize the dependence on $\nu$ (or other terms). Note that deterministic $\nu$-reliable $1$-spanner with lightness $O(\nu^{-6}\cdot n\log n)$ follows from \cite{BHO19}.

In \Cref{subsec:PathObliviousLB} we prove that every {\em oblivious} $\nu$-reliable spanner for the path has lightness $\Omega(\nu^{-2}\cdot\log(n\nu))$.
In \Cref{subsec:LBHST} we construct an ultrametric such that every oblivious  $\nu$-reliable spanner has lightness $\Omega(\nu^{-2})$. 
These two lower bounds show that the lightness parameters in our \Cref{thm:path,thm:k-HST-new} are tight up to second order terms (even if we ignore the stretch factor). 
The proof of \Cref{subsec:LBHST} appears before \Cref{subsec:PathObliviousLB} as the two arguments are somewhat similar, while the proof in \Cref{subsec:LBHST} is simpler.

\subsection{Lower bound for deterministic light reliable spanners}\label{subsec:PathDeterministicLB}

\begin{theorem}\label{thm:PathDeterministicLB}[Deterministic Lower bound for Path]
	For any constant $\nu>0$, every deterministic $\nu$-reliable spanner for the unweighted path graph $P_n$ has lightness $\Omega(n)$.
\end{theorem}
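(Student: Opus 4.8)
The plan is to show that a light deterministic reliable spanner of $P_n$ cannot exist by exhibiting a tiny attack that disconnects two linear-size pieces of the path; reliability would then force the faulty extension $B^+$ to contain one whole piece, which is far more than the allowed $(1+\nu)$ factor over $|B|$.

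First I would set $c := \frac{1}{24(\nu+1)}$ and assume for contradiction that the lightness of $H$ is below $cn$; since $w(\MST(P_n)) = n-1$ this gives $w(H) < cn(n-1)$. For a gap $i\in[n-1]$ let $c_i$ be the number of edges $\{x,y\}\in E(H)$ with $x\le i<y$. Without loss of generality every edge $\{x,y\}$ of $H$ has weight exactly $|x-y|$ (it is at least this since $H$ dominates $P_n$, and larger weights only help the lower bound), so $w(H)=\sum_{i=1}^{n-1}c_i$. Averaging over the central band of gaps $i\in[\lceil n/3\rceil:\lfloor 2n/3\rfloor]$ — of which there are at least $n/4$ for $n$ large — I would pick a central gap $i^*$ with $c_{i^*} < 4cn = \frac{n}{6(\nu+1)}$; in particular $c_{i^*}<n/6$.

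Next, let $B'$ be the set of left endpoints (those of index $\le i^*$) of the $c_{i^*}$ edges crossing $i^*$, so $B'\subseteq[1:i^*]$ and $|B'|\le c_{i^*}$. In $H[[n]\setminus B']$ there is no edge between $P_L:=[1:i^*]\setminus B'$ and $P_R:=[i^*+1:n]$ — any such edge would cross $i^*$ and so its left endpoint would have been deleted — and since $B'\subseteq[1:i^*]$, the sets $P_L,P_R$ partition $[n]\setminus B'$ and hence lie in distinct connected components of $H[[n]\setminus B']$. Applying reliability to the attack $B'$ yields $B'^+\supseteq B'$ with $[n]\setminus B'^+$ contained in a single component of $H[[n]\setminus B']$, forcing $P_L\subseteq B'^+$ or $P_R\subseteq B'^+$. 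Using $i^*\ge n/3$, $n-i^*\ge n/3$ and $c_{i^*}<n/6$, both pieces have size $>n/6$, and each is disjoint from $B'$, so $|B'^+|\ge |B'|+n/6$. Comparing with the guarantee $|B'^+|\le(1+\nu)|B'|$ gives $\nu|B'|> n/6$, i.e.\ $|B'|>\frac{n}{6\nu}$, which contradicts $|B'|\le c_{i^*}<\frac{n}{6(\nu+1)}$. (The degenerate case $c_{i^*}=0$ is the same inequality with $B'=\emptyset$: then $[n]$ is already disconnected in $H$ while $B'^+$ must be empty.) Hence the lightness is at least $cn=\Omega(n)$.

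The argument is essentially immediate once one sees the right attack; the only point requiring care — and the reason for averaging rather than using the ``middle half'' attack sketched in the introduction — is to guarantee a central gap $i^*$ whose crossing-edge count $c_{i^*}$ is smaller than a $\Theta(1/\nu)$ fraction of $n$, so that the minuscule attack $B'$ is genuinely overwhelmed by the linear-size piece that reliability forces into $B'^+$. Everything else is routine bookkeeping of absolute constants, with $n$ taken large enough for the floors and ceilings to be harmless.
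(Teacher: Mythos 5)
Your proof is correct, and it takes a recognisably different route from the paper's. You decompose the spanner weight as $w(H)=\sum_{i} c_i$ over all $n-1$ gaps of $P_n$ and use averaging to locate a single central gap $i^*$ with a small crossing count $c_{i^*}$, then attack only the left endpoints of the $c_{i^*}$ crossing edges to disconnect two linear pieces. The paper instead fixes a wide band of width $2\epsilon n$ around the middle with $\epsilon=\Theta(\min\{1,1/\nu\})$, lets $E_1$ be the edges spanning the entire band, and attacks the band together with the $L$-endpoints of $E_1$; reliability then forces $|E_1|\ge\epsilon n$, and since each such edge has weight at least $2\epsilon n$ the $\Omega(n)$ lightness is read off directly without any averaging over gaps. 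Your version thus uses a minimal attack and the exact weight-as-cut-sum identity where the paper uses a bulkier attack and the "long edges" observation. Both give $\Omega(n)$; your constant degrades like $1/\nu$ rather than the paper's $1/\nu^2$, which is a small bonus but immaterial here since $\nu$ is a constant.

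Two points that are correct in your write-up but worth being explicit about if this were spelled out: (i) the ``WLOG edge weights equal $|x-y|$'' reduction is legitimate because domination forces weight at least $|x-y|$, and lowering weights to this floor preserves connectivity, can only improve stretch, and can only lower $w(H)$, so a light reliable spanner with arbitrary weights yields one with exact weights; (ii) in the degenerate case $c_{i^*}=0$ the guarantee $|B'^+|\le(1+\nu)\cdot 0=0$ forces $B'^+=\emptyset$, yet $[n]$ is then split across two components of $H$, so the contradiction holds there as well.
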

\begin{proof}
	Consider a deterministic $\nu$-reliable spanner $H$ for $P_n$. Set $\epsilon = \min\{\frac{1}{16},\frac{1}{16\nu}\}=\Omega(1)$.
	Denote by $L=[1,(1/2-\epsilon)n]$ and $R=[(1/2+\epsilon)n+1,n]$ the first and last $(1/2-\epsilon)n$ vertices along $P_n$ respectively. (For simplicity we assume that these are all integers.)
	Let $E_1$ be the subset of $H$ edges going from a vertex in $L$ to a vertex in $R$. Seeking contradiction, assume that $|E_1|< \epsilon n$.
	Let $B\subseteq [n]$ be a subset consisting of all the vertices in $[(1/2-\epsilon)n+1,(1/2+\epsilon)n]$, and all the vertices in $L$ that are contained in an edge of $E_1$.
	
 Since $|B|< 3\epsilon n$, it holds that $|L\setminus B|\ge (1/2-4\epsilon)n$, and $|R\setminus B|=(1/2-\epsilon)n$.
	However, the graph $H\setminus B$ does not contain any edge from a vertex in $L\setminus B$ to a vertex in $R$. In particular, $B^+$ must contain either all the vertices in $L$, or all the vertices in $R$. It follows that $$|B^+\setminus B|\ge(1/2-4\epsilon)n\ge n/4\ge \nu\cdot 4\epsilon n>\nu\cdot|B|~,$$ 
 a contradiction to the fact that $H$ is a $\nu$-reliable spanner. 
	It follows that $|E_1|\ge \epsilon n$. Note that each edge in $E_1$ has weight at least $2\epsilon n$. We conclude that
	$$w(H)\ge |E_1|\cdot 2\epsilon n\ge 2(\epsilon n)^2=\Omega(n)\cdot w(\rm{MST})~,$$
 where in the last equality we used that $\nu$ is a constant.
\end{proof}

\subsection{Lower Bound for HST}\label{subsec:LBHST}
Similarly to \Cref{thm:PathObliviousLB}, the lower bound here holds even if one is only interested in preserving connectivity. 
\begin{restatable}[Oblivious Lower Bound for HST]{theorem}{ObliviousLowerBoundHST}
	\label{thm:HSTObliviousLB}
	For every $\nu\in(0,1)$, there is an ultrametric such that every oblivious $\nu$-reliable spanner has lightness $\Omega(\nu^{-2})$.
\end{restatable}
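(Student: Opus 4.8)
The plan is to build a two-level ultrametric made of $s$ equal ``bunches'' and to show that an oblivious attack thinning every bunch simultaneously forces any reliable spanner to contain $\Omega(\nu^{-2})$ inter-bunch edges per bunch, each of weight comparable to the whole per-bunch share of the MST.

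\emph{Construction.} I may assume $\nu<1/4$, since for larger $\nu$ the bound $\Omega(\nu^{-2})=\Omega(1)$ is immediate (a finite-stretch spanner is connected, hence has lightness at least $1$). Set $q=\lceil 1/\nu\rceil$ and let $s$ be a large constant, say $s=100$. The ultrametric $U$ has leaf set partitioned into bunches $W_1,\dots,W_s$ of size $q$, with $d(u,v)=1/q$ when $u,v$ lie in the same bunch and $d(u,v)=1$ otherwise; this is the ultrametric of the rooted tree whose root has label $1$ and $s$ children of label $1/q$, each carrying $q$ leaves. Connecting each bunch internally ($q-1$ edges of weight $1/q$) and then linking the $s$ bunches ($s-1$ edges of weight $1$) gives $w(\MST(U))\le 2s$, while any spanning tree needs $s-1$ inter-bunch edges of weight $1$, so $w(\MST(U))=\Theta(s)$. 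Call a spanner edge \emph{long} if its endpoints sit in different bunches; by domination every long edge has weight at least $1$, hence $w(H)\ge L_H$ where $L_H$ is the number of long edges of $H$.

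\emph{The attack and the core inequality.} Put $r=\lceil 2\nu q\rceil$; for $\nu<1/4$ one checks $r\le q$ and $2\nu q\le r\le 3\nu q$. For a tuple $T=(T_1,\dots,T_s)$ with $T_i\subseteq W_i$, $|T_i|=r$, consider the attack $B=B(T)=\bigcup_i (W_i\setminus T_i)$, so $|B|=s(q-r)$ and the survivor set is $\bigcup_i T_i$ of size $sr$. Fix any $H\in\supp(\mathcal D)$ and form the auxiliary graph $G_H(T)$ on $\{1,\dots,s\}$ with $i\sim j$ iff $H$ has a long edge between $T_i$ and $T_j$. Tracing paths and contracting each $T_i$, I would argue that every connected component of $H\setminus B$ lies inside $\bigcup_{i\in C}T_i$ for a single component $C$ of $G_H(T)$, so the largest component of $H\setminus B$ has at most $r\cdot|C_{\max}(G_H(T))|\le r(\#E(G_H(T))+1)$ vertices. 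Since $X\setminus B^+_H$ must lie in one component of $H\setminus B$, and $\#E(G_H(T))\le \ell_H(T):=$ (number of long edges of $H$ surviving the attack $T$), we get
\[
|B^+_H\setminus B|\ \ge\ sr-r(\ell_H(T)+1)\ =\ r(s-1-\ell_H(T)).
\]
Taking $\E_H$ and invoking oblivious reliability, $\E_H[|B^+_H\setminus B|]\le\nu|B|\le\nu sq$, and using $r\ge 2\nu q$ so that $\nu q/r\le 1/2$,
\[
\E_H[\ell_H(T)]\ \ge\ (s-1)-\frac{\nu sq}{r}\ \ge\ (s-1)-\frac s2\ \ge\ \frac s2-1 .
\]
Crucially this holds for \emph{every} admissible tuple $T$.

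\emph{Averaging and conclusion.} Draw $T$ by choosing each $T_i$ independently and uniformly among the $r$-subsets of $W_i$. For a fixed long edge $\{u,v\}$ (endpoints in distinct bunches), $\Pr_T[\{u,v\}\text{ survives}]=(r/q)^2$, so $\E_T[\ell_H(T)]=(r/q)^2 L_H$ for every $H$. Averaging the last inequality over $T$ and swapping the two expectations yields $(r/q)^2\,\E_H[L_H]\ge s/2-1$, hence $\E_H[L_H]\ge (q/r)^2(s/2-1)\ge (s/2-1)/(9\nu^2)$. Therefore some $H^\star\in\supp(\mathcal D)$ has $w(H^\star)\ge L_{H^\star}\ge\E_H[L_H]\ge(s/2-1)/(9\nu^2)$, and its lightness is at least $\dfrac{(s/2-1)/(9\nu^2)}{2s}=\Omega(\nu^{-2})$. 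As $H^\star$ lies in the support, the oblivious spanner has lightness $\Omega(\nu^{-2})$, as claimed.

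\emph{Main obstacle.} The delicate step is the combinatorial reduction from the connectivity of $H\setminus B$ to the tiny graph $G_H(T)$: one must verify that thinning every bunch to only $r=\Theta(\nu q)$ survivors cannot be compensated by cheap intra-bunch edges, so that the sole way to re-link two bunches is a long edge surviving on \emph{both} endpoints --- an event of probability $(r/q)^2=\Theta(\nu^2)$, which is precisely where the square $\nu^{-2}$ enters. The second point requiring care is the calibration of $r$: it must be small enough that a surviving long edge is $\Theta(\nu^2)$-rare, yet a constant multiple of $\nu q$ (not $O(1)$), so that the $\nu|B|$ budget of $B^+$ cannot simply absorb the whole survivor set $\bigcup_i T_i$ and trivialize the attack.
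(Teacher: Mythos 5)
Your proof is correct and follows essentially the same double-counting scheme as the paper's: a two-level ultrametric, a random attack that thins each middle-level subtree down to a few survivors, an auxiliary connectivity argument showing a poorly-connected survivor set forces a large $B^+$, and a swap of expectations over the random spanner and the random attack. The only notable difference is calibration --- the paper uses $\Theta(\nu^{-1})$ bunches of size $\Theta(\nu^{-1})$ each and keeps exactly one survivor per bunch (so the auxiliary graph is literally an induced subgraph of $H$), whereas you use constantly many bunches of size $\Theta(\nu^{-1})$ and keep $\Theta(1)$ survivors per bunch; both land at the same $\Omega(\nu^{-2})$ lightness bound.
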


\hspace{-18pt}\emph{Proof.}
	Set $\ell=\frac{1}{4\nu}$. Consider an ultrametric consisting of a root $r$ with label $1$, and
		\begin{wrapfigure}{r}{0.22\textwidth}
		\begin{center}
			\vspace{-25pt}
			\includegraphics[width=0.25\textwidth]{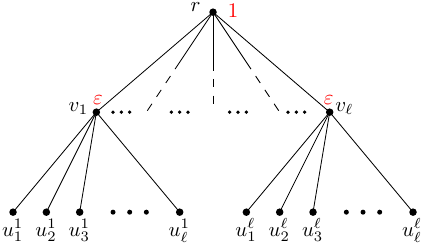}
			\vspace{-5pt}
		\end{center}
		\vspace{-23pt}
	\end{wrapfigure}
	$\ell$ children $\{v_1,\dots,v_\ell\}$, each with label $\eps=\frac1\ell$, and $\ell$ children each, where $\{v_1^i,v_2^i,\dots,v^i_\ell\}$ are the children of $v_i$. In total we have $\ell^{2}$ leaves. See illustration on the right. The MST for this ultrametric will consist of $\ell-1$ edges of weight $1$, and $\ell\cdot(\ell-1)$ edges of weight $\eps=\frac1\ell$. So the total weight is $2(\ell-1)$.
	
	Consider an oblivious $\nu$-reliable spanner $\cD$, and let $H\sim\supp(\cD)$. Let $\cJ\in\{1,2\dots,\ell\}^\ell$ be a string of $\ell$ indices between $1$ and $\ell$. Let $H_{\cJ}$ be the subgraph of $H$ induced by 
	$\{v^1_{\cJ_1},v^2_{\cJ_2},\dots,v^\ell_{\cJ_\ell}\}$. That is, for each $i$, we keep only the vertex corresponding to the $i$'th index in $\cJ$.
	
	Let $\Psi_{{\cal J}}$ be the event that the graph $H_\cJ$ contains
	at least $\frac{\ell}{2}$ edges. 
	Consider the attack $B_\cJ$ which consist of all the vertices except $\{v^1_{\cJ_1},v^2_{\cJ_2},\dots,v^\ell_{\cJ_\ell}\}$.
	If the event $\Psi_{{\cal J}}$ did not occur for a spanner $H$, then $H\setminus B_\cJ$, is disconnected, and the largest connected component has size at most $\frac\ell2$. Observe that in order to preserve connectivity, $B^+_{\cJ}$ must contain all vertices in all connected components of $H\setminus B_\cJ$, except for one component. In particular, $B^+_\cJ\setminus B_\cJ$ will contain at least $\frac\ell2=2\nu\cdot\ell^{2}\ge2\nu\cdot|B|$ vertices. As $\cD$ is $\nu$-reliable, it holds that 
	\[
	\nu\cdot|B|\ge\E[|B^{+}\setminus B|]\ge\Pr\left[\overline{\Psi_{{\cal J}}}\right]\cdot2\nu\cdot|B|~,
	\]
	It follows that $\Pr\left[\overline{\Psi_{{\cal J}}}\right]\le\frac{1}{2}$,
	and in particular $\Pr\left[\Psi_{{\cal J}}\right]\ge\frac{1}{2}$.
	We conclude that for every $\cJ$ it holds that $\mathbb{E}_{H\sim\cD}\left[\left|E(H_{\cJ})\right|\right]\ge\Pr\left[\Psi_{\cJ}\right]\cdot\frac{\ell}{2}\ge\frac{\ell}{4}$.
	
	On the other hand, denote by $\widehat{H}$ the subset of $H$ edges of weight $1$ (i.e. between children of $v_i,v_j$ for $i\ne j$).
	Note that for every $\cJ$, $H_\cJ\subseteq \widehat{H}$ (as $H_\cJ$ does not contain $\eps$-weight edges).
	Every edge $e\in \widehat{H}$ belongs to $H_\cJ$ if and only if both its endpoints are chosen by $\cJ$. If we choose $\cJ$ u.a.r., $e$ will survive with probability $\frac{1}{\ell^2}$. We conclude
	\begin{align*}
		\mathbb{E}_{{\cal J},H}\left[\left|E(H_{\cJ})\right|\right] & =\mathbb{E}_{{\cal J}}\left[\mathbb{E}_{H}\left[\left|E(H_{\cJ})\right|\right]\right]\ge\mathbb{E}_{{\cal J}}\left[\frac{\ell}{4}\right]=\frac{\ell}{4}\\
		\mathbb{E}_{{\cal J},H}\left[\left|E(H_{\cJ})\right|\right] & =\mathbb{E}_{H}\left[\mathbb{E}_{{\cal J}}\left[\left|E(H_{\cJ})\right|\right]\right]=\mathbb{E}_{H}\left[\frac{1}{\ell^{2}}\cdot|\widehat{H}|\right]~.
	\end{align*}
	As all $\widehat{H}$ edges have weight $1$,
	\[
	\mathbb{E}_{H\sim\cD}\left[w(H)\right]\ge\mathbb{E}_{H}\left[|\widehat{H}|\right]\ge\frac{\ell^{3}}{4}=\Omega(\nu^{-2})\cdot w({\rm MST)}~.
	\]
	\qed

\subsection{Lower Bound for the Unweighted Path}\label{subsec:PathObliviousLB}

In this section we prove an $\Omega(\nu^{-2}\cdot\log (n\nu))$ lower bound on the lightness any oblivious reliable spanner for the shortest path metric induced by the unweighted path (for any finite stretch parameter). As this metric has doubling dimension $1$, it follows that our light reliable spanner for doubling metrics is tight (\Cref{thm:doubling}) up to second order terms (for constant $\ddim$ and $\eps$).
\begin{restatable}[Oblivious Lower Bound for the Path]{theorem}{ObliviousLowerBoundforthePath}
	\label{thm:PathObliviousLB}
	For every $\nu\in(0,1)$, every oblivious $\nu$-reliable spanner for the unweighted path graph $P_n$ has lightness $\Omega(\nu^{-2}\cdot\log (n\nu))$.
\end{restatable}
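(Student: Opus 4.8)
The plan is to mirror the proof of \Cref{thm:HSTObliviousLB}, but to run its disconnection-plus-double-counting argument at every dyadic scale of the path simultaneously, exploiting that $w(\mathrm{MST}(P_n))=n-1$ is fixed regardless of scale.

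First I would reduce the theorem to a family of per-scale statements. Write $N_{\ge i}(H)$ for the number of edges of $H$ of length at least $2^i$. Since every edge of length $L$ is counted in $N_{\ge i}$ for all $i\le\log L$, Abel summation gives $w(H)\ge\tfrac12\sum_i 2^i\cdot N_{\ge i}(H)$. Hence it suffices to show that, for each dyadic scale $i$ in the range roughly $\log(1/\nu)\lesssim i\lesssim\log n$ (a range of $\Theta(\log(n\nu))$ scales), $\mathbb{E}_{H\sim\mathcal D}[N_{\ge i}(H)]\ge\Omega\!\big(n/(\nu^{2}2^{i})\big)$; summing $2^{i}$ times this bound over the $\Theta(\log(n\nu))$ scales yields $\mathbb{E}[w(H)]=\Omega(\nu^{-2}n\log(n\nu))=\Omega(\nu^{-2}\log(n\nu))\cdot w(\mathrm{MST}(P_n))$, and by \Cref{lem:expect-the-worst} (or plain Markov) the bound in expectation suffices. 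Set $\ell=\Theta(\nu^{-1})$ with a small constant to be fixed later.

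Next, the per-scale argument for fixed $i$, $m=2^{i}$. Partition $[n]$ into $\Theta(n/(\ell m))$ consecutive \emph{windows} of $\Theta(\ell m)$ vertices; inside each window place $\ell$ \emph{sites} that are pairwise at distance $\ge m$, each site being a block of $\ell$ consecutive vertices, with the sites set near the middle of the window so that any edge from a site of one window to a vertex of a different window has length $\Omega(m)$. For $\mathcal J$ ranging over the product set of site-indices, let $B_{\mathcal J}$ keep in each site exactly the vertex whose index inside its block is prescribed by $\mathcal J$ and delete every other vertex of $[n]$, so the surviving set is one representative per site. As in \Cref{thm:HSTObliviousLB}, for a window $w$ consider the event $\Psi_{\mathcal J,w}$ that $H$ restricted to the $\ell$ representatives of $w$ has at least $\ell/2$ edges, or that some representative of $w$ has an $H$-edge to a representative of another window. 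If $\Psi_{\mathcal J,w}$ fails, then in $H\setminus B_{\mathcal J}$ the $\ell$ representatives of $w$ split into $>\ell/2$ components with no edges leaving them, so at least $\ell/2-1$ of them must join $B^{+}_{H}$; and in \emph{either} case of $\Psi_{\mathcal J,w}$ the graph $H$ contributes edges of length $\Omega(m)$ charged to $w$ (in the first case $\Omega(\ell)$ of them). Combining $\mathbb{E}_{H}[|B^{+}_{H}\setminus B_{\mathcal J}|]\le\nu|B_{\mathcal J}|$ with the number of windows, and picking the constant in $\ell$ small enough that $\nu|B_{\mathcal J}|$ is a small fraction of the total potential damage, yields that a constant fraction of windows are ``good'', i.e.\ $\Pr_{H}[\Psi_{\mathcal J,w}]\ge\tfrac12$. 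A double-counting step identical in spirit to \Cref{thm:HSTObliviousLB} — averaging over uniformly random $\mathcal J$ and using that a fixed long $H$-edge with both endpoints in site-blocks survives into the $\mathcal J$-restricted picture with probability $\Theta(\ell^{-2})$ — then promotes ``$\Omega(\ell)$ forced long edges per good window'' into $\mathbb{E}_{H}[N_{\ge i}(H)]\ge\Omega(\ell^{2})\cdot(\#\text{windows})=\Omega(n\ell^{2}/2^{i})$, as required.

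I expect the main obstacle to be the per-scale disconnection argument at the \emph{large} scales $2^{i}\gg\nu^{-1}$. There any attack that genuinely disconnects $H$ at scale $2^{i}$ must delete $\Theta(n)$ vertices, so the reliability budget $\nu|B_{\mathcal J}|\approx\nu n$ is no longer negligible next to the naive $\Theta(\ell)$-per-window damage, and the balancing in the previous paragraph fails as stated. Carrying the argument through at those scales will require a more careful gadget — plausibly a recursive ``moat'' structure in which the regions between sites are themselves lower-scale gadgets — together with a case analysis over the $\alpha$-shadow regimes of $B_{\mathcal J}$ exactly as in the proof of the upper bound (\Cref{thm:path}). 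Controlling precisely how much of the reliability budget each scale consumes, and ensuring that the $\nu^{-2}$ factor is not eroded when the $\Theta(\log(n\nu))$ scales are combined, is the technical heart of the theorem and the source of the (otherwise surprising) $\log(n\nu)$ dependence; for $n\nu=O(1)$ the claimed bound degenerates to $\Omega(\nu^{-2})$, which already follows from the HST-style argument.
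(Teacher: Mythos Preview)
Your high-level plan is exactly right and matches the paper: reduce to a per-scale edge-count lower bound via the dyadic (Abel) summation, then sum over the $\Theta(\log(n\nu))$ scales. The double-counting step you describe is also the same one the paper uses. The gap is in your per-scale gadget.

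The problem you already diagnose is real and is not a side issue: with your construction (sites of $\ell$ consecutive vertices, one representative vertex per site), the surviving set after the attack has size $\Theta(n/2^{i})$, which shrinks with the scale. At scales $2^{i}\gg\nu^{-1}$ the total potential damage $\Theta(n/2^{i})$ is dwarfed by the reliability budget $\nu|B_{\mathcal J}|\approx\nu n$, so the disconnection argument gives nothing. Since the relevant range of scales is precisely $2^{i}\in[\Theta(\nu^{-1}),\Theta(n)]$, essentially \emph{every} scale is a ``large'' scale in this sense, and your per-window argument yields the desired bound on at most $O(1)$ scales. Your suggested rescue --- recursive moats and an $\alpha$-shadow case analysis --- is speculative and unnecessary.

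The paper's per-scale argument avoids this with one simple change: instead of choosing a single representative \emph{vertex}, it chooses a representative \emph{sub-interval} whose length grows with the scale. Concretely, at scale $2^{i}$ it partitions $[n]$ into intervals of length $2^{i}$, deletes every other interval (so that any edge between two surviving intervals has length $\ge 2^{i}$), subdivides each of the remaining $n/2^{i+1}$ intervals into $\ell=\Theta(\nu^{-1})$ sub-intervals of length $2^{i}/\ell$, and lets the attack $B_{\mathcal J}$ keep exactly one whole sub-interval per interval. Now the surviving set has size $(n/2^{i+1})\cdot(2^{i}/\ell)=n/(2\ell)=\Theta(n\nu)$, \emph{independent of $i$}. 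If the supergraph on the chosen sub-intervals has fewer than $n/2^{i+2}$ edges, its largest component has at most $n/2^{i+2}$ sub-intervals, so at least $n/(4\ell)=\Theta(n\nu)$ surviving vertices must enter $B^{+}$; with $\ell=1/(8\nu)$ this forces $\Pr[\overline{\Psi_{\mathcal J}}]<1/2$ uniformly over all scales. The double-counting (a random $\mathcal J$ keeps each super-edge with probability $\ell^{-2}$) then gives $\mathbb{E}[|\mathcal E_H^{i}|]\ge\Omega(n\nu^{-2}/2^{i})$ exactly as you wanted. Note also that the paper uses a single \emph{global} event $\Psi_{\mathcal J}$ (the supergraph has many edges), not a per-window event; this avoids the averaging gymnastics in your outline.
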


\begin{proof}
	This proof follow similar lines to the proof of \Cref{thm:HSTObliviousLB}, however, there are some required adaptations due to the path metric, and an additional $\log n$ factor which is introduced due to the $\log n$ different scales.
	
	For simplicity we will assume that $n=2^{m}$ and $\nu=2^{-s}$
	are powers of $2$. We will also assume that 
	$m\ge s+5$.
	For every index $i$ and $H\in\supp({\cal D})$,
	denote by ${\cal E}_{H}^{i}$ the subset of $H$ edges of weight at
	least $2^{i}$.
	\begin{claim}\label{clm:PathLBSingleScale}
		For every index $i\in\left\{ s+3,\dots,m-2\right\} $, $\mathbb{E}_{H\sim{\cal D}}\left[\left|{\cal E}_{H}^{i}\right|\right]\ge\frac{1}{\nu^{2}}\cdot\frac{n}{2^{i+3}}$.
	\end{claim}
	
	\begin{proof}
		Set $\ell=\frac{1}{8\nu}$.
		Divide the path $P_{n}$ to $\frac{n}{2^{i}}$ intervals of length
		$2^{i}$.
		Remove every other interval.
		Every remaining interval, partition further into $\ell$
		intervals of length $\frac{2^{i}}{\ell}$. Denote these intervals
		by $\left\{ A_{j}^{k}\right\} _{k\in[\frac{n}{2^{i+1}}],j\in[\ell]}$
		where $$A_{j}^{k}=\left\{ v_{2(k-1)\cdot2^{i}+(j-1)\cdot\frac{2^{i}}{\ell}+1},\dots,v_{2(k-1)\cdot2^{i}+j\cdot\frac{2^{i}}{\ell}}\right\}~.$$ See illustration below.
		
		\begin{center}
			\includegraphics[width=1\textwidth]{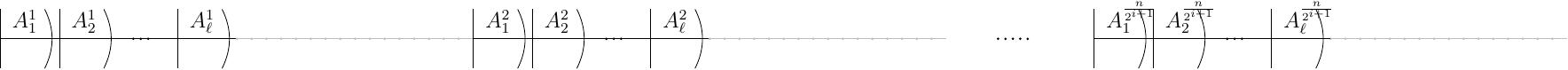}
		\end{center}

		For every subgraph $H\in\supp({\cal D})$, we create an unweighted
		supergraph $G_{H}$ where its vertex set is $\left\{ A_{j}^{k}\right\} _{k\in[\frac{n}{2^{i+1}}],j\in[\frac{1}{\ell}]}$,
		and add an edge from $A_{j}^{k}$ to $A_{j'}^{k'}$ if and only if
		$k\ne k'$ and $H$ contains an edge between points in $A_{j}^{k}$
		and $A_{j'}^{k'}$. Note that $G_{H}$ is a $\frac{n}{2^{i+1}}$-partite
		simple graph. Denote by $V(G_{H})$ and $E(G_{H})$ the vertex and
		edges sets of $G_{H}$ respectively. Clearly, every edge in $G_{H}$
		corresponds to (at least one) edge of weight at least $2^{i}$ in
		$H$. Thus, $\left|E(G_{H})\right|\le\left|{\cal E}_{H}^{i}\right|$,
		and hence in order to prove the claim it suffices to lower bound
		$\mathbb{E}_{H\sim{\cal D}}\left[\left|G_{H}\right|\right]$. We will
		proceed by a double-counting argument.
		
		Consider a $\frac{n}{2^{i+1}}$-tuple ${\cal J}=\left(j_{1},\dots,j_{\frac{n}{2^{i+1}}}\right)\in[\ell]^{\frac{n}{2^{i+1}}}$.
		The graph $G_{H}^{{\cal J}}=G_{H}\left[\left\{ A_{j_{k}}^{k}\right\} _{k\in[\frac{n}{2^{i+1}}]}\right]$
		is the induced graph by the $\frac{n}{2^{i+1}}$ vertices $\left\{ A_{j_{k}}^{k}\right\} _{k\in[\frac{n}{2^{i+1}}]}$
		of $G_{H}$. Let $B_{{\cal J}}=[n]\setminus\cup_{k}A_{j_{k}}^{k}$
		be all the vertices not in the sub intervals specified by ${\cal J}$
		(in particular $B_{{\cal J}}$ contains the $\frac{n}{2}$ vertices
		of the removed intervals). Let $\Psi_{{\cal J}}$ be an indicator for
		the event that $G_{H}^{{\cal J}}$ contains at least $\frac{n}{2^{i+2}}$
		edges. Note that if the event $\Psi_{{\cal J}}$ did not occur,
		then in $H\setminus B_\cJ$ the maximum size of a connected component
		is 		$\frac{n}{2^{i+2}}\cdot\frac{2^{i}}{\ell}=\frac{1}{4}\cdot\frac{n}{\ell}$ (since at most $\frac{n}{2^{i+2}}$ of the $\frac{n}{2^{i+1}}$ intervals can be connected, and each has $\frac{2^i}{\ell}$ points).
		In particular, $B_{{\cal J},H}^{+}\setminus B_{{\cal J}}$ must contain
		at least $\frac{n}{4\ell}$ points. As $H$ is an oblivous $\nu$-reliable spanner, it follows that 
		\[
		(1+\nu)\cdot|B_{{\cal J}}|\ge\mathbb{E}_{H\sim{\cal D}}\left[|B_{{\cal J},H}^{+}|\right]\ge|B_{{\cal J}}|+\frac{n}{4\ell}\cdot\Pr\left[\overline{\Psi_{{\cal J}}}\right]~.
		\]
		Hence $\Pr\left[\overline{\Psi_{{\cal J}}}\right]\le\nu\cdot|B_{{\cal J}}|\cdot\frac{4\ell}{n}<\frac{1}{2}$, and thus $\mathbb{E}_{H\sim{\cal D}}\left[\left|E(G_{H}^{{\cal J}})\right|\right]\ge\frac{n}{2^{i+2}}\cdot\Pr\left[\Psi_{{\cal J}}\right]\ge\frac{n}{2^{i+3}}$.
		
		We will abuse notation and state $(k,j)\in{\cal J}$ if the $k$'th
		index in ${\cal J}$ is $j$ (i.e. $j_{k}=j$). Next, we sample ${\cal J}$
		uniformly at random for all the possible $\frac{n}{2^{i+1}}$-tuples,
		and thus $\Pr\left[(k,j)\in{\cal J}\right]=\frac1\ell$. It holds that for every
		$H\in\supp({\cal D})$,
		\begin{align*}
			\mathbb{E}_{{\cal J}}\left[\left|E(G_{H}^{{\cal J}})\right|\right] & =\sum_{\left(A_{j}^{k},A_{j'}^{k'}\right)\in E(G_{H})}\Pr\left[(k,j),(k',j')\in{\cal J}\right]=\frac{1}{\ell^2}\cdot|E(G_{H})|~.
		\end{align*}
		We now sample both a subgraph $H\sim\cD$, and independently a
		tuple ${\cal J}$. It holds that: 
		\[
		\frac{1}{\ell^{2}}\cdot\mathbb{E}_{H}\left[\left|E(G_{H})\right|\right]=\mathbb{E}_{H}\left[\mathbb{E}_{{\cal J}}\left[\left|E(G_{H}^{{\cal J}})\right|\right]\right]=\mathbb{E}_{{\cal J}}\left[\mathbb{E}_{H}\left[\left|E(G_{H}^{{\cal J}})\right|\right]\right]\ge\mathbb{E}_{{\cal J}}\left[\frac{n}{2^{i+3}}\right]=\frac{n}{2^{i+3}}~,
		\]
		and thus $\mathbb{E}_{H}\left[\left|E(G_{H})\right|\right]\ge n\cdot\frac{\ell^{2}}{2^{i+3}}=\Omega(\frac{n}{2^{i}\cdot\nu^{2}})$
		as required.
	\end{proof}
 
	Consider a pair $p<q\in [n]$ such that $2^{w}\le q-p<2^{w+1}$.
	The event $(p,q)\in H$ occurs if and only if all the events $\left\{ (p,q)\in{\cal E}_{H}^{i}\right\} _{i=0}^{w}$
	occurred (note that all these $w+1$ events are actually equivalent).
	As $q-p\ge2^{w}>\sum_{i=0}^{w}2^{i-1}$, it holds that 
	\begin{align*}
		\Pr\left[(p,q)\in H\right]\cdot(q-p) & \ge\sum_{i=0}^{w}\Pr_{H\sim{\cal D}}\left[(p,q)\in{\cal E}_{H}^{i}\right]\cdot2^{i-1}\\
		& =\sum_{i=0}^{m-1}\Pr_{H\sim{\cal D}}\left[(p,q)\in{\cal E}_{H}^{i}\right]\cdot2^{i-1}\ge\sum_{i=s+3}^{m-2}\Pr_{H\sim{\cal D}}\left[(p,q)\in{\cal E}_{H}^{i}\right]\cdot2^{i-1}~,
	\end{align*}
	where the equality holds as for every $i\ge w+1$, $\Pr_{H\sim{\cal D}}\left[(p,q)\in{\cal E}_{H}^{i}\right]=0$.
	By \Cref{clm:PathLBSingleScale}
	\begin{align*}
		\mathbb{E}_{H\sim{\cal D}}\left[w(H)\right] & =\sum_{p<q}\Pr_{H\sim{\cal D}}\left[(p,q)\in H\right]\cdot(q-p)\\
		& \ge\sum_{p<q}\sum_{i=s+3}^{m-2}\Pr_{H\sim{\cal D}}\left[(p,q)\in{\cal E}_{H}^{i}\right]\cdot2^{i-1}\\
		& =\sum_{i=s+3}^{m-2}\mathbb{E}_{H\sim{\cal D}}\left[\left|{\cal E}_{H}^{i}\right|\right]\cdot2^{i-1}\\
		& \ge\sum_{i=s+3}^{m-2}\Omega(\frac{n}{2^{i}\cdot\nu^{2}})\cdot2^{i-1}\\
		& =\frac{n}{\nu^{2}}\cdot\Omega(m-s-4)=\frac{n}{\nu^{2}}\cdot\Omega(\log(n\cdot\nu))~,
	\end{align*}
	where the last equality holds as $m=\log n$, $s=\log\frac{1}{\nu}$, and thus $m-s-4=\log\frac{n\cdot\nu}{16}$. The theorem now follows.

\end{proof}

\bibliographystyle{alpha}
\bibliography{reliable,LSObib}

\appendix
\section{A Helpful Lemma}\label{app:sterling}

\begin{lemma}\label{lem:sterling}
For any integers $n,k$ and $0<\alpha<1$ such that $\alpha n\ge k>0$ is an integer too, we have that
\[
\frac{{\alpha n\choose k}}{{n\choose k}} \le O(\sqrt{k})\cdot \alpha^k~.
\]
\end{lemma}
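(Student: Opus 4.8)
The plan is to bound the ratio directly by writing it as a product of $k$ elementary fractions, without invoking Stirling's approximation at all. Using the falling-factorial form of binomial coefficients (legitimate here because $\alpha n$ is a nonnegative integer with $\alpha n \ge k$),
\[
\frac{\binom{\alpha n}{k}}{\binom{n}{k}}=\frac{(\alpha n)(\alpha n-1)\cdots(\alpha n-k+1)}{n(n-1)\cdots(n-k+1)}=\prod_{i=0}^{k-1}\frac{\alpha n-i}{n-i}~,
\]
the $k!$ in the two denominators cancels. Since $0<\alpha<1$ and $\alpha n\ge k>i$ for every $0\le i\le k-1$, both $\alpha n-i$ and $n-i$ are strictly positive, so each factor is a well-defined positive number.

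Next I would establish the single-factor estimate $\frac{\alpha n-i}{n-i}\le\alpha$ for each $i$. Cross-multiplying by the positive quantity $n-i$, this is equivalent to $\alpha n-i\le\alpha n-\alpha i$, i.e.\ to $\alpha i\le i$, which holds because $\alpha\le1$ and $i\ge0$. Multiplying these $k$ inequalities yields the stronger bound $\frac{\binom{\alpha n}{k}}{\binom{n}{k}}\le\alpha^{k}$, and since $\sqrt{k}\ge1$ this immediately gives the claimed $O(\sqrt{k})\cdot\alpha^{k}$ bound (indeed with no hidden constant).

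There is essentially no obstacle here: the only points requiring care are the positivity of the factors and the direction of the rearranged inequality, both of which are handled precisely by the hypotheses $\alpha n\ge k$ and $\alpha\le1$. I would also remark that the identical product estimate goes through verbatim for a \emph{real} parameter $x=\alpha n$ as long as $x\ge k-1$, interpreting $\binom{x}{k}$ as $\tfrac{x(x-1)\cdots(x-k+1)}{k!}$; this is the form actually needed in the application inside \Cref{lem:not-safe}, where $|I\cap B|$ is replaced by the (possibly non-integral) upper bound $\alpha|I|$ before applying monotonicity of the binomial coefficient.
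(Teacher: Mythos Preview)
Your proof is correct and takes a genuinely different, more elementary route than the paper. The paper invokes Stirling's approximation $m!=\Theta(\sqrt{m})(m/e)^{m}$ and manipulates the resulting expression; the extraneous $\sqrt{k}$ factor in the statement is exactly the debris left over from the square-root terms in Stirling (specifically from bounding $\sqrt{\alpha n}/\sqrt{\alpha n-k}$). Your factor-by-factor estimate $\frac{\alpha n-i}{n-i}\le\alpha$ sidesteps all of this and delivers the sharper inequality $\frac{\binom{\alpha n}{k}}{\binom{n}{k}}\le\alpha^{k}$ with no hidden constant. This is both simpler and strictly stronger; in the two places the lemma is used in the paper the $O(\sqrt{\ell})$ is immediately absorbed into a large constant anyway, so nothing downstream relies on that slack. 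Your closing remark about real $x=\alpha n$ is also apt: the same termwise bound works verbatim for the generalized binomial, which is what the application in \Cref{lem:not-safe} actually needs.
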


\begin{proof}
We may assume that $k<\alpha n$, since when $k=\alpha n$ it suffices to prove that $\frac{1}{{n\choose\alpha n}}\le \alpha^{\alpha n}$, which holds by the estimate ${n\choose\alpha n}\ge \left(\frac{n}{\alpha n}\right)^{\alpha n}$.

Recall that Sterling approximation asserts that
\[
m!=\Theta(\sqrt{m})\cdot\left(\frac me\right)^m~.
\]
Thus we have that
\begin{eqnarray*}
\frac{{\alpha n\choose k}}{{n\choose k}} &=& 
\frac{\frac{(\alpha n)!}{k!\cdot(\alpha n-k)!}}{\frac{n!}{k!\cdot(n-k)!}}
=
\frac{(\alpha n)!\cdot (n-k)!}{(\alpha n-k)!\cdot n!}\\
&\le&O(1)\cdot\frac{\sqrt{\alpha n}(\alpha n/e)^{\alpha n}\cdot\sqrt{n-k}((n-k)/e)^{n-k}}{\sqrt{\alpha n-k} ((\alpha n-k)/e)^{\alpha n-k}\cdot\sqrt{n}(n/e)^{n}}\\
\end{eqnarray*}
Note that $\frac{\sqrt{\alpha n}}{\sqrt{\alpha n-k}}\le 2\sqrt{k}$ (using that $0<k<\alpha n$), and that $\frac{\sqrt{n-k}}{\sqrt{n}}\le 1$, we also replace $\alpha n-k$ in the denominator by $\alpha n$, and $n-k$ in the numerator by $n$, so the bound we get is
\[
O(\sqrt{k})\cdot\frac{(\alpha n)^{\alpha n}\cdot n^{n-k}}{ (\alpha n)^{\alpha n-k}\cdot n^n}=O(\sqrt{k})\cdot\alpha^k~.
\]

\end{proof}

\section{Light Reliable $O(\log n)$-Spanner}\label{sec:logNstretchSpanner}
For stretch $t=\log n$, the lightness of \Cref{thm:general} is $\nu^{-2}\cdot\polylog(n)$, while by \Cref{thm:PathObliviousLB}, $\Omega(\nu^{-2}\cdot\log n)$ lightness is necessary (even for preserving only the connectivity of the path metric). It is interesting to understand what is the best poly-logarithmic factor we can get (as it becomes the major factor).
The PPCS of Filtser and Le \cite{FL22} (\Cref{lem:pairwise-partition-general}) is constructed to optimize the stretch ($\rho$ parameter). 
Alternately, one can use the sparse covers of Awerbuch and Peleg \cite{AP90} to construct better PPCS for the $O(\log n)$-stretch regime. 
Given $n$-point metric space and parameter $\Delta>0$, Awerbuch and Peleg \cite{AP90} constructed 
a collection of $O(\log n)$ $\Delta$-bounded partitions, such that every ball of radius $\Omega(\frac{\Delta}{\log n})$ is fully contained in some cluster, in one of the partitions. One can verify that this exact collection of partitions is in particular 
$(O(\log n), O(\log n),\Omega(\frac{1}{\log n}))$-PPCS (this is the same argument as in \Cref{lem:pairwise-partitionlarge-ddim}).
By applying \Cref{thm:pairwise_partition_cover_to_ultrametric_cover} with $k=O(\log^2n)$, we obtain a $O(\log^3 n)$-light $\left(\tilde{O}(\log^2n),O(\log n)\right)$-$O(\log^2n)$-HST cover.
Next, by applying \Cref{thm:ultrametric_cover_to_reliable_spanner} we obtain an oblivious $\nu$-reliable $O(\log n)$-spanner with lightness $\tilde{O}(\nu^{-2}\cdot\log^9n)$.

As we show next, it is possible to obtain a quadratic improvement in the dependence on $n$ by constructing a light ultrametric cover directly using stochastic tree embeddings.
\begin{lemma}\label{lem:HST-Cover-FRT}
	\sloppy Every $n$-point metric space $(X,d_X)$ admits an $(O(\log n),O(\log n))$-$2$-HST cover, where the sum of weights of all the HST's in the cover is $O(\log^2n)$ times the MST of $X$.
\end{lemma}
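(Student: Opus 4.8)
The plan is to use the classical stochastic tree embedding of Fakcharoenphol, Rao, and Talwar together with the probabilistic method. Recall that this embedding yields a distribution $\mathcal{D}$ over dominating $2$-HSTs whose leaf set is $X$ (using dyadic diameter scales $2^i$ in the laminar decomposition makes consecutive labels differ by a factor of $2$, so the trees are honest $2$-HSTs), such that for every pair $u,v\in X$ one has $\mathbb{E}_{T\sim\mathcal{D}}[d_T(u,v)]\le c\log n\cdot d_X(u,v)$ for an absolute constant $c$. I will sample $N=O(\log n)$ independent trees from $\mathcal{D}$ and show that, with positive probability, they form the desired cover.

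The first key step is a per-tree weight bound: for any dominating tree $T$ whose degree-$1$ vertices are exactly the points of $X$, $w(T)\le\sum_{(u,v)\in M}d_T(u,v)$, where $M=\MST(X)$. This follows from a routing argument. For each edge $(u,v)\in M$ let $P_{uv}$ be the unique $T$-path between the leaves $u$ and $v$. Any edge $f$ of $T$ separates the leaves into two nonempty parts $A,B$; since $M$ is a connected spanning subgraph of $X=A\sqcup B$ it contains some edge crossing this cut, whose $T$-path necessarily uses $f$. Hence every $f\in E(T)$ lies on at least one $P_{uv}$, and
\[
\sum_{(u,v)\in M}d_T(u,v)=\sum_{f\in E(T)}w(f)\cdot\bigl|\{(u,v)\in M:f\in P_{uv}\}\bigr|\ \ge\ \sum_{f\in E(T)}w(f)=w(T).
\]
Taking expectations over $T\sim\mathcal{D}$ and applying the embedding's stretch guarantee to the $n-1$ edges of $M$ gives $\mathbb{E}_{T\sim\mathcal{D}}[w(T)]\le c\log n\cdot\sum_{(u,v)\in M}d_X(u,v)=c\log n\cdot w(\MST(X))$.

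Now sample independent $T_1,\dots,T_N\sim\mathcal{D}$ with $N$ a sufficiently large constant times $\log n$. For the stretch: fix a pair $u,v$; by Markov's inequality $\Pr[d_{T_i}(u,v)>2c\log n\cdot d_X(u,v)]\le 1/2$, so the probability that all $N$ trees fail this pair is at most $2^{-N}$, and a union bound over the fewer than $n^2$ pairs shows that, with probability at least $3/4$, every pair $u,v$ has some $T_i$ with $d_{T_i}(u,v)\le 2c\log n\cdot d_X(u,v)$, i.e. the cover has $\rho=O(\log n)$. For the total weight: by linearity and the previous step $\mathbb{E}[\sum_i w(T_i)]\le N\cdot c\log n\cdot w(\MST(X))=O(\log^2 n)\cdot w(\MST(X))$, so Markov's inequality bounds the probability that $\sum_i w(T_i)$ exceeds four times this expectation by $1/4$. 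Thus with probability at least $1/2$ both events hold simultaneously, so a collection of $N=O(\log n)$ dominating $2$-HSTs with the claimed stretch and total weight exists, which is exactly the statement of the lemma.

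The only points needing care are (i) checking that the embedding can be taken to produce genuine $2$-HSTs (dyadic scales) and (ii) the combinatorial fact that the $T$-paths of the MST edges cover all edges of $T$; everything else is a routine combination of linearity of expectation, Markov's inequality, and a union bound, so I do not anticipate a real obstacle — the "content" of the lemma is essentially the observation that FRT trees are light on average via MST-edge routing.
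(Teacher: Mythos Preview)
Your proposal is correct and follows essentially the same approach as the paper's proof sketch: sample $O(\log n)$ independent FRT trees, use Markov plus a union bound over all pairs for the stretch guarantee, and bound each tree's expected weight via the MST edges using linearity of expectation. Your routing argument making the inequality $w(T)\le\sum_{(u,v)\in M}d_T(u,v)$ explicit is slightly more than the paper spells out (it simply writes ``it follows''), but the idea is identical.
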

\begin{proof}[Proof sketch.]
	Given an $n$-point metric space $(X,d_X)$, Fakcharoenphol, Rao, and Talwal \cite{FRT04} (see also \cite{Bartal04}) constructed a stochastic embedding into $2$-HST's with expected distortion $O(\log n)$. Specifically, they constructed a distribution $\cD$ over dominating $2$-HST's, such that for every $x,y\in X$, $\E_{T\sim\cD}\left[d_T(x,y)\right]\le O(\log n)\cdot d_X(x,y)$.
	In particular, by Markov, it holds that $\Pr_{T\sim{\cal D}}\left[d_T(x,y)\le O(\log n)\cdot d_X(x,y)\right]\ge \frac12$.
	
	\sloppy Sample $O(\log n)$ $2$-HST's $T_1,T_2,\dots$ from the distribution of \cite{FRT04}. By a standard application of Chernoff and union bounds, it holds that w.h.p., for every $x,y\in X$, $\min_i d_{T_i}(x,y)\le  O(\log n)\cdot d_X(x,y)$.
	Let $M$ be the MST of $X$, consisting of the edges $\{(x_j,y_j)\}_{j=1}^{n-1}$. As every pair has expected distortion $O(\log n)$, by linearity of expectation, it holds that $\E_{T\sim\cD}\left[\sum_{j=1}^{n-1} d_T(x_j,y_j)\right]\le O(\log n)\cdot \sum_{j=1}^{n-1} d_X(x_j,y_j)=O(\log n)\cdot w(M)$. It follows that the expected weight of $T_i$, for every $i$, is at most $O(\log n)\cdot w(M)$.
	The lemma follows as we can repeat the sampling until we get a collection satisfying all point pairs of total weight  $O(\log^2n)$ times the MST.
\end{proof}

Finally, we can apply \Cref{thm:ultrametric_cover_to_reliable_spanner} on the HST cover from \Cref{lem:HST-Cover-FRT}. Observe that for the proof of \Cref{thm:ultrametric_cover_to_reliable_spanner} it actually was enough that the total weight of all the HST's in the cover is bounded by $\psi\cdot \tau$ (rather that the weight of each HST is bounded by $\psi$). 
\begin{corollary}\label{cor:HSTcoverGeneralMetricLogn}
	For any parameters $\nu \in (0, 1/6)$, any metric space admits an oblivious $\nu$-reliable $O(\log n)$-spanner with size $n\cdot \tilde{O}\left(\nu^{-2}\cdot\log^{3}n\right)$ and lightness $\tilde{O}(\nu^{-2}\cdot\log^4 n)$.
\end{corollary}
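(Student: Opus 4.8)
The plan is to instantiate the meta-theorem \Cref{thm:ultrametric_cover_to_reliable_spanner} on the light $2$-HST cover provided by \Cref{lem:HST-Cover-FRT}. Recall that \Cref{lem:HST-Cover-FRT} supplies a collection $\cT$ of $\tau=O(\log n)$ dominating $2$-HSTs such that every pair $u,v\in X$ is $\rho$-approximated by some $T\in\cT$ with $\rho=O(\log n)$, and moreover $\sum_{T\in\cT}w(MST(T))=O(\log^2 n)\cdot w(MST(X))$. The only twist compared with the earlier applications of \Cref{thm:ultrametric_cover_to_reliable_spanner} is that here we do not have a per-tree lightness bound, only a bound on the aggregate weight; but as observed just above, the weight analysis in the proof of \Cref{thm:ultrametric_cover_to_reliable_spanner} only ever bounds $w(H)$ by $\sum_{T\in\cT}w(H_T)$, so it suffices that the total $MST$ weight over the cover be at most $\psi\cdot\tau\cdot w(MST(X))$ --- which holds here with $\psi=O(\log n)$.

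Concretely, I would construct, for every $T\in\cT$, a $\nu'$-reliable spanner $H_T$ via \Cref{thm:k-HST-new} with $k=2$ and $\nu'=\nu/\tau$, and output $H=\bigcup_{T\in\cT}H_T$, with faulty extension $B^+=\bigcup_{T\in\cT}B_T^+$. The reliability bound is immediate from linearity of expectation: $\E\!\left[|B^+\setminus B|\right]\le\sum_{T\in\cT}\E\!\left[|B_T^+\setminus B|\right]\le\tau\cdot\nu'\cdot|B|=\nu\cdot|B|$. For a surviving pair $u,v\notin B^+$, pick $T\in\cT$ with $d_T(u,v)\le\rho\cdot d_X(u,v)$; since $u,v\notin B_T^+$, \Cref{thm:k-HST-new} gives $d_{H\setminus B}(u,v)\le d_{H_T\setminus B}(u,v)\le(2+\tfrac{2}{k-1})\cdot d_T(u,v)=4\rho\cdot d_X(u,v)=O(\log n)\cdot d_X(u,v)$.

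It then remains to plug in parameters. From \Cref{thm:k-HST-new}, each $H_T$ has $n\cdot\tilde{O}(\nu'^{-1}\log\log n)^2=n\cdot\tilde{O}(\tau^2\nu^{-2})$ edges, so $|H|\le\tau\cdot n\cdot\tilde{O}(\tau^2\nu^{-2})=n\cdot\tilde{O}(\tau^3\nu^{-2})=n\cdot\tilde{O}(\nu^{-2}\log^3 n)$. For the lightness,
\[
w(H)\le\sum_{T\in\cT}w(H_T)\le\tilde{O}(\tau^2\nu^{-2})\cdot\sum_{T\in\cT}w(MST(T))=\tilde{O}(\tau^2\nu^{-2})\cdot O(\log^2 n)\cdot w(MST(X))=\tilde{O}(\nu^{-2}\log^4 n)\cdot w(MST(X)),
\]
using $\tau=O(\log n)$. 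The statement follows.

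The main thing to be careful about --- rather than a genuine obstacle --- is the legitimacy of the aggregate-weight substitution: one must verify that nowhere in the proof of \Cref{thm:ultrametric_cover_to_reliable_spanner} is a per-tree lightness bound actually invoked (only the sum $\sum_{T\in\cT}w(H_T)$ appears, bounded termwise through \Cref{thm:k-HST-new}), and that the high-probability/resampling step inside \Cref{lem:HST-Cover-FRT} indeed produces a single cover that simultaneously $\rho$-approximates all $\binom{n}{2}$ pairs and has total $MST$-weight within a constant factor of its expectation.
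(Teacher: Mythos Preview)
Your proposal is correct and follows exactly the approach the paper takes: apply \Cref{thm:ultrametric_cover_to_reliable_spanner} to the $2$-HST cover of \Cref{lem:HST-Cover-FRT}, using only the aggregate weight bound $\sum_{T\in\cT}w(MST(T))=O(\log^2 n)\cdot w(MST(X))$ rather than a per-tree lightness bound. Your parameter computations and the observation that the weight analysis in \Cref{thm:ultrametric_cover_to_reliable_spanner} only needs $\sum_T w(H_T)$ are precisely what the paper relies on.
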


\end{document}